\DeclareMathOperator{\Tr}{Tr}
\DeclareMathOperator{\Str}{Str}
\def\signofmetric{0}
\definecolor{Red}{cmyk}{0,1,1,0}
\definecolor{Blue}{cmyk}{1,1,0,0}
\def\BDpos{}
\def\BDneg{-}
\def\BDplus{+}
\def\BDminus{-}
\def\BDpos{-}
\def\BDneg{}
\def\BDplus{-}
\def\BDminus{+}
\def\BDpos{{\color{Red}\oplus}}
\def\BDneg{{\color{Red}\ominus}}
\def\BDplus{\oplus}
\def\BDminus{\ominus}
\def\BDpos{{\color{Red}\ominus}}
\def\BDneg{{\color{Red}\oplus}}
\def\BDplus{\ominus}
\def\BDminus{\oplus}
\renewcommand{\theequation}{\arabic{section}.\arabic{equation}}
\newlength{\captsize}           \let\captsize=\footnotesize
\newlength{\captwidth}          \setlength{\captwidth}{\textwidth}
\newlength{\beforetableskip}    \setlength{\beforetableskip}{.5\baselineskip}
\newcommand{\capt}[1]{\begin{minipage}{\captwidth}
              \let\normalsize=\captsize
              \caption[0]{#1}
              \end{minipage}\\ \vspace{\beforetableskip}}
      \long\def\@makecaption#1#2{\vskip 10 \p@
      \setbox\@tempboxa\hbox{\textbf{#1:} #2}
      \ifdim \wd\@tempboxa >\hsize
            \textbf{#1:} #2\par                 
      \else
         \hbox to \hsize{\box\@tempboxa\hfil}
      \fi}
\def\beq{\begin{equation}}
\def\eeq{\end{equation}}
\newenvironment{Eqnarray}%
     {\arraycolsep 0.14em\begin{eqnarray}}{\end{eqnarray}}
\def\beqa{\begin{Eqnarray}}
\def\eeqa{\end{Eqnarray}}
\def\bea{\begin{Eqnarray*}}
\def\eea{\end{Eqnarray*}}
\def\ifmath#1{\relax\ifmmode #1\else $#1$\fi}
\def\half{\ifmath{{\textstyle{\frac{1}{2}}}}}
\def\quarter{\ifmath{{\textstyle{\frac{1}{4}}}}}
\def\eq#1{eq.~(\ref{#1})}
\def\Eq#1{Eq.~(\ref{#1})}
\def\Eqs#1#2{Eqs.~(\ref{#1}) and (\ref{#2})}
\def\eqs#1#2{eqs.~(\ref{#1}) and (\ref{#2})}
\def\eqss#1#2#3{eqs.~(\ref{#1}), (\ref{#2}) and (\ref{#3})}
\def\eqst#1#2{eqs.~(\ref{#1})--(\ref{#2})}
\def\Eqst#1#2{Eqs.~(\ref{#1})--(\ref{#2})}
\def\mathbold#1{\boldsymbol{#1}}
\def\bra#1{\left\langle #1\right|}
\def\ket#1{\left| #1\right\rangle}
\def\vev#1{\left\langle #1\right\rangle}
\def\ls#1{\ifmath{_{\lower1.5pt\hbox{$\scriptstyle #1$}}}}
\def\lsup#1{^{\lower 6pt\hbox{$\scriptstyle#1$}}}
\def\llsup#1{^{\lower 3pt\hbox{$\scriptstyle#1$}}}
\def\nicefrac#1#2{\hbox{${\frac{#1}{#2}}$}}
\def\hl{h^0}
\def\hh{H^0}
\def\ha{A^0}
\def\phm{\phantom{-}}
\def\centeron#1#2{{\setbox0=\hbox{#1}\setbox1=\hbox{#2}\ifdim
\wd1>\wd0\kern.5\wd1\kern-.5\wd0\fi
\copy0\kern-.5\wd0\kern-.5\wd1\copy1\ifdim\wd0>\wd1
\kern.5\wd0\kern-.5\wd1\fi}}
\def\ltap{\;\centeron{\raise.35ex\hbox{$<$}}{\lower.65ex\hbox{$\sim$}}\;}
\def\gtap{\;\centeron{\raise.35ex\hbox{$>$}}{\lower.65ex\hbox{$\sim$}}\;}
\def\gsim{\mathrel{\gtap}}
\def\lsim{\mathrel{\ltap}}
\def\metric{g}
\def\reversed#1{{#1}_r}
\def\newcdot{\kern.06em{\cdot}\kern.06em}
\def\sigmabar{\overline{\sigma}}
\def\slashchar#1{\setbox0=\hbox{$#1$}           
   \dimen0=\wd0                                 
   \setbox1=\hbox{/} \dimen1=\wd1               
   \ifdim\dimen0>\dimen1                        
      \rlap{\hbox to \dimen0{\hfil/\hfil}}      
      #1                                        
   \else                                        
      \rlap{\hbox to \dimen1{\hfil$#1$\hfil}}   
      /                                         
   \fi}                                        %
\def\nn{\nonumber}
\renewcommand\body{
\setcounter{footnote}{0}
\def\thefootnote{\arabic{footnote}}
\def\@makefnmark{{$^{\rm \@thefnmark}$}}
}
\newcommand{\bi}{\begin{itemize}}
\newcommand{\ei}{\end{itemize}}
\newcommand{\ben}{\begin{enumerate}}
\newcommand{\een}{\end{enumerate}}
\def\beqa{\begin{Eqnarray}}
\def\eeqa{\end{Eqnarray}}
\def\beq{\begin{equation}}
\def\eeq{\end{equation}}
\def\mw{m_W}
\def\mz{m_Z}
\def\vev#1{\langle #1 \rangle}
\def\ubar{\bar u}
\def\vbar{\bar v}
\def\Psibar{\overline{\Psi}}
\def\T{{\mathsf T}}
\def\cpvx{{\rm CP}{\setbox0=\hbox{{\rm CP}}\kern-0.7\wd0{\bf \slash}%
        \setbox1=\hbox{\bf \slash}\kern0.7\wd0\kern-\wd1}}
\def\RPVx{{\rm R_p}{\setbox0=\hbox{${\rm R_p}$}\kern-0.7\wd0{\bf \slash}%
        \setbox1=\hbox{\bf \slash}\kern0.7\wd0\kern-\wd1}}
\def\LVx{{\rm L}{\setbox0=\hbox{{\rm L}}\kern-0.7\wd0{\bf \slash}%
        \setbox1=\hbox{\bf \slash}\kern0.7\wd0\kern-\wd1}}
\def\phm{\phantom{-}}
\def\hh{H}
\def\hl{h}
\def\ha{A}
\def\mhh{m_{H}}
\def\mhl{m_{h}}
\def\mha{m_{A}}
\def\mhpm{m_{H^\pm}}
\def\cosbma{\cos(\beta-\alpha)}
\def\sinbma{\sin(\beta-\alpha)}
\def\nn{\nonumber}
\def\ifmath#1{\relax\ifmmode #1\else $#1$\fi}
\def\half{\ifmath{\tfrac12}}
\def\quarter{\ifmath{\tfrac14}}
\def\eighth{\ifmath{\tfrac18}}
\def\nicefrac#1#2{\hbox{$\frac{#1}{#2}$}}
\def\Buildrel#1\under#2{\mathrel{\mathop{#2}\limits_{#1}}}
\renewcommand{\thefootnote}{\arabic{footnote}}
\def\lsim{{~\raise.15em\hbox{$<$}\kern-.85em\lower.35em\hbox{$\sim$}~}}
\def\gsim{{~\raise.15em\hbox{$>$}\kern-.85em\lower.35em\hbox{$\sim$}~}}
\def\ls#1{\ifmath{_{\lower1.5pt\hbox{$\scriptstyle #1$}}}}
\def\lsup#1{^{\lower 2pt\hbox{$\scriptstyle#1$}}}
\def\wh{\widehat}
\def\wt{\widetilde}
\def\chini{\widetilde\chi^0_i}
\def\chipma{\widetilde\chi^\pm_1}
\def\chipma{\widetilde\chi^\pm_1}
\def\chipmb{\widetilde\chi^\pm_2}
\def\chipa{\widetilde\chi^+_1}
\def\chipb{\widetilde\chi^+_2}
\def\mchipa{M_{\widetilde\chi^+_1}}
\def\mchipb{M_{\widetilde\chi^+_2}}
\def\mchina{M_{\widetilde\chi^0_1}}
\def\mchinb{M_{\widetilde\chi^0_2}}
\def\mchinc{M_{\widetilde\chi^0_3}}
\def\mchind{M_{\widetilde\chi^0_4}}
\def\ha{A^0}
\def\mhpm{m_{H^\pm}}
\def\mhh{m_{H}}
\def\mhl{m_{h}}
\def\mha{m_{A}}
\def\mz{m_Z}
\def\mw{m_W}
\def\mstopa{M_{\widetilde t_1}}
\def\mstopb{M_{\widetilde t_2}}
\def\msusyy{M_{\rm S}^2}
\def\deltaxi{\delta_{\xi}}
\def\deltaeta{\delta_{\eta}}
\def\thetabar{\theta^\dagger}
\def\etabar{\eta^\dagger}
\def\xibar{\xi^\dagger}
\def\dalpha{\dot\alpha}
\def\dbeta{\dot\beta}
\def\dgamma{\dot\gamma}
\def\bra#1{\left\langle #1\right|}
\def\ket#1{\left| #1\right\rangle}
\def\Dbar{\overline{D}}
\def\cpvx{{\rm CP}{\setbox0=\hbox{{\rm CP}}\kern-0.7\wd0{\bf \slash}%
        \setbox1=\hbox{\bf \slash}\kern0.7\wd0\kern-\wd1}}
\def\RPVx{{\rm R_p}{\setbox0=\hbox{${\rm R_p}$}\kern-0.7\wd0{\bf \slash}%
        \setbox1=\hbox{\bf \slash}\kern0.7\wd0\kern-\wd1}}
\def\sinb{\sin\beta}
\def\cosb{\cos\beta}
\def\hh{H}
\def\hl{h}
\def\ha{A}
\def\mhh{m_{H}}
\def\mhl{m_{h}}
\def\mha{m_{A}}
\def\mhpm{m_{H^\pm}}
\newcommand{\of}[1]{\left( #1 \right)}
\newcommand{\sqof}[1]{\left[ #1 \right]}
\newcommand{\W}{\mathcal{W}}
\newif\ifarxivsubmission
\begin{document}

\ifarxivsubmission
   \chapter*{Supersymmetric Theory and Models}
   \vspace{-24pt}
\else
   \chapter[Supersymmetric Theory and Models]{Supersymmetric Theory and Models}
\fi

%

\author[]{Howard E.~Haber\textsuperscript{1} and Laurel Stephenson Haskins\textsuperscript{1,2}
}

\address{
\textsuperscript{1}Santa Cruz Institute for Particle Physics,\\
University of California, Santa Cruz, CA 95064, USA\\
\vspace{6pt}
\textsuperscript{2}Racah Institute of Physics,\\
Hebrew University, Jerusalem 91904, Israel 
}

\begin{abstract}
In these introductory lectures, we review the theoretical tools used in constructing supersymmetric field theories  and their application to physical models. 
We first introduce the technology of two-component spinors, which is convenient for describing spin-$\half$ fermions.  After motivating why a theory of nature may be supersymmetric at the TeV energy scale, we show how supersymmetry (SUSY) arises as an extension of the Poincar\'e algebra of spacetime symmetries.  We then obtain 
the representations of the SUSY algebra and discuss its simplest realization in the Wess-Zumino model. 
In order to have a systematic approach for obtaining supersymmetric Lagrangians, we 
introduce the formalism of superspace and superfields and recover the Wess-Zumino Lagrangian. These methods are then extended to encompass supersymmetric abelian and non-abelian gauge theories coupled to supermatter.
Since supersymmetry is not an exact symmetry of nature, it must ultimately be broken.   We discuss several mechanisms of SUSY-breaking (both spontaneous and explicit) and briefly survey various proposals for realizing SUSY-breaking in nature.
Finally, we construct the
the Minimal Supersymmetric extension of the Standard Model (MSSM), and 
consider the implications for the future of SUSY in particle physics.
\end{abstract}
\body

\tableofcontents

\section{Introduction to the TASI-2016 Supersymmetry Lectures}
\label{Intro}
These lectures were first presented at the 2016 Theoretical Advanced Study Institute (TASI-2016) in Boulder, CO.
Four ninety-minute lectures were given, with the aim of presenting the basic theoretical techniques of supersymmetry
needed for the construction of a supersymmetric extension of the Standard Model of particle physics.   
The lectures were pitched at an elementary level, assuming that
the students were well versed in quantum field theory, gauge theory and the Standard Model, but with no assumed prior knowledge 
of supersymmetry.  Nevertheless, some aspects of these lectures may also be useful to the reader with some prior 
knowledge of supersymmetry.

It is possible to introduce the technology of supersymmetry theory using four-component spinor notation that is familiar to all students of quantum field theory.  However, it is our view that employing two-component spinor notation greatly simplifies the presentation of the theoretical structure of supersymmetry in 3$+$1 spacetime dimensions.
Thus, in Section~\ref{sec:spinhalf}, we introduce the two-component spinor notation in some detail and discuss how it is related to the better known four-component spinor notation.
This material is based heavily on a comprehensive review of Dreiner, Haber and Martin that is presented in Ref.\cite{Dreiner:2008tw}.   In this review, it is shown that practical calculations in quantum field theory can be carried out entirely within the framework of the two-component spinor notation, which include the development of Feynman rules for two-component spinors.   However, at the end of Section 1, we are slightly less ambitious and revert to four-component fermion notation for the purpose of computing scattering and decay processes.  In particular, we provide a translation between two and four-component spinor notation, and develop four-component spinor Feynman rules that treat both Dirac and Majorana fermions on the same footing.

In Section~\ref{sec:motivation}, we present the motivation for TeV-scale supersymmetry.  Namely, why is it that we feel compelled to introduce a supersymmetric extension of the Standard Model, despite the great success of the Standard Model in describing collider data and the absence of significant evidence for new physics beyond the Standard Model.
With this motivation in mind, we are ready to explore the theoretical aspects of supersymmetry.

Since this is not a review article, we do not feel compelled to present a comprehensive list of references.  Nevertheless, it is instructive to assemble a list of books and lecture notes on supersymmetry, many of which we have found quite useful in preparing these lectures.   Thus, we draw your attention to the following books listed in Refs.\cite{WessBagger,Gates,Srivastava,Piguet,Freund,MullerKirsten,West,Lopuszanski,Bailin,Buchbinder,Soni,Galperin,Polonsky,Mohapatra,Drees,Baer,Aitchison,Binetruy,Terning,MullerKirsten2,Labelle,Shifman,sugra1,weinberg3,MDine,Manoukian,sugra2,Raby} and the following reviews and lecture notes listed in Ref.\cite{Taylor:1983su,Nilles:1983ge,Haber:1984rc,Sohnius:1985qm,Lahanas:1986uc,Haber:1993wf,Derendinger,Lykken,Martin:1997ns,Giudice:1998bp,bilalsusy,Petrov:2001hz,FigueroaO'Farrill:2001tr,Chung:2003fi,Luty:2005sn,RamseyMusolf:2006vr,Shirman:2009mt,GKane,susy,Bertolini:2013via}.  The reader is warned that conventions vary widely among these references.  Apart from the two possible choices for the spacetime metric (either the mostly minus metric used in these lectures or the mostly plus metric), there are many different choices in the definition of a variety of quantities, often involving different choices of signs.  Of these many conventions, we believe that the ones employed in these lecture notes are probably closest to those that appear in Ref.\cite{Sohnius:1985qm}.\footnote{We also note that although Ref.\cite{Martin:1997ns} employs the mostly plus metric, one can obtain a version of Martin's Supersymmetry Primer in the mostly minus metric by changing one line in the LaTeX  source code.  This alternative version of the Primer closely matches the conventions employed in these lectures.}

In Section~\ref{sec:SUSYalgebra}, we show how the algebra of the Poincar\'e group can be extended to obtain the supersymmetry (SUSY) algebra.  The representations of the $N=1$ SUSY
algebra are elucidated, and the Wess-Zumino model is presented as the simplest realization of a supersymmetric field theory.  In Section~\ref{sec:superspace}, we take some of the mystery out of constructing a SUSY Lagrangian by introducing the concepts of superspace and superfields.   This formalism allows one to construct supersymmetric field theories without any guesswork.  In Section~\ref{sec:gaugetheories}, the formalism of supersymmetric gauge theories is developed.  In Section~\ref{SSB}, we examine supersymmetry breaking, which is necessary for accommodating the observation that the elementary particles observed today are not each accompanied by an equal-mass superpartner.
Finally, in Section~\ref{sec:MSSM}, we construct the Minimal Supersymmetric extension of the Standard Model (MSSM).  
We end these lectures in Section~\ref{sec:future} with a brief discussion of what lies ahead for supersymmetry.

 \section{Spin-1/2 fermions in quantum field theory}
 \label{sec:spinhalf}

We begin these lectures with a treatment of  spin-$\half$ fermions in
quantum field theory.   In most introductory courses in relativistic quantum field
theory, the student first encounters fermion fields in the treatment of a
relativistic theory of electrons and photons.   The electron is
represented by a four-component Dirac fermion field, and the free field electron
Lagrangian yields the Dirac equation.   The four components represent
two degrees of freedom corresponding to the electron 
and two degrees
of freedom corresponding to the positron.  Feynman rules for quantum
electrodynamics are developed and the vector-like nature of the $e^+
e^-$ coupling to photons leads to some important simplifications.

The theory of electroweak interactions involves chiral
interactions of fermions with gauge bosons.   Left-handed and
right-handed fermions transform differently under the electroweak
gauge group, which may appear strange to students trained to think in
terms of four-component Dirac fermions.   Nevertheless, after electroweak
symmetry breaking, the mass-eigenstate fermion fields can be
identified.  All massive fermion states are charged under U(1)$_{\rm
  EM}$ and are thus represented by Dirac fermion fields.  The
neutrinos are massless, but only the left-handed neutrinos and
right-handed antineutrinos are present in the theory.  Thus, one can
still use four-component fermion fields (by applying the
appropriate chiral projection operators on the neutrino fields).  Hence, 
the four-component techniques of quantum
electrodynamics are easily accommodated and Feynman rules for the
fermion fields are obtained in a straightforward manner.   
  
However, the observation of
neutrino mixing phenomena implies that neutrinos are massive, which
requires new physics beyond the Standard Model of the electroweak
interactions.   Models of neutrino mass often include neutral
self-conjugate fermion states with two degrees of freedom, called
Majorana fermions.   Such
states can be described using four-component fermion fields that are
constrained by an appropriate conjugation condition.  However,
the resulting field theory description of systems of Majorana and
Dirac fermions is somewhat awkward.  Moreover, the 
Feynman rules for interacting Majorana fermions require some care.

Returning to first principles, one can ask how spin-$\half$ fermions arise
in quantum field theory.   In  Section~\ref{sec:spin_half_rep}, we
shall demonstrate that the fundamental building blocks employed in 
constructing spin-$\half$ quantum fields are
two-component spinors corresponding to the two-dimensional representations of the
Lorentz group.  
A neutral Majorana fermion is then represented
by a two-component fermion field.
Dirac fermions arise when one considers
theories of two mass-degenerate two-component fermions, which can
be combined to make a charged four-component Dirac fermion.  This is
completely analogous to the case of spin-0 bosons, in which a neutral
boson is represented by a real scalar field and a charged boson is
represented by a complex scalar field (whose real and imaginary parts
constitute two mass-degenerate real scalars).

The development of two-component spinor technology has a number of
benefits.  First, it provides an elegant unified description of
Majorana and Dirac fermions.  Second, it is very convenient to employ
the two-component spinor formalism in theories of chiral interactions.
Finally, it will prove especially useful in developing the formalism
of supersymmetry, which is the main focus of these lectures.

Because most students see the four-component spinor formalism first
and are therefore more familiar with it, we shall devote Section~\ref{sec:24} to the translation between the two- and four-component formalisms. 
Finally, in Section~\ref{sec:Feynman} we demonstrate how Feynman rules involving four-component
fermion fields can be extended to incorporate Majorana fermions.

This section is based on a comprehensive review of Dreiner, Haber and
Martin\cite{Dreiner:2008tw}, where many references to the original
literature can be found.

\subsection{Two-component spinor technology}
\label{sec:spin_half_rep}

\subsubsection{Orthochronous Lorentz transformations}

Quantum spin-$\half$ fields transform
under a two-dimensional irreducible representation of the Lorentz
group.   Thus, we first examine the properties that define a Lorentz transformation\cite{sexl}.
Under an active
Lorentz transformation, $\Lambda^\mu{}_\nu$, a four-vector $p^\mu$
transforms as
\beq
p^{\prime\,\mu}=\Lambda^\mu{}_\nu p^\nu\,.
\eeq 
The condition that $g_{\mu\nu}p^\mu p^\nu$ is invariant under
Lorentz transformations implies that\footnote{In our conventions, the Minkowski metric tensor is
$g_{\mu\nu}={\rm diag}(1\,,\,-1\,,\,-1\,,\,-1)$.} 
\beq \label{lambdarelation}
\Lambda^\mu{}_\nu g_{\mu\rho}\Lambda^\rho{}_\lambda=g_{\nu\lambda}.
\eeq
That is, $\Lambda\in$\,O(3,1).  \Eq{lambdarelation} implies that $\Lambda$ possesses the following two
properties: (i)~$\rm{det}~\Lambda=\pm 1$ and (ii)~$|\Lambda^0{}_0|\geq
1$.  Thus, Lorentz transformations fall into four disconnected
classes denoted by a pair of signs, $\left(\rm{sgn}[\rm{det}~\Lambda]\,,\,
\rm{sgn}[\Lambda^0{}_0]\right)$.  The proper
orthochronous Lorentz transformations correspond to $(+,+)$ and are
continuously connected to the identity.
 
 The most general proper orthochronous
Lorentz transformation, characterized by a
rotation angle $\theta$ about an axis $\mathbold{\widehat n}$
($\mathbold{\vec\theta}\equiv\theta \mathbold{\widehat n}$) and a
boost vector $\mathbold{\vec \zeta}\equiv
\mathbold{\hat v}\tanh^{-1}\beta$ (where $\mathbold{\hat{v}}\equiv
\mathbold{\vec{v}}/|\mathbold{\vec{v}}|$ is the unit velocity vector and
$\beta\equiv |\mathbold{\vec v}|/c$),\footnote{Henceforth, we shall
  work in particle physics
units where $\hbar=c=1$.}
is a $4\times 4$
matrix given by:
\begin{equation} \label{lambda44}
\Lambda=\exp\left(-\half i\theta^{\alpha\beta}s_{\alpha\beta}\right)
=\exp\left(
-i\mathbold{{\vec\theta}\cdot}\boldsymbol{\vec s}
-i\mathbold{{\vec\zeta}\cdot}\boldsymbol{\vec k}\right)\,,
\end{equation}
where $\theta^{\alpha\beta}$ is antisymmetric, with
$\theta^i \equiv \half\epsilon^{ijk} \theta_{jk}$,
$\zeta^i\equiv\theta^{i0}=-\theta^{0i}$, and
\begin{equation} \label{explicitsmunu}
(s_{\alpha\beta})^\mu{}_\nu=i(g_\alpha{}^\mu\,g_{\beta\nu}-g_\beta{}^\mu
\,g_{\alpha\nu})\,,
\end{equation}
with $s^i\equiv\half\epsilon^{ijk}s_{jk}$ and $k^i\equiv s^{0i}=-s^{i0}$.
We have employed a notation where the lower case Latin indices $i,j,k=1,2,3$ and $\epsilon^{123}=+1$.

Note that the $s^{\mu\nu}$ are antisymmetric $4\times 4$ matrices, {\it i.e.},
$s^{\mu\nu}=-s^{\nu\mu}$, and satisfy the
commutation relations,
\beq \label{eq:comm-rels}
[s^{\alpha\beta},s^{\rho\sigma}] = i(g^{\beta\rho}\,s^{\alpha\sigma} -
g^{\alpha\rho}\,s^{\beta\sigma} - g^{\beta\sigma}\,s^{\alpha\rho} +
g^{\alpha\sigma}\,s^{\beta\rho} ).
\eeq
It follows from \eqs{lambda44}{explicitsmunu} that an
infinitesimal orthochronous Lorentz transformation is
given by
\beq \label{inflambda4}
\Lambda^\mu{}_\nu\simeq\delta^\mu{}_\nu+\theta^\mu{}_\nu
\simeq (\mathds{1}_{4\times 4}-i\mathbold{{\vec\theta}\cdot}\boldsymbol{\vec s}
-i\mathbold{{\vec\zeta}\cdot}\boldsymbol{\vec k})^\mu{}_\nu\,,
\eeq
where $\mathds{1}_{4\times 4}$ is the $4\times 4$ identity matrix, 
and we have used $\theta^\mu{}_\nu=-\theta_\nu{}^\mu$.

\subsubsection{Finite-dimensional Representations of the Lorentz Group}

A generic spin-$s$ field $\Phi$ transforms as
\beq
\Phi(x) \rightarrow \Phi'(x^{\prime}) = M_R(\Lambda)\Phi(x)\,,
\eeq
where $M_R\equiv\exp\bigl(-\half i \theta_{\mu\nu}S^{\mu\nu}\bigr)$ and
the $S_{\mu\nu}$ constitute finite-dimensional irreducible matrix
representations of the Lie algebra of the Lorentz group.  The $S^{\mu\nu}$ satisfy 
the same commutation relations as the $s^{\mu\nu}$ given in \eq{eq:comm-rels}.
It is convenient to denote the six independent generators defined by the
$S^{\mu\nu}$ as
\beq \label{jkdef}
S^i \equiv\half \epsilon^{ijk} S_{jk}\,,\qquad\qquad K^i \equiv S^{0i}\,,
\eeq
where $i,j,k=1,2,3$. The $S^i$ generate
three-dimensional rotations in space and the $K^i$ generate the
Lorentz boosts.  It then follows that
\beq
M_R\equiv\exp\left(-i\mathbold{{\vec\theta}\newcdot}\boldsymbol{\vec
S} -i\mathbold{{\vec\zeta}\newcdot}\boldsymbol{\vec K}\right)\,.
\eeq
The $S^i$ and $K^i$ satisfy the
commutation relations,
\begin{align}
[S^i\, , \,S^j] &= \epsilon^{ijk} S^k\,,\\
 [S^i\, , \,K^j] &= \epsilon^{ijk} K^k\,,\\
 [K^i\, , \,K^j] &= - \epsilon^{ijk} S^k\,.
\end{align}
We define the following linear combinations of the generators,
\beq
\mathbold{\vec S_+} \equiv\half (\mathbold{\vec S}+
i\mathbold{\vec K})\,,\qquad\quad
\mathbold{\vec S_-} \equiv\half
(\mathbold{\vec S}- i\mathbold{\vec K}),
\eeq
which satisfy the commutation relations, 
\begin{align}
[S_+^i\,,\,S_+^j] &= i\epsilon^{ijk}S_+^k\,, \\
[S_-^i \, , \, S_-^j ] & = i \epsilon^{ijk}S_-^k\,, \\
[S_{\pm}^i\,,\,S_{\mp}^j  ] &= 0\,,
\end{align}
corresponding to two independent (complexified) SU(2) Lie algebras.
Thus, the representations of the Lorentz algebra are characterized by $(s_1,s_2)$, where
the $s_i$ are half-integers.
For example, $(0,0)$ corresponds to a scalar field and
$(\half,\half)$ corresponds to a four-vector field.

\subsubsection{Two-component spinors}

Spin-1/2 fermion fields transform under the spinor representations, $(\half,0)$ corresponding to $\boldsymbol{\vec{S}}_+=\half\boldsymbol{\vec\sigma}$ and
$\boldsymbol{\vec{S}}_-=0$,
 and $(0,\half)$ corresponding to $\boldsymbol{\vec{S}}_+=0$ and
$\boldsymbol{\vec{S}}_-=\half\boldsymbol{\vec\sigma}$.  That is, the 
Lorentz transformation matrices acting on spinor fields may be written in terms of the Pauli spin matrices $\sigma^1$, $\sigma^2$, and $\sigma^3$ as follows,
\beq \label{halfzero}
\!\!\!\!\!\!\!\!\!\!\!
(\half, 0): \hspace{1.2cm}  M=\exp\left(-\nicefrac{i}{2} \mathbold{\vec\theta\newcdot\vec\sigma}-\half\mathbold{\vec\zeta
\newcdot\vec\sigma}\right),
\eeq
which via a similarity transformation is equivalent to the matrix representation, $(M^{-1})^{\T} =i\sigma^2 M (i\sigma^2)^{-1}$, and
\beq \label{zerohalf}
(0,\half): \hspace{1cm}   [M^{-1}]^\dagger=\exp\left(-\nicefrac{i}{2}
\mathbold{\vec\theta\newcdot\vec\sigma}
+\half\mathbold{\vec\zeta\newcdot\vec\sigma}\right), 
\eeq
which via a similarity transformation is equivalent to the matrix representation, $M^*=i\sigma^2 [M^{-1}]^\dagger (i\sigma^2)^{-1}$.

Thus, the Lorentz transformation law for two-component $(\half,0)$ fields can be written in two equivalent ways, 
\beq
 \xi'_\alpha=M_\alpha{}^\beta\,\xi_\beta\,,\qquad\quad
\xi^{\prime\,\alpha}=[(M^{-1})^{\T}]^\alpha{}_\beta\,\xi^\beta\,,
\eeq
where $\alpha,\beta=1,2$.
Likewise,  the Lorentz transformation law for two-component $(0,\half)$ fields can be written in two equivalent ways, 
\beq
\xi^{\prime\,\dagger\,\dot\alpha}=
[(M^{-1})^\dagger]^{\dot\alpha}{}_{\dot\beta}\,\xi^{\dagger\,\dot\beta}
\,,\qquad\quad 
\xi^{\prime\,\dagger}_{\dot\alpha}=
[M^*]_{\dot{\alpha}}{}^{\dot\beta}\xi^\dagger_{\dot\beta}\,.
\eeq
The $(0,\half)$ fields are related to the $(\half,0)$ fields by hermitian conjugation,
\beq
\xi^\dagger_{\dot\alpha}\equiv(\xi_\alpha)^\dagger\,,\qquad\quad \xi^{\dagger\,\dot\alpha}\equiv(\xi^\alpha)^\dagger\,.
\eeq
It is conventional to employ undotted indices for the spinor
components of $(\half,0)$ fields and dotted indices for the spinor components of $(0,\half)$ fields.

As noted below \eqs{halfzero}{zerohalf}, respectively,
each of the two equivalent representation matrices, $M$ and
$(M^{-1})^{\T}$  in the case of $(\half,0)$, and $(M^{-1})^\dagger$
and $M^*$ in the case of $(0,\half)$,
are related by a similarity transformation involving the antisymmetric matrices,
\beq
i\sigma^2=\left(\begin{matrix} \phm 0&\quad 1\\
-1&\quad
0\end{matrix}\right)=\epsilon^{\alpha\beta}=\epsilon^{\dot\alpha\dot\beta}\,,
\eeq
and
\beq
(i\sigma^2)^{-1}=-i\sigma^2=\epsilon_{\alpha\beta}=\epsilon_{\dot\alpha\dot\beta}\,,
\eeq
which define the epsilon symbols with undotted and dotted indices.  Note that the epsilon symbols with raised and lowered
indices differ by an overall sign. 
Moreover, they can be used to
raise and lower the 
spinor indices,

\beq \label{raiseindex}
\xi^\alpha =\epsilon^{\alpha\beta}\,\xi_\beta\,,\qquad
\xi_\alpha =\epsilon_{\alpha\beta}\,\xi^\beta,\qquad
\xi^{\dagger\,\dot\alpha}
=\epsilon^{\dot\alpha\dot\beta}\,\xi^\dagger_{\dot\beta}\,,\qquad
\xi^\dagger_{\dot\alpha}
=\epsilon_{\dot\alpha\dot\beta}\,\xi^{\dagger\,\dot\beta}.
\eeq

The products of two epsilon symbols with undotted and with dotted indices,
respectively, satisfy,
\begin{align}
&\epsilon_{\alpha\beta} \epsilon^{\gamma\delta} =
-\delta_\alpha^\gamma \delta_\beta^\delta
+\delta_\alpha^\delta \delta_\beta^\gamma
,\\
&\epsilon_{\dot{\alpha}\dot{\beta}} \epsilon^{\dot{\gamma}\dot{\delta}} =
-\delta_{\dot{\alpha}}^{\dot{\gamma}}\delta_{\dot{\beta}}^{\dot{\delta}}
+\delta_{\dot{\alpha}}^{\dot{\delta}}\delta_{\dot{\beta}}^{\dot{\gamma}}\,,
\end{align}
where $\delta_{\dot\alpha}^{\dot\beta}=\delta_\alpha^\beta$
and the two-index symmetric Kronecker delta symbol 
with undotted indices is defined by
$\delta^1_1=\delta^2_2=1$ and $\delta_1^2=\delta_2^1=0$.
In particular,
\beq
\epsilon_{\alpha\gamma}\,\epsilon^{\gamma\beta}=\delta_\alpha^\beta\,,\qquad\quad
\epsilon_{\dot\alpha\dot\gamma}\,\epsilon^{\dot\gamma\dot\beta}=\delta_{\dot\alpha}^{\dot\beta}\,.
\eeq

Finally, we introduce the $\sigma$-matrices:
\begin{align}
\sigma^\mu_{\alpha\dot\beta}=(\mathds{1}_{2\times 2}\,;\,
\mathbold{\vec\sigma}) \,,\qquad
\sigmabar^{\mu\,\dot\alpha\beta}=(\mathds{1}_{2\times 2}\,;\, 
-\mathbold{\vec\sigma})\,,
\end{align}
where $\mathds{1}_{2\times 2}$ is the $2\times 2$ identity matrix.  The spinor index
structure derives from the relations,
\beq \label{MM}
(M^\dagger)^{\dot\alpha}{}_{\dot\beta}\sigmabar^{\mu\dot\beta\gamma}
M_\gamma{}^\delta=\Lambda^\mu{}_\nu\sigmabar^{\nu\,\dot\alpha\delta}\,,\qquad
(M^{-1})_\alpha{}^\beta\sigma^\mu_{\beta\dot\gamma}[(M^{-1})^\dagger]^{\dot\gamma}{}_{\dot\delta}=\Lambda^\mu{}_\nu\sigma^\nu_{\alpha\dot\delta}\,.
\eeq
Note that the matrix $M$ and its inverse have the same spinor index
structure (and likewise for the matrix $M^\dagger$ and its inverse).

We will sometimes find it useful to relate the $\sigma^\mu$ and $\overline{\sigma}^\mu$ matrices using the identities
\beq
\sigma^\mu_{\alpha{\dot{\alpha}}} = \epsilon_{\alpha\beta}
\epsilon_{\dot{\alpha}\dot{\beta}} \sigmabar^{\mu\,\dot{\beta}\beta}
\,, \qquad\quad \sigmabar^{\mu\,\dot{\alpha}\alpha} =
\epsilon^{\alpha\beta} \epsilon^{\dot{\alpha}\dot{\beta}}
\sigma^{\mu}_{\beta\dot{\beta}}\,.
\eeq
The significance of $\sigma^\mu$ is that Lorentz 4-vectors can be built
from spinor bilinears. For example, $\chi^\alpha \of{x} \sigma^\mu_{\alpha \dot{\beta}} \xi^{\dot{\beta}}\of{x} $ transforms as a Lorentz 4-vector,
\beqa
\chi^{\,\prime\,\alpha}(x')\sigma^\mu_{\alpha\dot\beta}
\xi^{\prime\,\dagger\,\dot\beta}(x')
&=&\chi^\alpha(x)
[M^{-1}\sigma^\mu(M^{-1})^\dagger]_{\alpha\dot\beta}\xi^{\dagger\,\dot\beta}(x)
 \\
&
=&\Lambda^\mu{}_\nu\,\chi(x)^\alpha\sigma^\nu_{\alpha\dot\beta}
\xi^{\dagger\,\dot\beta}(x)
\,,
\eeqa
after making use of \eq{MM}.  Spinor
indices can be suppressed by adopting a summation
convention where we contract indices as follows:
\beq \label{contract}
{}^\alpha{}_\alpha\qquad {\rm and} \qquad
{}_{\dot{\alpha}}{}^{\dot{\alpha}}\,.
\eeq
For example,
\beqa \xi\eta
&\equiv & \xi^\alpha\eta_\alpha ,
\\
\xi^\dagger \eta^\dagger &\equiv & \xi^\dagger_{\dot\alpha} \eta^{\dagger\,\dot
\alpha} ,
\\
\xi^\dagger\sigmabar^\mu\eta &\equiv &  \xi^\dagger_{\dot{\alpha}}
\sigmabar^{\mu\dot{\alpha}\beta}\eta_\beta , 
\\
\xi\sigma^\mu \eta^\dagger &\equiv & \xi^{{\alpha}} \sigma^{\mu}_{\alpha
\dot \beta} \eta^{\dagger\,\dot \beta} .
 \eeqa 
In particular, for
anticommuting spinors,
\beqa
\eta\xi\equiv\eta^\alpha\xi_\alpha&=&-\xi_\alpha\eta^\alpha=+\xi^\alpha\eta_\alpha=\xi\eta\,.\\
\etabar\xibar\equiv \etabar_{\dalpha}{\xibar}^{\dalpha}&=&-{\xibar}^{\dalpha} \eta^\dagger_{\dalpha}=\xi^\dagger_{\dalpha} {\etabar}^{\dalpha}=\xibar\etabar\,.
\eeqa

The behavior of spinor products
under hermitian conjugation is noteworthy,
\beq
(\xi \Sigma \eta)^\dagger = \eta^\dagger \reversed{\Sigma} \xi^\dagger\,,
\quad (\xi \Sigma \eta^\dagger)^\dagger = \eta \reversed{\Sigma}
\xi^\dagger\,,
\quad (\xi^\dagger  \Sigma \eta)^\dagger = \eta^\dagger \reversed{\Sigma}
\xi\,,
\eeq
where in each case $\Sigma$ stands for any sequence of alternating
$\sigma$ and $\sigmabar$ matrices, and $\reversed{\Sigma}$ is
obtained by reversing the order of the $\sigma$
and $\sigmabar$ matrices that appear in $\Sigma$.

From the sigma matrices, one can construct the
antisymmetrized products,
\begin{align}
(\sigma^{\mu\nu})_\alpha{}^\beta
&\equiv \tfrac14 i\left(\sigma^\mu{}_{\!\!\!\!\alpha\dot{\gamma}}
\sigmabar^{\nu\dot{\gamma}\beta}-\sigma^\nu{}_{\!\!\!\!\alpha\dot{\gamma}}
\sigmabar^{\mu\dot{\gamma}\beta}\right)\,,
\\
(\sigmabar^{\mu\nu})^{\dot{\alpha}}{}_{\dot{\beta}} &\equiv
\tfrac14 i\left(\sigmabar^\mu{}^{\dot{\alpha}\gamma}
\sigma^\nu{}_{\!\!\!\!\gamma\dot{\beta}}-\sigmabar^\nu{}^{\dot{\alpha}\gamma}
\sigma^\mu{}_{\!\!\!\!\gamma\dot{\beta}}\right)\,. 
\end{align}

With this notation, we may write the $(\half,0)$ and $(0,\half)$ transformation
matrices, respectively, as
\beqa
M&=&\exp\left(-\half i\theta^{\mu\nu}\sigma_{\mu\nu}\right)\,,
 \\
(M^{-1})^\dagger &=&
\exp\left(-\half i\theta^{\mu\nu}\sigmabar_{\mu\nu}\right)\,,
\eeqa 
where the $\theta^{\mu\nu}$ are defined below \eq{lambda44}.

Consider a  pure boost of an on-shell two-component spinor from its rest frame to
the frame where $p^\mu=(E_{\boldsymbol{p}}\,,\,\boldsymbol{\vec
p})$, with $E_{\boldsymbol{p}}=(|{\mathbold{\vec p}}|^2+m^2)^{1/2}$.
In this case, setting $\theta^{ij}=0$ (corresponding to no rotation),
we obtain,
 \beqa
M&=&\exp\left(-\half\mathbold{\vec\zeta\newcdot\vec\sigma}\right)
=\sqrt{\frac{p\newcdot\sigma}{m}}=\frac{(E_{\boldsymbol{p}}+m)\mathds{1}_{2\times 2}
-\mathbold{\vec\sigma\newcdot\vec
p}}{\sqrt{2m(E_{\boldsymbol{p}}+m)}}
\,,\label{mboost}
\\
(M^{-1})^\dagger&=&\exp\left(+\half\mathbold{\vec\zeta\newcdot\vec\sigma}\right)
=\sqrt{\frac{p\newcdot\sigmabar}{m}}=\frac{(E_{\boldsymbol{p}}+m) \mathds{1}_{2\times 2}
+\mathbold{\vec\sigma\newcdot\vec
p}}{\sqrt{2m(E_{\boldsymbol{p}}+m)}} \,.\label{mstarboost}
 \eeqa 
The matrix square roots, $\sqrt{p\newcdot\sigma}$ and
$\sqrt{p\newcdot\sigmabar}$, appearing in \eqs{mboost}{mstarboost} are defined to be the unique non-negative
definite hermitian matrices
whose squares are equal
to the non-negative definite hermitian matrices
$\BDpos p\newcdot\sigma$ and $\BDpos p\newcdot\sigmabar$,
respectively.\footnote{Note that $\BDpos p\newcdot\sigma$
and $\BDpos p\newcdot\sigmabar$ are non-negative
matrices due to the implicit mass-shell condition
satisfied by $p^\mu$.}

\subsubsection{Useful identities}

The following identities can be used to systematically simplify
expressions involving products of $\sigma$ and $\sigmabar$
matrices,
\begin{align}
& \sigma^\mu_{\alpha\dot{\alpha}}
\sigmabar_\mu^{\dot{\beta}\beta} = \BDpos 2 \delta_{\alpha}^{\beta}
\delta^{\dot{\beta}}_{\dot{\alpha}}, \label{sigid1}
\\
& \sigma^\mu_{\alpha\dot{\alpha}} \sigma_{\mu\beta\dot{\beta}} =
\BDpos 2 \epsilon_{\alpha\beta}
\epsilon_{\dot{\alpha}\dot{\beta}}\,, \label{sigid2} \\
&
\sigmabar^{\mu\dot{\alpha}\alpha}
\sigmabar_\mu^{\dot{\beta}\beta} = \BDpos 2 \epsilon^{\alpha\beta}
\epsilon^{\dot{\alpha}\dot{\beta}}\,, {}\label{sigid3}
\\
& {[\sigma^\mu\sigmabar^\nu + \sigma^\nu \sigmabar^\mu
]_\alpha}^\beta = \BDpos 2\metric^{\mu\nu}
\delta_{\alpha}^{\beta}\,, {}\label{sigid4}
\\
&[\sigmabar^\mu\sigma^\nu + \sigmabar^\nu \sigma^\mu
]^{\dot{\alpha}}{}_{\dot{\beta}} = \BDpos 2\metric^{\mu\nu}
\delta^{\dot{\alpha}}_{\dot{\beta}}\,, {}\label{sigid5}
\\
& \sigma^\mu \sigmabar^\nu \sigma^\rho = \BDpos \metric^{\mu\nu}
\sigma^\rho \BDminus \metric^{\mu\rho} \sigma^\nu \BDplus
\metric^{\nu\rho} \sigma^\mu \BDplus i \epsilon^{\mu\nu\rho\kappa}
\sigma_\kappa\,, {}\label{sigsigsig1}
\\
& \sigmabar^\mu \sigma^\nu \sigmabar^\rho = \BDpos \metric^{\mu\nu}
\sigmabar^\rho \BDminus \metric^{\mu\rho} \sigmabar^\nu \BDplus
\metric^{\nu\rho} \sigmabar^\mu \BDminus i
\epsilon^{\mu\nu\rho\kappa} \sigmabar_\kappa\,,{} \label{sigsigsig2}
\end{align}
where $\epsilon^{0123}=-\epsilon_{0123}=+1$ in our conventions.
The traces of alternating products of $\sigma$ and $\sigmabar$
matrices are given by,
\begin{align}
&{\rm Tr}[\sigma^\mu \sigmabar^\nu ] = {\rm Tr}[\sigmabar^\mu
\sigma^\nu ] = \BDpos 2 \metric^{\mu\nu} \,, {}
\\
&{\rm Tr}[\sigma^\mu \sigmabar^\nu \sigma^\rho \sigmabar^\kappa ] =
2 \left ( \metric^{\mu\nu} \metric^{\rho\kappa} - \metric^{\mu\rho}
\metric^{\nu\kappa} + \metric^{\mu\kappa} \metric^{\nu\rho} + i
\epsilon^{\mu\nu\rho\kappa} \right )\,, \qquad\phantom{xx} {}
\\
&{\rm Tr}[\sigmabar^\mu \sigma^\nu \sigmabar^\rho \sigma^\kappa ] =
2 \left ( \metric^{\mu\nu} \metric^{\rho\kappa} - \metric^{\mu\rho}
\metric^{\nu\kappa} + \metric^{\mu\kappa} \metric^{\nu\rho} - i
\epsilon^{\mu\nu\rho\kappa} \right )\,. {}
\end{align}
Traces involving
an odd number of $\sigma$ and $\sigmabar$ matrices cannot arise,
since there is no way to connect the spinor indices consistently.
Additional identities involving $\sigma^{\mu\nu}$ and
$\sigmabar^{\mu\nu}$ can be found in Ref.~\cite{Dreiner:2008tw}. 

Finally, we examine some useful identities involving bilinear spinor
quantities.  Although the two-component spinor fields appearing in
these lectures are anticommuting, one also may encounter commuting
two-component spinor wave functions.  Thus, it is convenient to denote
an arbitrary two-component spinor by $z_i$, and a sign factor,
$(-1)^A=+1 [-1]$, for commuting [anticommuting] spinors, respectively.
Then, the following
identities hold:
\begin{align}
&z_1 z_2 = -(-1)^A z_2 z_1 {}
\\
&z_1^\dagger z_2^\dagger = -(-1)^A z_2^\dagger  z_1^\dagger {}
\\
&z_1 \sigma^\mu z_2^\dagger = (-1)^A z_2^\dagger \sigmabar^\mu z_1 {} \label{eq:sigmucom} \\
&z_1 \sigma^\mu \sigmabar^\nu z_2 = -(-1)^A z_2 \sigma^\nu
\sigmabar^\mu z_1
 {}\\
&z_1^\dagger \sigmabar^\mu \sigma^\nu z_2^\dagger = -(-1)^A z_2^\dagger
\sigmabar^\nu \sigma^\mu z_1^\dagger
{}\\
&z_1^\dagger \sigmabar^\mu \sigma^\rho \sigmabar^\nu z_2=(-1)^A z_2
\sigma^\nu \sigmabar^\rho \sigma^\mu z_1^\dagger\,.{}
\end{align}

In many cases, it is convenient to rewrite a product of two bilinear
spinor quantities in terms of products in which 
the individual spinors appear in a different order.   Below, we provide five different Fierz
identities, which are valid for both commuting and anticommuting spinors,
\begin{align}
(z_1 z_2)(z_3 z_4) &= -(z_1 z_3) (z_4 z_2) - (z_1 z_4)(z_2 z_3)\,, {}
\\
(z_1^\dagger z_2^\dagger)(z_3^\dagger z_4^\dagger) &= 
- (z_1^\dagger z_3^\dagger)
(z_4^\dagger z_2^\dagger) - (z^\dagger_1 z^\dagger_4) 
(z_2^\dagger  z^\dagger_3)\,, {}
\\ (z_1 \sigma^\mu z_2^\dagger)(z_3^\dagger \sigmabar_\mu z_4) &= \BDneg
2 (z_1 z_4) (z_2^\dagger z^\dagger_3)\,, {}
\\
 (z_1^\dagger \sigmabar^\mu z_2)(z^\dagger_3 \sigmabar_\mu z_4) &= \BDpos
\phm 2 (z_1^\dagger z^\dagger_3) (z_4 z_2)\,, {}
\\
 (z_1 \sigma^\mu z^\dagger_2)(z_3 \sigma_\mu z^\dagger_4) &= \BDpos \phm 2
(z_1 z_3) (z^\dagger_4 z^\dagger_2)\,.{}
\end{align}
An exhaustive 
list of Fierz identities 
can be found in Appendix B of Ref.\cite{Dreiner:2008tw}. 

\subsubsection{Free field theories of two-component fermions}
The $(\half,0)$ spinor field $\xi_\alpha(x)$ describes a neutral
{{Majorana fermion}}.  The free-field Lagrangian is:
\beq
\mathscr{L}= \BDpos i\xi^\dagger \sigmabar^\mu\partial_\mu\xi - \half m
(\xi \xi + \xi^\dagger \xi^\dagger )\,,
\eeq
which is hermitian up to a total divergence since we can rewrite the above Lagrangian as
\beq
\mathscr{L}= \half i \xi^\dagger \sigmabar^\mu\!\!\stackrel{\leftrightarrow}{\partial}_{\!\mu}\!\xi - \half m
(\xi \xi + \xi^\dagger \xi^\dagger )+\text{total divergence}\,,
\eeq
where $ \xi^\dagger \sigmabar^\mu\!\!\stackrel{\leftrightarrow}{\partial}_{\!\mu}\!\xi \equiv
\xi^\dagger \sigmabar^\mu(\partial_\mu\xi) - (\partial_\mu\xi)^\dagger \sigmabar^\mu\,\xi$. 

Generalizing to a multiplet of two-component fermion fields,
$\hat{\xi}_{\alpha i}(x)$, labeled by flavor index $i$, the free Lagrangian is
\beq
\mathscr{L}= \BDpos i{\hat\xi}^{\dagger\,i}\sigmabar^\mu\partial_\mu\hat\xi_i
- \half M^{ij}\hat\xi_i\hat\xi_j
- \half M_{ij}{\hat\xi}^{\dagger\,i}{\hat\xi}^{\dagger\,j}\,,
\eeq
where hermiticity implies that $M_{ij}\equiv (M^{ij})^*$ is a complex symmetric
matrix.
To identify the physical fermion fields, we express the so-called \textit{interaction eigenstate fields}, 
$\hat\xi_{\alpha i}(x)$,  in terms of \textit{mass-eigenstate fields}
\beq
\xi(x)=\Omega^{-1}\hat\xi(x),
\eeq
where $\Omega$ is unitary and chosen such that
\beq
\Omega^{\T} M\, \Omega = \boldsymbol{m} = {\rm diag}(m_1,m_2,\ldots),
\eeq
where the $m_i$ are non-negative real numbers.
In linear algebra, this is called the
{\textit{Takagi diagonalization}} of a complex symmetric matrix
$M$\cite{takagi,horn}.\footnote{Subsequently, it was recognized in
Refs.\cite{horn2,horn3} that the Takagi diagonalization was first
established for nonsingular complex symmetric matrices by Autonne
\cite{autonne}.}
To compute the values of the diagonal elements of $\boldsymbol{m}$, we note
that
\beq
\Omega^{\T} MM^\dagger \Omega^\ast= \boldsymbol{m}^2 .
\eeq
Since $MM^\dagger$ is hermitian, it can be diagonalized by a unitary
matrix.   Thus, the $m_i$ of the Takagi diagonalization are
the non-negative square-roots of the eigenvalues of $MM^\dagger$.
In terms of the mass eigenstate fields,
\beq
\mathscr{L}= \BDpos
i\xi^{\dagger\,i}\sigmabar^\mu\partial_\mu\xi_i- \half m_{i}(\xi_i\xi_i+
\xi^{\dagger\,i}\xi^{\dagger\,i})\,.
\eeq
\clearpage

\begin{example}[The Seesaw Mechanism\cite{seesaw1,seesaw2,seesaw3,seesaw4,seesaw5}]

The~seesaw~Lagrangian for the two-component fermions $\psi_1$ and $\psi_2$ is
\beq
\mathscr{L}=i\left(\psi^{\dagger\,1}\,\overline\sigma^\mu\partial_\mu\psi_1+
\psi^{\dagger\,2}\,\overline\sigma^\mu\partial_\mu\psi_2\right)-M^{ij}\psi_i\psi_j
-M_{ij}\psi^{\dagger\,i}\,\psi^{\dagger\,j}  \,,
\eeq
where
\beq
M^{ij}={\left(\begin{array}{cc} 0 &\,\,\, m_D\\ m_D &\,\,\,
M\end{array}\right)}\,,
\eeq
and (without loss of generality) $m_D$ and $M$ are real and
positive. The Takagi diagonalization of this matrix is
\beq
\Omega^T M
\Omega=M_D,\label{takagidef}
\eeq
 where
\begin{align}
\Omega=\left(\begin{array}{cc} \phm i\cos\theta &\quad \sin\theta \\
-i\sin\theta &\quad \cos\theta\end{array}\right)\,,\qquad\quad
M_D= \left(\begin{array}{cc} m_- &\quad 0\\ 0 &\quad  m_+ \end{array}\right)  \,,
\end{align}
with
\beq
m_\pm=\half\left[\sqrt{M^2+4m_D^2}\pm M\right]
\eeq
and
\beq
\sin 2\theta=\frac{2m_D}{\sqrt{M^2+4m_D^2}}\,.
\eeq
If $M\gg m_D$, then the corresponding fermion masses are
$m_-\simeq m_D^2/M$ and $m_+\simeq M$, with $\sin\theta\simeq
m_D/M$.  The mass eigenstates, $\chi_i$ are given by
$\psi_i=\Omega_i{}^j\chi_j$; to leading order in $m_d/M$,
\begin{align}
i\chi\ls{1} \simeq \psi_1-\frac{m_D}{M}\psi_2\,,\qquad\quad
\chi\ls{2} \simeq  \psi_2+\frac{m_D}{M}\psi_1\,.
\end{align}
Indeed, one can check that: 
\beq
\begin{split}
 \half
m_D(\psi_1\psi_2+\psi_2\psi_1  )+\tfrac12 M\psi_2 & \psi_2 +{\rm
h.c.} \\
& \simeq\frac12\left[\frac{m_D^2}{M}\chi\ls{1}\chi\ls{1}+
M\chi\ls{2}\chi\ls{2}+{\rm h.c.}\right]\,,
\end{split}
\eeq
 which corresponds to a theory
of two Majorana fermions---one very light and one very heavy
({\textit{the seesaw}}).
\end{example}

In any theory containing a multiplet of fields, one can check for the existence of global symmetries.
The simplest case is a theory of a pair of two-component $(\half,0)$  fermion fields $\chi$ and $\eta$, with
the free-field Lagrangian,
\beq \label{DiracLag2}
\mathscr{L}= \BDpos i\chi^\dagger\sigmabar^\mu\partial_\mu\chi \BDplus
  i\eta^\dagger\sigmabar^\mu\partial_\mu\eta-m(\chi\eta+
\chi^\dagger\eta^\dagger)\,.
\eeq
The Lagrangian given in \eq{DiracLag2} possesses a U(1) global symmetry, $\chi\to e^{i\theta}\chi$ and $\eta\to e^{-i\theta}\eta$.
That is, $\chi$ and $\eta$ are oppositely charged. 
 The corresponding  mass matrix is
\beq
M = \left(\begin{matrix} 0&\quad m\\ m&\quad 0
\end{matrix}\right).
\eeq
  Performing the Takagi diagonalization yields two degenerate two-component fermions of mass $m$.  However, the corresponding mass-eigenstates are not eigenstates of charge.\footnote{This is the analog of a free field theory of a complex scalar boson
$\Phi$ with a mass term, $\mathscr{L}_{\rm mass}=-m^2|\Phi|^2$.   Writing
$\Phi=(\phi_1+i\phi_2)/\sqrt{2}$, we can write Lagrangian in terms of $\phi_1$ and $\phi_2$ with a diagonal mass
term.  But, $\phi_1$ and $\phi_2$ do not correspond to states of definite charge.}
Together, $\chi$ and $\eta^\dagger$ constitute a single (four-component) {\textit{Dirac fermion}}.

More generally, consider a collection of
charged Dirac fermions  represented by
pairs of two-component interaction eigenstate fields
$\hat\chi_{\alpha i}(x)$, $\hat\eta_{\alpha }^i(x)$, with
\beq
\mathscr{L}=
  i{\hat\chi}^{\dagger i}\sigmabar^\mu\partial_\mu\hat\chi_i
+
  i{\hat\eta}^\dagger_{i}\sigmabar^\mu\partial_\mu\hat\eta^i
-M^i{}_j \hat\chi_i\hat\eta^j
-M_i{}^j  {\hat\chi}^{\dagger i}\hat\eta^\dagger_{ j}\,,
\eeq
where $M$ is a complex matrix with matrix elements denoted by
$M^i{}_j$ (note the placement of the flavor indices $i$ and $j$), and $M_i{}^j\equiv (M^i{}_j)^*$.

We denote the mass eigenstate fields by $\chi_i$ and
$\eta^i$ and the unitary matrices $L$ and~$R$, such that
$\hat\chi_i=L_i{}^k\chi_k$ and 
$\hat\eta^i=R^i{}_k\eta^k$,
and 
\beq \label{LTMR}
L^{\T} M R= {\boldsymbol{m}}={\rm diag}(m_1,m_2,\ldots),
\eeq
where the $m_i$ are non-negative real numbers.
This is the singular value
decomposition of a complex matrix (see, e.g., Refs.\cite{horn2,horn3}). Noting that
\beq
R^\dagger(M^\dagger M) R \,=\, {\boldsymbol{m}}^2\,,\label{svd}
\eeq
the diagonal elements of $\boldsymbol{m}$ are
the non-negative square roots of the
corresponding eigenvalues of $M^\dagger M$.
In terms of the
mass eigenstate fields,
\beq
\label{lagDiracdiag}
\mathscr{L}= i{\chi}^{\dagger i}\sigmabar^\mu\partial_\mu\chi_i+
  i{\eta}^\dagger_i \sigmabar^\mu \partial_\mu\eta^i
- m_i(\chi_i\eta^i + \chi^{\dagger i} \eta^\dagger_i)\,.
\eeq

\subsubsection{Fermion--scalar interactions}

The most general set of interactions
with the scalars of the theory $\hat\phi_I$
are then given by:
\beq
\mathscr{L}_{\rm int} = -\half \hat Y^{Ijk} \hat\phi_I\hat\psi_j\hat\psi_k
-\half \hat Y_{Ijk}\hat\phi^{I} {\hat\psi}^{\dagger\,j} {\hat\psi}^{\dagger\,k}
\,,
\eeq
where $\hat Y_{Ijk}\equiv  (\hat Y^{Ijk})^*$ and $\hat\phi^I\equiv (\hat\phi_I)^*$.
 The flavor index $I$ runs over a collection of
real scalar fields $\hat\varphi_i$ and pairs of complex scalar fields
$\hat\Phi_j$ and $\hat\Phi^j\equiv(\hat\Phi_j)^*$
(where a complex field and its
conjugate are counted separately).
The Yukawa
couplings $\hat Y^{Ijk}$ are symmetric under interchange of $j$ and
$k$.

The mass-eigenstate
basis $\psi$ is related to the interaction-eigenstate basis $\hat \psi$ by
a unitary transformation,
\begin{align}
\hat \psi \equiv \begin{pmatrix}\hat\xi \\ \hat\chi \\ \hat\eta
\end{pmatrix}= U \psi
\equiv \begin{pmatrix}\Omega &\quad 0& \quad0 \\
                0 & \quad L &\quad 0 \\
                0 & \quad 0 &\quad  R\end{pmatrix}
\begin{pmatrix}\xi \\ \chi \\ \eta\end{pmatrix} \,,
\end{align}
where $\Omega$, $L$, and $R$ are constructed as described previously.
Likewise a unitary transformation yields the scalar mass-eigenstates via $\hat\phi=V\phi$.
Thus, in terms of mass-eigenstate fields:
\beq
\mathscr{L}_{\rm int} = -\half Y^{Ijk} \phi_I\psi_j\psi_k
-\half Y_{Ijk} \phi^{I} {\psi}^{\dagger\,j} {\psi}^{\dagger\,k}
\,,
\eeq
where
$Y^{Ijk}=V_J{}^I U_m{}^j U_n{}^k \hat Y^{Jmn}$.

\subsubsection{Fermion--gauge boson interactions}

In the gauge-interaction basis for the
two-component fermions the corresponding interaction
Lagrangian is given by
\beq \label{eq:lintG}
\mathscr{L}_{\rm int} =
- g_a A_a^{\mu} {\hat\psi}^{\dagger\,i}\,
\sigmabar_\mu ({\boldsymbol T}^a)_i{}^j \hat\psi_j \,,
\eeq
where the index $a$ labels the (real or complex) vector bosons
$A_a^\mu$ and is summed over.
If the gauge symmetry is unbroken, then the index $a$ runs over the
adjoint representation of the gauge group, and the $({\boldsymbol T}^a)_i{}^j$
are hermitian representation matrices\footnote{For a $U(1)$ gauge
group, the $\mathbold{T}^a$ are replaced by real numbers
corresponding to the U(1) charges of the $(\half,0)$
fermions.}
of the gauge group acting on the
fermions.  There is a separate coupling
$g_a$ for each simple group or U(1) factor of the
gauge group G.

In the case of spontaneously broken gauge theories, one must
diagonalize the vector boson squared-mass matrix.  The form of
\eq{eq:lintG} still applies where $A_\mu^a$ are gauge boson fields of
definite mass, although in this case for a fixed value of $a$, the
product $g_a{\boldsymbol T}^a$ is
some linear combination of the original $g_a {\boldsymbol T}^a$ of the
unbroken theory.
That is, the hermitian matrix gauge field $(A_\mu)_i{}^j\equiv
A_\mu^a (\boldsymbol{T^a})_i{}^j$ appearing in \eq{eq:lintG} can always
be re-expressed in terms of the
\textit{physical} mass eigenstate gauge boson fields.
If an unbroken U(1)
symmetry exists, then the physical gauge bosons will also
be eigenstates of the
conserved U(1)-charge.\footnote{\label{vectormass}
In terms of the physical gauge boson fields, $A_\mu^a\boldsymbol{T^a}$
consists of a sum over real neutral gauge fields
multiplied by hermitian generators, and
complex charged gauge fields
multiplied by non-hermitian generators.
For example, in the electroweak Standard
Model, ${\rm G}={\rm SU}(2)\times$U(1) with gauge bosons and
generators $W_\mu^a$ and ${\boldsymbol T}^a=\half\tau^a$ for SU(2),
and $B_\mu$ and $\boldsymbol{Y}$ for U(1),
where the $\tau^a$ are the usual Pauli matrices.
After diagonalizing the gauge boson squared-mass matrix,
$$
gW_\mu^a \boldsymbol{T^a}+ g' B_\mu \boldsymbol{Y}=
\frac{g}{\sqrt{2}}(W_\mu^+\boldsymbol{T^+}
+W_\mu^-\boldsymbol{T^-})+
\frac{g}{\cos\theta_W}\left(\boldsymbol{T^3}-\boldsymbol{Q}
\sin^2\theta_W\right)Z_\mu+e\boldsymbol{Q}A_\mu\,,
$$
where
$\boldsymbol{Q}=\boldsymbol{T^3}+\boldsymbol{Y}$ is the generator
of the unbroken U(1)$_{\rm EM}$,
$\boldsymbol{T^\pm}\equiv \boldsymbol{T^1}\pm i\boldsymbol{T^2}$,
and $e=g\sin\theta_W=g'\cos\theta_W$.
The massive gauge boson charge-eigenstate fields
of the broken theory
consist of a charged massive gauge boson pair,
$W^\pm\equiv (W^1\mp iW^2)/\sqrt{2}$, a neutral massive gauge boson,
$Z\equiv W^3\cos\theta_W-B\sin\theta_W$, and the massless photon,
$A\equiv W^3\sin\theta_W+B\cos\theta_W$.}

In terms of mass-eigenstate fermion fields,
\beq
\mathscr{L}_{\rm int} =
- A_a^{\mu} \psi^{\dagger\,i}\,
\sigmabar_\mu (G^a)_i{}^j \psi_j \,,
\eeq
where $G^a= g_a U^\dagger {\boldsymbol{T}}^a U$
(no sum over~$a$).

The case of gauge interactions
of charged Dirac fermions can be treated as follows. Consider pairs of $(\half,0)$
interaction-eigenstate fermions
$\hat\chi_i$ and $\hat\eta^i$ that  transform as conjugate representations
of the gauge group (hence the difference in the flavor index heights).
The Lagrangian for the gauge interactions
of Dirac fermions can be written in the form:
\beq
\mathscr{L}_{\rm int} =
- g_a A_a^{\mu} \hat\chi^{\dagger\,i}\, \sigmabar_\mu
({\boldsymbol T}^a)_i{}^j \hat\chi_j
+ g_a A_a^{\mu} \hat\eta^\dagger_{\,i}\, \sigmabar_\mu
({\boldsymbol T}^a)_j{}^i \hat\eta^j \,,
\eeq
where the $A_\mu^a$ are gauge boson mass-eigenstate fields.
Here we have used the fact that if $({\boldsymbol T}^a)_i{}^j$ are the
representation matrices for the $\hat\chi_i$, then the $\hat\eta^i$
transform in the complex conjugate representation with generator
matrices $-({\boldsymbol T}^a)^*
= -({\boldsymbol T}^a)^T$.
In terms of mass-eigenstate fermion fields,
\beq
\mathscr{L}_{\rm int} =
- A_a^{\mu}\left[ {\chi}^{\dagger\,i}\, \sigmabar_\mu
(G_L^a)_i{}^j \chi_j
- {\eta}^\dagger_{\,i}\, \sigmabar_\mu
(G_R^a)_j{}^i \eta^j\right] \,,
\eeq
where
$G_L^a=g_a L^\dagger {\boldsymbol{T}}^a L$ and
$G_R^a=g_a R^\dagger {\boldsymbol{T}}^a R$ (no sum over~$a$).

\subsection[Correspondence between the
two- and four-component spinor notations]{Correspondence between the
two-component and four-component spinor notations}
\label{sec:24}

Most pedagogical treatments of calculations in particle physics
employ four-component Dirac spinor notation, which combines distinct irreducible
representations of the Lorentz symmetry algebra.  Parity-conserving theories such as QED and
QCD and their Feynman rules are especially well-suited to four-component
spinor notation.  In light of the widespread familiarity with four-component spinor
techniques, we provide in this section a translation between 
two-component and four-component spinor notation.

\subsubsection{From two-component to four-component spinor notation}
The correspondence between the two-component and four-component
spinor language is most easily exhibited
in the basis in which $\gamma_5$ is diagonal (this is called the {\it
chiral} representation).
Employing 2$\times$2 matrix blocks,
the gamma matrices are given by:
\begin{align}
\gamma^\mu = \begin{pmatrix} 0 & \quad \sigma^\mu_{\alpha{\dot{\beta}}}\\
\sigmabar^{\mu{\dot{\alpha}}\beta} &\quad 0\end{pmatrix}
\,,\quad
\gamma_5 \equiv i\gamma^0\gamma^1\gamma^2\gamma^3=\begin{pmatrix}
-\delta_\alpha{}^\beta & \quad 0\\ 0 &\quad \delta^{\dot{\alpha}}{}_{\dot{\beta}}
\end{pmatrix}
\,.
\end{align}
The chiral projections operators are
\begin{align}
P_L\equiv \half(1-\gamma_5)\,, \label{eq:projL} \\
P_R\equiv \half(1+\gamma_5)\,. \label{eq:projR}
\end{align}
In addition, we identify the generators of the Lorentz group
in the reducible $(\half,0)\oplus (0,\half)$ representation\footnote{In most textbooks,
$\Sigma^{\mu\nu}$ is called $\sigma^{\mu\nu}$.  Here, we use the
former symbol so that there is no confusion with the two-component
definition of $\sigma^{\mu\nu}$.}
\beq
\half\Sigma^{\mu\nu}\equiv\frac{i}{4}[\gamma^\mu,\gamma^\nu]=
\begin{pmatrix} \sigma^{\mu\nu}{}_\alpha{}^\beta & \quad 0\\ 
0 & \quad\sigmabar^{\mu\nu}{}^{\dot{\alpha}}{}_{\dot{\beta}}\end{pmatrix}\,,
\eeq
where $\Sigma^{\mu\nu}$ satisfies the duality relation,
$\gamma\ls{5}\Sigma^{\mu\nu}=\half i \epsilon^{\mu\nu\rho\tau}\Sigma_{\rho\tau}$.

A four-component Dirac spinor field, $\Psi(x)$, is made up of two
mass-degenerate two-component spinor fields, $\chi_\alpha(x)$ and
$\eta_\alpha(x)$ as follows:
\beq \label{diracspinor}
\Psi(x)\equiv\begin{pmatrix} \chi_\alpha(x)
\\[4pt] \eta^{\dagger\,\dot{\alpha}}(x)\end{pmatrix}\,.
\eeq
Note that $P_L$ and $P_R$ project out the upper and lower components,
respectively.
The Dirac conjugate field 
$\overline{\Psi}$ and the charge conjugate field $\Psi^c$ are
defined by
\beqa
\overline{\Psi}(x)&\equiv&\Psi^\dagger A =
\bigl(\eta^\alpha(x),\chi^\dagger_{\dot{\alpha}}(x)\bigr)\,,{} \label{psibar} \\[6pt]
\Psi^c(x)&\equiv&C\overline{\Psi}^{\T}(x)=
\begin{pmatrix} \eta_\alpha(x) \\[4pt] \chi^{\dagger\,\dot{\alpha}}(x)
\end{pmatrix}\,,  {}
\eeqa
where the Dirac conjugation matrix $A$ and the charge conjugation
matrix $C$ satisfy
\beq
A\gamma^\mu A^{-1}={\gamma^\mu}^\dagger\,,\qquad\qquad\qquad C^{-1}
\gamma^\mu C=-{\gamma^\mu}^{\T}\,.
\eeq
It is conventional to impose two additional conditions:
\beq \label{extraconditions}
\Psi=A^{-1}\Psibar^\dagger\,, \qquad\qquad (\Psi^c)^c=\Psi\,.
\eeq
The first of these conditions together with \eq{psibar} is equivalent
to the statement that $\Psibar\Psi$ is hermitian.
The second condition corresponds to the statement
that the (discrete)
charge conjugation transformation applied twice is equal to the
identity operator.  It then follows that
\beq \label{AC}
A^\dagger=A\,,\qquad\quad C^{\T}=-C\,,\qquad\quad (AC)^{-1}=(AC)^*\,.
\eeq
In the chiral representation, $A$ and $C$ are explicitly given by
\beq
A=\begin{pmatrix} 0 &\quad \delta^{\dot{\alpha}}{}_{\dot{\beta}} \\
\delta_\alpha{}^\beta &\quad   0\end{pmatrix}\,,\qquad
C =\begin{pmatrix} \epsilon_{\alpha\beta}& \quad   0\\
                 0 &\quad  \epsilon^{\dot{\alpha}\dot{\beta}}\end{pmatrix}
\,.
\eeq
Note the \textit{numerical} equalities, $A=\gamma^0$ and
$C=i\gamma^0\gamma^2$, although these identifications do not respect
the structure of the undotted and dotted indices specified above.

Finally, we note the following results, which are easily derived:
\beqa
&& \hspace{-0.4in} A\Gamma A^{-1} = \eta\ls{\Gamma}^A\Gamma^\dagger\,,\qquad
\eta\ls{\Gamma}^A=\begin{cases} +1\,, &
\text{\quad for $\Gamma=\mathds{1}\,,\,\gamma^\mu\,,\,
\gamma^\mu\gamma\ls{5}\,,\,\Sigma^{\mu\nu}$,}\\  -1\,, &
\text{\quad for $\Gamma=\gamma\ls{5}
\,,\,\Sigma^{\mu\nu}\gamma\ls{5}$\,,}\end{cases} \label{aagamma}
\\[6pt]
&&  \hspace{-0.4in}  C^{-1}\Gamma C= \eta\ls{\Gamma}^C \Gamma^{\T}\,,\qquad
\eta\ls{\Gamma}^C=\begin{cases} +1\,, &
\text{\quad for $\Gamma=\mathds{1}\,,\,\gamma\ls{5}\,,\,
\gamma^\mu\gamma\ls{5}$\,,} \\  -1\,, &
\text{\quad for $\Gamma=\gamma^\mu\,,\,\Sigma^{\mu\nu}
\,,\,\Sigma^{\mu\nu}\gamma\ls{5}$\,.}\end{cases} \label{ccgamma} 
\eeqa

\subsubsection{Four-component spinor bilinear covariants}

The Dirac bilinear covariants are quantities that are quadratic in the
Dirac spinor fields and transform irreducibly as Lorentz tensors. These
may be constructed from the corresponding quantities that are
quadratic in the two-component spinors.  To construct a translation
table between the two-component spinor and four-component spinor forms
of the bilinear covariants, we first define two Dirac spinor fields,
\beq
\Psi_1(x)\equiv\left(\begin{array}{c}{\chi_1} (x) \\[4pt]
{\eta^\dagger_1}(x)\end{array}\right)\,, \qquad\quad
\Psi_2(x)\equiv\left(\begin{array}{c}{\chi_2} (x) \\[4pt]
{\eta^\dagger_2}(x)\end{array}\right)\, ,
\eeq
where spinor indices have been suppressed.   It follows that,
\beqa
 &&\overline\Psi_1 \Psi_2 = \eta_1\chi_2 +
\chi^\dagger_1\eta^\dagger_2\,, \label{bilinear1}\\
&& \overline\Psi_1\gamma_5\Psi_2 = -\eta_1\chi_2 +
\chi^\dagger_1\eta^\dagger_2\,,\\
&& \overline\Psi_1\gamma^\mu\Psi_2 = \chi_1^\dagger\sigmabar^\mu\chi_2
       +\eta_1\sigma^\mu \eta^\dagger_2\,,\\
&& \overline\Psi_1\gamma^\mu\gamma_5\Psi_2 =
-\chi^\dagger_1\sigmabar^\mu\chi_2
       +\eta_1\sigma^\mu \eta^\dagger_2\,, \\
&& \overline\Psi_1\Sigma^{\mu\nu}\Psi_2 = 2(\eta_1 \sigma^{\mu\nu}
       \chi_2 + \chi^\dagger_1 \sigmabar^{\mu\nu} \eta^\dagger_2)\,,\\
&& \overline\Psi_1\Sigma^{\mu\nu}\gamma_5 \Psi_2 = -2(\eta_1
\sigma^{\mu\nu}
       \chi_2 - \chi^\dagger_1 \sigmabar^{\mu\nu} \eta^\dagger_2)
       \,.\label{bilinear6}
\eeqa
The above results can be used to to obtain the translations given
in Table~\ref{tab:24}.
\clearpage

\begin{table}[t!]
\begin{center}
\renewcommand{\arraystretch}{1.5}
\setlength{\tabcolsep}{1.5pc}
\caption{\small Relating the Dirac bilinear covariants written in
  terms of four-component Dirac spinor fields to the corresponding quantities
  expressed in terms of two-component spinor fields using the notation
  of \eq{diracspinor}.  These results apply to both commuting and
  anticommuting spinors.  In the latter case,
one may alternatively write $ \overline\Psi_1\gamma^\mu P_R\Psi_2 =
-\eta^\dagger _2\sigmabar^\mu\eta_1$,
etc. [cf.~\eq{eq:sigmucom}].
}
\label{tab:24}
\vskip 0.05in
\begin{tabular}{|l|l|} \hline
$\overline\Psi_1 P_L \Psi_2 = \eta_1\chi_2$
    &$\overline\Psi\lsup c_1 P_L \Psi_2^c = \chi_1\eta_2$ \\
$\overline\Psi_1 P_R \Psi_2 = \chi^\dagger_1\eta^\dagger_2$
    &$\overline\Psi_1\lsup c P_R \Psi_2^c = \eta^\dagger_1\chi^\dagger_2$ \\
$\overline\Psi\lsup c_1 P_L \Psi_2 = \chi_1\chi_2$
    &$ \overline\Psi_1 P_L \Psi_2^c = \eta_1\eta_2$ \\
$\overline\Psi_1 P_R \Psi^c_2 = \chi^\dagger_1\chi^\dagger_2$
  &$\overline\Psi\lsup c_1 P_R \Psi_2 = \eta^\dagger_1\eta^\dagger_2$ \\
$\overline\Psi_1 \gamma^\mu P_L\Psi_2 = \chi^\dagger_1\sigmabar^\mu\chi_2$
  &$\overline\Psi\lsup c_1 \gamma^\mu P_L\Psi_2^c =
                          \eta^\dagger_1\sigmabar^\mu\eta_2$\\
$\overline\Psi\lsup c_1\gamma^\mu P_R\Psi^c_2 = \chi_1\sigma^\mu
                                       \chi^\dagger_2$
  &$\overline\Psi_1\gamma^\mu P_R\Psi_2 = \eta_1\sigma^\mu
                                       \eta^\dagger_2 $ \\
$\overline\Psi_1 \Sigma^{\mu\nu}P_L \Psi_2
                = 2\,\eta_1\sigma^{\mu\nu}\chi_2$
  &$\overline\Psi_1\lsup c \Sigma^{\mu\nu}P_L \Psi_2^c
                          = 2\,\chi_1\sigma^{\mu\nu}\eta_2$ \\
$\overline\Psi_1 \Sigma^{\mu\nu}P_R \Psi_2
                = 2\,\chi^\dagger_1\sigmabar^{\mu\nu}\eta^\dagger_2$
  &$\overline\Psi_1\lsup c \Sigma^{\mu\nu}P_R \Psi_2^c
                          =
    2\,\eta^\dagger_1\sigmabar^{\mu\nu}\chi^\dagger_2$\\
\hline
\end{tabular}
\end{center}
\vskip -0.15in
\end{table}

When $\Psi_2 = \Psi_1$, the bilinear covariants listed in
\eqst{bilinear1}{bilinear6} are either hermitian or
anti-hermitian.   Using \eq{aagamma}, it follows that
$\overline\Psi \Gamma \Psi$ is
hermitian for $\Gamma = \mathds{1}_{4\times 4},\ i\gamma_5,\ \gamma^\mu,\ \gamma^\mu \gamma_5, \ \Sigma^{\mu\nu}$, and $i \Sigma^{\mu\nu}\gamma_5$.

One can also define Majorana bilinear covariants.  A four-component
Majorana fermion field is defined by the condition,
\beq \label{majcond}
\Psi_M(x)=\Psi^c_M(x)=C\overline\Psi_M^T(x)=\begin{pmatrix} \xi_\alpha(x) \\[3pt] \xi^{\dot\alpha\,\dagger}(x)\end{pmatrix}\,.
\eeq
\Eqst{bilinear1}{bilinear6} and the results of Table~\ref{tab:24}
may also be applied to four-component Majorana
spinors, $\Psi_{M1}$ and $\Psi_{M2}$, by setting $\xi_1\equiv\chi_1=\eta_1$,
and $\xi_2\equiv\chi_2=\eta_2$, respectively.
This implements the Majorana
condition given in \eq{majcond}
and imposes additional
restrictions on the Majorana bilinear covariants.  In particular, the
\textit{anticommuting} Majorana four-component fermion fields
satisfy the following additional identities,
\beqa
\overline\Psi_{M1}\Psi_{M2}&=&\overline\Psi_{M2}\Psi_{M1}
\,,{}\label{M1}\\
\overline\Psi_{M1}\gamma\ls{5}\Psi_{M2}&=&\overline\Psi_{M2}
\gamma\ls{5}\Psi_{M1}\,,{}\label{M2}\\
\overline\Psi_{M1}\gamma^\mu \Psi_{M2}&=&
-\overline\Psi_{M2}\gamma^\mu\Psi_{M1}\,,{}\label{M3}\\
\overline\Psi_{M1}\gamma^\mu\gamma\ls{5} \Psi_{M2}&=&
\overline\Psi_{M2}\gamma^\mu\gamma\ls{5}\Psi_{M1}\,,{}\label{M4}\\
\overline\Psi_{M1}\Sigma^{\mu\nu} \Psi_{M2}&=&
-\overline\Psi_{M2}\Sigma^{\mu\nu}\Psi_{M1}\,,{}\label{M5}\\
\overline\Psi_{M1}\Sigma^{\mu\nu}\gamma\ls{5} \Psi_{M2}&=&
-\overline\Psi_{M2}\Sigma^{\mu\nu}\gamma\ls{5}\Psi_{M1}\,. {}\label{M6}
\eeqa

If
$\Psi_{M1}=\Psi_{M2}\equiv\Psi_M$, then \eqst{M1}{M6} yield
\beq
\overline\Psi_{M}\gamma^\mu
\Psi_{M}=\overline\Psi_{M}\Sigma^{\mu\nu}
\Psi_{M}=\overline\Psi_{M}\Sigma^{\mu\nu}\gamma\ls{5} \Psi_{M}=0\,.\\
\eeq
One additional useful result for Majorana fermion fields is:
\beq
\overline\Psi_{M1}\gamma^\mu P_L\Psi_{M2}=
-\overline\Psi_{M2}\gamma^\mu P_R\Psi_{M1}\,.
\eeq

%
\subsection{Feynman Rules for Dirac and Majorana fermions}
\label{sec:Feynman}

The application of four-component fermion
techniques in parity-violating theories is straightforward for
processes involving Dirac fermions.   However, the inclusion of
Majorana fermions involves some subtleties that require elucidation.
In light of the widespread familiarity with four-component spinor
techniques, we shall develop four-component fermion Feynman rules
 that treat Dirac and Majorana fermions on equal 
footing\cite{Dreiner:2008tw,Gates:1987ay,Denner:1992me,Kleiss:2009hu}.\footnote{
For a comprehensive set of 
two-component fermion Feynman rules, see Ref.~\cite{Dreiner:2008tw}.}

Consider first the Feynman rule for the four-component fermion
propagator.
Virtual Dirac fermion lines can either correspond to $\Psi$ or
$\Psi^c$.  Here, there is no ambiguity in the propagator Feynman rule,
since for free Dirac fermion fields,
\beq
\left\langle 0\right|T[\Psi(x)\Psibar(y)]
\left|0\right\rangle=
\left\langle 0\right |T[\Psi^c(x)\overline{\Psi^c}(y)]
\left|0\right\rangle\,,
\eeq
so that the Feynman rules for the propagator of a $\Psi$ and $\Psi^c$
line, exhibited below, are identical.
The same rule also applies to a four-component Majorana fermion $\Psi_M$.
\begin{center}
\begin{picture}(200,50)(-135,-16)
\thicklines
\LongArrow(-110,25)(-70,25)
\ArrowLine(-130,15)(-50,15)
\put(-90,30){$p$}
\put(20,10){$\displaystyle
 {\frac{i(\slashchar{p}\BDplus m)}
 {p^2 \BDminus m^2 \BDplus i\epsilon}}$}
\end{picture}
\end{center}

\vspace{-0.2in}
Consider next a set of neutral Majorana fermions $\Psi_{Mi}$ and
charged Dirac fermions $\Psi_i$,
\beq
\Psi_{Mi} =
\begin{pmatrix}
\xi_i
\\[4pt]
\xi^\dagger_i
\end{pmatrix},
\qquad
\Psi_i =
\begin{pmatrix}
\chi_i \\[4pt]
\eta^\dagger_i
\end{pmatrix},
\eeq
 interacting with a neutral scalar $\phi$ or
vector boson $A_\mu$.  The interaction Lagrangian in terms of two-component
fermions is
\beqa \!\!\!\!\!\!\!\! \!\!\!\!\!\!\!\!
\mathscr{L}_{\rm int} &=& -\half(\lambda^{ij}\xi_i\xi_j+\lambda_{ij}
\xi^{\dagger\,i}\xi^{\dagger\,j})\phi-(\kappa^i{}_j\chi_i\eta^j+\kappa_i{}^j
\chi^{\dagger\,i}\eta^\dagger_j)\phi\ {} \nonumber \\
&&\, -G_i{}^j\,\xi^{\dagger\,i}\sigmabar^\mu\xi_j A_\mu
-[(G_L)_i{}^j\chi^{\dagger\,i}\sigmabar^\mu\chi_j
+(G_R)_i{}^j\eta^{\dagger\,i}\sigmabar^\mu\eta_j]A_\mu\,,{} \label{lint1}
\eeqa
where $\lambda$ is a complex symmetric matrix with 
$\lambda^{ij}\equiv\lambda^*_{ij}$,
$\kappa$ is an arbitrary complex matrix with $\kappa_i{}^j\equiv (\kappa^i{}_j)^*$,
and $G$, $G_L$ and $G_R$
are hermitian matrices.
Converting to four-component spinor notation (see Problem 1), the resulting Feynman rules 
are shown below.
\clearpage

\begin{figure}[t!]
\begin{center}
\begin{picture}(200,68)(40,0)
\DashLine(10,40)(60,40)5
\ArrowLine(60,40)(100,70)
\ArrowLine(100,10)(60,40)
\Text(30,30)[]{$\scriptstyle\phi$}
\Text(70,20)[]{$\scriptstyle\Psi_{Mj}$}
\Text(70,67)[]{$\scriptstyle\Psi_{Mi}$}
\Text(140,40)[l]{$-i(\lambda^{ij}P_L+\lambda_{ij} P_R)$}
\end{picture}
\end{center}
\vspace{0.2in}

\begin{center}
\begin{picture}(200,68)(40,0)
\Photon(60,40)(10,40){3}{5}
\ArrowLine(60,40)(100,70)
\ArrowLine(100,10)(60,40)
\Text(30,25)[]{$\scriptstyle A_\mu$}
\Text(70,20)[]{$\scriptstyle\Psi_{Mj}$}
\Text(70,67)[]{$\scriptstyle\Psi_{Mi}$}
\Text(140,40)[l]{$-i\gamma_\mu[G_i{}^j P_L-G_j{}^i P_R]$}
\end{picture}
\end{center}
\vspace{0.2in}
\begin{center}
\begin{picture}(200,68)(40,0)
\DashLine(10,40)(-40,40)5
\ArrowLine(10,40)(50,70)
\ArrowLine(50,10)(10,40)
\Text(-20,30)[]{$\scriptstyle\phi$}
\Text(20,20)[]{$\scriptstyle\Psi_j$}
\Text(20,67)[]{$\scriptstyle\Psi_i$}
\DashLine(160,40)(110,40)5
\ArrowLine(160,40)(200,70)
\ArrowLine(200,10)(160,40)
\Text(75,40)[]{or}
\Text(130,30)[]{$\scriptstyle\phi$}
\Text(170,15)[]{$\scriptstyle\Psi^{cj}$}
\Text(170,65)[]{$\scriptstyle\Psi^{ci}$}
\Text(240,40)[l]{$-i(\kappa^i{}_j P_L+\kappa_j{}^i P_R)$}
\end{picture}
\end{center}
\vspace{0.2in}
\begin{center}
\begin{picture}(200,68)(40,0)
\Photon(60,40)(10,40){3}{5}
\ArrowLine(60,40)(100,70)
\ArrowLine(100,10)(60,40)
\Text(30,25)[]{$\scriptstyle A_\mu$}
\Text(70,20)[]{$\scriptstyle\Psi_j$}
\Text(70,67)[]{$\scriptstyle\Psi_i$}
\Text(140,40)[l]{$-i\gamma_\mu[(G_L)_i{}^j P_L+(G_R)_i{}^j P_R]$}
\end{picture}
\end{center}
\vspace{0.2in}
\begin{center}
\begin{picture}(200,68)(40,0)
\Photon(60,40)(10,40){3}{5}
\ArrowLine(60,40)(100,70)
\ArrowLine(100,10)(60,40)
\Text(50,90)[]{or}
\Text(200,90)[]{or}
\Text(30,25)[]{$\scriptstyle A_\mu$}
\Text(70,15)[]{$\scriptstyle\Psi^{cj}$}
\Text(70,65)[]{$\scriptstyle\Psi^{ci}$}
\Text(140,40)[l]{$\phm i\gamma_\mu[(G_L)_i{}^j P_L+(G_R)_i{}^j P_R]$}
\end{picture}
\end{center}
\end{figure}

The arrows on the Dirac fermion lines depict the flow of the
conserved charge.  A Majorana fermion is self-conjugate, so
its arrow simply reflects the structure of $\mathscr{L}_{\rm int}$;
{\it i.e.}, $\overline\Psi_M$ [$\Psi_M$] is represented by
an arrow pointing out of [into] the vertex.  The arrow directions
determine the placement of the $u$ and $v$ spinors in an
invariant amplitude.

For vertices involving Dirac fermions, one has a choice of either
using the Dirac field or its charge conjugated field.  The Feynman
rules corresponding to these two choices are related, due to the
following identity, 
\beq \label{CC}
\overline\Psi^c_i\Gamma\Psi^c_j=-\Psi_i^T C^{-1}\Gamma C\overline\Psi_j^T=
\overline\Psi_j C\Gamma^T
C^{-1}\Psi_i=\eta^C\ls{\Gamma}\overline\Psi_j\Gamma \Psi_i\,,
\eeq
where we have used \eq{ccgamma}.   Note that the extra minus sign that
arises in the penultimate step above is due to the anticommutativity
of the fermion fields.

Next, consider the interaction of fermions with charged bosons $\Phi$ and $W$ (assumed
to have charge equal to that of $\chi$ and $\eta^\dagger$).  The corresponding interaction Lagrangian is given by:
\beqa
\mathscr{L}_{\rm int} &=&
-\Phi[(\kappa_1)^i{}_j\xi_i\eta^j
+(\kappa_2)_{ij}\xi^{\dagger i} \chi^{\dagger j}] 
 -\Phi^\dagger[(\kappa_2)^{ij}\xi_i\chi_j
+(\kappa_1)_i{}^j\xi^{\dagger i}_i \eta^{\dagger}_j] \nonumber
\\
&&
\BDminus W_\mu[(G_1)_j{}^i\chi^{\dagger j}\sigmabar^\mu\xi_i
-(G_2)_{ij}\xi^{\dagger i}\sigmabar^\mu \eta^j]  \nonumber \\
&& \BDminus W_\mu^\dagger[(G_1)^j{}_i\xi^{\dagger i}\sigmabar^\mu\chi_j
-(G_2)^{ij}\eta^{\dagger}_j\sigmabar^\mu\xi_i]\,,\label{lint2}
\eeqa
where $\kappa_1$, $\kappa_2$, $G_1$ and $G_2$ 
are complex matrices.    Converting to four-component spinor notation,
the corresponding Feynman rules are:
\begin{center}
\begin{picture}(200,78)(20,0)
\DashArrowLine(-60,40)(-10,40)5
\ArrowLine(-10,40)(30,70)
\ArrowLine(30,10)(-10,40)
\Text(-50,30)[]{$\scriptstyle\Phi$}
\Text(-10,18)[]{$\scriptstyle\Psi_{Mi}$}
\Text(-10,67)[]{$\scriptstyle\Psi_{j}$}
\Text(45,40)[l]{or}
\DashArrowLine(80,40)(130,40)5
\ArrowLine(170,70)(130,40)
\ArrowLine(130,40)(170,10)
\Text(100,30)[]{$\scriptstyle\Phi$}
\Text(140,18)[]{$\scriptstyle\Psi_{Mi}$}
\Text(140,67)[]{$\scriptstyle\Psi^{cj}$}
\Text(200,40)[l]{$-i(\kappa_1{}^i{}_j P_L+\kappa_{2ij} P_R)$}
\end{picture}
\end{center}
\vspace{0.04in}
\begin{center}
\begin{picture}(200,68)(20,0)
\DashArrowLine(-10,40)(-60,40)5
\ArrowLine(30,70)(-10,40)
\ArrowLine(-10,40)(30,10)
\Text(-40,30)[]{$\scriptstyle\Phi$}
\Text(0,18)[]{$\scriptstyle\Psi_{Mi}$}
\Text(0,67)[]{$\scriptstyle\Psi_{j}$}
\Text(45,40)[l]{or}
\DashArrowLine(130,40)(80,40)5
\ArrowLine(130,40)(170,70)
\ArrowLine(170,10)(130,40)
\Text(100,30)[]{$\scriptstyle\Phi$}
\Text(140,18)[]{$\scriptstyle\Psi_{Mi}$}
\Text(140,67)[]{$\scriptstyle\Psi^{cj}$}
\Text(200,40)[l]{$-i(\kappa_2{}^{ij}P_L+\kappa_{1i}{}^j P_R)$}
\end{picture}
\end{center}
\vspace{0.04in}
\begin{center}
\begin{picture}(200,78)(20,0)
\Photon(-20,40)(30,40){3}{5}
\ArrowLine(30,40)(70,70)
\ArrowLine(70,10)(30,40)
\ArrowLine(5.05,40)(4.95,40)
\Text(0,30)[]{$\scriptstyle W$}
\Text(40,18)[]{$\scriptstyle\Psi_{Mj}$}
\Text(40,67)[]{$\scriptstyle\Psi_{i}$}
\Text(160,40)[l]{$-i\gamma^\mu(G_{1i}{}^jP_L-G_{2ji} P_R)$}
\end{picture}
\end{center}
\vspace{0.3in}
\begin{center}
\begin{picture}(200,68)(20,0)
\Photon(-20,40)(30,40){3}{5}
\ArrowLine(70,70)(30,40)
\ArrowLine(30,40)(70,10)
\ArrowLine(5.05,40)(4.95,40)
\Text(0,30)[]{$\scriptstyle W$}
\Text(40,18)[]{$\scriptstyle\Psi_{Mj}$}
\Text(40,67)[]{$\scriptstyle\Psi^{ci}$}
%
\Text(170,40)[l]{$i\gamma^\mu(G_{2ji} P_R -G_{1i}{}^j P_L)$}
\Text(20,95)[]{or}
\Text(210,95)[]{or}
\end{picture}
\end{center}
\vspace{0.3in}
\begin{center}
\begin{picture}(200,68)(20,0)
\Photon(30,40)(-20,40){3}{5}
\ArrowLine(70,70)(30,40)
\ArrowLine(30,40)(70,10)
\ArrowLine(4.95,40)(5.05,40)
\Text(0,30)[]{$\scriptstyle W$}
\Text(40,18)[]{$\scriptstyle\Psi_{Mj}$}
\Text(40,67)[]{$\scriptstyle\Psi_{i}$}
%
\Text(160,40)[l]{$-i\gamma^\mu(G_1{}^i{}_j P_L-G_{2}{}^{ji} P_R)$}
\end{picture}
\end{center}
\vspace{0.3in}
\begin{center}
\begin{picture}(200,68)(20,0)
\Photon(30,40)(-20,40){3}{5}
\ArrowLine(30,40)(70,70)
\ArrowLine(70,10)(30,40)
\ArrowLine(4.95,40)(5.05,40)
\Text(20,95)[]{or}
\Text(210,95)[]{or}
\Text(0,30)[]{$\scriptstyle W$}
\Text(40,18)[]{$\scriptstyle\Psi_{Mj}$}
\Text(40,67)[]{$\scriptstyle\Psi^{ci}$}
%
\Text(170,40)[l]{$i\gamma^\mu(G_{2}{}^{ji} P_R -G_{1}{}^i{}_j P_L)$}
\end{picture}
\end{center}

When the interaction Lagrangians given in \eqs{lint1}{lint2} are
converted to four-component spinor notation (see Problems 1 and 2 at
the end of this section), there is an equivalent form in which
$\mathscr{L}_{\rm int}$ is written in terms of charge-conjugated Dirac
four-component fields [after using \eq{CC}].  Thus, the  Feynman rules involving
Dirac fermions can take two possible forms, as
shown above.
As previously noted, the direction of an
arrow on a Dirac fermion line indicates the
direction of the fermion charge flow (whereas the arrow on
the Majorana fermion line is unconnected to charge flow).
However, we are free to choose either a
$\Psi$ or $\Psi^c$ line to represent a Dirac fermion at any place in a
given Feynman graph.\footnote{Since the charge of $\Psi^c$ is opposite
in sign to the charge
of $\Psi$, the corresponding arrow directions of the $\Psi$ and $\Psi^c$
lines must point in opposite directions.}
For any decay or scattering process,
a suitable choice of either the $\Psi$-rule or the $\Psi^c$-rule
at each vertex (the choice can be different at different vertices)
will guarantee that
the arrow directions on fermion lines flow continuously through
the Feynman diagram.  Then, to evaluate an invariant amplitude,
one should traverse \textit{any} continuous fermion
line (either $\Psi$ or $\Psi^c$)
by moving antiparallel to the direction of the fermion arrows.

For a given process, there may be a number of distinct
choices for the arrow directions on the Majorana fermion lines,
which may depend on whether one represents a given Dirac fermion by
$\Psi$ or $\Psi^c$.
However, different choices do {\it not} lead to independent Feynman
diagrams.
When computing an invariant amplitude, one
first writes down the relevant
Feynman diagrams with no arrows on any Majorana
fermion line.  The number of distinct graphs contributing to the
process is then determined.  Finally, one makes some choice for
how to distribute the arrows on the Majorana fermion lines
and how to label Dirac fermion lines (either as the field $\Psi$ or its
charge conjugate $\Psi^c$) in a manner consistent
with the Feynman rules for the vertices previously given.
The end result for the invariant
amplitude (apart from an overall unobservable phase)
does not depend on the choices
made for the direction of the fermion arrows.

Using the above procedure, the Feynman rules for the
external fermion wave functions are the same for Dirac and Majorana fermions:
\begin{itemlist}
\item
$u(\boldsymbol{\vec p},s)$: incoming $\Psi$ [or $\Psi^c$]
with momentum $\boldsymbol{\vec p}$ parallel to the arrow direction,
\item
$\bar u(\boldsymbol{\vec p},s)$: outgoing $\Psi$ [or $\Psi^c$] with
momentum $\boldsymbol{\vec p}$ parallel to the arrow direction,
\item
$v(\boldsymbol{\vec p},s)$: outgoing $\Psi$ [or $\Psi^c$] with
momentum $\boldsymbol{\vec p}$ anti-parallel to the arrow direction,
\item
$\bar v(\boldsymbol{\vec p},s)$: incoming $\Psi$ [or $\Psi^c$] with
momentum $\boldsymbol{\vec p}$ anti-parallel to the arrow direction.
\end{itemlist}

We now consider the application of the Feynman rules presented above
to some $2\to 2$ scattering processes involving a Majorana fermion
either as an external state or as an internal line.

\begin{example}[$\boldsymbol{\Psi(p_1)\Psi(p_2)\to\Phi(k_1)\Phi(k_2)}$ via $\boldsymbol{\Psi_M}$-exchange]

Here, $\Phi$ is a charged scalar.
The contributing Feynman graphs are:

\vspace{12pt}

\begin{picture}(450,85)(125,-25)
\thicklines
\ArrowLine(185,-15)(125,-15)
\DashArrowLine(185,-15)(245,-15){5}
\ArrowLine(125,45)(185,45)
\DashArrowLine(185,45)(245,45){5}
\ArrowLine(185,45)(185,-15)
\put(165,12){$\Psi_M$}
\put(130,50){$\Psi$}
\put(130,-25){$\Psi^c$}
\ArrowLine(360,-15)(300,-15)
\DashLine(360,-15)(390,15){5}
\DashArrowLine(390,15)(420,45){5}
\ArrowLine(300,45)(360,45)
\DashLine(360,45)(390,15){5}
\DashArrowLine(390,15)(420,-15){5}
\ArrowLine(360,45)(360,-15)
\put(340,12){$\Psi_M$}
\put(305,50){$\Psi$}
\put(305,-25){$\Psi^c$}
\end{picture}

\vspace{12pt}

\noindent
Following the arrows on the fermion lines in reverse,
the invariant amplitude is given by,
\beqa \label{mex2}
i\mathcal{M}&=&
(-i)^2\bar v(\boldsymbol{\vec p}_2,s_2)(\kappa_1 P_L+\kappa_2^* P_R)
\left[\frac{i(\slashchar{p_1}-\slashchar{k_1}+m)}{t-m^2} 
+\frac{i(\slashchar{k_1}-\slashchar{p_2}+m)}{u-m^2}\right]\nonumber \\
&&\qquad \times (\kappa_1 P_L+\kappa_2^* P_R) u(\boldsymbol{\vec p}_1,s_1)\,,{}
\eeqa
where $t\equiv (p_1-k_1)^2$, $u\equiv (p_2-k_1)^2$ and
$m$ is the Majorana fermion mass.  The sign of each diagram is
determined by the relative permutation of spinor wave functions
appearing in the amplitude (the overall sign of the amplitude is
unphysical).
In the present example, in
both terms appearing in \eq{mex2}, the spinor wave functions appear in 
the same order (first $\boldsymbol{\vec p}_2$ and then
$\boldsymbol{\vec p}_1$),
implying a relative plus sign between the two terms.

One can check that  \textrm{$i\mathcal{M}$} is antisymmetric under
interchange of the two initial electrons. 
This is most easily verified by
taking the transpose of the invariant amplitude (the latter is a
complex number whose value is not changed by transposition).
It is convenient to adopt the convention in which the (commuting) $u$
and $v$ spinor wave functions are related via,
\beqa
v(\boldsymbol{\vec p},s) &=& C\ubar(\boldsymbol{\vec p},s)^{\T}
\,,\qquad\qquad\quad
u(\boldsymbol{\vec p},s) = C\vbar(\boldsymbol{\vec p},s)^{\T}\,,
\label{uvspinrelation1} \\
\vbar(\boldsymbol{\vec p},s) &=& -u(\boldsymbol{\vec p},s)^{\T}C^{-1}
\,,\qquad\quad\,
\ubar(\boldsymbol{\vec p},s) = -v(\boldsymbol{\vec p},s)^{\T}C^{-1}\,.
\label{uvspinrelation2}
\eeqa
where $C$ is the charge conjugation matrix.
Using \eqs{uvspinrelation1} {uvspinrelation2}, 
the transposed amplitude can be simplified by employing the relation,
\begin{align}
\bar v(\boldsymbol{\vec p}_2,s_2)\Gamma u(\boldsymbol{\vec p}_1,s_1)=
-\eta^C\ls{\Gamma}
\bar v(\boldsymbol{\vec p}_1,s_1)\Gamma u(\boldsymbol{\vec
  p}_2,s_2)\,, \label{vgamu}
\end{align}
which is a consequence of \eq{ccgamma}.

\end{example}

\begin{example}[$\boldsymbol{\Psi(p_1)\Psi^c(p_2)\!\to\!\Psi_M(p_3)\Psi_M(p_4)}$~\!via\! charged\! $\boldsymbol{\Phi}$-exchange]

In addition to a possible $s$-channel annihilation graph,
the contributing Feynman graphs can be represented by 
either diagram set (i) or diagram set (ii) shown below, where each set
contains a $t$-channel and $u$-channel graph, respectively.
\clearpage

\noindent Diagram set (i):

\begin{picture}(450,85)(130,-10)
\thicklines
\ArrowLine(125,-15)(185,-15)
\ArrowLine(185,-15)(245,-15)
\ArrowLine(125,45)(185,45)
\ArrowLine(185,45)(245,45)
\DashArrowLine(185,45)(185,-15){5}
\put(220,50){$\Psi_M$}
\put(220,-25){$\Psi_M$}
\put(130,50){$\Psi$}
\put(130,-25){$\Psi^c$}
\ArrowLine(290,-15)(350,-15)
\Line(380,15)(350,-15)
\ArrowLine(380,15)(410,45)
\ArrowLine(290,45)(350,45)
\Line(350,45)(380,15)
\ArrowLine(380,15)(410,-15)
\DashArrowLine(350,45)(350,-15){5}
\put(410,50){$\Psi_M$}
\put(410,-25){$\Psi_M$}
\put(295,50){$\Psi$}
\put(295,-25){$\Psi^c$}
\end{picture}

\vskip 0.5in
\noindent
Diagram set (ii):

\begin{picture}(450,85)(130,-10)
\thicklines
\ArrowLine(185,-15)(125,-15)
\ArrowLine(245,-15)(185,-15)
\ArrowLine(125,45)(185,45)
\ArrowLine(185,45)(245,45)
\DashArrowLine(185,45)(185,-15){5}
\put(220,50){$\Psi_M$}
\put(220,-25){$\Psi_M$}
\put(130,50){$\Psi$}
\put(130,-25){$\Psi$}
\ArrowLine(350,-15)(290,-15)
\Line(380,15)(350,-15)
\ArrowLine(410,45)(380,15)
\ArrowLine(290,45)(350,45)
\Line(350,45)(380,15)
\ArrowLine(380,15)(410,-15)
\DashArrowLine(350,45)(350,-15){5}
\put(410,50){$\Psi_M$}
\put(410,-25){$\Psi_M$}
\put(295,50){$\Psi$}
\put(295,-25){$\Psi$}
\end{picture}
\vskip 0.5in

The amplitude is evaluated by following the arrows on the fermion
lines in reverse.   Either diagram set (i) or set (ii) may be chosen to evaluate the invariant amplitude.
We again employ \eq{ccgamma} to derive the relation,
\beq \label{vgamv}
\bar v(\boldsymbol{\vec p}_2,s_2)\Gamma v(\boldsymbol{\vec p}_4,s_4)=
-\eta^C\ls{\Gamma}
\bar u(\boldsymbol{\vec p}_4,s_4)\Gamma u(\boldsymbol{\vec p}_2,s_2)\,,
\eeq
which can be used in comparing the invariant amplitude obtained by
using diagram sets (i) and (ii).  One can check that 
the invariant amplitudes resulting from diagram sets
(i) and (ii) differ by an overall minus sign, which is unphysical.
The overall minus sign arises due to the fact that the corresponding
order of the spinor wave functions
differs by an odd permutation [e.g.,
for the $t$-channel graphs, compare 3142 and
3124 for (i) and (ii) respectively].  For the same
reason, there is a relative minus sign between the $t$-channel and
$u$-channel graphs for either diagram set [e.g., compare 3142
and 4132 in diagram set(i)].

If $s$-channel annihilation contributes, its contribution to the
invariant amplitude is easily obtained.
Relative to the $t$-channel graph of diagram set
(ii) above, the $s$-channel graph shown below
comes with an extra minus sign (since 2134 is odd with respect to 3124).
\end{example}

\begin{picture}(450,95)(28,-40)
\SetScale{0.8}
\thicklines
\ArrowLine(185,15)(125,-25)
\ArrowLine(305,-25)(245,15)
\ArrowLine(125,55)(185,15)
\ArrowLine(245,15)(305,55)
\DashLine(185,15)(245,15){5}
\put(220,45){$\Psi_M$}
\put(220,-25){$\Psi_M$}
\put(110,45){$\Psi$}
\put(110,-25){$\Psi$}
\end{picture}

\vskip -0.05in

In the computation of the unpolarized cross-section, non-standard spin
projection operators can arise in the evaluation of the interference
terms (see Appendix D of Reference \cite{Haber:1984rc}), such as
\beqa
 \sum_s u({\boldsymbol{\vec p}},s) v^T({\boldsymbol{\vec p}},s)
 = (\slashchar{p} + m)C^T\,, \qquad
\sum_s \bar{u}^T({\boldsymbol{\vec p}},s)
\bar{v}({\boldsymbol{\vec p}},s)
=  C^{-1}(\slashchar{p} - m)\,,\nonumber 
\eeqa
which requires additional manipulation of the charge conjugation
matrix~$C$.  However, these non-standard spin projection operators can be
avoided by judicious use of spinor wave function product relations of the kind
obtained in \eqs{vgamu}{vgamv}.

\subsection{Problems}
\begin{problem}
Convert the interaction Lagrangian given by \eq{lint1}  to
four-component spinor notation.
Show that the end result is 
\beqa
\mathscr{L}_{\rm int}&=& -\half(\lambda^{ij}\Psibar_{Mi} P_L\Psi_{Mj}
+\lambda_{ij}\Psibar_{M}\llsup{i}P_R\Psi_{M}\llsup{j})\phi
-\Psibar\llsup{\,j}(\kappa^i{}_j P_L +\kappa_i{}^j P_R)\Psi_{i}\phi  \nonumber \\
&& \BDminus\half\Psibar_{Mi}\gamma^\mu\left[(G^a)_i{}^j P_L-(G^a)_j{}^i P_R\right]
\Psi_{Mj}  \nonumber \\
&& 
\BDminus \left[(G_L^a)_i{}^j\Psibar\llsup{\,i}\gamma^\mu P_L\Psi_{j}
+(G_R^a)_i{}^j\Psibar\llsup{\,i}\gamma^\mu
P_R\Psi_{j}\right]A^a_\mu\,, \label{lint14}
\eeqa
where the $\Psi_{Mj}$ are a set of (neutral) Majorana
four-component fermions and the $\Psi_{j}$ are a set of Dirac four-component fermions.
\end{problem}

\begin{problem}
Convert the interaction Lagrangian given by \eq{lint2}  to
four-component spinor notation.
Show that the end result is 
\beqa 
\!\!\!\!\!\!\!\!\!\!
\mathscr{L}_{\rm int} &=&
-\left[(\kappa_1)^i{}_j\Psibar\llsup{\,j}P_L\Psi_{Mi}
+(\kappa_2)_{ij}\Psibar\llsup{j}P_R\Psi_{M}^i\right]\Phi
\nonumber \\
&&
\BDminus \left[(G_1)_j{}^i\Psibar\llsup{\,j}\gamma^\mu P_L\Psi_{Mi}
+(G_2)_{ij}\Psibar\llsup{\,j}\gamma^\mu P_R\Psi_{M}^i\right]W_\mu
+{\rm h.c.} \label{lintc4}
\eeqa
\end{problem}
\begin{problem}
Derive \eq{vgamu}.  Then,
verify that the invariant amplitude given by \eq{mex2} is
antisymmetric under the interchange of the two initial electrons.
\end{problem}

\begin{problem}
Derive \eq{vgamv}.  Then, verify that the invariant amplitude 
for the scattering process considered in Example 3 obtained
from diagram sets (i) and (ii), respectively, differ by an overall
minus sign.
\end{problem}

%
\section{Motivation for TeV-scale supersymmetry}
\label{sec:motivation}
\renewcommand{\theequation}{\arabic{section}.\arabic{equation}}
\setcounter{equation}{0}

The Standard Model (SM) of particle physics has been remarkably
successful for describing the observed behavior of the fundamental
particles and their interactions\cite{Langacker}. 
Indeed, there are no definitive departures from the Standard Model observed in experiments conducted at high energy collider facilities.  
Nevertheless, some fundamental microscopic phenomena must necessarily lie outside of the purview of the SM.
These include: neutrinos with non-zero mass\cite{numass}; dark matter\cite{darkmatter}; the suppression of CP-violation in the strong interactions (the so-called strong CP problem\cite{Kim:2008hd}); gauge coupling unification\cite{guts}; the baryon asymmetry of the universe\cite{White:2016nbo}; inflation in the early universe\cite{inflation}; dark energy\cite{darkenergy}; and the gravitational interaction.  None of these phenomena can be explained within the framework of the SM alone.

Consequently, the SM should be regarded at best as a low-energy effective field theory~\cite{eft}, which is valid below some high energy scale.  
That is, new high energy scales must exist where more fundamental physics resides.
In this section, we explain why one might expect to find this new
physics at the TeV scale. We discuss the \textit{principle of
  naturalness}, and how supersymmetry provides a natural mechanism for
avoiding the quadratic sensitivity of the  squared-masses of
elementary scalar particles to ultraviolet physics.

\subsection{Why the \textrm{TeV} scale?}
The classical gravitational interaction lies outside the SM.   Using
the fundamental constants, $\hbar$, $c$ and Newton's gravitational
constant $G_N$, one can construct a quantity with the units of energy
called the Planck scale,
\beq
M_{\rm PL}c^2\equiv \left(\frac{\hbar c^5}{G_N}\right)^{1/2}\simeq
1.2\times 10^{19}~{\rm GeV}\,.
\eeq
The significance of the Planck scale can be seen as follows.
At the Planck energy scale, the quantum mechanical
aspects of gravity can no longer be neglected.
The gravitational energy of a particle of mass $m$,
evaluated at its Compton wavelength, $r_c=\hbar/(mc)$,
\beq
\Phi\sim\frac{G_N m^2}{r_c}=\frac{G_N m^3 c}{\hbar}\lsim 2mc^2\,,
\eeq
must be below $2mc^2$ to avoid particle-antiparticle pair
creation by the gravitational field. Hence, up to 
$\mathcal{O}(1)$ constants, we conclude that $m\lsim M_{\rm
PL}$.\footnote{Note that for $m=M_{\rm PL}$, the Schwarzschild radius
$r_s\equiv 2G_N m/c^2\simeq r_c$, which provides additional evidence
that the quantum mechanical nature of gravity cannot be neglected at
energy scales above the Planck scale.}
Since particle-antiparticle pair creation is an inherently quantum
mechanical phenomenon, 
quantum gravitational effects can no longer be ignored at the Planck scale. 
Thus, the SM cannot be a
fundamental theory of particles and interactions at energy scales of
order the Planck scale and above.

There must be an energy scale $\Lambda$ at which the Standard Model
breaks down.  Based on the arguments given above, it follows that the
upper bound on $\Lambda$ is the Planck scale.   But, it is possible
that $\Lambda$ lies significantly below the Planck scale.
For example, a credible theory of neutrino masses (e.g., the type-I seesaw model~\cite{numass}) posits the existence of a right-handed electroweak singlet Majorana neutrino of mass of order $10^{14}~{\rm GeV}$.   
Henceforth, we shall define $\Lambda$ to be the lowest energy scale at
which the SM breaks down.  

The predictions made by the SM depend on a number of parameters that
must be taken as input to the theory.   These parameters cannot be
predicted, since their values are 
sensitive to unknown ultraviolet (UV) physics.
In the 1930s, it was already appreciated
that a critical difference exists between the behavior of boson and
fermion masses~\cite{Weisskopf:1939zz}.  Fermion masses are
logarithmically sensitive to UV physics~\cite{Weisskopf:1934} due to
the chiral symmetry of massless fermions, which implies that the radiative
correction to the tree-level fermion mass is of the form,
\beq
\delta m_F\sim m_F\ln(\Lambda^2/m_F^2)\,,
\eeq
which vanishes in the limit of $m_F\to 0$.
In contrast, no such symmetry exists for bosons (in the absence of supersymmetry), and consequently we expect quadratic sensitivity of
the boson squared-mass to UV physics,
$\delta m^2_B\sim \Lambda^2\,.$

These observations have important consequences for the fundamental physics
that describes the Higgs boson.  
In the SM,  the Higgs boson squared-mass is given by $m_h^2=\lambda v^2$ and the W
boson squared-mass is $m_W^2=\tfrac{1}{4}g^2 v^2$, where
$\vev{\Phi^0}=v/\sqrt{2}=174$~GeV is the vacuum
expectation value of the neutral Higgs field, $\lambda$~is
the Higgs self-coupling [cf.~\eq{vofphi}], and $g$ is the SU(2) gauge coupling.
Together, these imply that
\beq
\frac{m_h^2}{m_W^2}=\frac{4\lambda}{g^2}\,,
\eeq
which one would expect to be roughly of $\mathcal{O}(1)$.  The Higgs
boson with mass 125~GeV
satisfies this expectation.  

However, the existence of the Higgs boson is a consequence of a spontaneously broken scalar potential, 
\beq
V(\Phi)=-\mu^2(\Phi^\dagger\Phi)+\half\lambda(\Phi^\dagger\Phi)^2\,,\label{vofphi}
\eeq
where $\mu^2=\half\lambda v^2$ at
the minimum of the scalar potential.
The parameter $\mu^2$ is quadratically sensitive to $\Lambda$.  Hence, to obtain $v=246$~GeV in a theory
where $v\ll \Lambda$ requires a significant fine-tuning of the ultraviolet parameters of the fundamental theory.
Indeed, the one-loop contributions to the squared mass parameter $\mu^2$ are expected to be of
order $(g^2/16\pi^2)\Lambda^2$.  Setting this quantity to be of order of $v^2$ (to avoid an \textit{unnatural} cancellation
between the tree-level parameter and the loop corrections) yields
\beq
\Lambda\simeq 4\pi v/g\sim {O}(1~{\rm TeV})\,.
\eeq
Thus, a \textit{natural} theory of electroweak symmetry breaking
(EWSB) appears to require new TeV scale physics beyond the SM associated with the EWSB dynamics.


\subsection{The modern principle of naturalness}
This principle of naturalness was 
first introduced by Weisskopf in a paper published in 1939\cite{Weisskopf:1939zz}.
In the abstract of this 1939 paper, Weisskopf wrote,
``the self-energy of charged particles obeying Bose statistics is found to be quadratically divergent...,'' and concluded that in theories of elementary bosons, new phenomena must enter at an energy scale of $m/e$ (where $e$ is the relevant coupling). 
In modern particle physics, naturalness is often associated with the question, ``how do we understand the magnitude of the EWSB scale?''
In the absence of
new physics beyond the SM, its natural value would be the
Planck scale (or perhaps the grand unification scale or the seesaw scale that
controls neutrino masses).

There have been a number of theoretical proposals to explain the origin of the EWSB energy scale:
(1) naturalness is restored by  a symmetry principle--supersymmetry (SUSY)--which ties the bosons to
the more well-behaved fermions\cite{Witten,Susskind}; (2) the Higgs boson is an approximate Goldstone boson, the only other
known mechanism for keeping an elementary scalar light\cite{dewsb}; (3) the Higgs boson is a composite scalar, with an inverse length of
order the TeV-scale\cite{dewsb}; (4) extra spatial dimensions beyond three provide new mechanisms
for naturally large hierarchies of scales\cite{RS,extradim};
(5)~classical scale invariance and its minimal violation via quantum
anomalies\cite{Bardeen:1995kv,Meissner:2006zh,Iso:2009ss,Tavares:2013dga,Gorsky:2014una,Helmboldt:2016mpi} can generate a Higgs mass via dimensional transmutation\cite{Coleman:1973jx}; and
(6)~the EWSB scale arises due to some vacuum selection mechanism
(either anthropic\cite{Agrawal:1998xa} or cosmological\cite{Graham:2015cka,Arkani-Hamed:2016rle}).
Finally, maybe none of these explanations are relevant, and the EWSB
energy scale 
is simply the result of some initial condition whose origin will never be discernible.

Of course, these are lectures on supersymmetry.  Thus, we shall
motivate SUSY at the TeV scale as a potential solution of the
so-called hierarchy problem:
why is the scale of EWSB so much smaller than the Planck
scale?

 \subsection{\mbox{Avoiding quadratic UV-sensitivity 
with elementary scalars}}
\label{quadratic}
First, consider a lesson from history.
The electron self-energy in classical electromagnetism goes
like $e^2/a$, where $a$ is the classical radius of the electron. For a
point-like electron, $a\rightarrow 0$; hence the electron self-energy diverges linearly. In the quantum
theory, fluctuations of the electromagnetic fields (in the
``single electron theory'') generate a quadratic divergence.
 If
these divergences are not canceled, one would expect 
QED to break down at an energy of order $m_e/e$,
 far below the Planck scale.

The linear and quadratic divergences will cancel exactly if
one makes a bold hypothesis: the existence of the positron
(with a mass equal to that of the electron but of opposite
charge).
Weisskopf was the first to demonstrate this cancellation in
1934\cite{Weisskopf:1934}.\footnote{Actually the cancellation was not present
in the initial publication, but thanks to
a letter from Wendell Furry, the correct result was published in an erratum.}
This is an historical example in which 
 a symmetry implies the existence of a partner particle that cancels
 the dangerously large UV contribution to the particle mass. 

The motivation for SUSY may be viewed analogously\cite{Hitoshi,Hitoshi2}, with the electron playing the role of SM particles and the
positron playing the role of superpartners. SUSY
associates a fermionic superpartner with every SM particle and vice versa, thus doubling the  SM spectrum. SUSY relates the self-energy of the
elementary scalar boson to the self-energy of its fermionic partner.  Since the latter is only logarithmically sensitive to~$\Lambda$, we conclude
that the quadratic sensitivity of the scalar squared-mass to
UV physics must exactly cancel.
Naturalness is restored!

However,
since no superpartners degenerate in mass with the
corresponding SM particles exist in nature,  SUSY must be a broken symmetry.
Although the fundamental origin of SUSY-breaking is yet to be understood,
the effective scale of SUSY-breaking cannot be much larger than of
order a few TeV, 
if SUSY is responsible for the origin of the EWSB scale.

\enlargethispage{\baselineskip}
The absence of any evidence for SUSY at the LHC\cite{nosusy} is a cause for some
concern\cite{susy}.  This has led to some discussion of the so-called little
hierarchy problem\cite{little,little2,little3} which reflects the observation that the
effective SUSY-breaking mass scale is somewhat separated from the scale of EWSB.
Nevertheless, if evidence for supersymmetric
phenomena in the TeV or multi-TeV regime were to be eventually established at 
the LHC or at a future collider facility
(with an energy reach beyond the LHC\cite{vlhc}), it would be viewed as a spectacularly
successful explanation of the large hierarchy between the (multi-)TeV scale and
Planck scale. In this case, the remaining little hierarchy would
perhaps be regarded as a less pressing issue.


%



\section{Supersymmetry: first steps}
\renewcommand{\theequation}{\arabic{section}.\arabic{equation}}
\setcounter{equation}{0}
\label{sec:SUSYalgebra}

The supersymmetry algebra is a generalization of the Lie algebra of the
Poincar\'e group of spacetime symmetries.   
In this section we begin by reviewing the representations of the
Poincar\'e group.  We then present the supersymmetry algebra and
examine its representations.   The consequences of super-Poincar\'e
invariance, in terms of the vacuum energy and the bosonic and
fermionic degrees of freedom, are discussed.   Finally, we exhibit how
these properties are manifested in the simplest supersymmetric field
theory of spin-0 and spin-$\half$ particles (the so-called Wess-Zumino
model\cite{Wess:1974tw}), and demonstrate how the SUSY algebra is realized.

\subsection{Review of the Poincar\'e algebra}
\label{sec:Poincare}

The Poincar\'e group consists of Lorentz transformations and spacetime
translations\cite{sexl}.  That is, under a Poincar\'e transformation, the
spacetime coordinates transform as $x^{\prime\,\mu}=\Lambda^\mu{}_\nu
x^\nu +a^\mu$, where $\Lambda$ is given by \eq{lambda44} and $a^\mu$ is a
constant four-vector. Under a Lorentz transformation $\Lambda$ and a
spacetime translation $a$, the field $\psi_\alpha$ of spin $s$
transforms as,
\beq \label{poin1}
\psi^\prime_\alpha(x) = {\exp\bigl( -\half i \theta_{\mu\nu}
S^{\mu\nu} \bigr)_\alpha}^\beta\ \psi_\beta\left(\Lambda^{-1}(x-a)\right)\,,
\eeq
where we have used $x=\Lambda^{-1}(x'-a)$ and redefined the dummy variable
$x'$ by removing the prime.  The Poincar\'e algebra is obtained by
considering an infinitesimal Poincar\'e transformation.  Expanding in
a Taylor series about $\Lambda=\mathds{1}_{4\times 4}$ and 
$a=0$, we may rewrite \eq{poin1}
as\footnote{The operators $\mathds{1}$, $P^\mu$ and $L^{\mu\nu}$ include an
implicit factor of $\delta_\alpha{}^\beta$, whereas the spin operator
$S^{\mu\nu}$ depends non-trivially on $\alpha$ and $\beta$ (except for
the case of spin zero, when $S=0$).}
\beq \label{poin2}
\psi^\prime_\alpha(x)\simeq\bigl[\mathds{1}+ia_\mu P^\mu
-\nicefrac{i}{2} \theta_{\mu\nu}
(L^{\mu\nu}+ S^{\mu\nu}) \bigr]_\alpha{}^\beta\ \psi_\beta (x)\,,
\eeq
where 
$\mathds{1}$ is the unit operator, $P^\mu\equiv i\partial^\mu$ and $L^{\mu\nu}\equiv i(x^\mu\partial^\nu
-x^\nu\partial^\mu)$ are the linear and angular momentum operators, respectively, and $S^{\mu\nu}$ depends on the representation;
for spin-\half \ two-component fermions,
\begin{align}
S^{\mu\nu} = 
\begin{cases}
\sigma^{\mu\nu} & \mathrm{for\ (\half,0)\ fields}; \\
\overline{\sigma}^{\mu\nu} & \mathrm{for\ (0,\half)\ fields}.
\end{cases}
\end{align}  
The Poincar\'e algebra consists of ten generators $P^\mu$ and
$J^{\mu\nu}\equiv L^{\mu\nu}+S^{\mu\nu}$ 
(where $J^{\mu\nu}=-J^{\nu\mu}$), which obey the following commutation
relations:
\begin{align}
\left[P^\mu,P^\nu\right] &= 0\,,\label{spoincarealg1}
 \\
\left[J^{\mu\nu},P^{\lambda}\right] &=
i(g^{\nu\lambda}P^\mu-g^{\mu\lambda}P^\nu)\,,
\label{spoincarealg2}
 \\
\left[J^{\alpha\beta},J^{\rho\sigma}\right]
&= i(g^{\beta\rho}\,J^{\alpha\sigma} -
g^{\alpha\rho}\,J^{\beta\sigma} - g^{\beta\sigma}\,J^{\alpha\rho} +
g^{\alpha\sigma}\,J^{\beta\rho})\,.\label{spoincarealg3}
\end{align}

The Poincar\'e algebra possesses two independent Casimir operators (these are
polynomial functions of the generators that commute with the
generators $P^\mu$ and $J^{\mu\nu}$), which are given by
\begin{align}
P^2\equiv
P_\mu P^\mu
\qquad \mathrm{and} \qquad
w^2\equiv
w_\mu w^\mu,
\end{align}
where $w^\mu$ is the  Pauli-Lubanski vector,
\begin{align}
w^\mu\equiv-\half\epsilon^{\mu\nu\rho\lambda}J_{\nu\rho}P_\lambda\,,
\end{align}
in a convention where $\epsilon^{0123}=1$.   Explicitly,
\begin{align} \label{paulilubanski}
w^\mu=(\mathbold{\vec{J}\cdot\vec{P}}\,;\,P^0\mathbold{\vec{J}}+
\mathbold{\vec{K}\times\vec{P}})\,,
\end{align}
where $J^i\equiv\half\epsilon^{ijk}J_{jk}$ and $K^i\equiv J^{0i}$.
Note that
\begin{align}\label{wP}
w_\mu P^\mu=0 \qquad \mathrm{and} \qquad [w_\mu\,,\,P_\nu]=0.
\end{align} 

The unitary representations of the Poincar\'e algebra can be
labeled by the eigenvalues of $P^2$ and $w^2$ when acting on the
physical states with non-negative energy $P^0$.
The eigenvalue of $P^2$ is $m^2$, where $m$ is the
mass of the physical state.  To see the physical interpretation of $w^2$, we first consider the case of $m\neq 0$.  In
this case, it is convenient to evaluate $w^2$ in the particle rest
frame.
In this frame, $w^\mu=(0\,;\,m\mathbold{\vec
S})$, where $S^i$ is defined in \eq{jkdef}.  Hence, $w^2=-m^2\mathbold{\vec
S}\llsup{\,2}$, with eigenvalues $-m^2 s(s+1)$, $s=0,\half,1,\ldots$.   We
conclude that massive (positive energy) states can be labeled by
$(m,s)$, where $m$ is the mass and $s$ is the spin of the state.

If $m=0$, the previous analysis is not valid, since we cannot evaluate
$w^2$ in the rest frame.  Nevertheless, if we take the $m\to 0$ limit,
it follows from the results above that either $w^2=0$, or the
corresponding states have infinite spin.  We reject the second
possibility (which does not appear to be realized in nature), in which
case $w^2=\lim_{m\to 0} (-m^2\boldsymbol{\vec S}\llsup{\,2})=0$.
Thus, we must solve the equations, $w^2=P^2=w_\mu
P^\mu=0$.  It is simplest to
choose a frame in which $P=P^0(1;0,0,1)$ where $P^0>0$.  In this frame,
it is easy to show that $w=w^0(1;0,0,1)$.  That is, in any Lorentz frame,
\beq \label{helicitydef}
w^\mu=h P^\mu\,,
\eeq
where $h$ is called the helicity operator.  In particular,
\beq
[h\,,\,P^\mu]=[h\,,\,J^{\mu\nu}]=0\,,
\eeq
which means that the eigenvalues of $h$ can be used to label states of
the irreducible massless representations of the Poincar\'e algebra.
From \eq{helicitydef}, we
derive\footnote{We define the differential operator
$L^i\equiv\half\epsilon^{ijk}L_{jk}$.  Then, noting that
$\mathbold{\vec L}=\mathbold{\vec x\times\vec P}$, it follows that $\mathbold{\vec{L}\cdot\vec{P}}=0$.
Hence, $\mathbold{\vec{J}\cdot\vec{P}}=(\mathbold{\vec{L}}+\mathbold{\vec{S}})\cdot\mathbold{\vec{P}}=\mathbold{\vec{S}\cdot\vec{P}}$.}
\beq \label{hdefinition}
h=\frac{w^0}{P^0}=\frac{\mathbold{\vec{J}\cdot\vec{P}}}{P^0}
=\frac{\mathbold{\vec{S}\cdot\vec{P}}}{|\boldsymbol{\vec{P}}|}=\boldsymbol{\vec{S}\newcdot\hat{P}}\,,
\eeq
after noting that $P^0=|\boldsymbol{\vec{P}}|$ for massless states.
Eigenvalues of $h$ are called the helicity (and are denoted by
$\lambda$);
its spectrum consists of non-negative half-integers,
$\lambda=0,\pm\half,\pm 1,\ldots$.
Under a CPT transformation, $\lambda\to -\lambda$.  
Thus, in any quantum field theory realization of massless particles, 
both~~$\pm|\lambda|$ helicity states must appear in the theory.
It is common to refer to a massless (positive energy) state
of helicity $\lambda$ as having spin $|\lambda|$.

\subsection{The supersymmetry (SUSY) algebra}
\label{SUSYalg}
In the 1960s, Coleman and Mandula proved
a very powerful no-go theorem
that showed that in quantum field theories in $3+1$ dimensional
spacetime with a mass gap, the only possible symmetry incorporating Poincar\'e
transformations and a global internal symmetry group of transformations 
must be a trivial tensor product of the two groups\cite{Coleman:1967ad}.  
Subsequently, Haag, {\L}opusza{\'{n}}ski and Sohnius proved that  the only
possible extension of the Poincar\'e algebra involves the addition
of new fermionic generators that transform either as a $(\half,0)$ or
$(0,\half)$ under the Lorentz algebra, denoted by $Q^i_{\alpha}$ and
its hermitian conjugate
$Q^{\dagger}_{\dot\alpha i}\equiv (Q^i_\alpha)^\dagger$, respectively,
where $i=1,2,\ldots N$\cite{Lopuszanski,Haag:1974qh}.
In these lectures, we shall focus exclusively on the case of $N=1$, in which case the subscript $i$
can be dropped.

We therefore begin by examining the structure of
the $N=1$ SUSY algebra, which is obtained by adding one
$(\half,0)$ and one $(0,\half)$ generator to the Poincar\'e algebra,
denoted by $Q_\alpha$ and $Q^\dagger_{\dot\alpha}$, respectively.
These two-component spinor generators have no explicit dependence
on the spacetime coordinate  and are thus
invariant under spacetime translations.  That is,
\beqa
\exp\left(-ia_\mu P^\mu\right)Q_\alpha \exp\left(ia_\mu P^\mu\right)&=&Q_\alpha\,,\\[6pt]
\exp\left(-ia_\mu P^\mu\right)Q^\dagger_{\dot\alpha} \exp\left(ia_\mu P^\mu\right)&=&Q^\dagger_{\dot\alpha}\,,
\eeqa
where the $a_\mu$ are real parameters.  Working to first order in $a_\mu$, it follows that
the spinor generators
must commute with the translation generator~$P^\mu$,
\beq \label{QP}
[Q_\alpha\,,\,P^\mu]=[Q^\dagger_{\dot\alpha}\,,\,P^\mu]=0\,.
\eeq

The commutation relations given in \eq{QP} can also be deduced by
employing the following algebraic argument.  Using the known
transformation properties of $Q_\alpha$, $Q^\dagger_{\dot\alpha}$ and
$P^\mu$ under the
Poincar\'e algebra, it follows that $[Q_\alpha\,,\,P^\mu]$ must consist
of generators whose transformation properties are consistent with the
tensor product,
\beq
(\half,0)\otimes(\half,\half)=(1,\half)\oplus(0,\half)\,,
\eeq
under the Poincar\'e algebra.  But according to the
Haag-{\L}opuszanski-Sohnius theorem, there are no $(1,\half)$ generators.
This argument still leaves open the possibility that
$[Q_\alpha\,,\,P^\mu]\propto \sigma^\mu_{\alpha\dot\beta}Q^{\dagger\,\dot\beta}$.
However, it can be shown using the Jacobi identity
 that
the proportionality constant must be zero.

The transformation properties of $Q_\alpha$ and
$Q^\dagger_{\dot\alpha}$
under the  Poincar\'e algebra yield their
commutation relations with the $J^{\mu\nu}$,
\beq
[Q_\alpha\,,\,J^{\mu\nu}]=(\sigma^{\mu\nu})_\alpha{}^\beta Q_\beta\,,
\qquad\qquad
[Q^\dagger_{\dot\alpha}\,,\,J^{\mu\nu}]
=-Q^\dagger_{\dot\beta}(\sigmabar^{\mu\nu})^{\dot\beta}{}_{\dot\alpha}\,.
\eeq
The Coleman-Mandula theorem implies that one cannot obtain a consistent algebraic
structure by postulating commutation relations for the $Q_\alpha$ and $Q^\dagger_{\dot\alpha}$.
However, by declaring $Q_\alpha$ and $Q^\dagger_{\dot\alpha}$ to be \textit{fermionic}
generators, one can postulate \textit{anticommutation} relations for $Q_\alpha$ and $Q^\dagger_{\dot\alpha}$
such that the generators $\{P^\mu\,,\,J^{\mu\nu}\,,\,Q_\alpha\,,\,Q^\dagger_{\dot\alpha}\}$ form
a closed algebraic system.  We therefore consider the three possible anticommutation relations,
along with their transformation properties with respect to the Poincar\'e algebra,
\beqa
\{Q_\alpha\,,\,Q_\beta\}\qquad & \qquad (\half,0)\otimes(\half,0)=(1,0)\oplus(0,0)\,,\label{qqcomm1}\\
\{Q^\dagger_{\dot\alpha}\,,\,Q^\dagger_{\dot\beta}\}\qquad & \qquad (0,\half)\otimes (0,\half)=(0,1)\oplus(0,0)\,,\label{qqcomm2}\\
\{Q_\alpha\,,\,Q^\dagger_{\dot\beta}\}\qquad & \qquad (\half,0)\otimes (0,\half)=(\half,\half)\,.\label{qqcomm3}
\eeqa
\Eqs{qqcomm1}{qqcomm2} imply that
\beqa
\{Q_\alpha\,,\,Q^\beta\}&=&s(\sigma^{\mu\nu})_\alpha{}^\beta J_{\mu\nu}+k\delta_\alpha{}^\beta\mathds{1}\,,\label{QQsk1}\\[6pt]
\{Q^{\dagger\dot\alpha}\,,\,Q^\dagger_{\dot\beta}\}\ &=& s^* (\sigmabar^{\mu\nu})^{\dot\alpha}{}_{\dot{\beta}}J_{\mu\nu}
+k^*\delta^{\dot\alpha}{}_{\dot\beta}\mathds{1}\,,\label{QQsk2}
\eeqa
where $s$ and $k$ are complex numbers
and \eq{QQsk2} is the hermitian conjugate of \eq{QQsk1}.  Note that
we have raised and/or lowered some of the spinor indices for convenience.
Since $[Q_\alpha,P^\lambda]=[Q^\dagger_{\dot\alpha},P^\lambda]=0$ and $[J_{\mu\nu},P^\lambda]\neq 0$,
it follows that $s=0$.  If we now lower all spinor indices, \eqs{QQsk1}{QQsk2} with $s=0$ yield
\beq
\{Q_\alpha\,,\,Q_\beta\}=k\epsilon_{\beta\alpha}\mathds{1}\,,\qquad\quad
\{Q^\dagger_{\dot\alpha}\,,\,Q^\dagger_{\dot\beta}\}=k^*\epsilon^{\dot\beta\dot\alpha}\mathds{1}\,,
\eeq
and we conclude that $k=0$, since the left-hand sides of the above equations are symmetric under
the interchange of spinor indices, whereas the right hand sides are antisymmetric.
Hence,
\beq
\{Q_\alpha\,,\,Q_\beta\}=\{Q^\dagger_{\dot\alpha}\,,\,Q^\dagger_{\dot\beta}\}=0\,.
\eeq

\Eq{qqcomm3} implies that the remaining anticommutation relation must be of the form
\beq \label{QQt}
\{Q_\alpha\,,\,Q^\dagger_{\dot\beta}\}=t\sigma^\mu_{\alpha\dot\beta}P_\mu\,,
\eeq
where $t$ is a complex number. Multiplying \eq{QQt} by $\sigmabar^{\nu\dot\beta\alpha}$ and using
$\Tr(\sigma^\mu\sigmabar^\nu)=2g^{\mu\nu}$, it follows that
\beq \label{sigbarqq}
\sigmabar_\mu^{\dot\beta\alpha}\{Q_\alpha\,,\,Q^\dagger_{\dot\beta}\}=2tP_\mu\,.
\eeq
In particular, for $\mu=0$, \eq{sigbarqq} relates the energy $P^0$ to the SUSY generators:
\beq \label{pzero}
2tP^0=Q_1 Q_1^\dagger+Q_1^\dagger Q_1+Q_2 Q_2^\dagger+Q_2^\dagger Q_2\,.
\eeq
Since $P^0\geq m$ for physical states of mass $m$ and the right-hand side of \eq{pzero} is positive semi-definite,
it follows that $t$ must be real and positive.\footnote{We reject the possibility of $t=0$, in
which case $Q=Q^\dagger=0$ and the SUSY algebra reduces to the Poincar\'e algebra.}
One can rescale the definition of the fermionic generators $Q$ and $Q^\dagger$ such that
$t=2$.  In this convention,
\beq
\{Q_\alpha\,,\,Q^\dagger_{\dot\beta}\}=2\sigma^\mu_{\alpha\dot\beta}P_\mu\,.
\eeq

To summarize, the $N=1$ SUSY algebra
is spanned by the generators $\{P^\mu\,,\,J^{\mu\nu}\,,\,Q_\alpha\,,\,Q^\dagger_{\dot\alpha}\}$, which
satisfy \eqst{spoincarealg1}{spoincarealg3} and
\beqa
[Q_\alpha\,,\,P^\mu]&=&[Q^\dagger_{\dot\alpha}\,,\,P^\mu]=0\,,\label{susyalg1}\\
\left[Q_\alpha\,,\,J^{\mu\nu}\right]&=&(\sigma^{\mu\nu})_\alpha{}^\beta Q_\beta\,,\label{susyalg2}\\
\left[Q^\dagger_{\dot\alpha}\,,\,J^{\mu\nu}\right]&=&-Q^\dagger_{\dot\beta}
(\sigmabar^{\mu\nu})^{\dot\beta}{}_{\dot\alpha}\,,\label{susyalg3}\\
\{Q_\alpha\,,\,Q_\beta\}&=&\{Q^\dagger_{\dot\alpha}\,,\,Q^\dagger_{\dot\beta}\}=0\,,\label{susyalg4}\\
\{Q_\alpha\,,\,Q^\dagger_{\dot\beta}\}&=&2\sigma^\mu_{\alpha\dot\beta}P_\mu\,.\label{susyalg5}
\eeqa

Note that \eqst{susyalg1}{susyalg5} are unchanged under the U(1) phase transformation,
\beq
Q_\alpha\to e^{-i\chi}Q_\alpha\,,\qquad\quad Q^{\dagger}_{\dot\alpha}\to e^{i\chi}Q^{\dagger}_{\dot\alpha}\,,
\eeq
whereas the generators $P^\mu$ and $J^{\mu\nu}$ are not transformed.
One can therefore extend the $N=1$ SUSY algebra by adding a bosonic generator $R$ such that
\beqa
e^{i\chi R}Q_\alpha e^{-i\chi R}&=&e^{-i\chi}Q_\alpha\,,\label{R1}\\
e^{i\chi R}Q^\dagger_{\dot\alpha} e^{-i\chi R}&=&e^{i\chi}Q^\dagger_{\dot\alpha}\,.\label{R2}
\eeqa
Expanding out to first order in $\chi$, one easily derives the commutation relations,
\beqa
\left[R\,,\,Q_\alpha\right]&=&-Q_\alpha\,,\label{susyalg6} \\
\left[R\,,\,Q^\dagger_{\dot\alpha}\right]&=&Q^\dagger_{\dot\alpha}\,.\label{susyalg7}
\eeqa
We therefore say that the generator $Q_\alpha$ has an $R$-charge of $-1$.  Since $P^\mu$ and $J^{\mu\nu}$
are uncharged under the U(1)$_R$ transformation, it follows that
\beq \label{susyalg8}
[R\,,\,P^\mu]=[R\,,\,J^{\mu\nu}]=0\,.
\eeq
Thus, \eqst{spoincarealg1}{spoincarealg3},
(\ref{susyalg1})--(\ref{susyalg5}) and (\ref{susyalg6})--(\ref{susyalg8})
define the maximally extended
$N=1$ SUSY algebra, which includes an additional
continuous U(1)$_R$ symmetry.

\subsection{Representations of the $N=1$ SUSY algebra}
In Section~\ref{sec:Poincare}, we  identified the two Casimir operators of the Poincar\'e
algebra, $P^2$ and $w^2$, and noted that the
representations of the Poincar\'e algebra can be labeled by the eigenvalues of
the Casimir operators acting on the physical states.
We saw that 
the massive
representations can be labeled by their mass and spin, $(m,s)$.  For a fixed value of $m$, the
corresponding spin-$s$ representations are $(2s+1)$-dimensional.
For massless states, we defined
 the helicity operator
$h=\boldsymbol{\vec{S}\newcdot\hat{P}}$ [cf.~\eq{hdefinition}], with
 eigenvalues $\lambda=0,\pm\half,\pm 1\,\ldots$.
We also noted that $\lambda$ changes sign under a CPT
transformation.  Hence, the massless positive energy
representations of the Poincar\'e algebra are specified by $|\lambda|$.
For the case of $\lambda=0$, the corresponding representation is one-dimensional.
For any non-zero 
choice for $\lambda$,
the corresponding representation is two-dimensional and reducible,
as both $\pm|\lambda|$ helicity states must appear.

The unitary representations of the $N=1$ SUSY algebra can be determined
by using similar techniques\cite{Salam:1974za,Sokatchev:1975gg}.  First, we identify the Casimir operators, which
commute with all the SUSY algebra generators, $\{P^\mu\,,\,J^{\mu\nu}\,,\,Q_\alpha\,,\,Q^{\dagger\dot\alpha}\}$.
It is clear that $P^2$ is a Casimir operator, since $Q_\alpha$ and $Q^{\dagger\dot\alpha}$ commute
with $P^\mu$.  However, $w^2$ is \textit{not} a Casimir operator of the SUSY
algebra.  To establish this result, it is straightforward to use the (anti-)commutation
relations of the SUSY algebra to prove that:
\beq \label{wQQ}
\left[w^\mu\,,\,Q_\alpha\right]=i(\sigma^{\mu\nu})_{\alpha}{}^\beta Q_\beta P_\nu\,,\qquad\quad
\left[w^\mu\,,\,Q^\dagger_{\dot\alpha}\right]= i(\sigmabar^{\mu\nu})^{\dot\beta}{}_{\dot\alpha}Q^\dagger_{\dot\beta}P_\nu\,.
\eeq
Using these results, it is straightforward to derive:
\beqa
[w^2\,,\,Q_\alpha]&=&2i\sigma^{\mu\nu}{}_\alpha{}^\beta Q_\beta w_\mu P_\nu-\tfrac{3}{4}P^2 Q_\alpha\,,\label{wtwo} \\
\left[w^2\,,\,Q^\dagger_{\dot\alpha}\right]&=&2i\sigmabar^{\mu\nu\dot\beta}{}_{\dot\alpha} Q^\dagger_{\dot\beta} w_\mu P_\nu
-\tfrac{3}{4}P^2 Q^\dagger_{\dot\alpha}\,.\label{wtwodag}
\eeqa
Thus, $w^2$ does not commute with the fermionic generators of the SUSY algebra.  One consequence
of this result is that the representations of the SUSY
algebra consist of supermultiplets that contain particles of equal
mass but with different spins.

In order to deduce the possible spins that make up an irreducible supermultiplet, we shall identify a second Casimir
operator of the $N=1$ SUSY algebra.  We begin by defining the operator
\beq
B^\mu\equiv  w^\mu+\tfrac{1}{4}Q^\dagger\sigmabar^\mu Q\,.
\eeq
Using \eqss{susyalg4}{susyalg5}{wQQ}, one can derive
\beq \label{BQQ}
[B^\mu\,,\,Q_\alpha]=-\half P^\mu Q_\alpha\,,\qquad\qquad [B^\mu\,,\,Q^\dagger_{\dot\alpha}]=\half P^\mu Q^\dagger_{\dot\alpha}\,.
\eeq
The four-vector operator $B^\mu$ possesses some of the properties of the Pauli-Lubanski vector $w^\mu$.
In particular,
\begin{align} 
[B^\mu\,,\,B^\nu] &=i\epsilon^{\mu\nu\rho\lambda}B_\rho P_\lambda ; \label{BmuBnu} \\
[B^\mu, P^\nu] & = 0; \label{BP}\\
[B^\mu, J^{\nu\lambda} ] & = i \of{ g^{\mu\nu} B^\lambda - g^{\mu\lambda} B^\nu } . \label{Bvector}
\end{align}
One may be tempted to conjecture that $B^2\equiv B_\mu B^\mu$
is a Casimir operator of the SUSY algebra.  However, $[B^2\,,\,Q_\alpha]\neq 0$, so we must look further.
The structure of \eq{BQQ} suggests that we define
\beq \label{Cmunu}
C^{\mu\nu}\equiv B^\mu P^\nu-B^\nu P^\mu\,.
\eeq
It then follows that
\beq
[C^{\mu\nu}\,,\,Q_\alpha]=[C^{\mu\nu}\,,\,Q^\dagger_{\dot\alpha}]=[C^{\mu\nu}\,,\,P^\lambda]=0\,,
\eeq
where the first two commutators vanish as a consequence of \eq{BQQ} and the last commutator
vanishes as a consequence of \eq{BP}.  Moreover, \eqs{spoincarealg2}{Bvector} imply that $P^\mu$ and $B^\mu$
are Lorentz four-vectors, in which case $C^{\mu\nu}$ is a second-rank Lorentz tensor.  Hence
\beq \label{C2def}
C^2\equiv C_{\mu\nu} C^{\mu\nu}=2[B^2 P^2-(B\newcdot P)^2]\,,
\eeq
satisfies
\beq
[C^2\,,\,P^\mu]=[C^2\,,\,J^{\mu\nu}]=[C^2\,,\,Q_\alpha]=[C^2\,,\,Q^\dagger_{\dot\alpha}]=0\,.
\eeq

We conclude that $P^2$ and $C^2$ are the two
Casimir operators of the $N=1$ SUSY algebra.
Representations of the $N=1$ SUSY algebra can therefore be
labeled by the eigenvalues of $P^2$ and $C^2$ when acting on the
physical states.\footnote{As in the case of the Poincar\'e algebra,
we restrict our considerations to
states of non-negative energy $P^0$.} The eigenvalue of $P^2$ is
$m^2$, where $m$ is the mass. To understand the physical meaning of $C^2$, we will consider massive and massless supermultiplets separately.

\subsubsection{Massive $N=1$ supermultiplets}
To see the physical interpretation of $C^2$, we first consider the case of $m\neq 0$, so that we are free to evaluate the Lorentz scalar $C^2$ in the particle rest frame.
In this frame,
\beq \label{bmudef}
B^\mu=(\tfrac{1}{4}Q^\dagger\sigmabar^0 Q\,;\,mS^i+\tfrac{1}{4}Q^\dagger\sigmabar^i Q),
\eeq
where $S^i$ is defined in \eq{jkdef}.
We then compute,
\begin{align}
C^2  & = 2\left[B^2 P^2-(B\newcdot P)^2\right] 
 =2m^2\left[B^2-B_0^2\right] 
 = -2m^2 B^i B^i,
\end{align}
where $B^iB^i\equiv |\boldsymbol{\vec B}|^2$.  
Moreover, if we define the rest-frame operator,
\beq \label{caljdef}
\mathcal{J}^i\equiv \frac{1}{m}B^i=S^i+\frac{1}{4m}Q^\dagger\sigmabar^i Q\,,
\eeq
then it follows from \eq{BmuBnu} that
\beq
[\mathcal{J}^i\,,\,\mathcal{J}^j]=i\epsilon^{ijk} \mathcal{J}^k\,.
\eeq

The eigenvalues of $\mathcal{J}^i \mathcal{J}^i$ are $j(j+1)$ for $j=0,\half,1,\tfrac{3}{2}\,\ldots$.  Hence, the
eigenvalues of
\beq
C^2=-2m^4 \mathcal{J}^i \mathcal{J}^i
\eeq
are $-2m^4 j(j+1)$.  We conclude that for positive energy, timelike $P^\mu$,
the unitary irreducible representations of the $N=1$ SUSY algebra are labeled by $(m,j)$, where $j$ is called
the \textit{superspin} of the supermultiplet.
The states of an irreducible $N=1$ massive supermultiplet of superspin $j$ are exhibited
in Table~\ref{massivesuperplet}.   
The explicit construction of these
states and a discussion of their properties is presented in Section~\ref{App}.

\begin{table}[t!]
    \caption{\small States of an $N=1$ massive supermultiplet of superspin $j$.  An interpretation
is provided for $j=s$ and $j=s+\half$ where $s$ is a non-negative integer.
The bosonic and fermionic degrees of freedom (D.o.f.) of the supermultiplet coincide
and is equal to $2(2j+1)$.\label{massivesuperplet}}
\vskip 0.1in
{
\addtolength\tabcolsep{2pt}
\begin{tabular}{cccc}
\hline
Spin & D.o.f. & Interpretation ($j=s$) & Interpretation ($j=s+\half$) \\ \hline
$j$ & $2(2j+1)$ & complex spin-$s$ boson & ``complex'' spin-($s+\half$) fermion \\
$j+\half$ & $2j+2$ & spin-($s+\half)$ fermion & real spin-$(s+1)$ boson \\
$j-\half$ & $2j$ &  spin-($s-\half)$ fermion & real spin-$s$ boson \\ \hline
\end{tabular}}
  \end{table}

\begin{example}[The massive chiral supermultiplet, $\boldsymbol{j=0}$]

For $j=0$, only $j_3=0$ is possible, in which case the massive
supermultiplet is made up of two states of spin 0 and two states of
spin $\half$.  The two spin-0 states can be combined into a single complex
scalar state, and the two spin-$\half$ states can be identified as the two
components of a two-component Majorana fermion.    In this case the
$j-\half$ row of Table~\ref{massivesuperplet} is not relevant. 
\end{example}

It can be shown (see Problem \ref{pr:jhalf}) that the massive
supermultiplet of superspin $\half$ consists of a (real) spin-1 boson,
a (real) spin-0 boson and two mass-degenerate Majorana fermions, which
can be combined into a single Dirac fermion (called a \textit{complex}
fermion in Table~\ref{massivesuperplet}).
As expected,  in both the $j=0$ and $j=\half$ cases exhibited above,
the number of bosonic degrees of freedom of the
supermultiplet equals the number of fermionic degrees of freedom.

\subsubsection{Massless $N=1$ supermultiplets}

We now examine the case of zero-mass positive energy states, where $P^2=0$ and $P^0>0$.  If one multiplies
\eq{susyalg5} by $P^\rho P^\lambda\sigmabar_\rho^{\dot\gamma\alpha}\sigmabar_\gamma^{\dot\beta\tau}$,
one can easily derive the anticommutation relation,
\beq
\{P^\rho\sigmabar_\rho^{\dot\gamma\alpha}Q_\alpha\,,\,P^\lambda Q^\dagger_{\dot\beta}\sigmabar^{\dot\beta\tau}\}=2P^2 P^\mu\sigmabar_\mu^{\dot\gamma\tau}\,.
\eeq
Thus, for $P^2=0$ we have,
\beq \label{PoperatorP}
\bra{\Psi}\{P^\rho\sigmabar_\rho^{\dot\gamma\alpha}Q_\alpha\,,\,P^\lambda Q^\dagger_{\dot\beta}\sigmabar^{\dot\beta\tau}\}\ket{\Psi}=0\,,
\eeq
for any state $\ket{\Psi}$.
In the space of one-particle states, only positively-normed states exist.  Noting that
$(P^\mu\sigmabar_\mu^{\dot\alpha\beta}Q_\beta)^\dagger=P^\mu Q^\dagger_{\dot\beta}\sigmabar_\mu^{\dot\beta\alpha}$,
\eq{PoperatorP} implies that as operators on the space of one-particle states,
\beq \label{zeroops}
P^\rho\sigmabar_\rho^{\dot\gamma\alpha}Q_\alpha=
P^\lambda Q^\dagger_{\dot\beta}\sigmabar_\lambda^{\dot\beta\tau}=0\,,\qquad
\text{for}~~P^2=0\,.
\eeq
Using this result, one can evaluate the Casimir operator $C^2$, defined in \eq{C2def}, in the case of $P^2=0$.
In particular, using $w_\mu P^\mu = 0$ and \eq{zeroops},
\beq
C^2=-2(B\newcdot P)^2=-\tfrac{1}{8}(Q^\dagger_{\dot\alpha}\sigmabar^{\dot\alpha\beta}_\mu Q_\beta P^\mu)^2=0\,.
\eeq

The same conclusion can be obtained by choosing
the standard reference frame,  $P^\mu=P^0(1\,;\,0\,,\,0\,,\,1)$,
for lightlike four-vectors.   In this reference frame, the anticommutators given in \eqs{susyalg4}{susyalg5}
simplify to
\beqa
\{Q_1\,,\,Q^\dagger_1\}&=& 0\,,\qquad\quad\,\,
\{Q_2\,,\,Q^\dagger_2\}=4P_0\,,\label{qqmassless1}\\
\{Q_1\,,\,Q_1\}&=&\{Q_2\,,\,Q_2\}=\{Q_1\,,\,Q_2\}=0\,,\label{qqmassless2}\\
 \{Q^\dagger_1\,,\,Q^\dagger_1\}&=&\{Q^\dagger_2\,,\,Q^\dagger_2\}=\{Q^\dagger_1\,,\,Q^\dagger_2\}=0\,.\label{qqmassless3}
\eeqa
Hence,
\beq
C^2=-2(B\newcdot P)^2=-\half P_0^2(Q_1^\dagger Q_1)^2=\half P_0^2 Q_1^\dagger Q_1^\dagger Q_1 Q_1=0\,.
\eeq

\Eq{zeroops} implies a number of other operator identities when acting on the space of one-particle states.
Using \eq{susyalg5}, one easily derives
\beq
[Q^\alpha Q_\alpha\,,\,Q^\dagger_{\dot\beta}]=4P_\mu\sigma^\mu_{\alpha\dot\beta}Q^\alpha\,,\qquad\quad
[Q^\dagger_{\dot\alpha}Q^{\dagger\,\dot\alpha}\,,\,Q_\beta]=-4P_\mu\sigma^\mu_{\alpha\dot\beta}Q^{\dagger\,\dot\beta}\,.
\eeq
Applying \eq{zeroops} then yields
\beq \label{QQQQ}
[Q^\alpha Q_\alpha\,,\,Q^\dagger_{\dot\beta}]=[Q^\dagger_{\dot\alpha}Q^{\dagger\,\dot\alpha}\,,\,Q_\beta]=0\,,\qquad
\text{for}~~P^2=0\,.
\eeq
Then, for any one-particle state $\ket{\Psi}$, \eqss{susyalg4}{susyalg5}{QQQQ} yield
\beqa
P_\mu\sigma^\mu_{\alpha\dot\alpha}Q^\beta Q_\beta\ket{\Psi}&=&\half\{Q_\alpha\,,\,Q^\dagger_{\dot\alpha}\}Q^\beta Q_\beta\ket{\Psi}
=\half Q_\alpha Q^\dagger_{\dot\alpha}Q^\beta Q_\beta\ket{\Psi}\nonumber \\
&=&\half Q_\alpha [Q^\dagger_{\dot\alpha}\,,\,Q^\beta Q_\beta]\ket{\Psi}=0\,.
\eeqa
A similar computation of $P_\mu\sigma^\mu_{\alpha\dot\alpha}Q^\dagger_{\dot\beta}Q^{\dagger\,\dot\beta}$
allows us to conclude that
\beq
P_\mu Q^\beta Q_\beta\ket{\Psi}=P_\mu Q^\dagger_{\dot\beta}Q^{\dagger\,\dot\beta}\ket{\Psi}=0\,,\qquad \text{for}~~P^2=0\,,
\eeq
after multiplying through by $\sigmabar_\nu^{\dot\alpha\alpha}$ and evaluating the resulting trace.
As we are only interested in positive energy states, we conclude that as operators on the space of one-particle states,
\beq \label{QQQQ0}
Q^\beta Q_\beta=Q^\dagger_{\dot\beta}Q^{\dagger\,\dot\beta}=0\,,\qquad \text{for}~~P^2=0~~\text{and}~~P^0>0\,.
\eeq

In order to identify the massless supermultiplets of one-particle states, it is convenient to define
\beq \label{Lmudef}
L^\mu\equiv  \half(w^\mu+B^\mu)=w^\mu+\tfrac{1}{8}Q^\dagger\sigmabar^\mu Q\,.
\eeq
Note 
$[Q_\alpha ,P^\mu ]=[Q^\dagger_{\dot{\alpha}}, P^\mu ]=0$ and
$[w_\mu, P_\nu ]=0$
imply that
\beq \label{PL}
[P^\mu\,,\,L^\nu]=0\,.
\eeq
Using \eqss{susyalg4}{susyalg5}{wQQ}, one can easily derive
\beq \label{LQQ}
[L^\mu\,,\,Q_\alpha]=-\tfrac{1}{4}(\sigma^\mu\sigmabar^\nu)_\alpha{}^\beta Q_\beta P_\nu\,,\qquad
[L^\mu\,,\,Q^\dagger_{\dot\alpha}]=\tfrac{1}{4}(\sigmabar^\nu\sigma^\mu)^{\dot\beta}{}_{\dot\alpha} Q^\dagger_{\dot\beta}P_\nu\,.
\eeq
A straightforward computation then gives:
\beq \label{LLcomm}
[L^\mu\,,\,L^\nu]=i\epsilon^{\mu\nu\rho\lambda}(L_\rho+\tfrac{1}{16}Q^\dagger\sigmabar_\rho Q)P_\lambda\,.
\eeq
When $P^2=0$, we impose the results of \eq{zeroops} to obtain
\beq \label{Lprops0}
P^\mu L_\mu=[L^\mu\,,\,Q_\alpha]=[L^\mu\,,\,Q^\dagger_{\dot\alpha}]=0\,,\qquad \text{for}~~P^2=0\,.
\eeq
Moreover, if we employ the identity
\beq
\epsilon^{\mu\nu\rho\lambda}\sigmabar_\rho=\half i(\sigmabar^\nu\sigma^\mu\sigmabar^\lambda-\sigmabar^\lambda\sigma^\mu\sigmabar^\nu)\,,
\eeq
[which is a consequence of \eq{sigsigsig1}],
it then follows from \eq{zeroops} that
\beq \label{epsQP}
\epsilon^{\mu\nu\rho\lambda}Q^\dagger\sigmabar_\rho Q P_\lambda=0\,,\qquad \text{for}~~P^2=0\,.
\eeq

Hence, in the massless case, \eq{LLcomm} simplifies to
\beq \label{LLcomm0}
[L^\mu\,,\,L^\nu]=i\epsilon^{\mu\nu\rho\lambda}L_\rho P_\lambda\,,\qquad \text{for}~~P^2=0\,.
\eeq
Finally, we evaluate $L^\mu L_\mu$ for the positive energy massless one-particle states.
As in the analysis of the Poincar\'e algebra, we shall assume that $w^\mu w_\mu=\lim_{m\to
  0} (-m^2\boldsymbol{\vec S}\llsup{\,2})=0$.
Using \eq{epsQP}, it follows that
\beq
w^\mu Q^\dagger\sigmabar_\mu Q= -\half\epsilon^{\mu\nu\rho\lambda}J_{\nu\rho}P_\lambda Q^\dagger\sigmabar_\mu Q=0\,.
\eeq
In light of \eq{sigid3},
we obtain
\beqa
(Q^\dagger\sigmabar^\mu Q)(Q^\dagger\sigmabar_\mu Q)&=&2\epsilon^{\dot\alpha\dot\gamma}\epsilon^{\beta\tau}Q^\dagger_{\dot\alpha}
Q_\beta Q^\dagger_{\dot\gamma}Q_\tau=2\epsilon^{\dot\alpha\dot\gamma}\epsilon^{\beta\tau}
Q^\dagger_{\dot\alpha}[2P_\mu\sigma^\mu_{\beta\dot\gamma}-Q^\dagger_{\dot\gamma}Q_\beta]Q_\tau\nonumber \\
&=& 2(Q^\dagger_{\dot\alpha}Q^{\dagger\,\dot\alpha})(Q^\beta Q_\beta)-4P^\mu Q^\dagger\sigmabar_\mu Q=0\,,
\eeqa
after applying the operator identities given in \eqs{zeroops}{QQQQ0}.  Hence,
\beq \label{lmulmu}
L^\mu L_\mu=0\,,\qquad \text{for}~~P^2=0~~\text{and}~~P^0>0\,.
\eeq

When $P^2=0$ and $P^0>0$, the properties of $L^\mu$ [cf.~eqs.~(\ref{PL}), (\ref{Lprops0}), (\ref{LLcomm0})
and (\ref{lmulmu})] match precisely the
properties of the Pauli-Lubanski vector.
Thus, we must solve the equations $L^2=P^2=L_\mu
P^\mu=0$.  In
a reference frame in which $P^\mu=P^0(1\,;\,0\,,\,0\,,\,1)$ and $P^0>0$, 
it follows that $L^\mu=L^0(1\,;\,0\,,\,0\,,\,1)$.  Consequently, in any Lorentz frame,
\beq \label{shelicitydef}
L^\mu=\mathcal{K} P^\mu\,,
\eeq
where $\mathcal{K}\equiv L^0/P^0$ is called the superhelicity operator.
More explicitly, in a frame where $P^\mu=P^0(1\,;\,0\,,\,0\,,\,1)$,
\beq \label{mathcalkdef}
\mathcal{K}=h+\frac{1}{8P^0}\left(Q_1^\dagger Q_1+Q_2^\dagger Q_2\right)\,,
\eeq
where $h\equiv w^0/P^0=\boldsymbol{\vec S\newcdot\hat P}$ is the usual helicity operator acting
on massless one-particle states.  By virtue of \eqs{susyalg1}{Lprops0}, it follows that
\beq \label{KQQ}
[\mathcal{K}\,,\,P^\mu]=[\mathcal{K}\,,\,Q_\alpha]=[\mathcal{K}\,,\,Q^\dagger_{\dot\alpha}]=0\,.
\eeq

Hence, the states of the massless supermultiplet are eigenstates of
$\mathcal{K}$, with possible
eigenvalues $\kappa=0,\pm\half,\pm 1,\pm\tfrac{3}{2},\ldots$.  In contrast, $h$ does not commute with
$Q_\alpha$ and $Q^\dagger_{\dot\alpha}$.  Thus, the different states of the massless
supermultiplet will have different helicities.
We conclude that for positive energy, timelike $P^\mu$,
the irreducible representations of the $N=1$ SUSY algebra are labeled by
the eigenvalue $\kappa$ of the superhelicity operator, which is called
the \textit{superhelicity} of the massless supermultiplet.  Moreover, an $N=1$ massless supermultiplet with superhelicity $\kappa$ consists of two massless
states with helicity $\kappa$ and $\kappa-\half$, respectively.\footnote{In the literature, it is more
common to define $L^\mu=(\mathcal{K}+\half)P^\mu$, in which case the helicities of the massless $N=1$ supermultiplet
are $\kappa+\half$ and $\kappa$ (e.g., see refs.~\cite{Srivastava,Buchbinder}).
In our opinion, the definition of the superhelicity operator given in \eq{shelicitydef} is cleaner.}

Any quantum field theory realization of supersymmetry respects CPT symmetry.  Since the helicity
changes sign under a CPT transformation, it follows that any
irreducible massless supermultiplet with superhelicity~$\kappa$ must
be accompanied by the corresponding CPT-conjugate states that make up an
irreducible massless supermultiplet with superhelicity
$-\kappa+\half$.  
Hence, without loss of generality, we can restrict the possible values
of the superhelicity to $\kappa=\half,1,\tfrac32,\ldots$.  
These results are summarized in
Table~\ref{masslesssuperplet}.  The explicit construction of the
states of an irreducible massless supermultiplet and a discussion of their properties is presented in Section~\ref{App}.

\begin{table}[t!]
\caption{\small States of an $N=1$ massless supermultiplet of superhelicity $\kappa$
and the corresponding
CPT conjugates which comprise an $N=1$ massless super\-multiplet of superhelicity $-\kappa+\half$.
An interpretation
is provided for $\kappa=s$ and $\kappa=s-\half$, where $s$ is a positive integer.
In the special case of $\kappa=\half$, the scalar boson of the supermultiplet is complex, whereas for $\kappa=1,\tfrac{3}{2},2,\ldots$,
the bosonic member of the supermultiplet is real
with nonzero spin.  In all cases, the number of bosonic and fermionic degrees of
freedom (D.o.f.) coincide and are equal to
2.\label{masslesssuperplet}}
\vskip 0.1in
{
\begin{tabular}{cccc} \hline
Helicities & D.o.f. & Interpretation ($\kappa=s$) & Interpretation ($\kappa=s-\half$) \\ \hline
$\kappa$\,,\,$-\kappa$ & $2$ & spin-$s$ boson & spin-$(s-\half)$ fermion \\
$\kappa-\half$\,,\,$-\kappa+\half$  & $2$ & spin-$(s-\half)$ fermion & spin-$(s-1)$ boson \\ \hline
\end{tabular}}
\end{table}

\begin{example}[A massless chiral supermultiplet, with $\boldsymbol{\kappa=\half}$]
Including the CPT-conjugates, this supermultiplet contains two states of helicity 0, and two states of helicity $\pm\half$,
respectively, which yields a massless complex scalar and a
massless Majorana fermion.  We recognize this as the massless limit of
a massive $j=0$ chiral supermultiplet.  
\end{example}

\begin{example}[a massless gauge supermultiplet, with $\boldsymbol{\kappa=1}$]
Including the CPT-conjugates, this supermultiplet contains two states of helicity $\pm\half$
and two states of helicity $\pm 1$, which yields a massless
Majorana fermion and a massless spin-1 particle.  This is a gauge supermultiplet
 (e.g the photino and the photon of
supersymmetric QED).  
\end{example}

In Problem \ref{pr:spin2}, you will show that
a massless supermultiplet with $\kappa=2$ and its CPT-conjugates
contains
a massless spin-$\tfrac{3}{2}$ and a massless spin 2 particle, which
is realized in supergravity by the
gravitino and the graviton, respectively.

\subsection{Consequences of super-Poincar\'e invariance}

A Poincar\'e invariant quantum field theory respects the Poincar\'e algebra generated
by $\{P^\mu\,,\,J^{\mu\nu}\}$, which satisfy commutation relations given by
\eqst{spoincarealg1}{spoincarealg3}.  One of the basic postulates of
Poincar\'e-invariant quantum field theory  states that a translationally-invariant,
Lorentz-invariant vacuum $\ket{0}$ exists such that\cite{Roman},
\beq \label{pvacuum}
P^\mu\ket{0}=0\,,\qquad\quad J^{\mu\nu}\ket{0}=0\,.
\eeq
In particular, $\bra{0}P^\mu\ket{0}=0$.  Indeed if $\bra{0}P^\mu\ket{0}\neq 0$, then
the vacuum would not be invariant under Lorentz transformations.  This is easily proven
by taking the vacuum expectation value of 
\beq
\exp\left(\tfrac{1}{2}i\theta_{\rho\tau}J^{\rho\tau}\right)  P^\mu
  \exp\left(-\tfrac{1}{2}i\theta_{\rho\tau}J^{\rho\tau}\right)=\Lambda^{\mu}{}_{\nu}P^\nu\,,
\eeq
where the $\theta_{\rho\tau}=-\theta_{\rho\tau}$ parameterize the $4\times 4$ Lorentz transformation
matrix $\Lambda^\mu{}_\nu$
[cf.~\eqs{lambda44}{explicitsmunu}].  
 Using $J^{\mu\nu}\ket{0}=0$,
it follows that
\beq
\bra{0}P^\mu\ket{0}=\Lambda^{\mu}{}_{\nu}\bra{0}P^\nu\ket{0}\,,
\eeq
which holds for all Lorentz transformations $\Lambda$.  Thus, it follows that $\bra{0}P^\mu\ket{0}=0$.

A super-Poincar\'e invariant quantum field theory respects the SUSY algebra
generated by
$\{P^\mu\,,\,J^{\mu\nu}\,,\,Q_\alpha\,,\,Q^{\dagger\dot\alpha}\}$.
The SUSY algebra generators satisfy
the commutation relations of the Poincar\'e algebra and the
(anti)commutation relations given by
\eqst{susyalg1}{susyalg5}.  Two important consequences can be established:
\vskip 0.1in

1. \textit{The vanishing of the vacuum energy is a necessary and sufficient condition
for the existence of a global supersymmetric vacuum.}
\vskip 0.1in

2. \textit{In a theory governed by a supersymmetric action, for a fixed non-zero $P_\mu$ the number of bosonic
and fermionic degrees of freedom coincide.}
\vskip 0.1in

\noindent
We address these two results in the next two subsections.

\subsubsection{The vacuum energy of a globally supersymmetric theory}

In order to prove that the vanishing of the vacuum energy is a necessary and sufficient condition
for the existence of a global supersymmetric vacuum, we consider
the anticommutation relations of the fermionic generators of the SUSY
algebra,
\beq \label{QQanti}
\{Q_\alpha\,,\,Q^\dagger_{\dot\beta}\}=2\sigma^\mu_{\alpha\dot\beta}P_\mu\,.
\eeq
Following the derivation of \eq{pzero},
\beq \label{QQpzero}
P^0=\tfrac{1}{4}\left[Q_1 Q_1^\dagger+Q_1^\dagger Q_1+Q_2 Q_2^\dagger+Q_2^\dagger Q_2\right]\,.
\eeq
Since the right-hand side of \eq{pzero} is
positive semi-definite (and neither $Q$ nor $Q^\dagger$ is the zero operator), it
follows that
\beq
\vev{0\,|P^0\,|\,0}=0\quad\Longleftrightarrow\quad Q_\alpha\ket{0}=0\,.
\eeq
In particular, $Q_\alpha\ket{0}=0$ implies that the vacuum is supersymmetric, in the same way
that $P^\mu\ket{0}=J^{\mu\nu}\ket{0}=0$ imply that the vacuum is translationally-invariant
and Lorentz-invariant.\footnote{Equivalently,
$\bra{0}\{Q_\alpha\,,\,Q^\dagger_{\dot\beta}\}\ket{0}=0$,
by covariance with respect to the SUSY algebra,
since there are no spinor quantities with one undotted and one dotted index
that can appear on the right hand side of this equation.
Hence, $Q_\alpha\ket{0}=0$, which then yields $\bra{0}P^0\ket{0}=0$.}

However, this proof is troubling for two separate reasons.  First, suppose that the action
of the theory is invariant under supersymmetric transformations, but the vacuum is not
preserved by supersymmetry.  In this case, $Q_\alpha\ket{0}\neq 0$, and we say that
supersymmetry is spontaneously broken.  Then, \eq{QQpzero} implies that
$\bra{0}P^0\ket{0}>0$, which contradicts \eq{pvacuum}.  Thus, it appears that the
spontaneous breaking of supersymmetry is not possible without breaking Lorentz invariance.
Perhaps a more fundamental objection is that the concept of the vacuum energy is
usually considered to be unphysical
in non-gravitational theories, as it is commonly asserted that only energy differences are physical.
Thus, it seems to be a matter of convention to choose the vacuum energy such that $\bra{0}P^0\ket{0}=0$.

To overcome the objections raised above, we re-examine the concept of the vacuum energy
in relativistic (non-gravitational) quantum field theory.  Using the
Noether procedure, the conserved
canonical energy-momentum tensor, $T^{(c)}_{\mu\nu}$ can be obtained,
which satisfies $\partial^\mu T^{(c)}_{\mu\nu}=0$.\footnote{The arguments given here do not depend on
whether one employs the canonical energy momentum tensor or the
improved symmetrized energy-momentum tensor.}
One can then formally compute the vacuum energy
density by summing over the vacuum Feynman diagrams of the theory.  By Lorentz covariance~\cite{Witten},
\beq
\bra{0}T^{(c)}_{\mu\nu}\ket{0}=\mathcal{E}g_{\mu\nu}\,,
\eeq
where $\mathcal{E}$ is typically UV divergent.  Since the Hamiltonian density is identified
as $\mathscr{H}=T_{00}$, it follows that $\mathcal{E}$ is the vacuum energy density.
However, one is always free to define a new subtracted energy-momentum tensor,
\beq
T_{\mu\nu}\equiv T^{(c)}_{\mu\nu}-\mathcal{E} g_{\mu\nu}\,,
\eeq
which is a Lorentz-covariant expression.\footnote{For example, in the quantum theory
of free fields, the vacuum energy is set to zero by defining the Hamiltonian density to be
normal ordered.}
By construction, $\partial^\mu T_{\mu\nu}=0$ and
\beq
\bra{0}T_{\mu\nu}\ket{0}=0\,.
\eeq
The energy-momentum tensor $T_{\mu\nu}$ plays
a distinguished role in relativistic quantum field theory, since it can be used
to construct the generators of spacetime translations,
\beq \label{vacp}
P_\mu=\int d^3 x\ T_\mu{}^0\,,
\eeq
that satisfy $\bra{0}P_\mu\ket{0}=0$.  Indeed, $P_\mu$ defined by \eq{vacp} is a four-vector with respect
to Lorentz transformations.
Likewise, one can construct a distinguished angular momentum tensor $M_{\mu\nu\lambda}$ that
can be used to construct the generators of Lorentz transformations
\beq
J_{\mu\nu}=\int d^3 x M_{\mu\nu}{}^0\,,
\eeq
which satisfy $\bra{0}J_{\mu\nu}\ket{0}=0$.

However, in a supersymmetric theory, another choice of the energy-momentum tensor is
natural.  The fermionic generators $Q_\alpha$ and $Q^{\dagger\dot\alpha}$ of the SUSY algebra
are time-independent (conserved) quantities that are obtained by integrating the zeroth component
of the supercurrents,
\beq \label{Qint}
Q_\alpha=\int d^3 x J_\alpha^0\,,\qquad\qquad Q^{\dagger\dot\alpha}=\int d^3 x J^{\dagger\dot\alpha\,0}\,.
\eeq
In a theory governed by a supersymmetric Lagrangian, the supercurrents $J_\alpha^\mu$ and $J^{\dagger\dot\alpha\,\mu}$
are related by supersymmetry to
an energy-momentum tensor, denoted by $T^{(\rm SUSY)}_{\mu\nu}$.
Then, the proper interpretation of \eq{QQanti} is~\cite{deWit}
\beq
\{Q_\alpha\,,\,Q^\dagger_{\dot\beta}\}=2\sigma^\mu_{\alpha\dot\beta}\int d^3 x\, T^{(\rm SUSY)}{}_\mu{}^0\,.
\eeq
One can then rewrite the above anticommutation relation as:
\beq \label{revisedQQ}
\{Q_\alpha\,,\,Q^\dagger_{\dot\beta}\}=2\sigma^\mu_{\alpha\dot\beta}P_\mu+2E_0\sigma^0_{\alpha\dot\beta}\,,
\eeq
where $P_\mu$ is defined by \eq{vacp} and
\beq \label{Ezero}
E_0\equiv\int d^3 x\,\bra{0}T^{(\rm SUSY)}{}_0{}^0\ket{0}\,.
\eeq
If $E_0=0$ (which corresponds to $T^{(\rm SUSY)}_{\mu\nu}=T_{\mu\nu}$),
then we recover the standard SUSY algebra, and the
vacuum is supersymmetric.  If $E_0\neq 0$, then \eq{revisedQQ} is consistent with $\bra{0}P^\mu\ket{0}=0$
(which is required by the Lorentz-invariant vacuum) and with $Q_\alpha\ket{0}\neq 0$.  In particular,
$E_0$ serves as an order parameter for broken supersymmetry.

Note that $E_0\geq 0$ since \eq{revisedQQ} implies that:
\beq
E_0=\tfrac{1}{4}\bra{0}Q_1 Q_1^\dagger+Q_1^\dagger Q_1+Q_2 Q_2^\dagger+Q_2^\dagger Q_2\ket{0}\geq 0\,.
\eeq
In supersymmetric theories, it is common to call $E_0$ the vacuum energy.  Thus, if supersymmetry is
spontaneously broken, then this definition of the vacuum energy is not compatible with usual
conventions of quantum field theory in which the vacuum energy is defined to be zero.

Although the conclusions obtained above are correct, the derivation of \eq{revisedQQ} is still somewhat formal.
Indeed
if the vacuum breaks supersymmetry, then the integrals in \eq{Qint} do not converge when integrated
over an infinite volume (this is an infrared divergence), so strictly
speaking the fermionic generators $Q_\alpha$ and $Q^{\dagger\dot\alpha}$ are undefined.\footnote{Moreover,
given a non-zero value for $\bra{0}T^{(\rm SUSY)}{}_0{}^0\ket{0}$, which is a constant by
translational invariance, one sees that $E_0$ defined in \eq{Ezero} also diverges in the infinite volume limit.}
Nevertheless,
the supercurrents are conserved, as expected in a supersymmetric theory with no \textit{explicit}
supersymmetry breaking.  In section~\ref{goldstino}, we will demonstrate that given a supersymmetric
Lagrangian, if the vacuum breaks supersymmetry
then a massless Goldstone fermion exists in the spectrum.  The long range forces mediated by
this massless particle are responsible for the non-convergence of the integrals in \eq{Qint}.
Equivalently, in a spontaneously-broken globally supersymmetric theory, applying $Q_\alpha$ to the vacuum
creates a zero-momentum massless fermionic state, which is a state of infinite norm~\cite{weinberg3}.

\subsubsection{Equality of bosonic and fermionic degrees of freedom in
  super\-symmetric theories}

In a theory governed by a supersymmetric action, for a fixed non-zero $P_\mu$ the number of bosonic
and fermionic degrees of freedom coincide.  To prove this result, we first observe that the application
of $Q_\alpha$ or $Q^\dagger_{\dot\alpha}$ to a physical state changes that state by adding half a unit of spin.
An explicit example of this behavior can be seen in \eqs{massiveplet2}{massiveplet3}.  We can summarize
this behavior in the following schematic equations,
\beq
Q_\alpha\ket{B}=\ket{F}\,,\qquad\quad Q_\alpha\ket{F}=\ket{B}\,,
\eeq
and similarly for the application of $Q^\dagger_{\dot\alpha}$, where $\ket{B}$ is a bosonic state
and $\ket{F}$ is a fermionic state.  It is convenient to introduce an operator, denoted by $(-1)^F$,
with the following properties:
\beq
(-1)^F\ket{B}=\ket{B}\,,\qquad\quad (-1)^F\ket{F}=-\ket{F}\,.
\eeq
Note that
\beqa
Q_\alpha(-1)^F\ket{F}&=&-Q_\alpha\ket{F}=-\ket{B}\,, \\
(-1)^F Q_\alpha\ket{F}&=&(-1)^F\ket{B}=\ket{B}\,,
\eeqa
and similarly for the application of $Q^\dagger_{\dot\alpha}$.  It follows that $Q_\alpha$
[and $Q^\dagger_{\dot\alpha}$] anticommute with $(-1)^F$,
\beq \label{minusF}
\{Q_\alpha\,,\,(-1)^F\}=\{Q^\dagger_{\dot\alpha}\,,\,(-1)^F\}=0\,.
\eeq

Using \eq{minusF}, we can evaluate the following trace over physical states,
\beqa
\Tr\left[(-1)^F\{Q_\alpha\,,\,Q^\dagger_{\dot\beta}\}\right]&=&
\Tr\left[(-1)^F(Q_\alpha Q^\dagger_{\dot\beta}+Q^\dagger_{\dot\beta}Q_\alpha)\right] \nonumber \\
&=& \Tr\left[-Q_\alpha (-1)^F Q^\dagger_{\dot\beta}+(-1)^F Q^\dagger_{\dot\beta}Q_\alpha\right]\nonumber \\
&=& \Tr\left[-Q^\dagger_{\dot\beta}Q_\alpha (-1)^F+Q^\dagger_{\dot\beta}Q_\alpha (-1)^F\right]\nonumber \\
&=&0\,,
\eeqa
after a cyclic permutation within the trace at the penultimate step.
Employing \eq{susyalg5},
we conclude that
\beq \label{traceF}
\Tr (-1)^F=0\,,\qquad \text{for any fixed non-zero $P^\mu$}\,.
\eeq
For a fixed non-zero eigenvalue $p^\mu$ obtained by applying the
momentum operator $P^\mu$ to a physical
state, 
\beq
\Tr (-1)^F=\sum_{\{r\}} \bra{p^\mu,\{r\}}(-1)^F\ket{p^\mu,\{r\}}=N_B(p^\mu)-N_F(p^\mu)=0\,,
\eeq
where $\{r\}$ indicates all other quantum numbers of the physical state.  Thus, the number of
bosonic ($N_B$) and fermionic ($N_F$) degrees of freedom coincide.

We have already observed that \eq{traceF} is satisfied by all
positive energy representations of the SUSY algebra.  The
proof above demonstrates that the equality of bosonic and fermionic
degrees of freedom in supersymmetric theories is far more general.
Indeed, the only case where this equality can break down is when~$P^\mu=0$,
corresponding to the vacuum state of the supersymmetric theory.\footnote{For
example, Witten showed that in an SU($N$) supersymmetric Yang-Mills
theory, $\Tr (-1)^F=N$ for the supersymmetric ground
state~\cite{Witten2}.}


\subsection{Supersymmetric theories of spin-0 and spin-\half\ particles}

The simplest supermultiplet contains a complex scalar and a
two-component (Majorana) fermion, of common mass $m$.  The case of
$m\neq 0$ corresponds to superspin $j\!=\!0$ and the case of $m\!=\!0$
corresponds to superhelicity $\half$ and its CPT-conjugate.

\subsubsection{The Wess-Zumino Lagrangian}

A Lagrangian that respects the SUSY algebra is given by
\begin{align}
\mathscr{L}=(\partial_\mu A)^\dagger(\partial^\mu A)+ i \psi^\dagger \sigmabar^\mu \partial_\mu \psi-\left|\frac{dW}{dA}\right|^2-\frac12\left[\frac{d^2 W}{dA^2}\,\psi\psi+\left(\frac{d^2 W}{dA^2}\right)^{\!\!\dagger}\!\!\psi^\dagger\psi^\dagger\right]\,,
\label{eq:LWZoriginal}
\end{align}
where $A$ is a complex scalar,\footnote{Employing $A$ for a complex
  scalar field rather than $\phi$ follows the notation first introduced
  in Ref.\cite{WessBagger}.  It should not be confused with the
  notation for a vector field, which will henceforth be denoted
  by $V$.} 
$\psi$ and $\psi^\dagger$ are two-component spinors, and $W=W(A)$
[called the \textit{superpotential}]
is a holomorphic function of $A$ (\textit{i.e.}, a function of $A$ and \textit{not}~$A^\dagger$).
If $W(A)$ is (at most) a cubic polynomial in $A$, then the above
Lagrangian yields a renormalizable 
quantum field theory called the \textit{Wess-Zumino model}.
For example, a simple quadratic superpotential,
$W=\half mA^2$, describes
a free theory of a complex scalar and a Majorana fermion of common mass $|m|$.
An interacting theory is
obtained by including a cubic term in the superpotential,
\begin{align}
W=\half mA^2+\tfrac{1}{3}g A^3\,.\label{wcubic}
\end{align}
Without loss of generality, we can assume that $m$ and $g$ are
non-negative (by appropriate rephasing of $A$ and $\psi$).  Then,
inserting \eq{wcubic} into \eq{eq:LWZoriginal} yields the Wess-Zumino
Lagrangian,
\begin{align}
\begin{split}
\mathscr{L}&=
(\partial_\mu A)^\dagger(\partial^\mu A)+ i \psi^\dagger \sigmabar^\mu \partial_\mu \psi-\half m(\psi\psi+\psi^\dagger\psi^\dagger)-m^2(A^\dagger A)
 \\
&\quad -g(A\psi\psi+A^\dagger\psi^\dagger\psi^\dagger)-mg(A^\dagger A)(A+A^\dagger)-g^2(A^\dagger A)^2 \,.
\end{split} \label{wzlag}
\end{align}
As expected, the boson and fermion are mass-degenerate.  Moreover, SUSY imposes relations among the couplings.  In this model, we see that the quartic scalar coupling is the square of the Yukawa (scalar-fermion-fermion) coupling.

In order to employ four-component Feynman rules, it is convenient to
convert the Wess-Zumino Lagrangian into four-component fermion form.
Writing $A=(S+iP)/\sqrt{2}$, where $S$ and $P$ are hermitian fields, we obtain
\begin{align}
\begin{split}
\mathscr{L}&=\half (\partial_\mu S)^2+\half (\partial_\mu P)^2-\half m^2(S^2+P^2)+\half \overline\Psi_M(i\gamma^\mu\partial_\mu-m)\Psi_M
 \\
&\quad -\frac{g}{\sqrt{2}}\left[S\overline\Psi_M\psi_M-iP\overline\Psi_M\gamma\ls{5}\Psi_M\right]-\frac{mg}{\sqrt{2}}S(S^2+P^2)
-\tfrac{1}{4}g^2(S^2+P^2)^2\,.
\end{split}
\end{align}
Note that this Lagrangian separately conserves C, P and T.  We identify $S$ as a scalar and $P$ as a pseudoscalar.

\subsubsection{Invariance of the Wess-Zumino Lagrangian with respect
  to SUSY transformations}

The Wess-Zumino Lagrangian given by \eq{wzlag} is invariant with respect to global supersymmetry transformations.  Explicitly, these transformations depend on an
infinitesimal Grassmann (anticommuting) two-component spinor parameter $\xi$ that is independent of the spacetime position $x$,
\begin{align}
\delta_\xi A &= \sqrt{2}\,\xi \psi\,,\label{susytr1}\\
\delta_\xi\psi_\alpha
&=
- i\sqrt{2} (\sigma^\mu \xi^\dagger)_\alpha\> \partial_\mu A-\sqrt{2}\,\xi_\alpha\left(\frac{dW}{dA}\right)^{\!\!\dagger}\,.\label{susytr2}
\end{align}
By hermitian conjugation, one also obtains
\begin{align}
\delta_\xi A^\dagger &= \sqrt{2}\,\xi^\dagger \psi^\dagger\,,\label{susytr3}
 \\
\delta_\xi\psi^\dagger_{\dot{\alpha}}
&=
 i \sqrt{2}(\xi\sigma^\mu)_{\dot{\alpha}}\>   \partial_\mu A^\dagger-\sqrt{2}\,\xi^\dagger_{\dot\alpha}\left(\frac{dW}{dA}\right)\,.\label{susytr4}
 \end{align}
 Applying these transformation laws to \eq{wzlag}, one obtains a result of the form
 \begin{align}
 \deltaxi\mathscr{L}=\partial_\mu K^\mu\,.
 \label{eq:Kmu}
 \end{align}
 That is, the action of the Wess-Zumino Model, $S=\int d^4 x\,\mathcal{L}$, is invariant under global SUSY transformations; \textit{i.e.}, $\deltaxi S=0$.

 But, how do we know that the transformation laws just introduced correspond to SUSY transformations?  Recall that for ordinary spacetime translations,
 \beq
 e^{ia\newcdot P}\Phi(x)e^{-ia\newcdot P}=\Phi(x+a)\,,
 \eeq
 which in infinitesimal form is given by
 \beq
 i\bigl[P^\mu\,,\,\Phi(x)\bigr]=\partial^\mu\Phi(x)\,,
 \eeq
 where $\Phi=A$ or $\psi$. Equivalently, for an infinitesimal translation, 
 \beq
 \delta_a\Phi(x)\equiv\Phi(x+a)-\Phi(x)\simeq a^\mu\partial_\mu\Phi(x)=ia^\mu\bigl[P^\mu\,,\,\Phi(x)\bigr]\,.
 \eeq
 Likewise, since $Q$ and $Q^\dagger$ are the generators of SUSY-translations, we expect
 \beq \label{susytranslate}
 \deltaxi\Phi(x)=i\bigl[\xi Q+\xi^\dagger Q^\dagger\,,\,\Phi(x)\bigr]\,.
 \eeq
Consider the commutator of two SUSY-translations:
 \beqa
 (\deltaeta\deltaxi-\deltaxi\deltaeta)\Phi(x)&=&\biggl[i(\eta Q+\eta^\dagger Q^\dagger)\,,\,
 \bigl[i(\xi Q+\xi^\dagger Q^\dagger)\,,\,\Phi(x)\bigr]\biggr] 
  -(\xi\longleftrightarrow\eta)\nn
  \\
 &=&\biggl[\bigl[i(\eta Q+\eta^\dagger Q^\dagger)\,,\,i(\xi Q+\xi^\dagger Q^\dagger)\bigr]\,,\,\Phi(x)\biggr]\,,
 \eeqa
 after employing the Jacobi identity for the double commutators.  Using the SUSY algebra,
 $$
 \bigl[\eta Q\,,\,\xi^\dagger Q^\dagger\bigr]=2(\eta\sigma^\mu\xi^\dagger) P_\mu\,.
 $$
 Note that the anticommutator has been converted into a commutator due to the fact that $\eta$ and $\xi$ are anticommuting two-component spinors.  Likewise, 
 $$
 \bigl[\eta Q\,,\,\xi Q\bigr]=\bigl[\eta^\dagger Q^\dagger\,,\,\xi^\dagger Q^\dagger\bigr]=0\,.
 $$
 Hence, we end up with 
 \beqa
 \bigl[\deltaeta\,,\,\deltaxi\bigr]\Phi(x)&=&2(\xi\sigma^\mu\eta^\dagger-\eta^\dagger\sigma^\mu\xi^\dagger)\bigl[P_\mu\,,\,\Phi(x)\bigr] \nn
  \\
&=& -2i(\xi\sigma^\mu\eta^\dagger-\eta^\dagger\sigma^\mu\xi^\dagger)\partial_\mu\Phi(x)\,.
 \eeqa
Likewise, a similar computation yields,
 \beqa
  \bigl[\deltaeta\,,\,\deltaxi\bigr]A(x)&=&-2i(\xi\sigma^\mu\eta^\dagger-\eta^\dagger\sigma^\mu\xi^\dagger)\partial_\mu A(x)\,, 
  \\[6pt]
   \bigl[\deltaeta\,,\,\deltaxi\bigr]\psi_\alpha(x)&=&-2i(\xi\sigma^\mu\eta^\dagger-\eta^\dagger\sigma^\mu\xi^\dagger)\partial_\mu \psi_\alpha +R\,, \label{eq:Remainder}
 \eeqa
 where the remainder $R$ vanishes after imposing the classical field
 equations for $\psi_\alpha(x)$, as you will verify in Problem \ref{pr:R}.
We conclude that the SUSY algebra is realized \textit{on-shell}, \textit{i.e.}, after employing the classical field equations.
 
 It is instructive to employ
 Noether's theorem, which states that an invariance of the action under
 a continuous symmetry implies the existence of a conserved current.
 Since we have explicitly identified the SUSY
 transformations, we can 
use Noether's theorem to determine the corresponding conserved supercurrent.  Using $\deltaxi\mathscr{L}=\partial_\mu K^\mu$, the resulting conserved Noether supercurrents are
 \begin{align}
 \xi^\alpha J_\alpha^\mu+\xi^\dagger_{\dot\alpha} J^{\dagger\,\mu\dot\alpha}=\sum_\Phi \deltaxi\Phi\,\frac{\delta\mathscr{L}}{\delta(\partial_\mu \Phi)}-K^\mu\,,
 \end{align}
 where the sum is taken over $\Phi=A$, $\psi$.  Note that the supercurrent has both a Lorentz index and a spinor index.
 Noether's theorem states that the supercurrent is conserved \textit{after imposing the classical field equations}.  That is, 
\beq
 \partial_\mu J^\mu_\alpha=\partial_\mu J^{\dagger\,\mu\dot\alpha}=0\,.
\eeq
 
The supercharges are defined in the usual way (as previously noted):
 \begin{align} 
Q_\alpha=\int d^3 x J_\alpha^0\,,\qquad\qquad Q^{\dagger\dot\alpha}=\int d^3 x J^{\dagger\dot\alpha\,0}\,.\label{QJ}
\end{align}
These are expressions that depend on the fields $A$ and $\psi$.
One can now employ the canonical commutation relations of the boson field $A$ and the canonical anticommutation relations of the fermion field $\psi$ to verify that 
\beq \label{QandCCR}
\{Q_\alpha\,,\,Q_\beta\}=\{Q^\dagger_{\dot\alpha}\,,\,Q^\dagger_{\dot\beta}\}=0\,,\qquad\quad
\{Q_\alpha\,,\,Q^\dagger_{\dot\beta}\}=2\sigma^\mu_{\alpha\dot\beta}P_\mu\,,
\eeq
where $P_\mu$ is the Noether charge of spacetime translations given in \eq{vacp}.

\subsection{The SUSY algebra realized off-shell}
\label{offshell}

The SUSY transformation laws of the Wess-Zumino Lagrangian exhibited in \eqs{susytr1}{susytr2} are not in an
optimal form for two reasons.  First, in the case of a cubic
superpotential $W(A)$, the transformation law for $\psi_\alpha$ is non-linear in the fields.
Second, the SUSY algebra is only realized on-shell.
We can address both these issues by introducing an auxiliary complex scalar field $F(x)$.
Consider the alternative Lagrangian,
\beqa
\mathscr{L}&=&(\partial_\mu A)^\dagger(\partial^\mu A)+ i \psi^\dagger \sigmabar^\mu \partial_\mu \psi
+F^\dagger F 
+F\,\frac{dW}{dA}+F^\dagger\left(\frac{dW}{dA}\right)^{\!\!\dagger} \nn \\
&& -\frac12\left[\frac{d^2 W}{dA^2}\,\psi\psi+\left(\frac{d^2 W}{dA^2}\right)^{\!\!\dagger}\!\!\psi^\dagger\psi^\dagger\right]\,.
\label{eq:LWZF}
\eeqa
The field $F(x)$ is auxiliary since $\mathscr{L}$ does not depend on $\partial_\mu F$ and
$\partial_\mu F^\dagger$.  That is, $F$ and $F^\dagger$ are non-dynamical fields.

We can trivially solve for $F$ and $F^\dagger$ using the classical field equations,
\begin{align}
\frac{\partial\mathscr{L}}{\partial F}&=0\qquad\Longrightarrow \qquad F^\dagger=-\frac{dW}{dA}\,,\label{fs}
\\
\frac{\partial\mathscr{L}}{\partial
  F^\dagger}&=0\qquad\Longrightarrow\qquad F=
-\left(\frac{dW}{dA}\right)^{\!\!\dagger}\,.\label{f}
\end{align}
Hence, \eqs{fs}{f} yield,
\beq
F^\dagger F+F\,\frac{dW}{dA}+F^\dagger\left(\frac{dW}{dA}\right)^{\!\!\dagger}=-\left|\frac{dW}{dA}\right|^2\,.
\eeq
Plugging this result back into \eq{eq:LWZF},
we recover the general form of the Wess-Zumino Lagrangian given by \eq{eq:LWZoriginal}.

The Lagrangian including the auxiliary fields given by \eq{eq:LWZF} is
also invariant under SUSY translations.  The appropriately modified
SUSY transformation laws are now given by
\begin{align}
\delta_\xi A &= \sqrt{2}\,\xi \psi\,,\label{offshell1}
\\
\delta_\xi\psi_\alpha
&=
- i\sqrt{2} (\sigma^\mu \xi^\dagger)_\alpha\> \partial_\mu A+\sqrt{2}\,\xi_\alpha F\,,\label{offshell2}
\\
\deltaxi F&=-i\sqrt{2}\,\xi^\dagger\sigmabar^\mu\partial_\mu\psi\,.\label{offshell3}
\end{align}
By hermitian conjugation, one also obtains
\begin{align}
\delta_\xi A^\dagger &= \sqrt{2}\,\xi^\dagger \psi^\dagger\,,
 \\
\delta_\xi\psi^\dagger_{\dot{\alpha}}
&=
 i \sqrt{2}(\xi\sigma^\mu)_{\dot{\alpha}}\>   \partial_\mu A^\dagger+\sqrt{2}\,\xi^\dagger_{\dot\alpha}F^\dagger\,,
  \\
\deltaxi F^\dagger&=i\sqrt{2}(\partial_\mu\psi^\dagger)\sigmabar^\mu\xi\,.
 \end{align}
 Applying these transformation laws to \eq{eq:LWZF}, one obtains a result of the form
 \begin{align}
 \deltaxi\mathscr{L}=\partial_\mu K^{\prime\,\mu}\,,
 \label{eq:Kpmu}
 \end{align}
where the explicit form for $K^{\prime\,\mu}$ is  to be determined in
Problem~\ref{pr:kprime}.  Moreover, as you will verify in Problem \ref{pr:xieta},
 \begin{align}
  \bigl[\deltaeta\,,\,\deltaxi\bigr]\Phi(x)=-2i(\xi\sigma^\mu\eta^\dagger-\eta^\dagger\sigma^\mu\xi^\dagger)\partial_\mu \Phi(x)\,,
  \end{align}
 for $\Phi=A$, $\psi$ and $F$ \textit{without} the need to impose the classical field equations.
Thus, the Wess-Zumino Lagrangian with auxiliary fields included as in
\eq{eq:LWZF}  is invariant under SUSY translations, and the SUSY algebra is realized \textit{off-shell}, \textit{i.e.}, without requiring that the fields satisfy their classical field equations.

The following two observations will be particularly useful as we move
forward.  First, note that the mass dimensions of the fields are given
by $[A]=1$, $[\psi]=\tfrac{3}{2}$ and $[F]=2$, which is consistent with
the requirement that $[\mathscr{L}]=4$ (since the action is
dimensionless in units of $\hbar=1$).  Then,
\eqst{offshell1}{offshell3} are dimensionally consistent if $[\xi]=\half$.
 Second, note that $\deltaxi F$ given in \eq{offshell3}  is a total
 derivative.  Indeed, $\deltaxi F$ is a total derivative as a consequence of dimensional analysis and the linearity of the SUSY transformation laws.  This implies that $\deltaxi F$ must involve $\partial_\mu$, since $[\partial_\mu]=1$.
An important consequence of this observation is that
$\int \!d^4x\, F$ is invariant under SUSY transformations.

\subsection{Counting bosonic and fermionic degrees of freedom}
It is instructive to count both the on-shell and off-shell bosonic and
fermionic degrees of freedom in the Wess-Zumino model, which
is a theory of a complex scalar and a
two-component fermion.   

A complex scalar possesses two real
degrees of freedom.  Note that applying the classical field equations
(in this case the inhomogeneous Klein-Gordon equation) does not affect
the number of scalar
degrees of freedom, but only the spacetime dependence of the scalar
field.  The two-component fermion $\psi_\alpha$ possesses two complex degrees of
freedom, which yields four real degrees of
freedom.\footnote{Equivalently, we can count $\psi$ and $\psi^\dagger$ as four independent degrees of freedom.} 
Applying the classical field equations,
\beq \label{diraceq}
i\sigmabar^\mu\partial_\mu\psi=\left(\frac{d^2 W}{dA^2}\right)^{\!\!\dagger}\psi^\dagger\,,
\eeq
which relate $\psi$ and $\psi^\dagger$, thereby eliminating two of the
four degrees of freedom.\footnote{If $d^2 W/dA^2=0$, then
  $i\sigmabar^\mu\partial_\mu\psi=0$ yields a relation between 
  $\psi_1$ and $\psi_2$.}
By taking the derivative of \eq{diraceq}, one can eliminate
$\psi^\dagger$ using the hermitian conjugate of \eq{diraceq}.
The resulting equation for $\psi$ is the inhomogeneous Klein-Gordon
equation, which does not further affect the number of
fermionic degrees of freedom.
Thus, the Wess-Zumino model possesses two on-shell bosonic and two fermionic
degrees of freedom.  

The counting of the off-shell degrees of freedom can be performed by examining the
Lagrangian [\eq{eq:LWZF}] expressed in terms of the propagating and
auxiliary fields.  In this case, we count two real degrees of freedom for
the complex scalar, four real degrees of freedom for the two-component
fermion and two real degrees of freedom for the complex auxiliary
field~$F$.  That is, the Wess-Zumino model possesses four bosonic and four fermionic
off-shell degrees of freedom.  

Thus, the number of bosonic and fermionic degrees of freedom match in both on-shell and off-shell counting.

\subsection{Lessons from the Wess-Zumino Model}

In our study of the Wess-Zumino model, we provided a Lagrangian that
incorporated the fields of a known supermultiplet.  However, it was
rather mysterious how this Lagrangian was obtained.  It was
even more mysterious how we came up with the correct SUSY
transformation laws for the various fields. 
Moreover, it was quite laborious to verify that the proposed SUSY
transformation laws satisfy the SUSY algebra and the action is 
invariant under super-Poincar\'e transformations.

We also learned that in order for the SUSY transformation laws to
respect the SUSY algebra off-shell, one must introduce additional
auxiliary fields.  One additional benefit of doing so is that the 
corresponding SUSY transformation laws are now linear in all the fields. 
For this reason, we introduced the auxiliary field $F$, which can be
used to write down the SUSY translation-invariant quantity $\int\! d^4
x \, F(x)$.  This observation actually provides an important clue for how to
construct a SUSY Lagrangian.

As we shall demonstrate in  Section~\ref{sec:superspace}, it is possible to develop a formalism in which, starting with
a known supermultiplet, one can trivially construct a Lagrangian that
is invariant under super-Poincar\'e transformations.   Moreover, this
formalism will provide explicit forms for the SUSY transformation
laws that automatically respect the SUSY algebra.

\subsection{\mbox{Appendix: Constructing the states of a supermultiplet}}
\label{App}

In this subsection, we provide further details on the construction of
the states of the massive and massless supermultiplets, which yields
the results presented in Tables~\ref{massivesuperplet} and
\ref{masslesssuperplet}.

\subsubsection{States of a massive supermultiplet of superspin $j$}

To construct the states of the massive supermultiplet,
we note that in the rest frame, the anticommutators given in \eqs{susyalg4}{susyalg5}
simplify to
\beqa
\{Q_1\,,\,Q^\dagger_1\}&=&\{Q_2\,,\,Q^\dagger_2\}=2m\,,\label{restframeQQ}\\
\{Q_1\,,\,Q_1\}&=&\{Q_2\,,\,Q_2\}=\{Q_1\,,\,Q_2\}=0\,,\label{restframeanti} \\
 \{Q^\dagger_1\,,\,Q^\dagger_1\}&=&\{Q^\dagger_2\,,\,Q^\dagger_2\}=\{Q^\dagger_1\,,\,Q^\dagger_2\}=0\,.
 \eeqa
 All states in a supermultiplet with superspin $j$ are simultaneous eigenstates of $P^2$,
 $\mathcal{J}^i \mathcal{J}^i$ and $\mathcal{J}^3$ with eigenvalues $m^2$, $j(j+1)$ and
 $j_3$, respectively, where the possible values of $j_3$
 are $-j,-j+1,\ldots,j-1,j$.  

For a fixed value of the superspin $j$,
 there exists a distinguished state of the supermultiplet that is a 
 simultaneous eigenstate of $P^2$, $\mathcal{J}^i\mathcal{J}^i$ and $\mathcal{J}^3$,
 denoted by $\ket{\Omega}$, which satisfies\footnote{Recall that if $\ket{s,m_s}$ are eigenstates
 of $\boldsymbol{\vec S}\llsup{\,2}$ and $S^3$ with corresponding eigenvalues $s(s+1)$ and $m_s$
 respectively, then
 $$
 S_{\pm}\ket{s,m_s}=\sqrt{(s\mp m_s)(s\pm m_s+1)}\ket{s,m_s\pm 1}\,.
 $$
 }
 \beq \label{Omegastate}
 Q_\beta\ket{\Omega}=0\,,\qquad\quad S_+\ket{\Omega}=0\,,
 \eeq
 where $S_{\pm}\equiv S^1\pm iS^2$.
 To verify that a state $\ket{\Omega}$ exists that is annihilated by $Q_\beta$,
let us assume the contrary.  Suppose that
 a simultaneous eigenstate of $P^2$, $\mathcal{J}^i\mathcal{J}^i$ and $\mathcal{J}^3$,
 denoted by $\ket{\Psi}$, is not annihilated by $Q_\beta$.  In the rest frame, \eq{BQQ}
 yields
 \beq \label{JQQ}
 [\mathcal{J}^i\,,\,Q_\beta]=[\mathcal{J}^i\,,\,Q^\dagger_{\dot\beta}]=0\,,
 \eeq
 so it follows that $Q_\beta\ket{\Psi}$ is also a simultaneous eigenstate of $P^2$, $\mathcal{J}^i\mathcal{J}^i$
 and $\mathcal{J}^3$.  By assumption, $Q_\beta\ket{\Psi}$ is not annihilated by $Q_\alpha$, so we
 conclude that $Q_\alpha Q_\beta\ket{\Psi}$ is also a simultaneous eigenstate of $P^2$, $\mathcal{J}^i\mathcal{J}^i$
 and $\mathcal{J}^3$.   But we now arrive at a contradiction, since \eq{restframeanti} yields
 \beq
 Q_\gamma\left(Q_\alpha Q_\beta\ket{\Psi}\right)=0\,.
 \eeq
 Consequently, there must be at least one state of the supermultiplet that satisfies
$Q_\beta\ket{\Omega}=0$.  Using \eqs{caljdef}{Omegastate}, it follows that
\beq
 \mathcal{J}^i\ket{\Omega}=S^i\ket{\Omega}\,.
 \eeq
 If $S_+\ket{\Omega}=0$, then it follows that $\ket{\Omega}$ is also
 a simultaneous eigenstate of $\boldsymbol{\vec{S}}\llsup{\,2}$ and $S^3$ with corresponding eigenvalues
 $j(j+1)$ and $j$.  Moreover, this state must be unique under the assumption that the
 superspin $j$ supermultiplet is an \textit{irreducible} representation of the $N=1$ supersymmetry
 algebra.

 Note that \eq{wQQ} when evaluated in the rest frame yields:
 \beq \label{siQcomm}
 [S^i\,,\,Q_\alpha]=i\sigma^{i0}{}_\alpha{}^\beta Q_\beta\,,\qquad\quad
 [S^i\,,\,Q^\dagger_{\dot\alpha}]=i\sigmabar^{i0\dot\beta}{}_{\dot\alpha}Q^\dagger_{\dot\beta}\,.
 \eeq
Hence, one can define additional states of the supermultiplet,
\beq
\ket{\Omega(j_3)}\equiv (S_-)^{j-j_3}\ket{\Omega}\,,\qquad\quad\text{for}~~j_3=-j,-j+1,\ldots,j-1,j\,,
\eeq
all of which satisfy
\beq \label{Omegastatej3}
Q_\alpha\ket{\Omega(j_3)}=0\,,
\eeq
as a result of \eq{siQcomm}.  As before, $\mathcal{J}^i\ket{\Omega(j_3)}=S^i\ket{\Omega(j_3)}$ as
a consequence of \eqs{caljdef}{Omegastatej3}.  It follows that
$\ket{\Omega(j_3)}$ is also a simultaneous eigenstate of $\boldsymbol{\vec{S}}\llsup{\,2}$ and $S^3$ with corresponding eigenvalues
 $j(j+1)$ and~$j_3$. That is,
 \beq \label{Omegajj3}
 \ket{\Omega(j_3)}=\ket{j,j_3}\,,
 \eeq
 where the rest-frame spin and its projection along the $z$-axis are explicitly indicated.

Starting from $\ket{\Omega(j_3)}=\ket{j,j_3}$, one can now construct the remaining states of the massive supermultiplet
by considering the series of states for each possible value of $j_3$,
 $
 \ket{\Omega(j_3)}\,,\, Q^\dagger_{\dot\alpha}\ket{\Omega(j_3)}\,,\, Q^\dagger_{\dot\alpha}Q^\dagger_{\dot\beta}\ket{\Omega(j_3)}\,,\,\ldots\,.
 $
This series of states terminates due to \eq{restframeanti} and only four independent states survive (for a given fixed value of $j_3$),
\beq \label{mstatesj}
\ket{\Omega(j_3)}\,,\quad Q^\dagger_1\ket{\Omega(j_3)}\,,\quad  Q^\dagger_2\ket{\Omega(j_3)}\,,\quad
 Q^\dagger_1Q^\dagger_2\ket{\Omega(j_3)}\,.
 \eeq

All the states of \eq{mstatesj} are mass-degenerate (with mass $m\neq 0$).  The spins of
these states can be determined by applying the operators $\boldsymbol{\vec S}\llsup{\,2}$ and $S^3$.
By virtue of \eq{Omegajj3}, we already know that $\ket{\Omega(j_3)}$ is a spin-$j$ state
with $S^3$-eigenvalue $j_3$.  Next, one can use \eq{siQcomm} to derive:
\beqa
[S^i\,,\,Q^\dagger_{\dot\alpha}Q^\dagger_{\dot\beta}]&=&iQ^\dagger_{\dot\gamma}\left[\sigmabar^{i0\dot\gamma}{}_{\dot\alpha}
Q^\dagger_{\dot\beta}-\sigmabar^{i0\dot\gamma}{}_{\dot\beta}Q^\dagger_{\dot\alpha}\right]\,,\\
\left[\boldsymbol{\vec S}\llsup{\,2}\,,\,Q^\dagger_{\dot\alpha}Q^\dagger_{\dot\beta}\right]&=&2iQ^\dagger_{\dot\gamma}\left[
\sigmabar^{i0\dot\gamma}{}_{\dot\alpha} Q^\dagger_{\dot\beta}
-\sigmabar^{i0\dot\gamma}{}_{\dot\beta} Q^\dagger_{\dot\alpha}\right]S^i\,.
\eeqa
It immediately follows that:
\beqa
[S^i\,,\,Q_1^\dagger Q_2^\dagger]&=&iQ_1^\dagger Q_2^\dagger\Tr \sigma^{i0}=0\,,\label{SiQ1Q2}\\
\left[\boldsymbol{\vec S}\llsup{\,2}\,,\,Q_1^\dagger Q_2^\dagger\right]&=&2iQ_1^\dagger Q_2^\dagger S^i\Tr \sigma^{i0}=0\,.
\label{S2Q1Q2}
\eeqa
Applying \eqs{SiQ1Q2}{S2Q1Q2} to the state $\ket{\Omega(j_3)}$, it follows that $Q_1^\dagger Q_2^\dagger
\ket{\Omega(j_3)}$ is also a spin-$j$ state with $S^3$-eigenvalue $j_3$.
This result is easily understood.  Noting that we can write
\beq
Q_1^\dagger Q_2^\dagger=\half\epsilon^{\dot\alpha\dot\beta}Q^\dagger_{\dot\alpha}Q^\dagger_{\dot\beta}\,,
\eeq
it follows that $Q_1^\dagger Q_2^\dagger$ is a \textit{scalar} operator.  This is consistent with the
fact that the antisymmetric part of the tensor product of two SU(2) spinor representations is an SU(2) singlet.
Thus, $Q_1^\dagger Q_2^\dagger\ket{\Omega(j_3)}$ and $\ket{\Omega(j_3)}$ possess the same eigenvalues
with respect to $\boldsymbol{\vec S}\llsup{\,2}$ and $S^3$.

To determine the properties of $Q_1^\dagger\ket{\Omega(j_3)}$  and $Q_2^\dagger\ket{\Omega(j_3)}$,
we first note that $Q_\alpha$ is a spinor operator 
that imparts spin-$\half$ to any state it acts on. 
Moreover, \eq{siQcomm} yields:
\beq \label{Qdaghalf}
\hspace{-0.2in}
S^3 Q_1^\dagger\ket{\Omega(j_3)}=(j_3+\half)Q_1^\dagger\ket{\Omega(j_3)}\,,
\quad
S^3 Q_2^\dagger\ket{\Omega(j_3)}=(j_3-\half)Q_2^\dagger\ket{\Omega(j_3)}.
\eeq
Hence, one can employ the standard results from the theory of angular momentum addition in quantum mechanics,
which relates the tensor product basis to the total angular momentum basis.  In particular,
\beq
\ket{j\,,\,m}=\sum_{m_1,m_2}\ket{j_1\,,\,m_1}\otimes\ket{j_2\,,\,m_2}\vev{j_1\,\, j_2\,;\, m_1\,\, m_2\,|\,j\,\, m}\,,
\eeq
where $\vev{j_1\,\, j_2\,;\, m_1\,\, m_2\,|\,j\,\, m}$ are the
Clebsch-Gordon (C-G) coefficients.  We employ the Condon-Shortly
phase conventions in which the C-G coefficients are real and symmetric.  In the present application,
we require the following two C-G coefficients (taking the upper and lower
signs, respectively),
\begin{align}
\ket{\half\,,\, \pm\half}\otimes\ket{j\,,\,
  m\mp\half}=&\left(\frac{j+\half\pm
               m}{2j+1}\right)^{1/2}\!\ket{j+\half\,,\, m}\nonumber \\
&\mp\left(\frac{j+\half\mp m}{2j+1}\right)^{1/2}\!\ket{j-\half\,,\, m},  
\end{align}
Eqs.~(\ref{Omegajj3}), (\ref{Qdaghalf}), (\ref{SiQ1Q2}) and
(\ref{S2Q1Q2}) imply that
\begin{align}
& \ket{\Omega(j_3)}=\ket{j\,,\,j_3}\,, \label{massiveplet1}\\
& Q_1^\dagger \ket{\Omega(j_3)}=\left(\frac{j+j_3+1}{2j+1}\right)^{1/2}\ket{j+\half\,,\,j_3+\half}
-\left(\frac{j-j_3}{2j+1}\right)^{1/2}\ket{j-\half\,,\,j_3+\half}\,, \label{massiveplet2} \\
& Q_2^\dagger \ket{\Omega(j_3)}=\left(\frac{j-j_3+1}{2j+1}\right)^{1/2}\ket{j+\half\,,\,j_3-\half}
+\left(\frac{j+j_3}{2j+1}\right)^{1/2}\ket{j-\half\,,\,j_3-\half}\, ,  \label{massiveplet3}\\
& Q_1^\dagger Q_2^\dagger \ket{\Omega(j_3)}=\ket{j\,,\,j_3}\,.\label{massiveplet4}
\end{align}
In particular, if $j_3\neq j$ then \eqs{massiveplet2}{massiveplet3} imply that $Q_1^\dagger \ket{\Omega(j_3)}$
and $Q_2^\dagger \ket{\Omega(j_3)}$ are orthogonal linear combinations of spin-($j\pm\half$) states
(although these states are eigenstates of $S^3$ as shown in \eq{Qdaghalf}).  If
$j_3=\pm j$ then $Q_1^\dagger \ket{\Omega(j)}$ and  $Q_2^\dagger \ket{\Omega(-j)}$
are states of spin-($j+\half$), since both these states
are eigenstates of $\boldsymbol{\vec S}\llsup{\,2}$ and $S^3$ with eigenvalues $(j+\half)(j+\tfrac{3}{2})$
and $\pm(j+\half)$, respectively.

Note that since $[P^2,Q_\alpha]=[P^2,Q^\dagger_{\dot\alpha}]=0$, it follows that 
all the states of the supermultiplet,
$\ket{\Omega(j_3)}\,,\,Q^{\dagger\,1}\ket{\Omega(j_3)}\,,\,Q^{\dagger\,2}\ket{\Omega(j_3)}\,,\,Q^{\dagger\,1}
Q^{\dagger\,2}\ket{\Omega(j_3)}$, are mass-degenerate, with common
mass $m$.
The states of an $N=1$ massive supermultiplet of superspin $j$ are exhibited
in Table~\ref{massivesuperplet}.

In summary, there are $4(2j+1)$ mass-degenerate states in a massive
supermultiplet of superspin $j$, which are explicitly given by
\eqst{massiveplet1}{massiveplet4}, for $j_3=-j,-j+1,\ldots,j-1,j$.  In
general, a massive supermultiplet of superspin $j$ is made up of
$2(2j+1)$ states of spin $j$, $2j+2$ states of spin $(j+\half)$ and
$2j$ states of spin ($j-\half$).  The extra two states for the case of
spin-$(2j+1)$ arise when $j_3=\pm j$, in which cases $Q_1^\dagger
\ket{\Omega(j)}$ and $Q_2^\dagger \ket{\Omega(-j)}$ are pure states of
spin $(j+\half)$ as previously noted.  Note that the number of
fermionic and bosonic degrees of freedom of the massive supermultiplet
coincide and is equal to $2(2j+1)$.  These results are summarized in Table~\ref{massivesuperplet}.

\subsubsection{States of a massless supermultiplet of superhelicity $\kappa$}
To construct the states of an irreducible massless supermultiplet, we
choose the standard reference frame, $P^\mu=P^0(1\,;\,0\,,\,0\,,\,1)$, 
for lightlike four-vectors.
In this reference frame, the anticommutators given in \eqs{susyalg4}{susyalg5}
simplify to those exhibited in \eqst{qqmassless1}{qqmassless3}.  All the
states in the massless supermultiplet are
simultaneous eigenstates of $P^2$ and the superhelicity operator $\mathcal{K}$, with eigenvalues
$m^2$ and $\kappa$, respectively, where the possible values of
$\kappa=0,\pm\half,\pm 1,\pm\tfrac32,\ldots$.

For a fixed value of the superhelicity $\kappa$, there exists a distinct state of the supermultiplet,
 denoted by $\ket{\Omega}$, that satisfies:
 \beq \label{Omegastate0}
 Q_\beta\ket{\Omega}=0\,,\qquad\quad \mathcal{K}\ket{\Omega}=\kappa\ket{\Omega}\,.
 \eeq
To verify that a state $\ket{\Omega}$ exists that is annihilated by $Q_\beta$,
let us assume the contrary. 
Suppose that a state of
 the massless supermultiplet, denoted by $\ket{\Psi}$ exists that is not annihilated by
$Q_\beta$.  Due to \eq{KQQ}, it follow that $Q_\beta\ket{\Psi}$ must also be a state of
the massless supermultiplet.  Arguing as we did below \eq{Omegastate}, we again arrive at a contradiction.
Consequently, there must be at least one state of the supermultiplet that satisfies
$Q_\beta\ket{\Omega}=0$.   Moreover, a state that satisfies
\eq{Omegastate0} must be unique under the assumption that the
 massless supermultiplet with superhelicity $\kappa$ is an \textit{irreducible} representation of the $N=1$ SUSY
 algebra.

The states of the massless supermultiplet are obtained by considering the series,
\beq
\ket{\Omega}\,,\, Q^\dagger_{\dot\alpha}\ket{\Omega}\,,\,Q^\dagger_{\dot\beta} Q^\dagger_{\dot\alpha}\ket{\Omega}\,.
\eeq
However, $Q^\dagger_{\dot\beta} Q^\dagger_{\dot\alpha}\ket{\Omega}=0$ as a result of
\eq{QQQQ0}, and $P^\lambda Q^\dagger_{\dot\beta}\sigmabar_\lambda^{\dot\beta\tau}\ket{\Omega}=0$
as a consequence of \eq{zeroops}.  Thus, in contrast to the massive
supermultiplet, the massless supermultiplet contains only two states.
These two states are eigenvalues of the helicity operator $h$.
To determine the corresponding helicities, we shall employ
the standard reference frame where $P^\mu=P^0(1\,;\,0\,,\,0\,,\,1)$.
Since \eq{zeroops} yields $Q_1=Q^\dagger_1=0$, it follows that the massless $N=1$ supermultiplet consists of the
two states, $\ket{\Omega}$ and $Q^\dagger_2\ket{\Omega}$.  Using \eqs{mathcalkdef}{Omegastate0},
the helicities of these two states can be determined,
\beqa
h\ket{\Omega}&=&\left[\mathcal{K}-\frac{1}{8P^0}\left(Q_1^\dagger Q_1+Q_2^\dagger Q_2\right)\right]\ket{\Omega}
=\kappa\ket{\Omega}\,,\label{helicityOmega}\\[8pt]
hQ^\dagger_2\ket{\Omega}&=&\left[\mathcal{K}-\frac{1}{8P^0}\left(Q_1^\dagger Q_1+Q_2^\dagger Q_2\right)\right]Q^\dagger_2\ket{\Omega} \nonumber \\
&=& \left[\kappa Q_2^\dagger-\frac{1}{8P^0} Q_2^\dagger\left(2P_\mu\sigma^\mu_{22}-Q_2^\dagger Q_2\right) \right. \nonumber \\
&&\qquad\quad
-\left.\!\!\frac{1}{8P^0} Q_1^\dagger\left(2P_\mu\sigma^\mu_{12}-Q_2^\dagger Q_1\right)\right]\ket{\Omega} \nonumber \\
&=& \left[\kappa-\tfrac{1}{4}(\sigma^0_{22}-\sigma^3_{22})\right]Q^\dagger_2\ket{\Omega}
=(\kappa-\half) Q^\dagger_2\ket{\Omega}\,.\label{helicityQomega}
\eeqa
Indeed, the superhelicity $\kappa$ is the maximal helicity of the massless $N=1$ supermultiplet.
Thus, an irreducible $N=1$ massless supermultiplet with superhelicity $\kappa$ consists of two massless
states with helicity $\kappa$ and $\kappa-\half$, respectively.
These results are summarized in Table~\ref{masslesssuperplet}.

\subsection{Problems}

\begin{problem}
\label{pr:jhalf}
Show that the massive $j=\half$ supermultiplet corresponds to a real vector field, a real scalar field and a Dirac fermion field.
\end{problem}
\begin{problem}
Derive the following three commutation relations:
\beq
[B^\mu\,,\,Q_\alpha]=-\half P^\mu Q_\alpha\,,\qquad\qquad
[B^\mu\,,\,Q^\dagger_{\dot\alpha}]=\half P^\mu
Q^\dagger_{\dot\alpha}\,,
\eeq
\beq
[B^\mu\,,\,B^\nu]=i\epsilon^{\mu\nu\rho\lambda}B_\rho P_\lambda\,,
\eeq
where $B^\mu$ is defined in \eq{bmudef}.
\end{problem}
\begin{problem}
\label{pr:Lcomms}
Derive the following two commutation relations,
\begin{align} 
[L^\mu\,,\,Q_\alpha]=-\tfrac{1}{4}(\sigma^\mu\sigmabar^\nu)_\alpha{}^\beta Q_\beta P_\nu\,,\qquad\quad
[L^\mu\,,\,Q^\dagger_{\dot\alpha}]=\tfrac{1}{4}(\sigmabar^\nu\sigma^\mu)^{\dot\beta}{}_{\dot\alpha}
  Q^\dagger_{\dot\beta}P_\nu\,,
\end{align}
where $L^\mu$ is defined in \eq{Lmudef}.
\end{problem}

\begin{problem}
\label{pr:spin2}
Show that a massless supermultiplet with $\kappa=2$ and 
its CPT-conjugates corresponds to
a massless spin-$\tfrac{3}{2}$ and a massless spin 2 particle, which
is realized in supergravity by the
gravitino and the graviton. 
\end{problem}

 \begin{problem}
Obtain the explicit form for $K^\mu$ in \eq{eq:Kmu}.
\end{problem}
\clearpage

 \begin{problem}
 \label{pr:R}
Obtain an explicit expression for $R(x)$ in \eq{eq:Remainder}, and
show that it vanishes after imposing the classical field equations for
$\psi_\alpha(x)$.  Note that this computation is non-trivial and
requires a judicious application of 
Fierz identities for two-component fermions (which can be found, e.g.,
in Appendix B of Ref.~\cite{Dreiner:2008tw}).
\end{problem}

\begin{problem}
Obtain an explicit expression for $J^\mu_\alpha$ in terms of the fields $A$ and $\psi$ in the Wess-Zumino model.
\end{problem}

\begin{problem}
Verify, for the Wess-Zumino model, 
that the Noether supercharges defined by \eq{QJ} satisfy the SUSY algebra [cf.~\eq{QandCCR}].
\end{problem}

\begin{problem}
\label{pr:kprime}
Obtain the explicit form for $K^{\prime\,\mu}$ in \eq{eq:Kpmu}.
\end{problem}

\begin{problem}
\label{pr:xieta}
Starting from \eqst{offshell1}{offshell3}, verify that
 \begin{align*}
 \bigl[\deltaeta\,,\,\deltaxi\bigr]\Phi(x)=-2i(\xi\sigma^\mu\eta^\dagger-\eta^\dagger\sigma^\mu\xi^\dagger)\partial_\mu \Phi(x)\,,
 \end{align*}
 for $\Phi=A$, $\psi$ and $F$ without the need to impose the classical field equations.
 \end{problem}

\section{Superspace and Superfields}
\renewcommand{\theequation}{\arabic{section}.\arabic{equation}}
\setcounter{equation}{0}
\label{sec:superspace}

In the section we introduce superspace coordinates $\theta$ and $\thetabar$.
The concept of a supersymmetry transformation is then realized as a translation in superspace.
We construct superfields\cite{Ferrara:1974ac,Salam:1974jj,Salam:1976ib}, which can be expanded in powers of $\theta$ and
$\thetabar$; the corresponding expansion coefficients are the fields
of a super\-multiplet.   By introducing the spinor covariant derivative, one is
able to define the derivative of a superfield that is covariant with
respect to SUSY transformations.  This allows us to define an
irreducible chiral
superfield by imposing a derivative constraint.

Employing this formalism, we demonstrate how to construct a
SUSY Lagrangian for chiral superfields, and 
and show that the
supersymmetric action can be expressed as an integral over superspace. 
Finally, we discuss the improved ultraviolet behavior of SUSY and introduce the
celebrated non-renormalization theorem of $N=1$ supersymmetry\cite{GRS,SeibergNR}.
 
\subsection{Superspace coordinates and translations}
\label{sec:supercoords}

In Section~\ref{sec:SUSYalgebra} we indicated that we expect a SUSY
translation to be similar to a space-time translation, where the SUSY generators $Q$, $Q^\dagger$ replace the $P^\mu$ of ordinary space-time translations:
\begin{align}
 \deltaxi\Phi(x)=i\bigl[\xi Q+\xi^\dagger Q^\dagger\,,\,\Phi(x)\bigr]\,,
 \end{align}
 for $\Phi=A$, $\psi$ or $F$. 
 But what exactly is being translated? 
\clearpage

In this subsection,
 we extend spacetime by introducing Grassmann coordinates, $\theta^\alpha$ and $\thetabar_{\dalpha}$.  The result is an 8-dimensional \textit{superspace} with coordinates
$(x^\mu\,,\,\theta^\alpha\,,\,\thetabar_{\dalpha})$.  The
Grassmann coordinates are anticommuting coordinates; i.e., they satisfy anticommutation relations,
 \begin{align}
 \{\theta^\alpha\,,\,\theta^\beta\}=\{\thetabar_{\dalpha}\,,\,\thetabar_{\dbeta}\}=\{\theta^\alpha\,,\,\thetabar_{\dbeta}\}=0\,.
 \end{align}

One can also define derivatives with respect to $\theta$ and
$\theta^\dagger$.  It is convenient to introduce the following notation,
\beq \label{dth1}
\partial_{\alpha}\equiv \frac{\partial}{\partial\theta^\alpha}\,,\qquad\qquad
\partial^\dagger_{\dalpha}\equiv \frac{\partial}{\partial{\thetabar}^{\dalpha}}\,.
\eeq
The derivatives with respect to $\theta$ and $\theta^\dagger$ are
defined in the obvious way,
\beq
 \partial_\alpha\theta^\beta=\delta_\alpha^\beta\,,\qquad\qquad \partial^\dagger_{\dalpha}{\thetabar}^{\dbeta}=\delta_{\dalpha}^{\dbeta}\,.
\eeq
It then follows that
\begin{align}
\partial_\alpha\theta_\beta=\partial_\alpha(\epsilon_{\beta\gamma}\theta^\gamma)=-\epsilon_{\alpha\beta}\,,\qquad \partial^\dagger_{\dalpha}\thetabar_{\dbeta}=\partial^\dagger_{\dalpha}(\epsilon_{\dbeta\dot\gamma}\theta^{\dagger\dot\gamma})=-\epsilon_{\dalpha\dbeta}\,.\label{dteps}
\end{align}

Derivatives with respect to $\theta$ and $\theta^\dagger$ satisfy a
modified Leibniz rule, 
\beqa
\partial_\alpha(fg)&=&(\partial_\alpha f)g+(-1)^{\varepsilon(f)}f(\partial_\alpha g)\,,\\
\partial^\dagger_{\dalpha}(fg)&=&(\partial^\dagger_{\dalpha} f)g+(-1)^{\varepsilon(f)}f(\partial^\dagger_{\dalpha} g)\,,
\eeqa
where 
\beq
\varepsilon(f)=\begin{cases} 0\,,&\quad \text{if $f$ is Grassmann even}\,,\\  1\,,&\quad \text{if $f$
    is Grassmann odd}\,, \end{cases}
\eeq
and $f$ is Grassmann even [odd] if it is a product of an even
[odd] number of anticommuting quantities.
For example,
\beqa
\partial_\alpha(\theta\theta)&=&\partial_\alpha\bigl(\epsilon_{\gamma\beta}\theta^\gamma\theta^\beta\bigr)=\epsilon_{\gamma\beta}(\delta^\gamma_\alpha\theta^\beta-\delta_\alpha^\beta\theta^\gamma)=2\theta_\alpha\,,\label{partialtt}\\
\partial^\dagger_{\dalpha}(\thetabar\thetabar)&=&
\partial^\dagger_{\dalpha}\bigl(\epsilon_{\dbeta\dgamma}
\theta^{\dagger\dgamma}\theta^{\dagger\dbeta}\bigr)=
\epsilon_{\dbeta\dgamma}(\delta^{\dgamma}_{\dalpha}\theta^{\dagger\dbeta}-\delta_{\dalpha}^{\dbeta}\theta^{\dagger\dgamma})
=-2\theta_{\dalpha}^\dagger\,.\label{partialtdtd}
\eeqa

Likewise, one conventionally defines,
\begin{align}
\partial^{\alpha}\equiv \frac{\partial}{\partial\theta_\alpha}\,,\qquad
\partial^{\dagger\dalpha}\equiv \frac{\partial}{\partial\thetabar_{\dalpha}}\,.\label{dth2}
\end{align}
However, one needs to be careful since this notation leads to an unexpected minus sign when relating
the derivatives of \eqs{dth1}{dth2},
\beq
\partial^\alpha =-\epsilon^{\alpha\beta}\partial_\beta\,,\qquad
\partial^{\dagger\dalpha}=-\epsilon^{\dalpha\dbeta}\partial^\dagger_{\dbeta}\,. \label{eq:partialsign}
\eeq
This is the one case where the rule for raising a spinor index given
in \eq{raiseindex} does \textit{not} apply.

In order to define translations in superspace, we shall
generalize the translation operator $\exp(ix\newcdot P)$ to the super-translation operator,
 \begin{align}
 G(x,\theta,\thetabar)=\exp(ix\newcdot P+\theta Q+\thetabar Q^\dagger)\,.
 \end{align}
 We can now extend the field operator, $\Phi(x)=\exp(ix\newcdot
 P)\Phi(0)\exp(-ix\newcdot P)$ to a \textit{superfield} operator,
 \begin{align}
 \Phi(x,\theta,\thetabar)=G(x,\theta,\thetabar)\Phi(0,0,0)G^{-1}(x,\theta,\thetabar)\,.
 \end{align}
 In this way, we can realize a supersymmetry transformation as a translation in superspace.

 Using the Baker-Campbell-Hausdorff formula\cite{BrianHall},
 \beq \label{BCH}
\exp(A)\exp(B)=\exp\bigl(A+B+\half[A\,,\,B]+\cdots\bigr)\,, 
\eeq
one can prove (see Problem \ref{pr:two_super_translations}),
 \begin{align}
 G(y,\xi,\xi^\dagger)G(x,\theta,\thetabar)=G\bigl(x+y+i(\xi\sigma\thetabar-\theta\sigma\xi^\dagger),\xi+\theta,\xi^\dagger+\thetabar\bigr)\,. \label{GG}
 \end{align}
 Note the appearance in \eq{GG} of an extra non-trivial spacetime translation, $ i (\xi \sigma \thetabar - \theta \sigma \xi^\dagger)$.
Hence, it follows that
 \begin{align}
 \begin{split}
 &G(y,\xi,\xi^\dagger)\Phi(x,\theta,\thetabar) G^{-1}(y,\xi,\xi^\dagger) \\
 & \qquad = \Phi\bigl(x+y+i(\xi\sigma\thetabar-\theta\sigma\xi^\dagger),\xi+\theta,\xi^\dagger+\thetabar\bigr)\,.
\end{split}
\label{eq:Phixy}
\end{align}
For infinitesimal $y$, $\xi$ and $\xi^\dagger$, we can approximate
\beq
G(y,\xi,\xi^\dagger)\simeq \mathds{1}+i(y\newcdot P+\xi Q+\xi^\dagger Q^\dagger)\,,
\eeq
which allows us to rewrite the left-hand side of \eq{eq:Phixy} as
\begin{align}
\begin{split}
& G\of{y,\xi,\xi^\dagger} \Phi\of{ x,\theta,\thetabar} G^{-1}\of{y,\xi,\xi^\dagger} \\
&\quad \simeq \of{ \mathds{1}+i\of{y\newcdot P+\xi Q+\xi^\dagger Q^\dagger} }  \Phi\of{x,\theta,\thetabar} \of{ \mathds{1} - i \of{ y\newcdot P + \xi Q + \xi^\dagger Q^\dagger }}
\end{split} \nn
\\
& \quad \simeq \Phi\of{x,\theta,\thetabar} 
+ i y_\mu \sqof{ P^\mu, \Phi }  + i \sqof{ \xi Q, \Phi }  + i \sqof{  \xi^\dagger Q^\dagger, \Phi }.
\label{eq:GPhiG}
\end{align}
One can also Taylor expand the right-hand side of \eq{eq:Phixy}, which
to first order yields
\begin{align}
\begin{split}
&  \Phi\bigl(x+y+i(\xi\sigma\thetabar-\theta\sigma\xi^\dagger),\xi+\theta,\xi^\dagger+\thetabar\bigr)
\\
 & \qquad \qquad =    \Phi(x,\theta,\thetabar)  + \bigl[y^\mu+i(\xi\sigma^\mu\thetabar-\theta\sigma^\mu\xi^\dagger)\bigr]\partial_\mu \Phi(x,\theta,\thetabar) \\
& \qquad \qquad \quad \qquad \qquad\,\,\,\, +\bigl(\xi^\alpha\partial_\alpha+\xi^\dagger\partial^{\dagger\dot\alpha}\bigr)\Phi(x,\theta,\thetabar)\,,
 \end{split}
 \label{eq:Phixy2}
 \end{align}
where we have employed the derivatives defined in \eq{dth1}.
Comparing the first-order terms of eqns.~(\ref{eq:GPhiG}) and (\ref{eq:Phixy2}),
 we end up with expressions for the following commutators,
 \begin{align}
 \bigl[\Phi\,,\,P_\mu\bigr]&= i\,\partial_\mu\Phi\,, \label{eq:PPhi} \\
 \big[\Phi\,,\,\xi Q\bigr]&= i\,\xi^\alpha\left(\partial_{\alpha}+i(\sigma^\mu\thetabar)_\alpha\partial_\mu\right)\Phi\,, \label{eq:QPhi} \\
  \big[\Phi\,,\,\xi^\dagger Q^\dagger\bigr]&= -i\left(\partial^\dagger_{\dalpha}+i(\theta\sigma^\mu)_{\dalpha}\partial_\mu\right)\xi^{\dagger\,\dalpha}\Phi\,.\label{eq:QbarPhi}
\end{align}

The above results motivate the introduction of the following differential operators,
\begin{align}
\widehat{P}_\mu&=i\partial_\mu\,,\label{Phat}\\
\widehat{Q}_\alpha&=i\partial_\alpha-(\sigma^\mu\thetabar)_\alpha\partial_\mu\,, \label{Qhat}\\
\widehat{Q}^\dagger_{\dalpha}&=-i\partial^\dagger_{\dalpha}+(\theta\sigma^\mu)_{\dalpha}\partial_\mu\,,\label{QDhat}
\end{align}
which allow us to succinctly rewrite \eqst{eq:PPhi}{eq:QbarPhi} as follows:
 \begin{align}
 \bigl[\Phi\,,\,P_\mu\bigr]&=\widehat{P}_\mu\Phi\,, \\
 \big[\Phi\,,\,\xi Q\bigr]&=(\xi\widehat{Q})\Phi\,, \label{stranslate1}\\
  \big[\Phi\,,\,\xi^\dagger Q^\dagger\bigr]&=(\xi^\dagger\widehat{Q}^\dagger)\Phi\,.\label{stranslate2}
\end{align}

In \eq{susytranslate}, we noted that the action of an infinitesimal SUSY transformation on any field $\Phi\of{x}$ was given by
$\deltaxi\Phi(x)=i\bigl[\xi Q+\xi^\dagger Q^\dagger\,,\,\Phi(x)\bigr]$.
In light of \eqs{stranslate1}{stranslate2}, we conclude that the action of an
infinitesimal SUSY transformation on a
superfield $\Phi\of{x,\theta,\thetabar}$ is given by
 \begin{align}
 \deltaxi\Phi(x,\theta,\thetabar)=-i(\xi \widehat{Q}+\xi^\dagger \widehat{Q}^\dagger)\Phi(x,\theta,\thetabar)\,.\label{supertrans}
\end{align}

\subsection{Expansion of the superfield in powers of $\theta$ and $\thetabar$}

Consider the Taylor expansion of a superfield,
$\Phi(x,\theta,\theta^\dagger)$, in powers of $\theta$ and~$\thetabar$.  The coefficients of this expansion will be functions of
$x$, which can be interpreted as ordinary fields.  Since $\theta$ and
$\thetabar$ are anticommuting coordinates, this Taylor series
terminates after a finite number of terms.  In particular, since $\theta$ and
$\theta^\dagger$ are 
anticommuting two-component spinor
quantities,  it follows that
$(\theta_1)^2=(\theta_2)^2=(\thetabar_{\dot 1})^2=(\thetabar_{\dot
  2})^2=0$, whereas products such as $\theta_1\theta_2$ and
$\theta_1^\dagger \theta_2^\dagger$ do not vanish.  Indeed, it is easy
to check that
\begin{align}
\theta^\alpha\theta^\beta&=-\half\epsilon^{\alpha\beta}\theta\theta\,,\qquad\qquad
{\thetabar}^{\dalpha}{\thetabar}^{\dbeta}=\half\epsilon^{\dalpha\dbeta}\thetabar\thetabar\,,\nn \\
\theta_\alpha\theta_\beta&=\half\epsilon_{\alpha\beta}\theta\theta\,,\qquad\qquad\phm
{\theta}^\dagger_{\dalpha}{\theta}^\dagger_{\dbeta}=-\half\epsilon_{\dalpha\dbeta}\thetabar\thetabar\,,\nn
\end{align}
where $\theta\theta\equiv \theta^\alpha\theta_\alpha$ and $\thetabar\thetabar\equiv\thetabar_{\dalpha}
{\thetabar}^{\dalpha}$ following the convention of \eq{contract}. 
Products such as $\theta_\alpha\theta_\beta\theta_\gamma=0$, since the
spinor indices can assume at most two different values.  Finally, the
following three results are noteworthy (see Problem 17),
\begin{align}
(\theta\sigma^\mu\thetabar)\theta_\beta&=-\half \theta\theta(\sigma^\mu\thetabar)_\beta \label{eq:r1} \\
(\theta\sigma^\mu\thetabar)\thetabar_{\dbeta}&=-\half \thetabar\thetabar(\theta\sigma^\mu)_{\dbeta} \label{eq:r2} \\
(\theta\sigma^\mu\thetabar)(\theta\sigma^\nu\thetabar)&=\half g^{\mu\nu}(\theta\theta)(\thetabar\thetabar). \label{eq:r3}
\end{align}
Sometimes, we shall write
$\theta\theta\theta^\dagger\theta^\dagger\equiv
(\theta\theta)(\thetabar\thetabar)$.  In such products, there should
be no ambiguity in omitting the parentheses.

The Taylor series expansion of a complex superfield
$\Phi(x,\theta,\thetabar)$ is therefore given by,
\beqa
\Phi(x,\theta,\thetabar)&=& f(x) +\theta\zeta(x)+\thetabar\chi^\dagger(x)+\theta\theta m(x)+\thetabar\thetabar n(x) +\theta\sigma^\mu\thetabar V_\mu(x) \nn\\
&& 
+(\theta\theta)\thetabar\lambda^\dagger(x)+(\thetabar\thetabar)\theta\lambda(x)+\theta\theta\thetabar\thetabar d(x)\,,\label{phitaylor}
\eeqa
where $f$, $m$, $n$, $V_\mu$, and $d$ are complex commuting bosonic
fields and $\zeta$, $\chi$, $\lambda$ and $\psi$ are anticommuting
two-component fermionic fields.  
The SUSY transformation laws of the component fields can now be easily
obtained (see Problem~\ref{pr:fmnV}) by comparing both sides of \eq{supertrans}.

Hence, there are 16 bosonic and 16 fermionic real degrees of freedom.
If we impose the constraint, $\Phi^\dagger=\Phi$, then $f$, $d$ and $V_\mu$
are real bosonic fields, $n^\dagger=m$, $\zeta=\chi$ and $\lambda=\psi$.  In this
case, there are 8 bosonic and 8 fermionic real degrees of freedom.  In both cases, there are too many degrees of freedom to describe the supermultiplet of the Wess-Zumino model.
This is because an unconstrained complex superfield, $\Phi(x,\theta,\thetabar)$, describes a
reducible representation of the SUSY algebra.   One must impose
supersymmetric constraints to project out an irreducible
supermultiplet.\footnote{A real superfield $\Phi$ yields an off-shell
  irreducible representation with superspin $j=\half$.  More on this
  in Section~\ref{sec:gaugetheories}.}

The superfield defined in \eq{phitaylor} is an example of a
\textit{bosonic} superfield, where the Taylor series coefficients of terms even in
the number of Grassmann coordinates are commuting bosonic fields and the coefficients of terms odd in
the number of Grassmann coordinates are anticommuting fermionic fields.
Similarly, one can define a \textit{fermionic} superfield, where the
Taylor series coefficients of terms even in
the number of Grassmann coordinates are anticommuting fermionic fields and the coefficients of terms odd in
the number of Grassmann coordinates are commuting bosonic fields.
 
\subsection{Spinor covariant derivatives}

For a superfield $\Phi$, it is easy to check that neither
$\partial_\alpha\Phi$ nor $\partial_{\dalpha}\Phi$ is a superfield, since
\begin{align}
\partial_\alpha(\deltaxi\Phi)\neq  \deltaxi(\partial_\alpha\Phi)\,,\qquad\quad
\partial^\dagger_{\dalpha}(\deltaxi\Phi)\neq  \deltaxi(\partial^\dagger_{\dalpha}\Phi)\,.
\end{align}
Note that if $\Phi$ is a bosonic superfield, then the hermitian conjugate of $\partial_\alpha\Phi$ is given
by,
\beq \label{daggers}
(\partial_\alpha\Phi)^\dagger=-\partial_{\dalpha}^\dagger
\Phi^\dagger\,,
\eeq
where the minus sign above is related to the minus sign in
\eq{dteps}.
\clearpage

We therefore introduce spinor covariant derivatives $D_\alpha$ and
$\Dbar_{\dalpha}$ such that $D_\alpha\Phi$ and
$\overline{D}_{\dalpha}\Phi$ are superfields,\footnote{Note that if
  $\Phi$ is a bosonic superfield, then $D_\alpha\Phi$ and
  $\Dbar_{\dalpha}\Phi$ are fermionic superfields.}
 which implies the
following conditions must be satisfied,
\begin{align}
D_\alpha(\deltaxi\Phi)=\deltaxi(D_\alpha\Phi)\,,\qquad\quad
\Dbar_{\dalpha}(\deltaxi\Phi)=\deltaxi(\Dbar_{\dalpha}\Phi)\,.\label{dsuper}
\end{align}
Using \eq{supertrans} to express $\deltaxi\Phi$ in terms of the operators $\widehat{Q}$ and
$\widehat{Q}^\dagger$ defined in \eqs{Qhat}{QDhat}, respectively, one easily derives
\begin{align}
\{D_\alpha\,,\,\widehat{Q}_\beta\}=\{D_\alpha\,,\,\widehat{Q}^\dagger_{\dbeta}\}=\{\Dbar_{\dalpha}\,,\,\widehat{Q}_{\beta}\}=\{\Dbar_{\dalpha}\,,\,\widehat{Q}^\dagger_{\dbeta}\}=0\,.\label{antis}
\end{align}

To fix the explicit forms for the spinor covariant derivatives, we choose
the normalization of $D_\alpha$ so that it has the form
$D_\alpha=\partial_\alpha+\ldots$, where the ellipsis refers to
correction terms needed to satisfy \eqs{dsuper}{antis}.  In the case
of $\overline{D}_\alpha$, it is customary to impose the condition,
\beq \label{Dcond}
(D_\alpha\Phi)^\dagger=\overline{D}_{\dalpha}\Phi^\dagger\,,
\eeq
where $\Phi$ is a bosonic superfield, in which case
$\overline{D}_{\dalpha}=-\partial_{\dalpha}^\dagger+\ldots$
 [cf.~\eq{daggers}].   

The explicit forms for the spinor covariant derivatives that satisfy
the above conditions are given by, 
\begin{align}
D_\alpha&=\partial_\alpha-i(\sigma^\mu\thetabar)_\alpha\,\partial_\mu\,, \label{eq:D} \\
\Dbar_{\dalpha}&=-\partial^\dagger_{\dalpha}+i(\theta\sigma^\mu)_{\dalpha}\,\partial_\mu\,. \label{eq:Db}
\end{align}
In particular, $D$ and $\overline{D}$
satisfy the same anticommutation relations as $\widehat{Q}$ and $\widehat{Q}^\dagger$ (see Problem~\ref{pr:D}),
\begin{align}
\{D_\alpha\,,\,D_\beta\}=\{\Dbar_{\dot\alpha}\,,\,\Dbar_{\dot\beta}\}=0  \ \ \mathrm{and\ \  }
\{D_\alpha\,,\,\Dbar_{\dot\beta}\}=2i\sigma^\mu_{\alpha\dot\beta}\partial_\mu.
\label{eq:Dcomms}
\end{align}

One can also define spinor covariant derivatives with a raised spinor index.
In this case, it is conventional to define,
\begin{align}
D^\alpha\equiv \epsilon^{\alpha\beta}D_\beta&=-\partial^\alpha+i(\thetabar\sigmabar^\mu)^\alpha\,\partial_\mu\,, \\
\Dbar\lsup{\dalpha}\equiv \epsilon^{\dalpha\dbeta}D_\beta
&=\partial^{\dagger\dalpha}-i(\sigmabar^\mu\theta)^{\dalpha}\,\partial_\mu\,,
\end{align}
where we have employed \eq{eq:partialsign}. That is,
the spinor indices of $D_\alpha$ and $\overline{D}_{\dalpha}$ are raised
in the
conventional way according to \eq{raiseindex}.\footnote{This is in contrast to the
rule for raising the spinor indices of $\partial_\alpha$ and
$\partial^\dagger_{\dalpha}$ specified in \eq{eq:partialsign}, where
an extra minus sign appears.}   
The following differential operators
will be useful later in these lectures,
\beqa
D^2&=&D^\alpha D_\alpha=-\partial^\alpha\partial_\alpha+2i(\partial^\alpha\sigma^\mu_{\alpha\dbeta}{\thetabar}\lsup{\dbeta})\partial_\mu+\thetabar\thetabar\,\square\,,\label{DD}\\
\Dbar\lsup{\,2}&=&\Dbar_{\dalpha}\Dbar\lsup{\,\dalpha}=-\partial^\dagger_{\dalpha}\partial^{\dagger\dalpha}+2i(\theta^\alpha\sigma^\mu_{\alpha\dbeta}\partial^{\dagger\dbeta})\partial_\mu+\theta\theta\square\,,\label{DbDb}
\eeqa
where $\square\equiv\partial_\mu\partial^\mu$.   One can then derive
the following identity (see Problem~\ref{pr:DDid}),
\beq \label{DDid}
[D^2,\Dbar\lsup{2}]= 4i\sigma^\mu_{\alpha
\dot{\beta}}\partial_\mu[D^\alpha,\overline D\lsup{\,\dot{\beta}}]\,.
\eeq

We have employed different notation for the conjugation of the
various differential operators that appear in this subsection.
The relation of $\widehat{Q}^\dagger$ to $\widehat{Q}$ is
\textit{hermitian conjugation} in the same sense that $\hat{P}_\mu
=i\partial_\mu$ [defined in \eq{Phat}] is an hermitian operator in
quantum field theory with respect to the inner product defined by the
integration of complex fields over spacetime.
That is, the dagger on the differential operator $\widehat{Q}^\dagger$
denotes Hermitian conjugation with respect to the inner product
defined by the integration of complex superfields over
superspace.\footnote{For further details, see
  Refs.~\cite{Sohnius:1985qm,Martin:1997ns}.  Integration over
  superspace will be treated in Section~\ref{integration}.}

In contrast, the relation of $\Dbar$ to $D$ is \textit{complex
  conjugation} in the same sense that $\partial_\mu^*$ is the complex
conjugate of $\partial_\mu$.  In the latter case, the differential
operator $\partial_\mu$ is a real operator.
That is, if we define $\partial^*_\mu$ 
to be the derivative operator that acts on the field $\phi$ such that
\begin{align}
\of{\partial_\mu \phi}^\dagger = \partial_\mu^* \phi^\dagger,
\end{align}
then since $\of{\partial_\mu \phi}^\dagger = \partial_\mu \phi^\dagger$, it
follows that
$\partial^*_\mu=\partial_\mu$.  In light of \eq{Dcond},
we can therefore regard $\overline{D}$ as the complex conjugate of $D$.

\subsection{Chiral superfields}

A chiral superfield is obtained by imposing
the constraint $\Dbar_{\dalpha}\Phi=0$ on a general superfield $\Phi$.  Such a constraint is covariant with respect to
SUSY transformations, and the end result is an irreducible superfield that
corresponds to the superspin $j=0$ irreducible
representation of the SUSY algebra.  
Using \eq{eq:Db}, the constraint yields a differential equation,
\begin{align}
\Dbar_{\dalpha}\Phi=\bigl[-\partial^\dagger_{\dalpha}+i(\theta\sigma^\mu)_{\dalpha}\,\partial_\mu\bigr]\Phi(x,\theta,\thetabar)=0\,,
\end{align}
whose solution is of the form
\begin{align}
\Phi(x,\theta,\thetabar)=\exp(-i\theta\sigma^\mu\thetabar\,\partial_\mu)\Phi(x,\theta)\,.\label{eq:Phixththb}
\end{align}

We can expand $\Phi(x,\theta)$ in a
 (truncated) 
Taylor series in $\theta$,
\begin{align}
\Phi(x,\theta)=A(x)+\sqrt{2}\,\theta\psi(x)+\theta\theta F(x)\,,
\end{align}
where the factor of $\sqrt{2}$ is conventional. Plugging this into
\eq{eq:Phixththb} and using the identity  (see Problem~\ref{pr:expexp}),
\begin{align}
\exp(-i\theta\sigma^\mu\thetabar\,\partial_\mu)=1-i\theta\sigma^\mu\thetabar\,\partial_\mu-\tfrac{1}{4}(\theta\theta)(\thetabar\thetabar)\square,
\end{align}
we find after some algebraic manipulation a chiral superfield with the form,
\begin{align}
\begin{split}
\Phi(x,\theta,\thetabar) &=
A(x) + \sqrt{2}\,\theta \psi(x) + \theta\theta F(x)-i \theta\sigma^\mu\thetabar \partial_\mu A(x)\\
&\quad - \frac{i}{\sqrt{2}} (\theta\theta) 
\thetabar \sigmabar^\mu\, \partial_\mu \psi(x)-\tfrac{1}{4}(\theta\theta)(\thetabar\thetabar) \square A(x).
\end{split}
\label{eq:chiralSF}
\end{align}
Note that the chiral superfield $\Phi$ has dimension $[\Phi]=1$, in which case it
follows that the dimensions of the component fields are $[A]=1$ and
$[\psi]=\tfrac32$, as expected, whereas $[F]=2$
after making use of the dimensions of the Grassmann coordinates,
$[\theta]=[\thetabar]=-\half$.

Given a chiral superfield $\Phi$, its hermitian conjugate,
$\Phi^\dagger$, is an antichiral superfield, which is
defined by the SUSY-covariant constraint,
$D_\alpha\Phi^\dagger=0$.
Using \eq{eq:D}, the latter constraint yields a differential equation,
\begin{align}
D_{\alpha}\Phi^\dagger=\bigl[\partial_{\alpha}-i(\sigma^\mu\thetabar)_{\alpha}\,\partial_\mu\bigr]\Phi^\dagger(x,\theta,\thetabar)=0\,,
\end{align}
whose solution is of the form
\begin{align}
\Phi^\dagger(x,\theta,\thetabar)=\exp(i\theta\sigma^\mu\thetabar\,\partial_\mu)\Phi^\dagger(x,\thetabar)\,.\label{eq:Phixththb2}
\end{align}

We can expand $\Phi^\dagger(x,\thetabar)$ in a
 (truncated) 
Taylor series in $\thetabar$,
\begin{align}
\Phi^\dagger(x,\thetabar)=A^\dagger(x)+\sqrt{2}\,\thetabar\psi^\dagger(x)+\thetabar\thetabar F^\dagger(x)\,.
\end{align}
Plugging this result into \eq{eq:Phixththb2} and
following the same procedure as before, we end up with,
\begin{align}
\begin{split}
\Phi^\dagger(x,\theta,\thetabar) &=
A^\dagger(x) + \sqrt{2}\,\thetabar \psi^\dagger(x) + \thetabar\thetabar F^\dagger(x)+i \theta\sigma^\mu\thetabar \partial_\mu A^\dagger(x) \\
&\quad - \frac{i}{\sqrt{2}} (\thetabar\thetabar) 
\theta\sigma^\mu\, \partial_\mu \psi^\dagger(x)-\tfrac{1}{4}(\theta\theta)(\thetabar\thetabar) \square A^\dagger(x)\,.
\end{split}
\end{align}
Since $\Phi^\dagger$ is the hermitian conjugate of $\Phi$, we can
identify $A^\dagger$, $\psi^\dagger$ and $F^\dagger$ as the hermitian
conjugates of $A$, $\psi$ and $F$.

In calculations, it is often simpler to employ the so-called \textit{chiral representation}, in which all superfield operators $\mathcal{O}$ are modified according to
\begin{align}
\mathcal{O}_{\rm chiral}=\exp(i\theta\sigma^\mu\thetabar\,\partial_\mu)\mathcal{O}\exp(-i\theta\sigma^\mu\thetabar\,\partial_\mu)\,.
\end{align}
In the chiral representation,
\beqa
&& \widehat{Q}_\alpha=i\partial_\alpha\,,\qquad\qquad \ \
\widehat{Q}^\dagger_{\dalpha}=-i\partial^\dagger_{\dalpha}+2(\theta\sigma^\mu)_{\dalpha}\,\partial_\mu\,,\label{Qchiral}
\\
&&\Dbar_{\dalpha}=-\partial^\dagger_{\dalpha}\,,\qquad\qquad 
D_{\alpha}=\partial_{\alpha}-2i(\sigma^\mu\thetabar)_{\alpha}\,\partial_\mu\,.\label{Dchiral}
\eeqa
Thus, in the chiral representation, the requirement $\Dbar_{\dalpha}\Phi=-\partial^\dagger_{\dalpha}\Phi=0$ is simply the requirement that $\Phi$ is independent of $\thetabar$.
In the chiral representation, the chiral superfield will be denoted by
\begin{align}
\Phi_1(x,\theta)=A(x)+\sqrt{2}\,\theta\psi(x)+\theta\theta F(x)\,.\label{phione}
\end{align}
It then follows that the general expression for a chiral superfield is
\begin{align}
\Phi(x,\theta,\thetabar) =\exp(-i\theta\sigma^\mu\thetabar\,\partial_\mu)\Phi_1(x,\theta)=\Phi_1(x-i\theta\sigma^\mu\thetabar\,,\,\theta)\,.
\end{align}
It is convenient to define the shifted spacetime coordinate,
\begin{align}
y \equiv x - i \theta \sigma^\mu \thetabar,
\end{align}
so that the chiral superfield is given by,
\begin{align}
\Phi\of{x,\theta,\thetabar} = \Phi_1\of{y,\theta}.
\end{align}

The SUSY transformation laws for the fields that appear in the chiral
superfield can now be determined simply by inserting the expression
for $\Phi$ in the chiral representation given by \eq{phione} into
\eq{supertrans}.  In performing the computation, one employs the
chiral representation expressions for $\widehat{Q}$ and
$\widehat{Q}^\dagger$ given in \eq{Qchiral}.   You may verify (see Problem~\ref{pr:chiraltrans})
that the result of this calculation coincides with the SUSY
transformation laws
given previously in \eqst{offshell1}{offshell3}.

Likewise, one can define an antichiral representation in which 
\begin{align}
\mathcal{O}_{\rm antichiral}=\exp(-i\theta\sigma^\mu\thetabar\,\partial_\mu)\mathcal{O}\exp(i\theta\sigma^\mu\thetabar\,\partial_\mu)\,.
\end{align}
In the antichiral representation,
\begin{align}
\begin{split}
& \widehat{Q}^\dagger_{\dalpha}=-i\partial^\dagger_{\dalpha}\,,\qquad\qquad \ \ 
\widehat{Q}_\alpha=i\partial_{\alpha}-2(\sigma^\mu\thetabar)_{\alpha}\,\partial_\mu\,,
 \\
&D_{\alpha}=\partial_{\alpha}\,,\qquad\qquad \quad
\Dbar_{\dalpha}=-\partial^\dagger_{\dalpha}+2i(\theta\sigma^\mu)_{\dalpha}\,\partial_\mu\,.
\end{split}
\end{align}
Thus, in the antichiral representation, the requirement $D_{\alpha}\Phi^\dagger
=\partial_{\alpha}\Phi^\dagger=0$ is simply the requirement that $\Phi^\dagger$ is independent of $\theta$.
In the antichiral representation, the antichiral superfield will be denoted by
\begin{align}
\Phi_2(x,\thetabar)=A^\dagger(x)+\sqrt{2}\,\thetabar\psi^\dagger(x)+\thetabar\thetabar F^\dagger(x)\,.
\end{align}
It then follows that the general expression for an antichiral superfield is
\begin{align}
\Phi^\dagger(x,\theta,\thetabar) =\exp(i\theta\sigma^\mu\thetabar\,\partial_\mu)\Phi_2(x,\thetabar)=\Phi_2(x+i\theta\sigma^\mu\thetabar\,,\,\thetabar)\,.
\end{align}
It is convenient to define the shifted spacetime coordinate,
\begin{align}
y^\dagger \equiv x + i \theta \sigma^\mu \thetabar,
\end{align}
so that the antichiral superfield is given by,
\begin{align}
\Phi^\dagger\of{x,\theta,\thetabar} = \Phi_2\of{y^\dagger,\thetabar}.
\end{align}

\subsection{Constructing the SUSY Lagrangian}

\subsubsection{$F$-terms}
Ultimately, our goal is to construct an action that is invariant under SUSY.  It is therefore sufficient to construct a Lagrangian that transforms under SUSY as a total derivative.
In the literature, it is common to use the nomenclature
\textit{$F$-term} to denote the coefficient of the
$\theta\theta$ term of a superfield.  This is sometimes explicitly
indicated as follows,
\begin{align}
[\Phi]_{\theta\theta}=[\Phi]_F=F. \label{Fterm}
\end{align}
Recall that in \eq{offshell3}, we demonstrated that the auxiliary
field $F(x)$ transforms as a total derivative under the SUSY
transformation laws.  But, this field is simply the coefficient of the
$\theta \theta$ term of a chiral superfield!  Indeed, the 
$F$-term of any chiral superfield transforms under a SUSY
transformation as a total derivative.  This means that such terms (and
their hermitian conjugates) are candidates for terms in a Lagrangian,
which then yields an action that is invariant under SUSY.

To discover the relevant $F$-terms for constructing a SUSY Lagrangian,
we first prove an important theorem.

\begin{theorem}
For any positive integers $n$ and $m$, 
if $\Phi$ is a chiral superfield, then so is $\Phi^n$, whereas $\Phi^n (\Phi^\dagger)^m$ is not a chiral superfield.
\end{theorem}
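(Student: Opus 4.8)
The plan is to prove this directly from the defining constraint of a chiral superfield, namely $\Dbar_{\dalpha}\Phi = 0$, together with the fact that $\Dbar_{\dalpha}$ is a first-order differential operator (a derivation with respect to the Grassmann coordinates, suitably modified by the $i(\theta\sigma^\mu)_{\dalpha}\partial_\mu$ term) acting on the algebra of superfields. First I would recall that $\Dbar_{\dalpha}$ satisfies a graded Leibniz rule, just as the ordinary $\partial^\dagger_{\dalpha}$ does [cf.~the modified Leibniz rule stated in the excerpt]. The key observation is that the $i(\theta\sigma^\mu)_{\dalpha}\partial_\mu$ piece of $\Dbar_{\dalpha}$ is also a derivation (it is a first-order spacetime derivative times a commuting coordinate prefactor), so the full operator $\Dbar_{\dalpha}$ obeys $\Dbar_{\dalpha}(\Phi_1\Phi_2) = (\Dbar_{\dalpha}\Phi_1)\Phi_2 + (-1)^{\varepsilon(\Phi_1)}\Phi_1(\Dbar_{\dalpha}\Phi_2)$ for bosonic/fermionic superfields $\Phi_i$.

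For the first half of the statement, I would argue by induction on $n$. The base case $n=1$ is immediate. For the inductive step, write $\Phi^{n} = \Phi \cdot \Phi^{n-1}$ and apply the Leibniz rule: $\Dbar_{\dalpha}(\Phi^n) = (\Dbar_{\dalpha}\Phi)\Phi^{n-1} + \Phi(\Dbar_{\dalpha}\Phi^{n-1})$. Both terms vanish — the first because $\Dbar_{\dalpha}\Phi = 0$ by hypothesis, the second by the inductive assumption that $\Phi^{n-1}$ is chiral. (Since $\Phi$ is bosonic, no sign subtleties arise.) One should also note that $\Phi^n$ is still a genuine superfield — the product of superfields is a superfield because SUSY transformations act as derivations — so the only thing left to check is the chirality constraint, which we have just verified. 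Alternatively, and perhaps more transparently, one can invoke the chiral representation: in that representation a chiral superfield is simply a superfield independent of $\thetabar$ [cf.~\eq{phione}], and any power of a $\thetabar$-independent quantity is manifestly $\thetabar$-independent.

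For the second half, I would show that $\Phi^n(\Phi^\dagger)^m$ fails the constraint. Since $\Phi^\dagger$ is antichiral, it satisfies $D_\alpha\Phi^\dagger = 0$ but \emph{not} $\Dbar_{\dalpha}\Phi^\dagger = 0$; indeed, from the explicit component expansion of $\Phi^\dagger$ one reads off $\Dbar_{\dalpha}\Phi^\dagger \ne 0$ (for instance the leading term gives $\Dbar_{\dalpha}\Phi^\dagger \supset \sqrt{2}\,\partial^\dagger_{\dalpha}(\thetabar\psi^\dagger) = -\sqrt{2}\,\psi^\dagger_{\dalpha} + \dots$, which is nonzero). Applying the Leibniz rule to $\Dbar_{\dalpha}\bigl(\Phi^n(\Phi^\dagger)^m\bigr)$, the terms where $\Dbar_{\dalpha}$ hits a factor of $\Phi$ vanish, leaving $m\,\Phi^n(\Phi^\dagger)^{m-1}\Dbar_{\dalpha}\Phi^\dagger$, which does not vanish identically. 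The cleanest way to make this rigorous is again via the chiral representation: there $\Phi = \Phi(y,\theta)$ depends on $\thetabar$ only through $y = x - i\theta\sigma^\mu\thetabar$, while $\Phi^\dagger$ carries an explicit $\thetabar$-dependence through its $\sqrt{2}\,\thetabar\psi^\dagger$ term, so the product genuinely depends on $\thetabar$ and cannot be annihilated by $\Dbar_{\dalpha} = -\partial^\dagger_{\dalpha}$ unless the $\psi^\dagger$ (and $F^\dagger$) components vanish. I expect the only mildly delicate point to be stating precisely what ``is not a chiral superfield'' means — one wants to be careful that it is not merely that the obvious expansion looks non-chiral, but that there is no rearrangement making it chiral; this is handled by exhibiting an explicit nonzero component of $\Dbar_{\dalpha}$ acting on it, which is an invariant (representation-independent) statement. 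The underlying structural reason, worth emphasizing, is that chiral superfields form a \emph{subring} of the ring of all superfields (closed under products and sums), but this subring is not closed under hermitian conjugation, so mixed products $\Phi^n(\Phi^\dagger)^m$ escape it.
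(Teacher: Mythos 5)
Your proposal is correct and follows essentially the same route as the paper, which simply applies the graded Leibniz rule to get $\Dbar_{\dalpha}\Phi^n = n\Phi^{n-1}\Dbar_{\dalpha}\Phi = 0$ and then notes that "a similar computation" shows $\Phi^n(\Phi^\dagger)^m$ fails the constraint. You merely spell out the details the paper leaves implicit (the induction, the explicit nonzero component $\Dbar_{\dalpha}\Phi^\dagger \propto \psi^\dagger_{\dalpha}+\dots$ certifying non-chirality, and the chiral-representation picture), which is fine.
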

\begin{proof}
We first note that
\beq
\Dbar_{\dalpha}\Phi^n=n\Phi^{n-1}\Dbar_{\dalpha}\Phi=0,
\eeq
which shows $\Phi^n$ satisfies the defining constraint of a chiral superfield.
A similar computation shows that  $\Phi^n (\Phi^\dagger)^m$ does not
satisfy the required constraint.
\end{proof}

An important consequence of the above theorem is that
\beq
\sum_{n\geq 1} [a_n\Phi^n]_F+{\rm h.c.}
\eeq
is a Lorentz scalar that transforms as a total divergence, and thus is a candidate for terms in a Lagrangian whose action is invariant under SUSY.

\subsubsection{Kinetic terms}
\label{kineticterms}
To construct the kinetic terms of the SUSY Lagrangian,
we define the operator $T$,
\begin{align}
T\Phi=-\tfrac{1}{4}\Dbar\lsup2\Phi^\dagger\,,\label{Tdef}
\end{align}
where $\Dbar\lsup2\equiv \Dbar_{\dalpha}{\Dbar}\lsup{\dalpha}$.  Note
that $\Dbar_{\dalpha}(T\Phi)=0$ (due to the anticommutation relations
satisfied by $\Dbar$), so that $T\Phi$ is a chiral superfield.
In the chiral representation, with $\Phi=A+\sqrt{2}\,\theta\psi+\theta\theta F$,
\begin{align}
T\Phi=F^\dagger-i\sqrt{2}\,\theta\sigma^\mu\partial_\mu\psi^\dagger-\theta\theta\,\square A^\dagger\,.
\end{align}
Hence, the $F$-component of $\Phi T \Phi$ is given by,
\begin{align}
[\Phi T\Phi]_F & =-A\square A^\dagger+F^\dagger F+i\psi\sigma^\mu\partial_\mu\psi^\dagger\nn\\
&= (\partial_\mu A)(\partial^\mu A^\dagger)+F^\dagger F+i\psi^\dagger\sigmabar^\mu\partial_\mu\psi+\text{total derivative}\,,\label{KE}
\end{align}
which we recognize as the kinetic energy term of the
Wess-Zumino Lagrangian [cf.~\eq{eq:LWZF}].
\subsubsection{Mass terms}

To construct the mass terms of the SUSY Lagrangian, the following
theorem is useful.
\begin{theorem}
For any chiral superfield $\Phi$,
\begin{align}
[\Phi]_F=-\tfrac{1}{4} D^2\Phi\biggl|_{\theta=\thetabar=0}=\tfrac{1}{4}\partial^\alpha\partial_\alpha\Phi\biggl|_{\theta=\thetabar=0}\,.\label{phiF}
\end{align}
\end{theorem}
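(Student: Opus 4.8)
The plan is to work entirely in the chiral representation of \eq{Dchiral}, where a chiral superfield is simply $\Phi=\Phi_1(x,\theta)=A(x)+\sqrt2\,\theta\psi(x)+\theta\theta F(x)$, independent of $\thetabar$, and the spinor covariant derivative is $D_\alpha=\partial_\alpha-2i(\sigma^\mu\thetabar)_\alpha\partial_\mu$. In this representation $[\Phi]_F=F$ by the definition in \eq{Fterm}, so the content of the theorem is the chain of equalities in \eq{phiF}, which I would prove by establishing the two pieces $-\tfrac14 D^2\Phi\big|_{\theta=\thetabar=0}=\tfrac14\partial^\alpha\partial_\alpha\Phi\big|_{\theta=\thetabar=0}$ and $\tfrac14\partial^\alpha\partial_\alpha\Phi\big|_{\theta=\thetabar=0}=F$ separately.

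For the first piece I would write $D^2=D^\alpha D_\alpha=\epsilon^{\alpha\beta}D_\beta D_\alpha$ and expand $D_\beta D_\alpha\Phi$ in the chiral representation. Every term produced by the correction piece $-2i(\sigma^\mu\thetabar)_\alpha\partial_\mu$ of either covariant derivative carries an explicit factor of $\thetabar$; since $\Phi_1$ has no $\thetabar$-dependence at all, these factors can never be cancelled, so all such terms drop when $\theta=\thetabar=0$ is imposed at the end. (The one subtlety is the sign from the modified Leibniz rule when $\partial_\beta$ is commuted past the Grassmann-odd factor $(\sigma^\mu\thetabar)_\alpha$, but that term is in any case proportional to $\thetabar$ and hence irrelevant here.) What survives is $D^2\Phi\big|_{\theta=\thetabar=0}=\epsilon^{\alpha\beta}\partial_\beta\partial_\alpha\Phi\big|_{\theta=\thetabar=0}$; invoking the nonstandard index-raising rule $\partial^\alpha=-\epsilon^{\alpha\beta}\partial_\beta$ of \eq{eq:partialsign} turns this into $-\partial^\alpha\partial_\alpha\Phi\big|_{\theta=\thetabar=0}$, which is the first equality.

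For the second piece, note that the $A$ term is annihilated by $\partial_\alpha$ and that $\partial^\alpha\partial_\alpha(\theta\psi)=\partial^\alpha\psi_\alpha=0$, so only the $\theta\theta F$ term contributes. I would then compute $\partial^\alpha\partial_\alpha(\theta\theta)$ using $\partial_\alpha(\theta\theta)=2\theta_\alpha$ from \eq{partialtt} together with $\partial^\alpha\theta_\alpha=-\epsilon^{\alpha\beta}\epsilon_{\alpha\beta}=2$, the latter following from \eq{eq:partialsign}, the identity $\theta_\alpha=\epsilon_{\alpha\gamma}\theta^\gamma$, and the component values of the epsilon symbols (so that $\epsilon^{\alpha\beta}\epsilon_{\alpha\beta}=-2$, as forced by the opposite overall signs of the raised and lowered symbols noted below \eq{raiseindex}). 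This gives $\partial^\alpha\partial_\alpha(\theta\theta F)=4F$, hence $\tfrac14\partial^\alpha\partial_\alpha\Phi\big|_{\theta=\thetabar=0}=F=[\Phi]_F$, completing the proof.

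The computation is mechanical throughout; the only real hazard is the consistent bookkeeping of the two different index-raising conventions in play — the ordinary one of \eq{raiseindex} for $D_\alpha$, $\theta_\alpha$ and the $\sigma$-matrices, versus the sign-flipped one of \eq{eq:partialsign} for $\partial_\alpha$ — together with the Grassmann-parity signs in the Leibniz rule. Getting the honest $-2$ for $\epsilon^{\alpha\beta}\epsilon_{\alpha\beta}$ is precisely what makes the two factors of $\tfrac14$ in \eq{phiF} come out consistently, and I would double-check it by also evaluating $\epsilon^{\alpha\beta}\partial_\beta\partial_\alpha(\theta\theta F)$ directly and confirming it equals $-4F$.
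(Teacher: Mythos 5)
Your proof is correct and follows essentially the same route as the paper, which simply invokes the expansion of $D^2$ in \eq{DD} (where every term beyond $-\partial^\alpha\partial_\alpha$ carries an explicit $\thetabar$ and so dies at $\theta=\thetabar=0$), the only difference being that you carry out the same observation in the chiral representation and spell out the normalization check $\tfrac14\partial^\alpha\partial_\alpha(\theta\theta)=1$ that the paper leaves implicit. Your sign bookkeeping ($\partial^\alpha=-\epsilon^{\alpha\beta}\partial_\beta$, $\epsilon^{\alpha\beta}\epsilon_{\alpha\beta}=-2$) is consistent with the paper's conventions, so no gaps.
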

\begin{proof}
\Eq{phiF} follows immediately from \eq{DD}.   
\end{proof}
We can compute the $F$ term of any holomorphic function of a chiral
superfield, $W(\Phi)$,  as follows.  After making judicious use of the
chain rule,
\begin{align}
[W(\Phi)]_F&=\tfrac{1}{4}\partial^\alpha\partial_\alpha W\biggl|_{\theta=\thetabar=0} 
	=\tfrac{1}{4}\partial^\alpha\frac{dW}{d\Phi}\partial_\alpha\Phi\biggl|_{\theta=\thetabar=0}  \nn \\[6pt]
&=\frac{1}{4}\biggl\{\left(\frac{d^2 W}{d\Phi^2}\partial^\alpha\Phi\partial_\alpha\Phi\right)
+\frac{dW}{d\Phi}\partial^\alpha\partial_\alpha\Phi\biggr\}\biggl|_{\theta=\thetabar=0} \,.\label{WPhi}
\end{align}
Noting that $(\partial^\alpha\Phi\partial_\alpha\Phi)_{\theta=\thetabar=0}=-2\psi\psi$,
\eq{WPhi} yields,
\begin{align}
[W(\Phi)]_F &=-\frac12\left(\frac{d^2 W}{d\Phi^2}\right)_{\Phi=A}\psi\psi+\left(\frac{dW}{d\Phi}\right)_{\Phi=A}F\,.
\end{align}
Introducing the notation, $dW/dA\equiv (dW/d\Phi)_{\Phi=A}$,
it follows that
\begin{align}
[W(\Phi)]_F=-\frac12 \frac{d^2 W}{dA^2}\psi\psi+\frac{dW}{dA}F\,.\label{rest}
\end{align}
In the jargon of SUSY, $W(\Phi)$ is called the \textit{superpotential}.  For renormalizable theories, $\!W(\Phi)\!$ is at most cubic in~$\Phi$. 

\subsubsection{The Wess-Zumino SUSY Lagrangian using $F$-terms}
Collecting the results of \eqs{KE}{rest}, we end up with,
\beqa
\mathscr{L}&=&[\Phi T\Phi]_F+\bigl\{[W(\Phi)]_F+{\rm h.c.}\bigr\}
\nn \\[2pt]
&=&(\partial_\mu A)^\dagger(\partial^\mu A)+ i \psi^\dagger \sigmabar^\mu \partial_\mu \psi +F\frac{dW}{dA}+F^\dagger\left(\frac{dW}{dA}\right)^{\!\!\dagger}
+F^\dagger F \nn  \\
&&\quad 
-\frac12\left[\frac{d^2 W}{dA^2}\,\psi\psi+\left(\frac{d^2 W}{dA^2}\right)^{\!\!\dagger}\!\!\psi^\dagger\psi^\dagger\right]\,,\label{WZlagF}
\eeqa
after dropping total derivative terms.  We have thus recovered the
Wess-Zumino Lagrangian that was previously written down in \eq{eq:LWZoriginal}.

The proof that the Wess-Zumino action is supersymmetric, or
equivalently, $\deltaxi\mathcal{L}=\partial_\mu K^{\prime\,\mu}$, is
now trivial since 
$\mathscr{L}$ was constructed from $F$-terms, which
transform as total derivatives under SUSY transformations.
\subsubsection{An alternate form for the kinetic terms: $D$-terms and the K\"ahler potential}
\label{Kahler}
The approach of subsection~\ref{kineticterms} is not the only supersymmetric way to construct  the kinetic energy terms.
Consider an unconstrained superfield $V(x,\theta,\thetabar)$.
 Expanding $V$ as a Taylor series in $\theta$ and $\thetabar$, the
 highest order nonvanishing term is proportional to $(\theta\theta)(\thetabar\thetabar)$.  If we write 
\begin{align}
V(x,\theta,\thetabar)=\cdots+(\theta\theta)(\thetabar\thetabar)D(x)\,,
\end{align}
then one can show that $\deltaxi D(x)$ is a total derivative
using dimensional analysis as we did for $\deltaxi F(x)$ at the end of
Section~\ref{offshell}.  Hence, $D$-terms can also provide suitable terms for a SUSY Lagrangian.

We shall denote the $D$-term by,
\beq
[V]_{\theta\theta\thetabar\thetabar}=[V]_D=D\,,
\eeq
using a notation analogous to that of \eq{Fterm}.  The relevant
theorem analogous to \eq{phiF} is given below.
\begin{theorem}
For any superfield $V$,
\beq
[V]_D =\tfrac{1}{16}\Dbar\lsup{2}D^2
  V\biggl|_{\theta=\thetabar=0}=\tfrac{1}{16}(\partial^\dagger_{\dalpha}\partial^{\dagger\dalpha})(\partial^\alpha\partial_\alpha)V\biggl|_{\theta=\thetabar=0}\,.\label{VD}
\eeq
\end{theorem}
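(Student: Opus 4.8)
The plan is to prove the identity $[V]_D = \tfrac{1}{16}\Dbar\lsup{2}D^2 V\big|_{\theta=\thetabar=0}$ by a direct computation, following exactly the template used to prove the analogous $F$-term formula in \eq{phiF}. The key observation is that the operator $\Dbar\lsup{2}D^2$, when its action is evaluated at $\theta=\thetabar=0$, projects out precisely the top component of the $\theta,\thetabar$ expansion, and the normalization factor is fixed by a single clean monomial computation.

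First I would recall the explicit differential-operator expressions for $D^2$ and $\Dbar\lsup{2}$ given in \eqs{DD}{DbDb}, namely $D^2=-\partial^\alpha\partial_\alpha+2i(\partial^\alpha\sigma^\mu_{\alpha\dbeta}\thetabar\lsup{\dbeta})\partial_\mu+\thetabar\thetabar\,\square$ and $\Dbar\lsup{2}=-\partial^\dagger_{\dalpha}\partial^{\dagger\dalpha}+2i(\theta^\alpha\sigma^\mu_{\alpha\dbeta}\partial^{\dagger\dbeta})\partial_\mu+\theta\theta\,\square$. Writing out the general expansion of $V$ as in \eq{phitaylor} (with $D$ now playing the role of the top coefficient), I would argue that when $\Dbar\lsup{2}D^2$ acts on $V$ and one then sets $\theta=\thetabar=0$, only the monomial $\theta\theta\thetabar\thetabar D(x)$ can survive: every lower term is killed either by too many $\partial_\alpha$ or $\partial^\dagger_{\dalpha}$ derivatives relative to the available Grassmann variables, or by the subsequent restriction $\theta=\thetabar=0$ after the operator has brought down at most two $\theta$'s and two $\thetabar$'s. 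The cross-terms in $D^2$ and $\Dbar\lsup{2}$ involving $\partial_\mu$ contribute only total-derivative / mixed structures that vanish upon restriction, so effectively one is left with the leading pieces $(-\partial^\dagger_{\dalpha}\partial^{\dagger\dalpha})(-\partial^\alpha\partial_\alpha)$ acting on the $(\theta\theta)(\thetabar\thetabar)$ monomial.

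The core computation is then the elementary identity $\partial^\alpha\partial_\alpha(\theta\theta)=\partial^{\dagger}_{\dalpha}\partial^{\dagger\dalpha}(\thetabar\thetabar)=$ a constant, which I would pin down using \eq{partialtt} and its conjugate \eq{partialtdtd}: from $\partial_\alpha(\theta\theta)=2\theta_\alpha$ one gets $\partial^\alpha\partial_\alpha(\theta\theta)=\partial^\alpha(2\theta_\alpha)$, and using $\partial^\alpha=-\epsilon^{\alpha\beta}\partial_\beta$ together with $\partial_\beta\theta_\alpha=-\epsilon_{\beta\alpha}$ one obtains the numerical factor. Carrying both the undotted and dotted factors through yields the overall constant that, combined with the two explicit minus signs from the leading terms of $D^2$ and $\Dbar\lsup{2}$, produces exactly $16$, so that $\tfrac{1}{16}\Dbar\lsup{2}D^2 V\big|_{\theta=\thetabar=0}=D$. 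I would also note the equivalent statement $\tfrac{1}{16}(\partial^\dagger_{\dalpha}\partial^{\dagger\dalpha})(\partial^\alpha\partial_\alpha)V\big|_{\theta=\thetabar=0}$, which follows since the extra $\partial_\mu$-dependent pieces of $D^2,\Dbar\lsup{2}$ drop out under the restriction, as already observed.

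The main obstacle — really the only place requiring care — is bookkeeping the signs and the $\epsilon$-symbol contractions in the Grassmann derivative algebra, since the nonstandard index-raising rule \eq{eq:partialsign} for $\partial^\alpha$ versus the standard rule for $D^\alpha$ is a notorious source of sign errors; I would handle this by computing the single monomial $(\theta\theta)(\thetabar\thetabar)$ completely explicitly in components ($\theta_1,\theta_2$) rather than in index notation, which makes the factor of $16$ manifest and leaves no ambiguity. Everything else is a routine argument that all subleading terms are annihilated, which I would state concisely rather than enumerate term by term.
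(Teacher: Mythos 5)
Your proposal is correct and takes essentially the same route as the paper, whose entire proof is that the result ``follows immediately from \eqs{DD}{DbDb}'' --- i.e., precisely the projection-plus-normalization computation you spell out, including the factor of $16$ from $\partial^\alpha\partial_\alpha(\theta\theta)=\partial^\dagger_{\dalpha}\partial^{\dagger\dalpha}(\thetabar\thetabar)=4$. One caveat, shared equally by the paper's own statement: the $\partial_\mu$-dependent cross terms do not literally vanish at $\theta=\thetabar=0$ but yield total spacetime derivatives (pieces proportional to $\square f$ and $\partial_\mu V^\mu$), so the identity is exact only modulo total derivatives, which is all that is needed for constructing the action.
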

\begin{proof}
\Eq{VD} follows immediately from \eqs{DD}{DbDb}.  
\end{proof}

\noindent
For example, if $\Phi$ is a chiral superfield, one can show that (see Problem~\ref{pr:phistphiD}),
\begin{align}
[\Phi^\dagger\Phi]_D=(\partial_\mu A)(\partial^\mu A^\dagger)+F^\dagger F+i\psi^\dagger\sigmabar^\mu\partial_\mu\psi+\text{total derivative}\,,\label{phiphiD}
\end{align}
which again reproduces the kinetic energy terms of the Wess-Zumino Lagrangian.

Indeed, one can obtain candidate terms for a SUSY Lagrangian by
considering the $\theta\theta\thetabar\thetabar$ component of an
arbitrary function of a chiral superfield and its complex conjugate.
This function, denoted by $K(\Phi,\Phi^\dagger)$, is called the K\"ahler
potential.  Applying the chain rule as in our computation of
$[W(\Phi)]_F$ [cf.~\eqst{WPhi}{rest}], one can calculate (see Problem~\ref{pr:K}),
\begin{align}
\begin{split}
[K(\Phi,\Phi^\dagger)]_D=&\frac{\partial^2 K}{\partial A\partial A^\dagger}\biggl[(\partial_\mu A)(\partial^\mu A^\dagger)+F^\dagger F+\half i\psi^\dagger\sigmabar^\mu\!\!\stackrel{\leftrightarrow}{\partial}_{\!\mu}\!\psi\biggr] \\
&-\frac12\,\frac{\partial^3 K}{\partial A\partial A^{\dagger\,2}}\biggl[F\psi^\dagger\psi^\dagger+i\psi^\dagger\sigmabar^\mu\psi\partial_\mu A^\dagger\biggl] \\
&-\frac12\,\frac{\partial^3 K}{\partial A^2\partial A^{\dagger}}\biggl[F^\dagger\psi\psi-i\psi^\dagger\sigmabar^\mu\psi\partial_\mu A\biggl] \\
&+\frac14\,\frac{\partial^4 K}{\partial A^2\partial A^{\dagger\,2}}(\psi\psi)(\psi^\dagger\psi^\dagger)+\text{total derivative}\,.\label{kahler}
\end{split}
\end{align}

We conclude that the most general SUSY Lagrangian involving a chiral
superfield $\Phi$ is given by
\begin{align}
\mathscr{L}=[K(\Phi,\Phi^\dagger)]_D+\bigl\{[W(\Phi)]_F+{\rm h.c.}\bigr\}\,.\label{eq:Lgeneral}
\end{align}
The auxiliary field $F$ can be determined via its classical
field equation, which yields
\begin{align}
F=\left(\frac{\partial^2 K}{\partial A\partial A^\dagger}\right)^{-1}\left[\frac12\,\frac{\partial^3 K}{\partial A^2\partial A^{\dagger}}\psi\psi-\left(\frac{dW}{dA}\right)^\dagger\right]\,.\label{aux}
\end{align}

The case of $K(\Phi,\Phi^\dagger)=\Phi^\dagger\Phi$ reduces to the result of
\eq{phiphiD} and corresponds to the kinetic energy term of the
Wess-Zumino model as noted above.  In this case, \eq{aux} yields,
\beq
F=-\left(\frac{dW}{dA}\right)^\dagger\,,
\eeq
which reproduces the result previously obtained in \eq{f}.

More complicated K\"ahler potentials yield non-renormalizable Lagrangians.  These arise in
low-energy effective field theories (that include operators of dimension greater than four),
in supersymmetric $\sigma$-models, and in supergravity.  Such
applications lie beyond the scope of these lectures.

\subsection{$R$-invariance}
\label{Rinvariance}
Recall that the SUSY algebra can be extended by added adding a bosonic
U(1)$_R$ generator $R$ such that [cf.~\eqst{R1}{susyalg7}],
\begin{align}
\left[R\,,\,Q_\alpha\right]=-Q_\alpha\,,\qquad\quad
\left[R\,,\,Q^\dagger_{\dot\alpha}\right]=Q^\dagger_{\dot\alpha}\,.\label{Rcommute}
\end{align}
The action of ${\rm U}(1)_R$ on a superfield $\Phi$ can be represented by a differential operator $\widehat{R}$ acting on superspace,
\begin{align}
[\Phi\,,\,R]=\widehat{R}\Phi\,,
\end{align}
where
\begin{align}
\widehat{R}\equiv \theta^\alpha\partial_\alpha-\thetabar_{\dalpha}\partial^{\dagger\dalpha}-n\,,\qquad \text{with $n\in\mathbb{R}$}\,.
\end{align}
We call $n$ the \textit{weight} (or $R$-charge) of the superfield $\Phi$.  (For a \textit{real} superfield, only $n=0$ is possible.)
Under a ${\rm U}(1)_R$ transformation,
\begin{align}
\delta_a\Phi=ia[R\,,\,\Phi]=-ia\widehat{R}\Phi\,.
\end{align}
Acting on a superfield $\Phi(x,\theta,\thetabar)$,
\beq
\widehat{R}\,\Phi(x,\theta,\thetabar)=e^{ina}\,\Phi(x,e^{-ia}\theta,e^{ia}\thetabar)\,,\label{Rtrans}
\eeq
The differential operator $\widehat{R}$ satisfies the identities,
\beqa
D_\alpha \widehat{R}&=&(\widehat{R}+1)D_\alpha\,,\\
\Dbar_{\dalpha}\widehat{R}&=&(\widehat{R}-1)\Dbar_{\dalpha}\,.
\eeqa
Hence, it follows that if $\Phi$ is a chiral [antichiral] superfield, then
$\widehat{R}\Phi$ is a chiral [antichiral] superfield.

Given a chiral superfield, $\Phi=A+\sqrt{2}\,\theta\psi+\theta\theta
F$, in the chiral representation, the ${\rm U}(1)_R$ transformations of the component fields are:
\begin{align}
A&\to e^{ina}A\,,\\
\psi &\to  e^{i(n-1)a}\psi\,,\\
F&\to  e^{i(n-2)a}F\,,\label{RF}
\end{align}
after employing \eq{Rtrans}.

\begin{theorem}
\label{Rtheorem}
The kinetic energy term $[\Phi^\dagger\Phi]_D$ is automatically
$R$-invariant, whereas $[W(\Phi)]_F$ is $R$-invariant if and only if $W$ has $R$-charge equal to 2.
\end{theorem}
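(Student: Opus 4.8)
The plan is to reduce both assertions to a simple bookkeeping of $U(1)_R$ phases, using the explicit component-field expressions already derived: \eq{phiphiD} for $[\Phi^\dagger\Phi]_D$ and \eq{rest} for $[W(\Phi)]_F$, together with the $U(1)_R$ component transformation laws displayed just above the theorem, namely $A\to e^{ina}A$, $\psi\to e^{i(n-1)a}\psi$ and $F\to e^{i(n-2)a}F$ [cf.~\eq{RF}], and their Hermitian conjugates. The guiding fact is that the $F$-projection $[\,\cdot\,]_F$ (the coefficient of $\theta\theta$) of a superfield of net $R$-charge $w$ carries $R$-charge $w-2$, while the $D$-projection $[\,\cdot\,]_D$ (the coefficient of $\theta\theta\thetabar\thetabar$) carries $R$-charge $w$; this is already visible for $\Phi$ itself, whose $\theta\theta$-coefficient $F$ has weight $n-2=w-2$. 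Hence $[\,\cdot\,]_D$ is $R$-invariant exactly when its argument has net $R$-charge $0$, and $[\,\cdot\,]_F$ exactly when its argument has net $R$-charge $2$.

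For the first statement: if $\Phi$ has weight $n$, then $\Phi^\dagger$ is an antichiral superfield of weight $-n$, so $\Phi^\dagger\Phi$ has net $R$-charge $0$ for \emph{every} value of $n$; thus $[\Phi^\dagger\Phi]_D$ is automatically $R$-invariant. I will also confirm this at the component level by substituting the transformation laws into \eq{phiphiD}: each surviving term, $(\partial_\mu A)(\partial^\mu A^\dagger)$, $F^\dagger F$, and $i\psi^\dagger\sigmabar^\mu\partial_\mu\psi$, is a product of a (derivative of a) component field with the (derivative of the) conjugate field, so the compensating phases $e^{i(\,\cdot\,)a}$ and $e^{-i(\,\cdot\,)a}$ cancel term by term; the total-derivative remainder is irrelevant to the action. (Since $\partial^2K/\partial A\partial A^\dagger=1$ and the higher derivatives of $K=\Phi^\dagger\Phi$ vanish, the general formula \eq{kahler} collapses to precisely \eq{phiphiD}, so nothing else needs checking.)

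For the ``if'' direction of the second statement, I read ``$W$ has $R$-charge $2$'' as the statement that $W(\Phi)$ is an $R$-eigenfield of weight $2$, i.e.\ $W(e^{ina}A)=e^{2ia}W(A)$ identically in $a$ (equivalently, Euler's relation $n\,A\,W'(A)=2W(A)$). Differentiating this identity once and twice in $A$ gives $W'(e^{ina}A)=e^{i(2-n)a}W'(A)$ and $W''(e^{ina}A)=e^{i(2-2n)a}W''(A)$. Feeding these, with $\psi\psi\to e^{2i(n-1)a}\psi\psi$ and $F\to e^{i(n-2)a}F$, into \eq{rest}, the phase of $\tfrac{dW}{dA}F$ is $e^{i(2-n)a}e^{i(n-2)a}=1$ and the phase of $\tfrac{d^2W}{dA^2}\psi\psi$ is $e^{i(2-2n)a}e^{i(2n-2)a}=1$, so $[W(\Phi)]_F$ is $R$-invariant. (Equivalently, since $W(\Phi)$ has net $R$-charge $2$, the mantra above applies directly.)

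For the ``only if'' direction I will expand $W(A)=\sum_k a_k A^k$ (with $1\leq k\leq 3$ in the renormalizable case) and read from \eq{rest} that the $F$-term of $\Phi^k$ is $[\Phi^k]_F=-\tfrac12 k(k-1)A^{k-2}\psi\psi+kA^{k-1}F$, which transforms homogeneously with the single phase $e^{i(kn-2)a}$. The component monomials $\{A^{k-2}\psi\psi,\ A^{k-1}F\}$ occurring for distinct $k$ are linearly independent, so $R$-invariance of $[W(\Phi)]_F=\sum_k a_k e^{i(kn-2)a}[\Phi^k]_F$ for all $a$ forces $a_k=0$ whenever $kn\neq 2$; hence $W$ is a single monomial $\propto\Phi^{2/n}$, i.e.\ $W(\Phi)$ carries $R$-charge $2$ (and in particular $n\neq 0$, with $2/n$ a positive integer, for a nontrivial superpotential). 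The main obstacle I anticipate is making this converse airtight: one must justify the linear independence of the component expressions and dispose of the degenerate cases ($n=0$, or $2/n$ not a positive integer, where the only $R$-invariant option is the trivial $W$), and one must carefully distinguish ``$W(\Phi)$ carries the definite $R$-charge $2$'' from the a priori weaker ``the resulting action is $R$-invariant'' --- these coincide only because the $F$-projection maps distinct $R$-eigencomponents of $W(\Phi)$ to linearly independent component-field expressions.
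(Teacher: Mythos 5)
Your proposal is correct and rests on the same observation as the paper's own (very terse) proof: by \eq{RF} the $F$-component of a superfield of $R$-charge $n$ carries charge $n-2$, so $[W(\Phi)]_F$ is invariant precisely when $W$ has $R$-charge 2, while the $\theta\theta\thetabar\thetabar$ projection preserves the charge, making $[\Phi^\dagger\Phi]_D$ (net charge zero) automatically invariant. You go further than the paper by explicitly verifying the $D$-term statement at the component level and by supplying the ``only if'' direction via the monomial decomposition and linear independence of the component expressions, both of which the paper leaves implicit; these additions are sound.
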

\begin{proof}
If $n=2$, then $F$ is invariant under a ${\rm U}(1)_R$ transformation,
 in light of \eq{RF}.  This result applies to any $F$-term.
\end{proof}

\begin{example}
[Wess-Zumino model with $\boldsymbol{W(\Phi)=\half m\Phi^2+\tfrac13 g\Phi^3}$]
If $m=0$, then the Wess-Zumino model is $R$-invariant with $n=\tfrac13$.
If $g=0$, then the Wess-Zumino model is $R$-invariant with $n=\tfrac12$.
If both $m\neq 0$ and $g\neq 0$, then the Wess-Zumino model is not $R$-invariant. 
\end{example}

\subsection{Grassmann integration and the SUSY action}
\label{integration}
A supersymmetric action can be written as an integral over superspace.
First, we introduce integration over anticommuting Grassmann
variables.   The rules of integration are\cite{Berezin},
\begin{align} \label{grules}
\int d\theta=\int d\theta^\dagger=0\,,\qquad \int \theta\,d\theta=\int
  \theta^\dagger\,d\theta^\dagger=1\,.
\end{align}
That is, integration over Grassmann variables is in some sense equivalent to differentiation.

It is conventional to define
\begin{align}
d^2\theta&\equiv -\tfrac14 d\theta^\alpha d\theta_\alpha\,,\\
d^2\thetabar&\equiv-\tfrac14 d\thetabar_{\dalpha} d\theta^{\dagger\dalpha}\,,\\
d^4\theta &\equiv d^2\theta d^2\thetabar\,,
\end{align}
which yields the following non-zero integrals,
\begin{align}
\int d^2 \theta\, (\theta\theta)=\int d^2\thetabar\,(\thetabar\thetabar)=\int d^4\theta\,(\theta\theta)(\thetabar\thetabar)=1\,.
\end{align}

It follows that for a chiral superfield,
\begin{align}
\int d^2\theta\, \Phi(x,\theta,\thetabar)=\int d^2\theta\, \Phi_1(x,\theta)=[\Phi]_F=-\tfrac14 D^2\Phi\biggl|_{\theta=\thetabar=0}\,.
\label{d2theta}
\end{align}
Likewise, for an arbitrary superfield $V(x,\theta,\thetabar)$,
\begin{align}
\int d^4\theta\, V(x,\theta,\thetabar)=[V]_D=\tfrac{1}{16}\Dbar\lsup{2} D^2 V\biggl|_{\theta=\thetabar=0}\,.\label{d4theta}
\end{align}
Thus, the most general SUSY action involving a chiral superfield $\Phi$ is
\begin{align}
S=\int d^4 x\,d^4\, \theta K(\Phi,\Phi^\dagger)+\int d^4 x\,d^2\theta\, W(\Phi)
+\int d^4 x\,d^2\thetabar\, W(\Phi^\dagger)\,.\label{S}
\end{align}

Generalizations to theories with multiple chiral superfields are
straightforward.  In the more general case, $W$ is a holomorphic
multivariable function of the chiral superfields, and $K$ is a
multivariable function of the chiral superfields and their hermitian conjugates.  For a renormalizable theory, $W$ is at most a cubic multinomial, 
\begin{align}
W(\Phi_i)=\sum_i a_i\Phi_i+\sum_{i,j} b_{ij}\Phi_i\Phi_j+\sum_{i,j,k}c_{ijk}\Phi_i\Phi_j\Phi_k\,,
\end{align}
and
\begin{align}
K(\Phi_i,\Phi_i^\dagger)=\sum_i \Phi_i^\dagger\Phi_i\,.\label{Ksimple}
\end{align}

In special cases, one can convert an integral over ``half'' of superspace (e.g. integrals over
$d^4 x\, d^2\theta$) into an integral over the full superspace.  The key
observation is that for an arbitrary superfield $V$,
\begin{align}
\int d^4x\,d^2\theta\,V(x,\theta,\thetabar)=\int d^4 x\left(-\tfrac14 D^2 V\right)\,.\label{intV}
\end{align}
On the left-hand side of \eq{intV}, the integration over $d^2 \theta$ projects out all terms proportional to $\theta\theta$.  On the right-hand side, $D^2\!=-\!\partial^\alpha\partial_\alpha$
up to total derivative terms that can be dropped because we are integrating over $d^4 x$.   Hence, $\tfrac14 \partial^\alpha\partial_\alpha$ has the effect of projecting out all terms proportional to $\theta\theta$.
Likewise,
\begin{align}
\int d^4x\,d^2\thetabar\,V(x,\theta,\thetabar)=\int d^4 x\left(-\tfrac14 \Dbar\lsup{2} V\right)\,.
\end{align}
Hence, it follows that
\begin{align}
\int d^4 x\,d^2\theta\left(-\tfrac14 \Dbar\lsup{2} V\right)=\int d^4x\,d^4\theta\,V(x,\theta,\thetabar)\,.\label{2to4}
\end{align}

\Eqs{d2theta}{d4theta} identify integrals over
half of superspace as $F$-terms and integrals over the full superspace
as $D$-terms.  However,
\eq{2to4} appears to blur the distinction between $D$-terms and $F$-terms.
For example, in the Wess-Zumino Lagrangian, the kinetic energy
term may be written as an $F$-term, $[\Phi T\Phi]_F$ [cf. \eq{WZlagF}], or
as a $D$-term, $[\Phi^\dagger\Phi]_D$, as in
 \eqs{phiphiD}{eq:Lgeneral}.  However,
consider the case of a half superspace integral of the superpotential given
in \eq{S}.  If we attempt to convert this into a full superspace
integral using \eq{2to4}, the end result is
\beq
\int d^4 x \,d^2\theta\,W(\Phi) =-4\int d^4 x\, d^4\theta\,\label{nonlocal}
\Dbar\lsup{-2} W(\Phi)\,.
\eeq
Due to the inverse differential operator, the integrand on the right-hand side of \eq{nonlocal} is a non-local
functional of chiral superfields.  This provides the distinction
between $F$-terms and $D$-terms.  In particular, any half superspace integral
that can be converted into a full superspace integral over a \textit{local}
functional of superfields will be called a $D$-term.

Having written the action in \eq{S} as an integral over superspace (for
$D$-terms) and half of superspace (for $F$-terms), one can obtain
expressions for the Green functions of quantum chiral
(and antichiral) superfields.  The corresponding two-point functions
provide expressions for the superspace propagators.   One can then
formulate a set of superspace Feynman rules and develop a
diagrammatic representation of the perturbative expansion of the
Green functions.  This was first carried out by Grisaru, Ro\u{c}ek, and
Siegel\cite{GRS}, and was applied to the perturbative computation of
the effective action.   Indeed, such techniques are quite useful since a
single supergraph (in which individual lines correspond to
superfields) is equivalent to a large number of Feynman diagrams involving
the corresponding component fields.
A comprehensive treatment of these methods are beyond the scope of these lectures.
For a pedagogical development of supergraphs and superspace Feynman rules, see e.g.~Refs.\cite{Gates,Srivastava,Buchbinder,Pokorski}.

\subsection{Improved ultraviolet behavior of supersymmetry}
\label{sec:non-renorm}

An attractive feature of supersymmetric quantum field theories is that
their ultraviolet divergences are better behaved, as compared to
ordinary quantum field theories.
Ref.\cite{GRS} demonstrated that the loop corrections to the effective
action of a supersymmetric theory of chiral superfields
can be expressed as an integral over the full superspace,
\beq \label{effact}
\sum_n \int d^4 x_1\cdots d^4 x_n\int d^4\theta\, g_n(x_1,\ldots,x_n)
F_1(x_1,\theta,\thetabar)\cdots F_n(x_n,\theta,\thetabar)\,,
\eeq
where the $F_i(x_i,\theta,\thetabar)$ are local functionals of chiral
and antichiral superfields and their covariant derivatives, and the
$g_n$ are translationally invariant functions on Minkowski space.

\Eq{effact} implies that $D$-terms are renormalized but $F$-terms
are not renormalized.  Moreover, if $F$-terms are absent at
tree-level, then they are not generated at the loop level.  Hence, the
tree-level K\"ahler potential is
renormalized by radiative corrections, whereas there are no loop
corrections to the tree-level superpotential.  This is the famous
non-renormalization theorem of $N=1$ supersymmetry.\footnote{The 
proof of the non-renormalization theorem implicitly assumes that the function $g_n$
in \eq{effact} is local.  However, the non-renormalization theorem can fail if 
the super\-symmetric theory contains massless fields as shown in
Refs.\cite{West:1990rm,Jack:1990pd,Dunbar:1991fc},  due to infrared
divergences.  For example, the inverse Laplacian operator
$\square^{-1}$ (from a massless propagator) can appear, resulting in
a non-local function $g_n$ in \eq{effact}.  One can show that
the non-renormalization theorem holds for the
Wilsonian effective action\cite{Shifman:1986zi,Shifman:1991dz}, where
the infrared effects are cut off\cite{SeibergNR,Poppitz:1996na}. \label{fnW}}
The proof of the non-renormalization theorem in Ref.\cite{GRS} relies
on the analysis of supergraphs in perturbation theory, and is beyond
the scope of these lectures.  Heuristically, this
theorem is a consequence of an exact cancellation between fermion and boson
loop contributions to the effective action due to supersymmetry.

Note that the non-renormalization of the tree-level superpotential
is simply  a consequence of the fact that the integral of a
product of chiral superfields over \textit{all} of superspace in \eq{effact} is zero due to
\eq{grules} [see Problem~\ref{pr:half}].  Moreover, the assumption that
the $F_i$ in \eq{effact} are \textit{local} functionals of chiral and
antichiral superfields is essential.  Otherwise, one could employ
\eq{nonlocal} and erroneously claim the existence of loop corrections
to the tree-level superpotential.

We now briefly explore the consequence of the non-renormalization of
the superpotential.  Consider the action of the Wess-Zumino model,
\begin{align}
S_{\mathrm{WZ}} = \int d^4 x \int d^4 \theta\, \Phi^\dagger \Phi  + \sqof{ \int d^4 x \int d^2\theta \of{\half m \Phi^2 + \tfrac{1}{3} \lambda\Phi^3}  
+ {\rm h.c.}}.
\end{align}
The non-renormalization theorem implies that renormalized fields and
parameters are related to bare fields and parameters as follows\cite{Cui},
\beq \label{bare}
\Phi_R=Z^{-1/2}\Phi\,,\qquad\quad m_R=Zm\,,\qquad\quad
\lambda_R=Z^{3/2}\lambda\,.
\eeq
where the subscript $R$ indicates renormalized quantities and the
bare quantities have no subscript.  \Eq{bare} is equivalent to the
statement that the superpotential is unrenormalized,
$W_R(\Phi_R)=W(\Phi)$. 
That is,
\beq
\half m_R\Phi_R^2 +\tfrac13 \lambda_R\Phi_R^3=\half m\Phi^2 +\tfrac13
\lambda\Phi^3\,.
\eeq
Wave function renormalization is a consequence of the
renormalization of the  K\"ahler potential ($\Phi^\dagger\Phi$ in
the case of the Wess-Zumino model). 

The non-renormalization theorem does \textit{not} assert that
the parameters of the
superpotential are not renormalized.   Indeed, \eq{bare} states that
the renormalization of the parameters $m$ and $\lambda$ are governed
by the wave function renormalization constant $Z$.   Moreover, the
wave function renormalization constants of the component fields of the chiral
superfield are~equal (i.e., $A_R\!\!=Z^{-1/2}A$ and
$\psi_R\!=\!Z^{-1/2}\psi$), as a consequence of supersymmetry.


In Ref.\cite{SeibergNR}, Seiberg offered a more intuitive understanding of the
non-renormalization theorem, which also forbids nonperturbative
corrections to the Wilsonian effective action [cf.~footnote~\ref{fnW}].
Seiberg's argument draws on the symmetry and
holomorphy\footnote{The fact that the superpotential is a
  holomorphic function of chiral superfields plays a critical role in
Seiberg's argument.
In contrast, the renormalization of the K\"ahler potential is
possible because the latter is a function of chiral and antichiral
superfields and hence is not holomorphic.}
\!of the
superpotential. Consider again the example of the Wess-Zumino
superpotential, $W(\Phi)=\half m\Phi^2+\tfrac13\lambda\Phi^3$.
Following Ref.\cite{SeibergNR}, one can think of $m$ and $\lambda$ as the vacuum
expectation values of chiral superfields, so that $W$ must be
holomorphic in $m$ and $\lambda$ as well as in $\Phi$. 
In light of Theorem~\ref{Rtheorem} in Section~\ref{Rinvariance}, the theory is
invariant under an enhanced ${\rm U}(1) \times {\rm U}(1)_R$
symmetry, with the charge assignments shown in
Table~\ref{tab:charges}.
\begin{table}[h!]
\begin{center}
\caption{\small Charge assignments under the ${\rm U}(1) \times {\rm U}(1)_R$ symmetry.}
\label{tab:charges}
\vskip 0.1in
\begin{tabular}{| l | r r r r |}
\hline
	& $\Phi$	& $\Phi^\dagger$	& $m$	& $\lambda$	\\
	\hline
${\rm U}(1)$ & 1		& $-1$				& $-2$		& $-$3 			\\
${\rm U}(1)_R$	& 1		&	1			&	0	&	$-$1	\\
\hline
\end{tabular}
\end{center}
\end{table}

To maintain the ${\rm U}(1) \times{\rm  }U(1)_R$ symmetry and holomorphy,
corrections to the Wilsonian effective superpotential must therefore be of the form
\begin{align}
m \Phi^2 f\of{  \frac{ \lambda \Phi }{ m} },\label{wilson}
\end{align}
where $f$ is an arbitrary holomorphic function. 
\Eq{wilson} is valid for arbitrary $\lambda$.  Thus, we can take
$|\lambda|\ll 1$, in which case perturbation theory should be valid.
Expanding in powers of the coupling constant $\lambda$, the perturbative expansion should have the
following form,
\beq
W_{\rm eff}=\sum_{n=0}^\infty
a_n\frac{\lambda^n}{m^{n-1}}\Phi^{n+2}\,.\label{Weffective}
\eeq
The terms in $W_{\rm eff}$ are represented diagrammatically by one
particle irreducible (1PI) supergraphs constructed from propagators and
three-point vertices proportional to $\lambda$.  However, one cannot construct a
one-loop (or higher) supergraph that behaves like
$\lambda^n\Phi^{n+2}$.  It is easy to show that tree-level diagrams
with $n+2$ external legs, $n$ vertices and $n-1$ propagators would
behave like $\lambda^n\Phi^{n+2}$.  But, the only 1PI tree-level
graphs are those with either two or three external legs!  Hence, we
conclude that $a_0=\half$, $a_1=\tfrac13$ and $a_n=0$ for
$n\geq 2$.\footnote{One can also conclude that $a_n=0$ for $n\geq 2$
  by noting that the Wilsonian effective action $W_{\rm eff}$ must have a smooth
  limit as $m\to 0$.} That is
$W_{\rm eff}(\Phi)=W_{\rm tree}(\Phi)$, which is the statement that the
superpotential is not renormalized.

\subsection{Problems}

\begin{problem}
\label{pr:two_super_translations}
Prove that
\begin{align}
 G(y,\xi,\xi^\dagger)G(x,\theta,\thetabar)=G\bigl(x+y+i(\xi\sigma\thetabar-\theta\sigma\xi^\dagger),\xi+\theta,\xi^\dagger+\thetabar\bigr)\,. \nn
 \end{align}
  {\sl HINT}: use the Baker-Campbell-Hausdorff formula given in \eq{BCH}. 
 \end{problem}

\begin{problem}
Verify that when acting on a superfield $\Phi(x,\theta,\thetabar)$,
$$
\{\widehat{Q}_\alpha\,,\,\widehat{Q}_\beta\}=\{\widehat{Q}^\dagger_{\dot\alpha}\,,\,\widehat{Q}^\dagger_{\dot\beta}\}=0\,,\qquad\quad
\{\widehat{Q}_\alpha\,,\,\widehat{Q}^\dagger_{\dot\beta}\}=2\sigma^\mu_{\alpha\dot\beta}\widehat{P}_\mu\,.
$$
\end{problem}

\begin{problem}
\label{pr:3results}
Prove \eqst{eq:r1}{eq:r3}.  The last result is an example of a
Fierz identity (see, e.g., Appendix B of Ref.\cite{Dreiner:2008tw} or
Appendix A of Ref.\cite{Bailin}).
\end{problem}

\begin{problem}
\label{pr:fmnV}
Using \eq{supertrans}, obtain the SUSY transformation laws for the
bosonic component fields,
$f$, $m$, $n$, $V_\mu$, and $d$, and the fermionic component fields,
$\zeta$, $\chi$, $\lambda$ and $\psi$, which appear in the complex
superfield defined in \eq{phitaylor}. 
\end{problem}

\begin{problem}
Suppose that $\Phi$ is a bosonic superfield.  Verify that
\eq{daggers} holds.  Then, show that \eqs{eq:D}{eq:Db} satisfy
\eq{Dcond}.
\end{problem}

\begin{problem}
Suppose that $\Phi$ is a fermionic superfield.  Show that
\eqs{daggers}{Dcond} are modified as follows:
$(\partial_\alpha\Phi)^\dagger=\partial_{\dalpha}^\dagger\Phi^\dagger$
and $(D_\alpha\Phi)^\dagger=-\Dbar_{\dalpha}\Phi^\dagger$. 
 \end{problem}

\begin{problem} 
\label{pr:D}
Show that the spinor covariant derivatives, as defined in \eq{eq:D}
and \eq{eq:Db}, satisfy the  following anticommutation relations,
$\{D_\alpha\,,\,D_\beta\}=\{\Dbar_{\dot\alpha}\,,\,\Dbar_{\dot\beta}\}=0$ and
$\{D_\alpha\,,\,\Dbar_{\dot\beta}\}=2i\sigma^\mu_{\alpha\dot\beta}\partial_\mu$.
\end{problem}

\begin{problem}
\label{pr:DDid}
Derive \eq{DDid}.
\end{problem}

\begin{problem}
\label{pr:expexp}
Prove that
\begin{align*}
\exp(-i\theta\sigma^\mu\thetabar\,\partial_\mu)=1-i\theta\sigma^\mu\thetabar\,\partial_\mu-\tfrac{1}{4}(\theta\theta)(\thetabar\thetabar)\square,
\end{align*}
where $\square\equiv\partial_\mu\partial^\mu.$
\end{problem}

\begin{problem}
\label{pr:chiraltrans}
Using  \eq{supertrans}, one can
obtain the SUSY transformation laws for the component fields $A$,
$\psi$ and $F$ in \eq{eq:chiralSF}.  Perform the calculation by
working in the chiral representation and show
that the SUSY transformation laws for $A$, $\psi$ and $F$  coincide with
the results obtained previously in \eqst{offshell1}{offshell3} for the fields of a superspin $j=0$
supermultiplet. 
\end{problem}

\begin{problem}
\label{pr:phistphiD}
If $\Phi$ is a chiral superfield, show that
\begin{align*}
[\Phi^\dagger\Phi]_D=(\partial_\mu A)(\partial^\mu A^\dagger)+F^\dagger F+i\psi^\dagger\sigmabar^\mu\partial_\mu\psi+\text{total derivative}\,.
\end{align*}
\end{problem}

\begin{problem}
\label{pr:K}
Derive \eq{kahler}.
\end{problem}

\begin{problem}
A linear superfield\cite{Ferrara:1974ac,Salam:1974jj}, $L(x,\theta,\bar\theta)$,
is defined as a constrained real scalar superfield that satisfies, $D^2 L(x,\theta,\bar\theta)=\overline D\lsup{2}
L(x,\theta,\bar\theta)=0$.   Identify the component fields
that make up the linear superfield.  Show that $\partial_\mu V^\mu=0$,
where $V^\mu$ is the component vector field of $L$. 
Check that the number of fermion and boson degrees of freedom of the linear
superfield are equal.
[HINT: the identity given by \eq{DDid} should be helpful.]
\end{problem}

\begin{problem}
Employing the operator $T$ defined in \eq{Tdef}, show that
\beq
\int d^4 x\,d^2\theta\, \Phi T\Phi = \int d^4 x\, d^4\theta
\,\Phi^\dagger\Phi\,,
\eeq
by converting the integral over half of superspace into an integral
over the full superspace.  Use the above result to conclude that
$[\Phi T \Phi]_F=[\Phi^\dagger\Phi]_D$.
\end{problem}

\begin{problem}
\label{pr:half}
If $\Phi$ is a chiral superfield and $\Phi^\dagger$ is an antichiral
superfield, show that
\beq
\int d^4 x\,d^4\theta\, \Phi(x,\theta,\thetabar)=\int d^4 x\,d^4\theta\,
\Phi^\dagger(x,\theta,\thetabar)=0\,.
\eeq
\end{problem}


%

\section{Supersymmetric gauge theories}
\label{sec:gaugetheories}
\renewcommand{\theequation}{\arabic{section}.\arabic{equation}}
\setcounter{equation}{0}

In this section, we discuss the supersymmetric extension of 
gauge theories. We begin with the  vector superfield $V$, which contains
the gauge fields as well as their supersymmetric partners, the
gauginos. We discuss the behavior of $V$ under a  gauge
transformation, and the gauge-invariant interaction terms that couple
the vector superfield with one or more 
chiral superfields.   Both abelian and non-abelian gauge groups are
treated.  Finally, we construct the SUSY Lagrangians corresponding to QED and a
non-Abelian SUSY Yang-Mill theory coupled to supersymmetric matter.

\subsection{Vector superfields}

Imposing a reality condition on a complex superfield (which is a
covariant constraint with respect to SUSY transformations), we obtain the so-called real vector superfield,
\begin{align}
V(x,\theta,\thetabar)=V^\dagger(x,\theta,\thetabar)\,,
\end{align}
which will be employed in constructing supersymmetric gauge theories.
Expanding in $\theta$ and $\thetabar$,
\beqa
V&=& \ C + i\theta\chi-i\thetabar\chi^\dagger +\half i
\theta\theta(M+iN)-\half i \thetabar \thetabar (M-iN)
+\theta\sigma^\mu\thetabar V_\mu  \nn \\
&&+
i(\theta\theta)\thetabar\bigl(\lambda^\dagger-\half
i\,\sigmabar^\mu\partial_\mu\chi\bigr)-i(\thetabar
\thetabar)\theta(\lambda-\half
i\sigma^\mu\partial_\mu\chi^\dagger\bigr) \nn \\
&& +\half(\theta\theta)(\thetabar 
\thetabar)\bigl(D-\half\Box C\bigr)\,,
\label{VectorSF}
\eeqa
where $C$, $M$, $N$, $D$ and $V_\mu$ are real bosonic fields,
and $\chi$ and $\lambda$ are two-component fermion fields.  The
various factors of $i$ and $\half$ are conventional, and the
particular linear combination of fields chosen as coefficients of
$(\theta\theta)\thetabar$, $(\thetabar\thetabar)\theta$ and
$(\theta\theta)(\thetabar\thetabar)$ are convenient for later
purposes [cf.~footnote~\ref{fn38}].
Note that the superfield $V$ is dimensionless, in which case it
follows that the dimensions of the component fields are $[V_\mu]=1$ and
$[\lambda]=\tfrac32$, as expected, whereas $[C]=[D]=0$, and $[\chi]=\half$
after making use of the dimensions of the Grassmann coordinates,
$[\theta]=[\thetabar]=-\half$.

The real vector field $V_\mu$ is a candidate for a gauge
boson of an abelian U(1) gauge theory.  The corresponding field strength tensor is given by
\begin{align}
F_{\mu\nu}=\partial_\mu V_\nu-\partial_\nu V_\mu\,.
\end{align}
Indeed, this can be shown to be one of the components of the \textit{field strength superfield}, which is defined by
\begin{align}
\W_\alpha=-\tfrac{1}{4} {\Dbar}^{2} D_\alpha V\,.\label{Walpha}
\end{align}
Note that $\Dbar_{\dbeta}\W_\alpha=0$, so that $\W_\alpha$ is a spinor chiral superfield.  Evaluating the above expression, and expressing it in the chiral representation,
\begin{align}
\W_\alpha(y,\theta,\thetabar)=
-i\lambda_\alpha + \theta_\alpha D 
-\half i (\sigma^\mu\sigmabar^\nu\theta)_\alpha F_{\mu\nu}
-\theta\theta (\sigma^\mu\partial_\mu \lambda^\dagger)_\alpha ,\label{Wdef}
\end{align}
where $y\equiv x-i\theta \sigma^\mu\thetabar$.   The fermionic partner
of the gauge boson, called the gaugino, is represented by the
two-component spinor field $\lambda$.  Remarkably, the fields $C$,
$M$, $N$ and $\chi$ that are coefficients in the Taylor expansion of the vector
superfield $V$ do not appear in \eq{Wdef}.  The reason for this will
become apparent in Section~\ref{sec:gauge}.

One can work out the SUSY transformation laws of the fields, $\lambda$,
$F_{\mu\nu}$ and~$D$,
by matching component fields on both sides of the following equation,
 \begin{align}
 \deltaxi \W_\alpha=-i(\xi \widehat{Q}+\xi^\dagger \widehat{Q}^\dagger)\W_\alpha.
 \end{align}
 The end result is
\begin{align}
 \deltaxi\lambda_\alpha&= i\xi_\alpha D+\half(\sigma^\mu\sigmabar^\nu)_\alpha{}^\beta\xi_\beta F_{\mu\nu}\,, \\
 \deltaxi F_{\mu\nu}&=i\partial_\mu(\xi\sigma_\nu\lambda^\dagger-\lambda\sigma_\nu\xi^\dagger)
 -i\partial_\nu(\xi\sigma_\mu\lambda^\dagger-\lambda\sigma_\mu\xi^\dagger)\,, \\
 \deltaxi D&=\partial_\mu(\xi\sigma^\mu\lambda^\dagger+\lambda\sigma^\mu\xi^\dagger)\,.\label{delD}
\end{align}
Note that the mass dimension of the $D$-term is given by $[D]=2$.  Hence, 
dimensional analysis implies that $\delta_\xi D$ must be a total
derivative, which is confirmed in \eq{delD}.
From the above transformation laws, we conclude that $(\lambda\,,\,\lambda^\dagger\,,\,F_{\mu\nu}\,,\,D)$ forms an irreducible supermultiplet (corresponding to superhelicity $1$).

To obtain the Lagrangian for the SUSY U(1) gauge theory, note that
\begin{align}
\tfrac14[\W^\alpha \W_\alpha]_F+{\rm h.c.}&=\half i(\lambda\sigma^\mu\partial_\mu\lambda^\dagger+\lambda^\dagger\sigmabar^\mu\partial_\mu\lambda)+\half D^2-\tfrac14 F_{\mu\nu}F^{\mu\nu} \nn \\[5pt]
&=i\lambda^\dagger\sigmabar^\mu\partial_\mu\lambda+\half D^2-\tfrac14
  F_{\mu\nu}F^{\mu\nu}+\text{total derivative}. \label{WF}
\end{align}
This is the kinetic energy term for a U(1) gauge field $V_\mu$ and its
gaugino superpartner $\lambda$.  Both the gauge boson and gaugino are
massless.  The real scalar field $D$ is not dynamical; it is an
auxiliary field.

The action corresponding to the Lagrangian of
\eq{WF} can be written as an integral over half of superspace.   In
particular,  \eq{d2theta} yields,
\begin{align}
\mathscr{L}=\tfrac14\int d^2\theta\, \W^\alpha \W_\alpha+{\rm h.c.}
\end{align}
One can show that $[\W^\alpha \W_\alpha]_F$ and its hermitian
conjugate term differ only by a total derivative.  Hence, both terms
contribute equally to the action, which is given by
\begin{align}
S=\tfrac12\int d^4 x\,d^2\theta\, \W^\alpha \W_\alpha\,.
\end{align}
It is sometimes convenient to turn this integral into an integration
over the full superspace.  Using a trick analogous to the one employed
in \eq{2to4}, we end up with,
\begin{align}
S=\tfrac12\int d^4 x\,d^2\theta\, \left(-\tfrac14 \Dbar\lsup{2}\right)(D^\alpha V) \W_\alpha
=\tfrac12\int d^4 x\,d^2\theta\,d^2\thetabar(D^\alpha V) \W_\alpha
\,,
\end{align}
after using \eq{Walpha} to rewrite one factor of $\W^\alpha$ in
terms of $V$.

It is instructive to count the degrees of freedom in the irreducible
supermultiplet, $(\lambda\,,\,\lambda^\dagger\,,\,F_{\mu\nu}\,,\,D)$. 
On-shell, there are two real fermionic degrees
of freedom 
associated with the massless gaugino, after imposing the
Lagrange field equations,\footnote{Starting with two complex (or
  equivalently four real) degrees
  of freedom for the two-component gaugino field $\lambda$,
  \eq{gauginoDirac} relates the spinor components $\lambda_1$ and $\lambda_2$, thereby
  reducing the number of real degrees of freedom from four to two.}
\beq
i\sigmabar\lsup{\mu\alpha\dbeta}\partial_\mu\lambda_\beta=0\,.\label{gauginoDirac}
\eeq
This matches the two real bosonic degrees of freedom corresponding
to the two transverse polarizations of the massless gauge boson.

To count
 the off-shell bosonic degrees of freedom,
one must take into account the Bianchi identity,\footnote{Although it
  appears that the Bianchi identity yields four constraints, since the
  spacetime index $\mu$ is a free index, in fact only
  three constraints are independent.  This is because one of the four
  constraints is redundant due to the identity,
$\epsilon^{\mu\nu\rho\sigma}\partial_\mu\partial_\nu F_{\rho\sigma}=0$,
which is automatically satisfied as a result of the antisymmetry of the Levi-Civita tensor.
Physically, the Bianchi identity implies that  the three components of the
  electric field vector determine the three components of the magnetic field vector.}
\beq
\epsilon^{\mu\nu\rho\sigma}\partial_\nu F_{\rho\sigma}=0\,,\label{bianchi}
\eeq
which is satisfied independently of the field equations.  This
identity reduces the number of real degrees of freedom in the real
antisymmetric tensor $F_{\mu\nu}$ from
six to three.   Adding in the one real degree of freedom associated
with $D$, we end up with a total of four real bosonic degrees of freedom,
which matches the four real off-shell fermionic degrees of freedom corresponding to $\lambda$ and~$\lambda^\dagger$. 

\subsection{Gauge invariance}
\label{sec:gauge}
The vector superfield $V$ contains the familiar gauge field
$V_\mu$. But it also includes other component fields $C$, $\chi$, $M$
and $N$, whose meaning is less obvious. As we will see, these latter fields
turn out to be gauge artifacts.  Thus, we must examine how gauge
transformations  of the gauge field theory get promoted to gauge transformations of the
vector superfield $V$.

Let $\Lambda(x,\theta,\thetabar)$ be a chiral superfield (\textit{i.e.}, $\Dbar_{\dalpha}\Lambda=0$) and let 
$\Lambda^\dagger(x,\theta,\thetabar)$ be the corresponding antichiral superfield.  Consider the transformation,
\begin{align}
V\to V+i(\Lambda-\Lambda^\dagger)\,.
\label{eq:Vgaugetransform}
\end{align}
We assert that \eq{eq:Vgaugetransform} is a supersymmetric
generalization of the gauge transformation of an abelian gauge theory,
henceforth called a super gauge transformation.

With the help of \eq{eq:Dcomms}, it is straightforward to show that
the field strength superfield, $\W_\alpha$, is invariant under a super
gauge transformation.
Moreover, if the Taylor series of $\Lambda(x,\theta,\thetabar)$ is
written as\footnote{In contrast to the chiral superfield $\Phi$ in
  \eq{eq:chiralSF} whose mass dimension is 1, the chiral superfield $\Lambda$ is dimensionless,
  as required for consistency in light of \eq{eq:Vgaugetransform}.}
\begin{align}
\begin{split}
\Lambda(x,\theta,\thetabar)=&
\widetilde{A}(x) + \sqrt{2}\,\theta \widetilde{\psi}(x) + \theta\theta \widetilde{F}(x)-i \theta\sigma^\mu\thetabar \partial_\mu \widetilde{A}(x)  \\
& - \frac{i}{\sqrt{2}} (\theta\theta) 
\theta^\dagger \sigmabar^\mu\, \partial_\mu \widetilde{\psi}(x)-\tfrac{1}{4}(\theta\theta)(\thetabar\thetabar) \square \widetilde{A}(x)\,, 
\end{split}
\end{align}
then the impact of the super gauge transformation given by
\eq{eq:Vgaugetransform} on the component fields of $V$
is,\footnote{\label{fn38} The invariance
of $\lambda$ and $D$ under super gauge transformations is a
consequence of the particular choices made for the coefficients of
$(\theta\theta)\thetabar$, $(\thetabar\thetabar)\theta$ and
$(\theta\theta)(\thetabar\thetabar)$ in \eq{VectorSF}.}
\begin{align}
  C 
  & 
  \to C+i(\widetilde{A}-\widetilde{A}^\dagger)\,,\\
\chi
&
\to \chi+ \sqrt{2}\, \widetilde{\psi} \, , \\
M+iN 
&
\to M+iN+2\widetilde{F}\,,\\
V_\mu
&
 \to V_\mu+\partial_\mu(\widetilde{A}+\widetilde{A}^\dagger)\,, \\
 \lambda
 &
  \to \lambda\,,\label{lambdaGT} \\
D 
&
\to D\,.\label{DGT}
\end{align}
Indeed, under a super gauge transformation, the gauge field $V_\mu$
transforms by an ordinary gauge transformation.  Moreover the 
field strength tensor $F_{\mu\nu}=\partial_\mu V_\nu-\partial_\nu
V_\mu$, the gaugino
field $\lambda$, and the auxiliary field $D$ are gauge invariant as
one would anticipate (consistent with the fact that the field strength
superfield $\W$ is gauge invariant).

One particularly useful gauge choice is to choose $\widetilde A$,
$\widetilde\psi$ and $\widetilde F$ such that 
\beq
C=\chi=M=N=0\,.\label{WZ}
\eeq
This is called the {{Wess-Zumino (WZ) gauge}}\cite{Wess:1974jb}.  
The existence of such a gauge implies that the fields $C$,
$\chi$, $M$, and $N$ are gauge artifacts, as previously stated.
The main drawback of the WZ gauge is that it is not a
supersymmetric gauge choice.  That is, starting from the WZ gauge and
performing a SUSY transformation on the component fields of the vector
superfield~$V$ will yield new component fields that do not satisfy the WZ gauge
condition.  

The main benefit of the WZ gauge is that it provides enormous
simplification in many practical computations.  In particular, applying
the WZ gauge condition [\eq{WZ}] to the vector superfield given in \eq{VectorSF},
\begin{align}
V_{\rm WZ}=\theta\sigma^\mu\thetabar V_\mu+
i(\theta\theta)(\thetabar\bar{\lambda})-i(\thetabar\thetabar)(\theta\lambda)+\half(\theta\theta)(\thetabar\thetabar)D\,.
\end{align}
Computing the square of $V_{\rm WZ}$ with the help of \eq {eq:r3} yields,
\begin{align}
V^2_{\rm WZ}(x,\theta,\thetabar)=\half (\theta\theta)(\thetabar\thetabar)V_\mu V^\mu\,.
\end{align}
and $V^n_{\rm WZ}(x,\theta,\thetabar)=0$ for $n=3,4,5,\dots$.  This
implies that the Taylor series for the exponential of $V_{\rm WZ}$ is
a finite series and
contains only three terms,
\begin{align}
\exp(2gV_{\rm WZ})=1+2gV_{\rm WZ}+2g^2V^2_{\rm WZ}\,. \label{e2gV}
\end{align}
This result will be especially important when we consider
gauge-invariant interactions in Section~\ref{GI}.

Finally, we consider the implications of $R$-invariance.  
Since $V$ is a real superfield, it follows from \eq{Rtrans} that,
\begin{align}
\widehat{R}V(x,\theta,\thetabar)=V(x,e^{-ia}\theta,e^{ia}\thetabar)\,.
\end{align}
In the Wess-Zumino gauge, the $R$ transformations of the component fields are given by
\begin{align}
V_\mu&\to V_\mu\,, \\
\lambda&\to e^{ia}\lambda\,, \\
D&\to D\,.
\end{align}
The Lagrangian of \eq{WF} for the SUSY gauge theory is invariant under
$R$ transformations.
In the present context, the presence of $R$-invariance is associated with the chiral symmetry of the massless gaugino.

\subsection{Gauge-invariant interactions}
\label{GI}
Suppose that $\Phi$ is a chiral superfield that is charged under the U(1) gauge group.
Then the gauge transformations of the chiral superfield and the corresponding antichiral superfield are given by,
\begin{align}
\Phi\to e^{-2ig\Lambda}\Phi\,,\qquad\quad \Phi^\dagger\to e^{2ig\Lambda^\dagger}\Phi^\dagger\,,\label{eq:Phigauge}
\end{align}
where $\Lambda$ is the chiral superfield gauge
transformation parameter introduced in
\eq{eq:Vgaugetransform}.
In the presence of gauge interactions, the kinetic energy term for the
chiral superfield given by \eq{phiphiD},
\begin{align}
\mathscr{L}_{\rm KE}=[\Phi^\dagger \Phi]_D=\int d^4 \theta\,\Phi^\dagger\Phi\,,
\end{align}
is not gauge invariant.  But this deficiency is easily repaired.  A
gauge-invariant kinetic energy term 
with respect to the gauge transformations given in
\eqs{eq:Vgaugetransform}{eq:Phigauge} is given by,
\begin{align}
\mathscr{L}_{\rm KE}=[\Phi^\dagger e^{2gV}\Phi]_D=\int d^4 \theta\,\Phi^\dagger e^{2gV}\Phi\,.
\label{eq:LKE}
\end{align}
The proof is left as an exercise (see Problem \ref{pr:invKE}).

Normally, the exponential, $\exp(2gV)$, would yield an infinite series
of terms.  But, the series terminates  in the Wess-Zumino gauge, as
indicated in \eq{e2gV}, and we get 
\begin{align}
\begin{split}
\mathscr{L}_{\rm KE}=&(\mathcal{D}_\mu A)(\mathcal{D}^\mu A)^\dagger+
i \psi^\dagger \sigmabar^\mu \mathcal{D}_\mu \psi+F^\dagger  F\\
&
+ig\sqrt{2}(A^\dagger\lambda\psi-A\lambda^\dagger\psi^\dagger)
+gAA^\dagger  D+\text{total derivative}\,,\label{KEint}
\end{split}
\end{align}
where $\mathcal{D}_\mu\equiv\partial_\mu+igV_{\mu}$ is the usual gauge-covariant derivative.
The presence of the Yukawa interaction of the scalar-fermion-gaugino
is especially noteworthy, with a coupling proportional to the gauge
coupling $g$.  This is a consequence of supersymmetry, which relates
the gauge and Yukawa couplings that otherwise would be independent.

Another manifestation of SUSY is revealed when we consider the terms
of the Lagrangian involving the auxiliary fields $F$ and $D$.  
Consider the Lagrangian of the interacting gauge theory that consists
of contributions from \eqs{WF}{KEint}.  We can isolate those terms
that involve $F$ and $D$ explicitly, 
\begin{align}
\mathscr{L} = \biggl\{\tfrac14[\W^\alpha \W_\alpha]_F+{\rm h.c.}\biggr\}+[\Phi^\dagger e^{2gV}\Phi]_D 
 =\ldots+
F^\dagger F+\half D^2+gAA^\dagger D\,.\label{FD}
\end{align}
Solving the Lagrange field equations for $F$ and $D$,
\begin{align}
& \frac{\partial\mathscr{L}}{\partial F}=0\qquad\Longleftrightarrow\qquad F=0\,, \\
& \frac{\partial\mathscr{L}}{\partial
  D}=0\qquad\Longleftrightarrow\qquad D=-gA^\dagger A\,.\label{DAA}
\end{align}
Inserting these results back into \eq{FD} [where the terms not
explicitly given can be found in \eqs{WF}{KEint}] yields the Lagrangian in terms of its physical fields,
\begin{align}
\begin{split}
\mathscr{L} =&-\tfrac14 F_{\mu\nu}F^{\mu\nu}
+i\lambda^\dagger\sigmabar^\mu\partial_\mu\lambda+(\mathcal{D}_\mu A)(\mathcal{D}^\mu A)^\dagger 
+ i \psi^\dagger \sigmabar^\mu \mathcal{D}_\mu \psi \\
& +i\sqrt{2}\,g(A^\dagger\lambda\psi-A\lambda^\dagger\psi^\dagger)
-\half g^2 (A^\dagger A)^2\,.
\end{split}
\end{align}
Thus, a potential for the scalar field $A$ has been generated,
\begin{align}
V_{\rm scalar}=\half g^2 (A^\dagger A)^2\,.
\end{align}

There is one more possible term, called the Fayet-Iliopoulos term\cite{Fayet:1974jb},
that can appear in a renormalizable SUSY U(1) gauge theory Lagrangian,
\begin{align}
\mathscr{L}_{\rm FI}=2\xi[V]_D= \xi D+\text{total divergence}\,.
\end{align}
This modifies the form of $D$ obtained in \eq{DAA},
\begin{align}
D=-gA^\dagger A-\xi\,,
\end{align}
which in turn modifies the scalar potential,
\begin{align}
V_{\rm scalar}=\half\bigl[gA^\dagger A+\xi\bigr]^2\,.
\end{align}
The existence of a quartic scalar coupling proportional to the square of the gauge coupling (in the presence or absence of a Fayet-Iliopoulos term) is another manifestation of SUSY.

\subsection{Generalizing to more than one chiral superfield}

With only one chiral superfield, it was not possible to include a superpotential $W(\Phi)$ in our gauge theory, since 
$W$ is a holomorphic function of a charged field and hence not gauge-invariant.
But, a theory with more than one charged chiral superfield can admit a gauge invariant superpotential.

For example, consider a set of charged chiral superfields $\Phi_i$ with U(1) charges $q_i$, which transform under U(1) as
\begin{align}
\Phi_i\to e^{-2igq_i\Lambda}\Phi_i\,.
\end{align}
Suppose that a gauge-invariant superpotential can be constructed, $W(\Phi_i)$.  When we solve for the auxiliary field $F_i$, we will obtain
\begin{align}
F_i=-\left(\frac{dW}{dA_i}\right)^\dagger\,, \label{fsubi}
\end{align}
as before [cf.~\eq{f}], which provides the $F$-term contributions to the scalar potential,
\begin{align}
V_{\rm scalar}\ni \sum_i \left|\frac{dW}{dA_i}\right|^2\,.\label{niF}
\end{align}

When we solve for the auxiliary field $D$, we obtain a contribution from each scalar $A_i$,
\begin{align}
D=-\xi-\sum_i q_i g A_i^\dagger A_i\,. \label{DFI}
\end{align}
The corresponding $D$-term contributions to the scalar potential are
\begin{align}
V_{\rm scalar}\ni\half\left[\xi+\sum_i gq_iA^\dagger A\right]^2\,.\label{niD}
\end{align}
Including both the $F$-term and $D$-term contributions yields the
following scalar potential,
\begin{align}
V_{\rm scalar}=\sum_i \left|\frac{dW}{dA_i}\right|^2+\half\left[\xi+\sum_i gq_iA^\dagger A\right]^2\,,\label{vscalar1}
\end{align}
which can also be conveniently written as
\begin{align}
V_{\rm scalar}=\sum_i F_i^\dagger F_i+\half D^2\,, \label{vscalar2}
\end{align}
where $F$ and $D$ are given by \eqs{fsubi}{niD}, respectively.
Note that the form of the scalar potential [either \eq{vscalar1} or
(\ref{vscalar2})] makes clear that $V_{\rm scalar}\geq 0$.  This
observation will play an important role in the theory of supersymmetry
breaking, which is treated in Section~\ref{SSB}.

The above results can now be used to construct the supersymmetric
extension of QED.  The superfield content of SUSY-QED consists of a real vector superfield $V$, a chiral superfield $\Phi_+$ with charge $q=1$, and a chiral superfield $\Phi_-$ with charge $q=-1$.  The unique renormalizable, gauge-invariant superpotential is
\beq \label{Wsqed}
W(\Phi_+,\Phi_-)=m\Phi_+\Phi_-\,.
\eeq
The $R$-charges of both $\Phi_+$ and $\Phi_-$ can be chosen to be $+1$, in which case the theory is also $R$-invariant.
The construction of the SUSY-QED Lagrangian is left as an exercise
(see Problem \ref{pr:SUSYQED}).

\subsection{\mbox{SUSY \!Yang-Mills theory coupled to supermatter}}

The construction of the supersymmetric generalization of Yang-Mills
theory, \textit{i.e.},~a non-abelian gauge theory coupled to matter, is more
complicated than the case of an abelian gauge theory treated in
previous sections.  In this subsection, we will summarize the main
modifications.  The reader can fill in the details with the help of Refs.\cite{Bailin,Sohnius:1985qm}.

Consider a non-abelian compact simple Lie group G, with generators $T^a$ that satisfy commutation relations,
\begin{align}
\bigl[T^a\,,\,T^b\bigr]=if_{abc}T^c\,.
\end{align}
It is convenient to normalize the generators of the defining (fundamental)
representation of G such that,
\begin{align}
\Tr(T^a T^b)=\half\delta_{ab}.
\end{align}

The vector superfield, $V^a$,  possesses an adjoint index $a$, which
runs over the generators of G.
Thus, we can define the matrix gauge superfield,
\begin{align}
V\equiv V^a T^a\,.
\end{align}
The gauge transformation law for $V$ given in \eq{eq:Vgaugetransform}
is significantly more complicated in the case of a non-abelian gauge theory, 
\begin{align}
e^{2gV} \longrightarrow e^{-2ig\Lambda^\dagger}e^{2gV} e^{2ig\Lambda},\label{eq:Vnonabelian}
\end{align}
where $\Lambda\equiv (\Lambda^a T^a)_{ij}$ is the matrix chiral superfield gauge
transformation parameter.

The chiral superfields are now multiplets corresponding to representation $R$
of the gauge group G, transforming as\footnote{When acting on the $\Phi_i$, one employs the generators $T^a$ in the representation $R$.}
\begin{align}
\Phi_i\to \left(e^{-2ig\Lambda}\right)_{ij}\Phi_j\,,
\end{align}
which provides the
generalization of \eq{eq:Phigauge} to a nonabelian gauge group.
Note that $\Phi^\dagger_i\left(e^{2gV}\right)_{ij}\Phi_j$ is
gauge-invariant, if the gauge transformation law for $V$ is given by \eq{eq:Vnonabelian}.

Likewise, we define a matrix version of 
the nonabelian field-strength superfield, $\W_\alpha\equiv \W_\alpha^a T^a$, where
\begin{align}
\W_\alpha=-\frac{1}{8g}\Dbar\lsup{2}e^{-2gV}D_\alpha e^{2gV}\,.\label{eq:Wnonabelian}
\end{align}
Unlike the abelian case, $\W_\alpha$ is not gauge-invariant.  However it transforms as an adjoint field,
\begin{align}
\W_\alpha\to e^{-2ig\Lambda}\W_\alpha e^{2ig\Lambda}\,,
\end{align}
so that $\Tr(\W^\alpha \W_\alpha)$ is gauge-invariant.
In the WZ gauge,\footnote{In contrast to the abelian case, the
  expansion of  $\W_\alpha^a$ in terms of its component fields in the
  nonabelian case will necessarily contain gauge artifacts.  After imposing
  the WZ gauge condition, the expansion of  $\W_\alpha^a$ in terms of
  its component fields resembles the corresponding expression of 
  SUSY abelian gauge theory [cf.~\eq{Wdef}].}
when expanded in component fields,  $\W_\alpha^a$ depends only on the
physical fields, $\lambda^a$, $F^{\mu\nu a}$ and the auxiliary field $D^a$,
\beq
\W_\alpha^a=-i\lambda_\alpha^a+\theta_\alpha D^a-\half i(\sigma^\mu\sigmabar^\nu\theta)_\alpha F_{\mu\nu}^a-\sigma^\mu(\mathscr{D}_{\mu ab}\lambda^{\dagger b})_\alpha \,\theta\theta\,,\label{Wna}
\eeq
where 
\beq
\mathscr{D}_{\mu ab}\equiv\delta_{ab}\partial_\mu+g f_{abc} V_\mu^c\,,
\eeq
 is the gauge-covariant derivative in the adjoint representation, and 
\beq
F_{\mu\nu}^a=\partial_\mu V_\nu^a-\partial_\nu V_\mu^a-gf_{abc}V_\mu^b V_\nu^c
\eeq
is the nonabelian field strength tensor.

\subsection{The SUSY Lagrangian}
The Lagrangian for SUSY Yang-Mills theory coupled to supermatter is given by
\begin{align}
\begin{split}
\mathscr{L}=  \left[\half\int d^2\theta\,\Tr(\W^\alpha \W_\alpha)+{\rm h.c.}\right]+\int d^4\theta \,
\Phi^\dagger e^{2gV}\Phi 
+\left[\int d^2\theta\,W(\Phi_k)+{\rm h.c.}\right]. \label{SUSYYMLag}
\end{split}
\end{align}
In contrast to the abelian gauge theory,  no Fayet-Iliopoloulos term
is allowed since $[D^a]_D$ carries an adjoint index and thus is not
gauge invariant.  The superpotential $W(\Phi_k)$ is assumed to be a
gauge-invariant holomorphic function of the chiral superfields.  The
chiral superfields $\Phi_k$ taken together transform under a reducible
$d$-dimensional representation $R=\oplus_k R_k$ of the gauge group G,
where $d=\sum_k {\rm dim}~R_k$.
In terms of component fields, \eq{SUSYYMLag} yields
\begin{align}
\begin{split}
\mathscr{L}=&-\tfrac14 F_{\mu\nu}^a F^{\mu\nu a}+i\lambda^{\dagger a}\sigmabar^\mu(\mathscr{D}_\mu\lambda)^a+\half D^a D^a+F_i^\dagger F_i    
+(\mathcal{D}_\mu A)_i(\mathcal{D}^\mu A)_i^\dagger \\
&  +i \psi_i^\dagger
\sigmabar^\mu (\mathcal{D}_\mu \psi)_i +gA_i^\dagger T_{ij}^a A_j D^a   
+ig\sqrt{2}(A_i^\dagger T^a_{ij}\psi_j\lambda^a-\lambda^{\dagger a}\psi^\dagger_i T^a_{ij} A_j)
 \\
&
+F_i\frac{dW}{dA_i}+F^\dagger_i\left(\frac{dW}{dA_i}\right)^\dagger
-\half \frac{d^2 W}{dA_i dA_j}\psi_i\psi_j-
\half \left(\frac{d^2 W}{dA_i dA_j}\right)^\dagger\psi^\dagger_i\psi^\dagger_j  \,,
\end{split}
\label{eq:LSUSYcomponents}
\end{align}
where there is an implicit sum over repeated indices, and 
the labels $i$ and $j$ run over $1,2,\ldots, d$.
The corresponding covariant derivative, when acting
on the component fields $A_i$ and $\psi_i$, is $\mathcal{D}_\mu=\mathds{1}\partial_\mu+igT^a V_\mu^a$,
where $\mathds{1}$ is the $d\times d$ identity matrix and the generators $T^a$ are in the reducible 
representation $R$ of the group G.

Note that the interactions of the matter fermions and the gauginos
with the gauge fields are dictated by gauge invariance (via the
gauge covariant derivative) and do not depend on supersymmetry.
In contrast, the Yukawa interaction of the gaugino with the matter
fermion and its scalar partner (with a coupling proportional to the gauge
coupling $g$) is a consequence of supersymmetry, and relates
the gauge and Yukawa couplings that otherwise would be independent.

We can now eliminate the auxiliary fields $F_i$ and $D^a$ by employing the
Lagrange field equations.  We end up with
\begin{align}
F_i=-\left(\frac{dW}{dA_i}\right)^\dagger\,,\qquad\quad D^a=-gA_i^\dagger T^a_{ij} A_j\,.\label{FandD}
\end{align}
Substituting back into \eq{eq:LSUSYcomponents} yields the following scalar potential,
\begin{align}
V_{\rm scalar}=\sum_i \left|\frac{dW}{dA_i}\right|^2+\half g^2(A_i^\dagger T^a_{ij} A_j)^2\,.\label{vscalar3}
\end{align}
Equivalently, we can write:
\begin{align}
V_{\rm scalar}=\half D^a D^a+\sum_i F_i^\dagger F_i\,.\label{vscalar4}
\end{align}
\Eqs{vscalar3}{vscalar4} provide the nonabelian generalization of
\eqs{vscalar1}{vscalar2}.  As in the abelian case, $V_{\rm scalar}\geq 0$.

If we drop the requirement of renormalizability, then we can generalize the action of a SUSY-Yang Mills theory coupled to supermatter,
\beqa
\mathscr{L} &=&\half\int d^4\theta \bigl[K(e^{2gV}\Phi\,,\, \Phi^\dagger) +
K(\Phi\,,\, \Phi^\dagger e^{2gV})\bigr] 
+\left[\int d^2\theta\,W(\Phi_i)+{\rm h.c.}\right]\nn  \\
&& +\left[\tfrac14\int d^2\theta\,f_{ab}(\Phi) \W^{\alpha
    a}\W^b_\alpha+{\rm h.c.}\right] \label{susykahler}
\,,
\eeqa
where $K$ is the K\"ahler potential and $f_{ab}(\Phi)$ is a holomorphic function of the chiral superfields called the \textit{gauge kinetic function}.
In renormalizable global supersymmetry, the minimal versions of the K\"ahler potential and gauge kinetic function are used:
\begin{align}
K(e^{2gV}\Phi\,,\, \Phi^\dagger) &=
K(\Phi\,,\, \Phi^\dagger e^{2gV})=\Phi^\dagger e^{2gV}\Phi\,,\\
f_{ab}(\Phi) &=\delta_{ab}\,.
\end{align}

The generalization of the SUSY Lagrangian to a theory based on a
gauge group that is a direct product of compact simple Lie groups and
U(1) factors is straightforward.   There is a gauge field strength
tensor and a separate gauge coupling constant
corresponding to each group in the direct product.  Details are left
for the reader.

\subsection{Problems}
\begin{problem}
Show that $\W_\alpha$ is invariant under the gauge transformation of \eq{eq:Vgaugetransform}.
\end{problem}
\begin{problem}
\label{pr:invKE}
Show that the kinetic energy term given by \eq{eq:LKE} is invariant under  the gauge transformations for $\Phi$ and $\Phi^\dagger$ given in \eq{eq:Phigauge} and $V\to V+i(\Lambda-\Lambda^\dagger)$.
\end{problem}

\begin{problem}
\label{pr:SUSYQED}
Construct the full SUSY QED Lagrangian in the Wess-Zumino gauge.  Show that the physical states of the theory consist of a Dirac fermion (the ``electron''), two complex scalar ``selectrons,'' usually denoted by $\widetilde e_L$ and $\widetilde e_R$, a massless photon, and a massless photino.  
Check that the number of bosonic and fermionic degrees of freedom are equal, both off-shell and on-shell.
\end{problem}

\begin{problem}
Consider the SUSY QED theory examined in Problem~\ref{pr:SUSYQED}.
However, this time
do \textit{not} impose the Wess-Zumino gauge condition.  Instead, explore the consequences of adding the following supersymmetric gauge fixing term\cite{Ovrut:1981wa,Miller:1983pg,Dine:2016rxc}, 
\begin{align}
\mathscr{L}_{\rm GF}=-\frac{1}{8\alpha}\bigl[(D^2 V)(\overline{D}\lsup2 V)\bigr]_D\,,
\end{align}
where $\alpha$ is the gauge fixing parameter.
\end{problem}

\begin{problem}
Starting from the case where the gauge group G is nonabelian, show
that the gauge transformation law for the gauge superfield~$V$, as
deduced from \eq{eq:Vnonabelian},  reduces to
$V\rightarrow V+i(\Lambda-\Lambda^\dagger)$ in the abelian limit.
Likewise, show that $\W_\alpha$ as given in \eq{eq:Wnonabelian} reduces to
$\W_\alpha=-\tfrac{1}{4} {\Dbar}^{2} D_\alpha V $ in the abelian limit.
\end{problem}

\begin{problem}
Evaluate the contribution of the K\"ahler potential terms to the
Lagrangian given in \eq{susykahler} in terms of the component fields.
Show that your result reduces to \eq{kahler} in the limit of $g\to 0$. 
\end{problem}

\begin{problem}
Evaluate the contribution of the gauge kinetic function terms
to the
Lagrangian given in \eq{susykahler} in terms of the component fields.
How does your result simplify in the abelian limit?
\end{problem}

\begin{problem}
Starting from \eq{susykahler}, solve for the auxiliary fields $F_i$
and~$D^a$ using the Lagrange field equations.  Using these results, determine the form of the scalar
potential that generalizes the results of \eqs{vscalar3}{vscalar4}.
\end{problem}

%
\section{Supersymmetry Breaking}
\renewcommand{\theequation}{\arabic{section}.\arabic{equation}}
\setcounter{equation}{0}
\label{SSB}

If supersymmetry were an exact symmetry of nature, then particles
and their superpartners, which differ in spin by half a unit, would be
degenerate in mass.  Since superpartners have not (yet) been observed,
supersymmetry must be a broken symmetry.  In light of the
non-observation of supersymmetric particles at the LHC,
the energy scale of supersymmetry breaking must lie above 
1~TeV.  

The fundamental mechanism responsible for supersymmetry  breaking is
presently unknown. In Section~\ref{sec:originsofSUSYbreaking}, we
describe some general considerations related to SUSY breaking,
and we examine several possible frameworks for the spontaneous
breaking of SUSY.  In Section~\ref{sumrule}, we examine constraints
on mass splittings within supermultiplets in the presence of
SUSY-breaking.
The possible origins of SUSY-breaking dynamics is surveyed in Section~\ref{SUSYdynamics}. 
Finally, in Section~\ref{sec:softSUSYbreaking}, we examine a
more agnostic approach, in which the supersymmetry of the effective low energy theory at
the TeV scale is softly broken.  In such an
approach, we identify the possible soft-supersymmetry breaking terms
that can appear in the Lagrangian, without making assumptions about
their fundamental origin.

\subsection{Spontaneous SUSY breaking}
\label{sec:originsofSUSYbreaking}
In Section~\ref{SUSYalg}, we derived \eq{pzero}, which states that the energy operator $P^0$ for a supersymmetric theory is given by
\begin{align}
P^0= \frac{1}{2 t } \of{   Q_1 Q_1^\dagger+Q_1^\dagger Q_1+Q_2 Q_2^\dagger+Q_2^\dagger Q_2 }\,,\label{peezero}
\end{align}
where $t$ is real and positive (conventionally, $t=2$).  Since the right-hand side of \eq{peezero} is
positive semi-definite, it
follows that the vacuum energy is zero if and only if the vacuum is supersymmetric:
\begin{align}
\vev{0\,|P^0\,|\,0}=0\quad\Longleftrightarrow\quad Q_\alpha\ket{0}=0\,.
\end{align}
Moreover, assuming the absence of fermion condensation,\footnote{That is, we assume the absence of a
fermion bilinear covariant, with the properties of a Lorentz scalar,
that acquires a nonzero vacuum expectation value.}
the vacuum energy can be identified as the vacuum expectation value of
the scalar potential.  That is, in the case of a supersymmetric vacuum,
\begin{align}
\vev{0\,|P^0\,|\,0}=0\quad\Longleftrightarrow\quad \vev{0\,|V_{\rm scalar}\,|\,0} = 0\,.
\end{align}
To appreciate the significance of $ \vev{0\,|V_{\rm scalar}\,|\,0} = 0$,
recall \eq{vscalar4}, which we repeat below for the
convenience of the reader,
\begin{align}
V_{\rm scalar}=\half D^a D^a+\sum_i F_i^* F_i\,.\label{DDFF}
\end{align}
It follows that if the vacuum is supersymmetric, then the vacuum expectation 
values of the auxiliary fields must vanish,
\begin{align}
\vev{0\,|F_i\,|\,0}=\vev{0\,|D^a\,|\,0}=0.
\end{align}

One can reach the same conclusion by considering the transformation
laws of the field components of a superfield.
For a chiral superfield, the component fermion field transforms according to,
\begin{align}
\delta_\xi\psi_{\alpha i}=i\bigl[\xi Q+\xi^\dagger Q^\dagger\,,\,\psi_{\alpha i}\bigr]=
- i\sqrt{2}\, (\sigma^\mu \xi^\dagger)_\alpha\> \partial_\mu A_i+\sqrt{2}\,\xi_\alpha F_i\,.
\end{align}
By Lorentz invariance, $\vev{0\,|\partial_\mu A_i\,|\,0}=0$.  Hence,
\begin{align}
\vev{0\,|\bigl[\xi Q+\xi^\dagger Q^\dagger\,,\,\psi_{\alpha i}\bigr]\,|\,0}=\sqrt{2}\,\xi_\alpha\vev{0\,|F_i\,|\,0}\,.
\end{align}
Thus, if $Q_\alpha\ket{0} = 0$ and $Q^\dagger_{\dalpha}\ket{0}=0$, then $\vev{0\,|F_i\,|\,0}=0$.
Likewise, for a real vector superfield, the component gaugino field transforms according to,
\begin{align}
\deltaxi\lambda^a_\alpha=i\bigl[\xi Q+\xi^\dagger Q^\dagger\,,\,\lambda^a_\alpha\bigr]=
 i\xi_\alpha D^a+\half(\sigma^\mu\sigmabar^\nu)_\alpha{}^\beta\xi_\beta F^a_{\mu\nu}\,.
\end{align}
 Since $\vev{0\,|F_{\mu\nu}^a|\,0}=0$ (again, by Lorentz invariance), it follows that
\begin{align}
 \vev{0\,|\bigl[\xi Q+\xi^\dagger Q^\dagger\,,\,\lambda^a_\alpha\bigr]\,|\,0}=i\xi_\alpha\vev{0\,|D^a|\,0}\,.
\end{align}
Thus, if $Q_\alpha\ket{0}=0$ and $Q^\dagger_{\dalpha}\ket{0}=0$, then $\vev{0\,|D^a\,|\,0}=0$.

If at least one of the components of the auxiliary fields $F_i$ or
$D_a$ has a nonzero vacuum expectation value, then SUSY is
spontaneously broken. Mechanisms of spontaneous SUSY breaking fall into two
possible categories:
$F$-type breaking, if $\vev{0\,|F_i\,|\,0}\neq 0$ for some $i$, and
$D$-type breaking if $\vev{0\,|D^a|\,0}\neq 0$ for some $a$.

\subsubsection{The O'Raifeartaigh mechanism ($F$-type breaking)}

 One way to spontaneously break SUSY is to construct a model in which
 it is impossible to simultaneously solve the Lagrange field equations
for all the components of the auxiliary fields, $F_i$.  
This is the O'Raifeartaigh mechanism\cite{ORaifeartaigh:1975nky},\footnote{A well-known
  supersymmetric joke: a graduate student returns to the University
  for the fall semester after spending a month at TASI earlier in the summer.
  The professor says to
  the student, ``Welcome back!  I see that one of the lecture courses you attended
  at TASI was an introduction to supersymmetry.  So, did you learn
  anything useful from
  these lectures?''  The student replies, ``I learned how to spell
  O'Raifeartaigh's name.''}
where the SUSY breaking arises
entirely from a nonzero $F$-term vacuum expectation
value.\footnote{Implicitly, we are assuming here that if the $D$-term
  is present, then $\vev{D^a}=0$.}

Consider the set of equations,
\begin{align}
F_i^\dagger=-\frac{dW}{dA_i}=0\,.\label{fdag}
\end{align}
A solution to these equations corresponds to the existence of 
a choice of the scalar fields, $A_i$, such that all the
equations, $F_i^\dagger=0$, are fulfilled.   Suppose that a solution,
$A_i=v_i$, solves these equations.  In light of \eq{DDFF}, this
solution must correspond to a minimum of the scalar potential, which
we identify as the vacuum (ground) state of the theory.  Since
$F_i^\dagger=0$ implies that $F_i=0$, we can conclude
that  $\vev{0\,|F_i\,|\,0}=0$ (for all $i$).
If no solution to \eq{fdag} exists, then it must be true that
$\vev{0\,|F_i\,|\,0}\neq 0$ for some $i$.  
In this latter case, SUSY must be spontaneously broken.

The simplest O'Raifeartaigh model that exhibits $F$-term SUSY breaking
contains three chiral superfields and
is treated in Problem~\ref{pr:Oraif}.

\subsubsection{$D$-type breaking via the Fayet-Iliopoulos term}

Consider SUSY-QED with a superpotential given by \eq{Wsqed} and a Fayet-Iliopoulos term.   Using
\eqs{fsubi}{DFI}, the resulting scalar potential [\eq{vscalar2}] is given by
\beq \label{FI1}
V_{\rm
  scalar}=|F_+|^2+|F_-|^2+\half D^2\,,
\eeq
where
\beq \label{FI2}
F_\pm=-mA_\pm\,,\qquad\quad D=-g\bigl(|A_+|^2-|A_-|^2\bigr)-\xi.
\eeq
Suppose that $m^2>g\xi$.  One can check that the minimum of the scalar
potential occurs for $\vev{A_+}=\vev{A_-}=0$.   Moreover, at the scalar
potential minimum, $\vev{F_+}=\vev{F_-}=0$, whereas $\vev{D}=-\xi\neq
0$.   Thus, in this model SUSY breaking arises
entirely from a nonzero $D$-term vacuum expectation value.
Additional aspects of this model are treated in Problems~\ref{pr:FI}
and \ref{pr:FI2}.

\subsubsection{The goldstino}
\label{goldstino}

From Goldstone's theorem, we know that the spontaneous breaking of a
continuous symmetry (with bosonic generators) gives rise to a massless
boson called the Nambu-Goldstone boson. Analogously, the spontaneous breaking
of supersymmetry, whose algebra contains fermionic generators, gives rise to a
massless fermion called the Goldstone fermion, which is more
commonly known as the goldstino\cite{Salam:1974zb}.

\begin{theorem}
If SUSY is spontaneously broken, then there exists a massless spin-1/2 fermion in the spectrum called the \textit{goldstino}.
\end{theorem}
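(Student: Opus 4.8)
The plan is to prove the existence of the goldstino by constructing it explicitly from the supercurrent and using the SUSY Ward identity, then showing it must be massless. First I would observe that in any supersymmetric field theory the supercurrent $J^\mu_\alpha$ is conserved, $\partial_\mu J^\mu_\alpha = 0$, as established via Noether's theorem in the discussion following \eq{QJ}. The supercharge is $Q_\alpha = \int d^3x\, J^0_\alpha$, and the statement that SUSY is spontaneously broken means $Q_\alpha\ket{0}\neq 0$, which as shown in Section~\ref{goldstino} (and the surrounding subsections) is equivalent to $\vev{0|F_i|0}\neq 0$ for some chiral superfield or $\vev{0|D^a|0}\neq 0$ for some gauge superfield. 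I would package this into a single order parameter: there exists a fermionic field $\chi_\alpha$ (a linear combination of the $\psi_i$ and $\lambda^a$) and a nonzero constant $f$ such that
\beq
\vev{0|\,\delta_\xi \chi_\alpha\,|0} = \sqrt{2}\,\xi_\alpha f \neq 0\,,
\eeq
since $\delta_\xi\psi_{\alpha i} = -i\sqrt2(\sigma^\mu\xi^\dagger)_\alpha\partial_\mu A_i + \sqrt2\,\xi_\alpha F_i$ and $\delta_\xi\lambda^a_\alpha = i\xi_\alpha D^a + \tfrac12(\sigma^\mu\sigmabar^\nu)_\alpha{}^\beta\xi_\beta F^a_{\mu\nu}$, and the derivative/field-strength pieces have vanishing vacuum expectation value by Lorentz invariance.

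The heart of the argument is then the Ward identity. Using $\delta_\xi\chi_\alpha = i[\xi Q + \xi^\dagger Q^\dagger, \chi_\alpha]$ and $Q_\alpha = \int d^3x\, J^0_\alpha$, I would write
\beq
\vev{0|\,[Q_\beta,\chi_\alpha(0)]\,|0} \;\sim\; \sqrt2\, f\, \epsilon_{\alpha\beta}\,,
\eeq
i.e. the equal-time commutator of the supercharge density with $\chi_\alpha$ has a nonzero vacuum expectation value. Inserting a complete set of states between $J^0_\beta$ and $\chi_\alpha$, the only way a constant (momentum-independent, as $p\to 0$) term can survive on the right-hand side is if there is a one-particle state $\ket{\tilde G(p)}$ that couples to both operators with a pole: $\vev{0|J^\mu_\alpha(0)|\tilde G(p)} \propto f\,(\sigma^\mu)_{\alpha\dot\beta}$-type structure with no suppression as $p^\mu\to 0$, and $\vev{0|\chi_\alpha(0)|\tilde G(p)}\neq 0$. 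A constant (non-vanishing at zero momentum) contribution to $\int d^3x\,\vev{0|T J^0_\beta(x)\chi_\alpha(0)|0}$ forces the intermediate state to be gapless — otherwise the spectral integral would be analytic and vanish in the $p\to 0$ limit — so this state is a massless spin-$\tfrac12$ particle, the goldstino. Equivalently, and perhaps more cleanly for the write-up, I would work directly with the two-point function $\vev{0|T J^\mu_\alpha(x) J^{\dagger\nu}_{\dot\beta}(0)|0}$, whose conservation $\partial_\mu(\cdots)=0$ together with the nonzero value of $\{Q_\alpha,Q^\dagger_{\dot\beta}\}$-type contact terms (the $2E_0\sigma^0_{\alpha\dot\beta}$ piece in \eq{revisedQQ}) forces a massless pole in the spin-$\tfrac12$ channel of the supercurrent correlator.

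The main obstacle I anticipate is making the pole argument rigorous without circularity: one must argue that the spectral representation of the supercurrent correlator genuinely contains a massless one-particle contribution, rather than, say, a massless multiparticle continuum mimicking it, and that the coupling $f$ is nonzero precisely when $Q_\alpha\ket{0}\neq 0$. I would handle this by the standard Goldstone-theorem template: assume no massless particle couples to $J^\mu_\alpha$, derive from conservation plus Lorentz covariance that $\vev{0|\,[Q_\beta,\chi_\alpha]\,|0}$ must vanish, and contradict the order-parameter equation above. A secondary subtlety — worth at least a remark — is the infrared one already flagged in the excerpt after \eq{Qint}: when SUSY is spontaneously broken $Q_\alpha$ is strictly speaking not well-defined on the Hilbert space, so the clean statement is that the \emph{current} $J^\mu_\alpha$ is conserved and the finite-volume charges create the zero-momentum goldstino state of infinite norm; the manipulations above should be read as shorthand for the corresponding current-algebra / Ward-identity statements, which is the honest way to present the proof. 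I would close by identifying the goldstino field concretely as $\tilde G_\alpha \propto \big(\vev{F_i}\psi_{\alpha i} + \tfrac{1}{\sqrt2}\vev{D^a}\lambda^a_\alpha\big)/f$ with $f^2 = \sum_i|\vev{F_i}|^2 + \tfrac12\sum_a\vev{D^a}^2$, which also makes manifest that $f\neq 0 \iff$ SUSY is broken and ties back to $E_0$ of \eq{Ezero}.
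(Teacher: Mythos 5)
Your proposal is correct in outline, but it takes a genuinely different route from the one adopted in the text. You argue nonperturbatively: the conserved supercurrent $J^\mu_\alpha$, the SUSY variation $\vev{0|\delta_\xi\chi_\alpha|0}\propto \xi_\alpha f\neq 0$ as order parameter, and a Ward-identity/spectral argument forcing a massless pole in the supercurrent correlator, with the infrared caveat (that $Q_\alpha$ itself is ill-defined when SUSY is broken, cf.\ the discussion below \eq{Qint}) handled by phrasing everything in terms of currents. This is precisely the ``rigorous, independent of perturbation theory'' proof that the lectures allude to but deliberately do not present; instead, the text gives an instructive tree-level argument in a SUSY nonabelian gauge theory coupled to supermatter: it writes the fermion mass matrix [\eq{fmatrix}] after setting scalars to their vacuum expectation values, and then uses two algebraic inputs---the stationarity condition of the scalar potential [\eq{eq:FD}] and the gauge invariance of the superpotential [\eq{FTa}]---to show that $\bigl(\vev{F_j},\tfrac{1}{\sqrt2}\vev{D^a}\bigr)$ is an exact zero eigenvector [\eq{eigen}], whence $\widetilde G=\vev{F_j}\psi_j-\tfrac{i}{\sqrt2}\vev{D^a}\lambda^a$. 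Your route buys generality (no restriction to tree level, renormalizable couplings, or a particular model) and makes the connection to $E_0$ of \eq{Ezero} transparent, at the price of the standard Goldstone-type subtleties you correctly flag (excluding a gapless continuum masquerading as the pole, and establishing the spin-$\half$, one-particle nature of the state from Lorentz covariance of $\vev{0|J^\mu_\alpha|\widetilde G(p)}$), which your plan sketches but does not fully execute; the paper's route buys an explicit, elementary identification of the goldstino's composition in terms of $\vev{F_i}$ and $\vev{D^a}$, at the price of being a tree-level, model-specific check rather than a general theorem. Your closing identification of $\widetilde G$ agrees with the paper's up to phase conventions and normalization, so the two arguments are mutually consistent and complementary.
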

\begin{proof}[Proof]
Although this theorem can be proven rigorously, independently of
perturbation theory, it is instructive to exhibit a proof based on a
tree-level analysis of a SUSY nonabelian gauge theory coupled to supermatter.
The scalar potential is given by \eq{DDFF} where [cf.~\eq{FandD}],
\begin{align}
F_i=-\left(\frac{dW}{dA_i}\right)^\dagger\,,\qquad\quad D^a=-gA_i^\dagger T^a_{ij} A_j\,.\label{FiDa}
\end{align}
At the scalar potential minimum, where $\partial V/\partial A_j=0$, the scalar fields are equal to their vacuum expectation values, $A_j=\vev{A_j}$.
Then,
\begin{align}
0=\left(\frac{ \partial V}{\partial A_j}\right)_{\vev{A}}=-gA_i^\dagger T^a_{ij} D^a\biggl|_{\vev{A}}-\sum_i
\frac{ \partial^2 W}{\partial A_i\partial A_j}F_i\biggl|_{\vev{A}}\,.
\end{align}
Hence,
\begin{align}
\sum_i
\left\langle\frac{ \partial^2 W}{\partial A_i\partial A_j}\right\rangle\vev{F_i}=-g\vev{A_i}^\dagger T^a_{ij} \vev{D^a}\,.
\label{eq:FD}
\end{align}

The superpotential $W$ must be a gauge invariant function of the
chiral superfields.  That is,
\beq
W(\Phi)=W(e^{-2ig\Lambda}\Phi)\,.
\eeq
where $\Lambda\equiv \Lambda^a T^a$ is the matrix chiral superfield gauge
transformation parameter.
Taking $\Lambda^a$ infinitesimal and expanding to first order yields
\beq
\frac{dW}{d\Phi_i}T^a_{ij}\Phi_j=0\,.
\eeq
Evaluating the hermitian conjugate of this expression, setting $\theta=\theta^\dagger=0$, and
taking the vacuum expectation value of the resulting equation, we end up with
\beq \label{FTa}
\vev{F_i}T^a_{ji}\vev{A_j}^\dagger=0\,.
\eeq

The fermion masses can be determined from the SUSY Lagrangian given by
\eq{eq:LSUSYcomponents} after setting the scalar fields to their
vacuum expectation values,
\begin{align}
-\mathscr{L}_{\rm mass} =&\half \left\langle\frac{ \partial^2 W}{\partial A_i\partial A_j}\right\rangle \psi_i\psi_j
-i\sqrt{2}\,g\vev{A_i}^\dagger T^a_{ij} \psi_j\lambda^a+{\rm h.c.}
\\
=&\half\bigl(\psi_i\quad -i\lambda^b\bigr)
\begin{pmatrix} \left\langle\displaystyle\frac{ \partial^2 W}{\partial A_i\partial A_j}\right\rangle & \quad
\sqrt{2}\,g\vev{A_j}^\dagger T^a_{ji} \\[25pt] \sqrt{2}\,g\vev{A_i}^\dagger T^b_{ij} & \quad 0\end{pmatrix}
\begin{pmatrix} \psi_j \\[25pt] -i\lambda^b\end{pmatrix}\,.\label{fmatrix}
\end{align}

Using \eqs{eq:FD}{FTa}, one can verify that the fermion mass matrix
given in \eq{fmatrix} possesses a zero eigenvalue,
\begin{align}
\begin{pmatrix} \left\langle\displaystyle\frac{ \partial^2 W}{\partial A_i\partial A_j}\right\rangle & \quad
\sqrt{2}\,g\vev{A_j}^\dagger T^a_{ji} \\[25pt] \sqrt{2}\,g\vev{A_i}^\dagger T^b_{ij} & \quad 0\end{pmatrix}
\begin{pmatrix} \vev{F_j} \\[25pt] \frac{1}{\sqrt{2}}\vev{D^a}\end{pmatrix}=0\,,\label{eigen}
\end{align}
under the assumption that at least one of the auxiliary field vacuum
expectation values is nonzero.
The corresponding eigenvector, $\of{  \vev{F_j},
  \tfrac{1}{\sqrt{2}}\vev{D^a} }$, can be identified with the massless
goldstino, $\widetilde{G}$.  That is,
\begin{align}
\widetilde G=\vev{F_j}\psi_j-\frac{i}{\sqrt{2}}\vev{D^a}\lambda^a\,.
\end{align}

The existence of the goldstino in the fermion mass spectrum is a
consequence of the assumption that the vacuum is not invariant under
SUSY transformations, in which case at least one of the auxiliary field vacuum
expectation values is nonzero, as assumed below \eq{eigen}.
In contrast, if the vacuum is supersymmetric, then $\vev{F_j}=\vev{D^a}=0$, in
which case \eqs{eq:FD}{FTa} are trivially
satisfied.  Hence in this case, one cannot conclude that a zero
eigenvalue of the fermion mass matrix exists.    
\end{proof}

\subsection{Mass Sum rules}
\label{sumrule}
 
If SUSY is broken, then there is no expectation that particles in a
would-be supermultiplet are degenerate in mass.  If the SUSY breaking
is spontaneous, then there is still some memory of supersymmetry
in the properties of the SUSY-broken theory.  In particular, the mass spectrum
of the spontaneously broken SUSY theory satisfies certain sum rules that
reflect the fact the spontaneous breaking of the supersymmetry is inherently soft\cite{Ferrara:1979wa}.

To exhibit such sum rules, we return to the Lagrangian of the SUSY
nonabelian gauge theory coupled to supermatter
given in \eq{eq:LSUSYcomponents}.  We  set the scalar fields and the
auxiliary fields to their vacuum expectation values and compute the
resulting tree-level mass spectrum.

The spin-1 masses arise from
\begin{align}
\mathscr{L}_{\rm mass}= (\mathcal{D}_\mu A)(\mathcal{D}^\mu A)^\dagger,
\end{align}
 where 
$\mathcal{D}_\mu=\partial_\mu+igT^a V^a_\mu$.   It is convenient to write the gauge boson squared-mass matrix as follows,
\begin{align}
(M^2_1)_{ab}=2g^2\vev{A^\dagger_i}T^a_{ij}T^b_{jk}\vev{A_k}=
2\left\langle \frac{\partial D^a}{\partial A_k^\dagger}\frac{\partial D^b}{\partial A_k}\right\rangle\,,
\end{align}
where we have made use of $D^a=-gA_i^\dagger T^a_{ij}A_j$ [cf.~\eq{FandD}].
Likewise, we can rewrite the spin-1/2 mass matrix [previously obtained
in \eq{fmatrix}] as,
\begin{align}
 M_{\scalebox{.8}{$\tfrac12$}}=\begin{pmatrix} \left\langle -\displaystyle\frac{ \partial F_i^\dagger}{\partial A_j}\right\rangle & \quad
-\sqrt{2}\, \displaystyle\left\langle\frac{ \partial D^a}{\partial A_i}\right\rangle\\[25pt]
-\sqrt{2}\, \displaystyle\left\langle\frac{ \partial D^b}{\partial A_j}\right\rangle& \quad 0\end{pmatrix}\,.
\end{align}

The spin-0 masses arise from the scalar potential, $V\equiv V_{\rm scalar}$. Identifying the
terms quadratic in the scalar field,
\beq
-\mathscr{L}_{\rm mass}=\frac12\bigl(A_i\quad A_j^\dagger\bigr)\begin{pmatrix}
  \displaystyle\left\langle{\frac{\partial^2 V}{\partial A_i\partial A_k^\dagger}}\right\rangle\qquad 
 \displaystyle\left\langle{\frac{\partial^2 V}{\partial A_i\partial A_\ell}}\right\rangle \\[15pt]
 \displaystyle\left\langle{\frac{\partial^2 V}{\partial A_j^\dagger\partial A_k^\dagger}}\right\rangle\qquad 
 \displaystyle\left\langle{\frac{\partial^2 V}{\partial A_j^\dagger\partial
     A_\ell}}\right\rangle\end{pmatrix}\begin{pmatrix} A_k^\dagger \\[25pt]
   A_\ell \end{pmatrix}\,.
\eeq
The scalar squared-mass matrix given above will be denoted by $M_0^2$.
 
The elements of the scalar squared-mass matrix can be
rewritten in terms of derivatives of the auxiliary fields $F_i$ and
$D^a$.   For example, noting that 
\eq{FiDa} implies that $F$
is a function of $A^\dagger$ (and likewise, $F^\dagger$ is a
function of $A$), then it follows from \eq{DDFF} that 
\beqa
\frac{\partial^2 V}{\partial A_i\partial
      A_k^\dagger}
&=& \frac{\partial  F^\dagger_m}{\partial A_i}\frac{\partial
  F_m}{\partial A^\dagger_k} +\frac{\partial D^a}{\partial
  A_k^\dagger}\frac{\partial D^a}{\partial
  A_i}+D^a\,\frac{\partial^2 D^a}{\partial A_k^\dagger\partial A_i}\,.
\eeqa

One can now evaluate the trace of the various squared-mass
matrices,
\beqa
\Tr M_1^2&=&2\left\langle \frac{\partial D^a}{\partial
    A_k^\dagger}\frac{\partial D^a}{\partial A_k}\right\rangle\,,  \\
\Tr M_{\scalebox{.8}{$\tfrac12$}}^\dagger M_{\scalebox{.8}{$\tfrac12$}}^{\phantom{\dagger}}&=&\left\langle \frac{\partial  F_i}{\partial A_k^\dagger}\frac{\partial
  F_i^\dagger}{\partial A^\dagger_k}\right\rangle +4\left\langle \frac{\partial D^a}{\partial
  A_k^\dagger}\frac{\partial D^a}{\partial A_k}\right\rangle\,,  \\
\Tr M_0^2&=& 2 \left\langle \frac{\partial  F^\dagger_i}{\partial A_k}\frac{\partial
  F_i}{\partial A^\dagger_k}\right\rangle +2\left\langle \frac{\partial D^a}{\partial
  A_k^\dagger}\frac{\partial D^a}{\partial
    A_k}\right\rangle+2\left\langle D^a\frac{\partial^2 D^a}
  {\partial A^\dagger_k \partial A_k}\right\rangle, \nn \\
\phantom{line} \label{trmzero}
\eeqa
where there are implicit sums over each pair of repeated indices.
We can simplify the last term of \eq{trmzero} using $D^a=-gA_i^\dagger
T^a_{ij}A_j$ to obtain.
\begin{align}
\Tr M_0^2=2\left\langle \frac{\partial F_i}{\partial A_k^\dagger}\frac{\partial F_i^\dagger}{\partial A_k}\right\rangle\
+2\left\langle \frac{\partial D^a}{\partial A_k^\dagger}\frac{\partial D^a}{\partial A_k}\right\rangle-2g\vev{D^a}\Tr T^a\,.
\end{align}
It then follows that
\beq \label{masssumrule}
\Tr(M_0^2-2M_{\scalebox{.8}{$\tfrac12$}}+3M_1^2)=-2g\vev{D^a}\Tr
T^a\,.
\eeq

We recognize the left-hand side of \eq{masssumrule} as a
supertrace, which is defined as the following weighted sum of traces,
\beqa
&&\phantom{line}\nn \\[-10pt]
&&\Str M^2\equiv \sum_J (-1)^{2J} (2J+1) \Tr M_J^2\,,\label{stracedef}
\eeqa
where 
$M_J^2$ is the squared-mass matrix of \textit{real} spin-$J$
fields.\footnote{Note that complex fields are equivalent to two
  mass-degenerate real fields.}
Note the $(-1)^{2J}$
factor, so that bosons contribute positively and fermions negatively
to the sum over $J$. 
As applied to a SUSY nonabelian gauge theory coupled to supermatter, the sum is taken over $J=0$, $\half$ and 1. 
Hence, \eq{masssumrule} assumes the following simple form,
\beq \label{eq:supersumrule}
{\rm Str}~M^2=-2g\vev{D^a}\Tr T^a\,.
\eeq

The mass sum rule can 
provide a useful check on the phenomenological viability of theories
with tree-level spontaneous supersymmetry breaking. 
Let us now see how this applies in several cases.

\subsection{The origin of SUSY-breaking dynamics}
\label{SUSYdynamics}
\subsubsection{Models of tree-level spontaneous SUSY breaking}
In the case of $F$-type breaking (\textit{i.e.}, the O'Raifeartaigh model), in which  $\vev{F_i}\neq 0$ and $\vev{D^a}=0$,
\eq{eq:supersumrule} yields 
\beq \label{strzero}
{\rm Str}~M^2=0\,.
\eeq
For example, consider the matter sector of SUSY-QED, which contains two chiral
supermultiplets [cf.~\eq{Wsqed}].  The corresponding spectrum contains a
four-component Dirac electron and its two complex scalar superpartners, the
selectrons (denoted by $\widetilde e_1$ and $\widetilde e_2$).  If SUSY
is spontaneously broken by an $F$-term vacuum expectation value, then \eq{strzero}
yields
\begin{align}
m_{\tilde{e}_1}^2 + m_{\tilde{e}_2}^2 = 2 m_e^2 ,
\end{align}
so that one selectron would be heavier than the electron and the other
selectron would be lighter than the electron.  Clearly, this is very
bad for phenomenology, since experiment demands that all superpartner
masses must be significantly heavier than their SM counterparts.

Consider next $D$-type breaking with $\vev{F_i}=0$ and $\vev{D^a}\neq 0$ in a nonabelian gauge theory.
In this case, $\Tr T^a=0$ and we again conclude that $\Str M^2=0$.
However, it turns out that when the scalar potential is minimized, it
is always possible to find a vacuum in which $\vev{D^a}=0$.  Hence,
$D$-term SUSY-breaking is not possible in this case (see Problem~\ref{pr:holo}).

Finally, consider $D$-type breaking in a gauge theory with a U(1)
factor.  The Standard Model provides an example of this case.  But in the
Standard Model, the hypercharge generator satisfies $\Tr Y=0$ when
summed over one generation of matter.  Hence we again find that $\Str
M^2=0$.  It is possible to construct models of $D$-type SUSY breaking
via the Fayet-Iliopoulos term~$\xi$.  In such models, $\vev{D}$ is
proportional to $\xi$, as shown below \eq{FI2}.  However, no realistic models of this type are known.

Based on the above considerations, we conclude that the mass sum rule
severely constrains tree-level SUSY-breaking models.  Indeed, no
phenomenologically realistic tree-level spontaneously broken SUSY model has ever been
successfully constructed.  

\subsubsection{Gauge-mediated SUSY breaking}
One way to avoid the tyranny of the mass
sum rule is to consider models in which the radiative corrections to
the tree-level masses are significant.  In general, there is no reason
why the radiative corrections should respect the tree-level relations
derived in Section~\ref{sumrule}.   For example, one can construct models with two distinct
sectors of supermatter, which are coupled by the exchange of gauge
bosons.  The particles of the Standard Model (SM) reside in one of the
supermatter sectors, whereas the source of SUSY-breaking (SSB) is located in
the second supermatter sector, whose characteristic mass scale, $M_{\rm
SSB},$ is
assumed to be significantly above 1~TeV.   Indeed, in this second supermatter
sector, the masses of particles and their superpartners are split due
to SUSY-breaking, while respecting the tree-level mass sum rule
obtained in \eq{eq:supersumrule}.   In this case, tree-level SUSY-breaking
is phenomenologically viable in light of the large
characteristic mass scale $M_{\rm SSB}$ that governs the SSB sector.

In such a setup, SUSY is unbroken in
the SM sector at tree level, in which case  $\Str M^2=0$ is trivially
satisfied (see Problem~\ref{pr:exact}).   However, there exist
radiative corrections to the sum rule induced by loops involving the
supermatter of the SSB sector.  These corrections are
responsible for SUSY-breaking in the SM sector
and the corresponding
mass splitting between the SM particles and their superpartners.
Moreover, these mass splittings are totally radiative in nature and not
subject to the tree-level sum rule of \eq{eq:supersumrule}.   Models
can easily be constructed in which the masses of the SM superpartners
are all raised above 1 TeV,  thereby avoiding conflict with the current LHC searches.  
 The end result is SUSY-breaking in the SM that is
phenomenologically viable.

In the scenario outlined above, SUSY-breaking is communicated to the
SM-sector via a messenger mechanism, in which the messengers consists
of gauge bosons that couple both to the SM sector and the
SSB sector.  Models of this type provide examples of
gauge-mediated SUSY breaking (GMSB).
Details of GMSB model building lie beyond the scope of these lectures.
For further details, you may consult Refs.~\cite{Giudice:1998bp,Luty:2005sn,Shirman:2009mt}.

\subsubsection{Local supersymmetry and the super-Higgs mechanism}

Another way of evading the tyranny of the mass sum rule is to
consider models with \textit{local} supersymmetry.

In these lectures, we  have focused on theories with global
supersymmetry, where the anticommuting SUSY translation parameter
$\xi$ is independent of the position $x$.  Suppose we attempt to
generalize this to local supersymmetry, where $\xi=\xi(x)$.  Since
the spinorial SUSY generators satisfy
$\{Q_\alpha\,,\,\overline{Q}_{\dbeta}\}=2\sigma^\mu_{\alpha\dbeta}P_\mu$, 
a theory of local supersymmetry must also be invariant under local
spacetime translations, in which the translation depends on the
position.  A theory that possesses a local spacetime translation
symmetry is a theory of gravity!  Hence, a locally supersymmetric
theory is a theory of gravity plus supersymmetry, \textit{i.e.}, supergravity\cite{sugra1,sugra2}.
 
We have already encountered the massless supermultiplet that contains
the spin-3/2 gravitino and the spin-2 graviton. 
Suppose we couple this supermultiplet to ordinary supermatter.  In
addition, suppose that the local supersymmetry is broken, which will
generate a mass splitting within the graviton supermultiplet.  
We require that the graviton remain massless, while the gravitino acquires
mass.   This can be accomplished via the super Higgs mechanism\cite{Deser:1977uq,Cremmer:1978iv}.

We have seen in Section~\ref{goldstino} that in models of
spontaneously-broken global supersymmetry, the spectrum includes a
massless goldstino.
In models of spontaneously-broken supergravity, the goldstino is ``absorbed''
by the gravitino via the super-Higgs mechanism.  
Initially, a massless gravitino possesses only two helicity states,
$\lambda=\pm\tfrac32$.  In the super-Higgs mechanism, the goldstino
provides $\lambda=\pm\half$ helicity states for a massive gravitino.  
That is, the goldstino is removed from the
physical spectrum and the gravitino acquires a mass
(denoted by $m_{3/2}$).  The gravitino now possesses the four
helicity states, $\lambda=\pm\tfrac32$, $\pm\half$, as expected
for a massive spin-$\tfrac32$ particle.

 In spontaneously broken supergravity, the tree-level mass sum rule
 obtained in \eq{eq:supersumrule} is modified.  For example, if $N$ chiral supermultiplets are minimally coupled to supergravity, then\cite{Cremmer:1982en},
 \begin{align}
 \Str M^2= (N-1)(2m_{3/2}^2-\kappa \vev{D^a D^a})-2g\vev{D^a} T^a\,,
\end{align}
 where
 $\kappa=(8\pi G_N)^{1/2}=(8\pi)^{1/2}M_{\rm PL}^{-1}$.  Typical models of interest have $\vev{D^a}=0$, in which case\cite{Cremmer:1982wb} ,
 \begin{align}
  \Str M^2= 2(N-1)m_{3/2}^2\,.
  \end{align}
  If $m_{3/2}\gsim \mathcal{O}(1~{\rm TeV})$, then one expects the superpartner masses of SM particles to lie in the TeV regime.
  
\subsubsection{Gravity-mediated SUSY-breaking}

Consider again the framework of two distinct sectors of supermatter that
are initially uncoupled.   We identify one of the sectors as the
SM sector where the SM particles and their superpartners reside.
In the second so-called ``hidden''  sector, SUSY is spontaneously
broken.  
  
Supergravity models provide a natural mechanism for
transmitting the SUSY breaking of the hidden sector to the
particle spectrum of the SM sector.  In models of gravity-mediated
SUSY breaking, gravity is the messenger of
supersymmetry breaking\cite{Nilles:1983ge ,Hall:1983iz}.
More precisely, SUSY breaking in the SM sector is mediated by effects of
gravitational strength (suppressed by inverse powers of the Planck mass).
The induced mass splittings between the SM particles and their superpartners
are of $\mathcal{O}(m_{3/2})$, whereas the gravitino couplings are
roughly gravitational in strength.

Under certain theoretical assumptions
on the structure of the K\"ahler potential (the so-called sequestered form
introduced in Ref.\cite{Randall:1998uk}), SUSY breaking is due
entirely to the super-conformal (super-Weyl) anomaly,
which is common to all supergravity models.
This approach is called anomaly-mediated supersymmetry breaking (AMSB).
Indeed, anomaly mediation is more generic than originally conceived,
and provides a ubiquitous source of supersymmetry breaking\cite{DEramo:2012vvz,Harigaya:2014sfa}.

  \subsection{A phenomenological approach: soft SUSY-breaking}
  \label{sec:softSUSYbreaking}
 
 If SUSY-breaking arises due to gauge-mediated SUSY-breaking or
 gravity-mediated SUSY-breaking, then we can formally integrate out
 the SSB sector physics at the mass scale $M_{\rm SSB}$ that
 characterizes the fundamental SUSY-breaking dynamics.  For example, in the case
 of gravity-mediated SUSY breaking, we identify $M_{\rm SSB}=M_{\rm PL}$.   In
 GMSB models, $M_{\rm SSB}$ can be much smaller than $M_{\rm PL}$ but still
 significantly larger than the scale of electroweak symmetry breaking.

 The end result is an effective broken supersymmetric theory whose Lagrangian consists  
 of supersymmetric terms and explicit SUSY-breaking terms.
The explicit SUSY-breaking terms that are present in the effective low-energy
theory (which is valid at energy scales below $M_{\rm SSB}$) are ``soft.''
 The meaning of soft in this context will be explained shortly.
 
 The phenomenological approach to SUSY-breaking takes the point of
 view that the fundamental dynamics of SUSY-breaking is unknown.
 Therefore, we should simply parameterize SUSY breaking in the
 low-energy effective theory
 by including all possible soft-SUSY-breaking terms.  The coefficients
 of these terms will be taken to be arbitrary (to be determined by
 experiment).  Ultimately, these parameters will provide clues to the
 structure of the fundamental dynamics that is responsible for SUSY-breaking.
 
 \subsubsection{A catalog of soft-SUSY-breaking terms}
\label{GGrules}
The most general set of soft-SUSY-breaking terms in a super-Yang Mills theory coupled to supermatter
was first elucidated by Girardello and Grisaru in Ref.\cite{Girardello:1981wz},
\begin{align}
-\mathscr{L}_{\rm soft}=m_{ij}^2 A_i^\dagger A_j+\half\bigl[m_{ab}\lambda^a\lambda^b+{\rm h.c.}\bigr]+\bigl[w(A)+{\rm h.c.}\bigr]\,,\label{GGsoft}
\end{align}
where there is an implicit sum over repeated indices.  The scalar squared-mass matrix $m_{ij}^2$ is hermitian and the gaugino mass matrix $m_{ab}$ is complex symmetric.  The function
$w(A)$ is a holomorphic cubic multinomial of the scalar fields,
\begin{align}
w(A)=c_i A_i+b_{ij}A_i A_j+a_{ijk}A_i A_j A_k\,.
\end{align}
Note that $c_i=0$ in the absence of any gauge singlet fields.  In the
literature, the $b_{ij}$ are called the $B$-terms and the $a_{ijk}$
are called the $A$-terms.  Note the corresponding mass dimensions, $[b_{ij}]=2$ and $[a_{ijk}]=1$.

Dimension-4 terms are not included in \eq{GGsoft}, since non-supersymmetric
dimension-4 terms would constitute a hard breaking of supersymmetry\cite{Martin:1999hc}.
One interesting feature of \eq{GGsoft} is the absence of
non-supersymmetric fermion mass terms, $m_{ij}\psi_i\psi_j+{\rm
  h.c.}$, and non-holomorphic cubic terms in the scalar fields (e.g.,
$A_i A_j A_k^\dagger$, etc.).  Although such terms are technically soft 
in models with no gauge
singlets\cite{Hall:1990ac,Jack:1999ud,Un:2014afa,Chattopadhyay:2016ivr,Ross:2016pml},
theses terms rarely arise in
actual models of fundamental SUSY-breaking, or if present are highly
suppressed\cite{Martin:1999hc}.  Henceforth, we shall
neglect them.

In general, there is no relation between $w(A)$ and the
superpotential, which under the assumption of renormalizability has
the following generic form, 
\begin{align}
W(\Phi)=\kappa_i\Phi_i+\mu_{ij}\Phi_i\Phi_j+\lambda_{ijk}\Phi_i\Phi_j\Phi_k\,.
\end{align}
But, some models of fundamental SUSY breaking yield the relations,
\begin{align}
c_i=C\kappa_i\,, \qquad\quad b_{ij}=B\mu_{ij}\,,\qquad\quad a_{ijk}=A\lambda_{ijk}\,,
\end{align}
which relate the coefficients of $w(A)$ to the coefficients of $W(\Phi)$.

 \subsubsection{Soft vs.~hard SUSY breaking and the reappearance of quadratic divergences}
 
 Consider the one-loop effective potential for a gauge theory coupled to matter,
\begin{align}
 V_{\rm eff}(A)=V_{\rm scalar}(A)+V^{(1)}(A)\,.
 \end{align}
 If we regulate the divergence of the one-loop correction by a
 momentum cutoff $\Lambda$, then\cite{HaberTASI}
\begin{align}
 V^{(1)}(A)=\frac{\Lambda^2}{32\pi^2}\Str
  M_i^2(A)+\frac{1}{64\pi^2}\Str\left\{M_i^4(A)\left[\ln\left(\frac{M_i^2(A)}{\Lambda^2}\right)-\frac12\right]\right\}\,,\label{effpot} 
 \end{align}
 where $M_i^2(A)$ are the relevant squared-mass matrices for spin 0,
 $\half$ and 1, in which the scalar vacuum expectation values are
 replaced by the corresponding scalar fields, $A$.

\Eq{effpot} implies that both in supersymmetric theories and in the case
of spontaneously broken SUSY  (assuming
 in the latter that all U(1) generators are traceless), we have
 $\Str M^2=0$, in which case the quadratic divergences [i.e., the
 terms proportional to $\Lambda^2$ in \eq{effpot}] cancel exactly!
 In Ref.\cite{Girardello:1981wz}, Girardello and Grisaru showed that if
 explicit SUSY breaking terms are present, then there 
is a catalog of possible explicit SUSY-breaking terms for which
$\Str M_i^2(A)$ is a constant \textit{independent} of the scalar
fields, $A$.  Such terms shift the vacuum energy, but in the context
of quantum field theory they have no observable effect.  Terms with
such properties are deemed ``soft,'' and are given in
\eq{GGsoft}.\footnote{Non-holomorphic cubic terms and mass terms of
  fermions that reside in a chiral supermultiplet can generate
  quadratically divergent terms in $V^{(1)}$ that are linear in the
  scalar fields, $A$.  However, if no gauge singlet fields exist in the
  model, then terms that are linear in $A$ are absent due to gauge invariance.} 
In contrast, hard SUSY-breaking terms will generate quadratically
divergent terms in $V^{(1)}$ that are scalar-field-dependent.
This is a signal that some of the parameters of the low-energy effective theory
are quadratically sensitive to UV physics.

\subsubsection{Soft SUSY-breaking: an effective theory perspective}
  
Consider a set of light chiral superfields $\Phi$ and a set of heavy
chiral superfields $\Omega$ associated with a mass scale $M\equiv
M_{\rm SSB}$.   Furthermore, assume that SUSY-breaking is generated by
an $F$-term that resides in the SSB sector,
\begin{align}
 \vev{F_\Omega}=f\neq 0\,.
 \end{align}
One can integrate out the physics of the SSB sector, as shown in the
following examples\cite{Girardello:1981wz,Pomarol:1995np,Rattazzi:1995tc}.

\begin{example}
Consider a holomorphic cubic multinomial of chiral superfields $\Phi$,
which we denote by $\widetilde{w}(\Phi)$.
A possible term in the effective Lagrangian is
\begin{align}
 \frac{1}{M}\int d^2\theta\, \Omega\, \widetilde{w}(\Phi)\,,\label{Ow}
 \end{align}
 since $\Omega\,  \widetilde{w}(\Phi)$ is a term in the
 superpotential. 
The factor of $M^{-1}$ appears on the basis of dimensional analysis.
In particular, note the mass dimensions, $[\widetilde{w}]=3$, $[\Omega]$=1 and $[\int d^2\theta]=1$.  
 
 Since the vacuum expectation value of ${F_\Omega}$, denoted by
$\vev{F_\Omega}=f$, is nonzero, it follows that $\vev{\Omega}\ni \theta\theta f$.  Inserting this into \eq{Ow} yields,
 \begin{align}
  \frac{1}{M}\int d^2\theta\, \theta\theta f\, \widetilde{w}(\Phi)=\frac{f}{M}\,\widetilde{w}(A)\,,
  \end{align}
  which produces the term,  $w(A)=(f/M)\widetilde{w}(A)$, in our
  catalog of $\delta\mathscr{L}_{\rm soft}$ given in \eq{GGsoft}.
  
In order to achieve soft-SUSY-breaking masses in the low-energy
effective theory of order 1~TeV, one must require that
$f/M\sim\mathcal{O}(1~{\rm TeV})$.  For example, in gravity-mediated SUSY breaking, $M\sim M_{\rm PL}$, in which case $f\sim (10^{11}~{\rm GeV})^2$.  Note that $f^{1/2}$ identifies the energy scale of the fundamental SUSY breaking.
 \end{example}
\begin{example}
Another possible term in the effective Lagrangian is
\begin{align}
 \frac{1}{M^2}\int d^4\theta\,\Phi_i^\dagger \left(e^{2gV}\right)_{ij}\Phi_j\,\Omega^\dagger \Omega\,,
 \end{align}
which would contribute to the K\"ahler potential.  Setting $\vev{\Omega}=\theta\theta f$ and evaluating the result in the Wess-Zumino gauge,
 \begin{align}
 \frac{f^2}{M^2}\int d^4\theta\,(\theta\theta)(\thetabar\thetabar)\Phi_i^\dagger \left(e^{2gV}\right)_{ij}\Phi_j=\frac{f^2}{M^2}A^\dagger A\,.
 \end{align}
Thus, the low-energy effective theory contains a scalar squared-mass
term of order $f/M$, which we again recognize as one of the
soft-SUSY-breaking terms of \eq{GGsoft}.

 \end{example}
 \begin{example}
 Finally, one additional possible term in the effective Lagrangian is
 \begin{align}
  \frac{1}{M}\int d^2\theta\,\Omega\Tr(W^\alpha W_\alpha)\,,
  \end{align}
  which would contribute to the gauge kinetic function.   Setting $\vev{\Omega}=\theta\theta f$,
  \begin{align}
  \frac{f}{M}\int d^2\theta\,\theta\theta \Tr(W^\alpha W_\alpha)=-\frac{f}{M}\Tr(\lambda^\alpha\lambda_\alpha)\,,
  \end{align}
 which yields a gaugino mass term of order $f/M$.
 \end{example}

We have thus demonstrated how the possible soft-SUSY-breaking terms of
\eq{GGsoft} can arise in the low-energy effective theory after
integrating out the physics associated with the SSB sector.


 \subsection{Problems}
 
\begin{problem}
\label{pr:Oraif}
An O'Raifeartaigh model that exhibits $F$-term SUSY breaking
must involve at least three chiral superfields\cite{ORaifeartaigh:1975nky}. 
One of the simplest
models of this type has the following superpotential,
\beq
W(\Phi_1,\Phi_2,\Phi_3)=\lambda\Phi_1(\Phi_3^2-m^2)+\mu\Phi_2\Phi_3\,,
\eeq
where $\lambda$ is dimensionless and $\mu$ and $m$ are mass parameters.
Evaluate the corresponding $F$-terms, $F_1$, $F_2$ and $F_3$ and write
out the scalar potential, $V_{\rm scalar}$.  Show that no solution for
the scalar fields $A_1$ $A_2$ and $A_3$ exist such that
$F_1=F_2=F_3=0$.  Conclude that SUSY is spontaneously broken.  
\end{problem}

\begin{problem}
Find the minimum of $V_{\rm scalar}$ obtained in Problem~\ref{pr:Oraif}, and verify that $\langle
0|V_{\rm scalar}|0\rangle > 0$. Identify the goldstino of this model.
Finally, compute the mass spectrum of the fermions and bosons and
verify that the mass sum rule, \eq{strzero}, is satisfied.
\end{problem}

\begin{problem}
\label{pr:FI}
Show that in the case of SUSY-QED with a Fayet-Iliopoulos term and
$m^2>g\xi$ [cf.~\eqs{FI1}{FI2}], SUSY is broken and the goldstino can be identified as the
photino (the supersymmetric partner of the photon). 
In the case of $m^2<g\xi$, is SUSY broken?  Is the U(1) gauge symmetry broken?
\end{problem}

\begin{problem}
\label{pr:FI2}
Referring back to Problem~\ref{pr:FI}, determine the masses of the
electron and its scalar partners and the masses of the photon and
photino in the two cases of $m^2>g\xi$ and
$m^2<g\xi$, respectively.  Evaluate ${\rm Str}~M^2$ in both cases, and
compare with \eq{eq:supersumrule}.
\end{problem}

\begin{problem}
\label{pr:exact}
 Show that the sum rule of \eq{eq:supersumrule} is valid in the limit of exact SUSY, \textit{i.e.}, when the masses of bosons and fermions are equal.
 \end{problem}
\clearpage

\begin{problem}
\label{pr:holo}
Show that in a SUSY nonabelian gauge theory that is coupled to supermatter,
only $F$-type SUSY breaking is allowed.  To prove this statement,
assume that a solution to $\vev{F_i}=0$ exists and show that one can always find
a choice of scalar fields $A_i$ that provide a solution to
\eq{fdag} such that $\vev{D^a}=0$ for all $a$. 

{\sl HINT}: If the $A_i$ provide a solution to
\eq{fdag}, then so do the corresponding gauge transformed scalar
fields, $(e^{-2ig\Lambda})_{ij}A_j$.   The key
observation is that the superpotential is a holomorphic function of
the scalar fields $A_i$.  Hence, one can generate additional
solutions to \eq{fdag} by taking $g$ complex, which 
will modify $\vev{D^a}$.   Conclude that there must then be a set of
$A_i$ such that $\vev{F_i}=\vev{D^a}=0$.   See Ref.\cite{WessBagger}
for further details.
\end{problem}

\section{\hbox{Supersymmetric extension of the Standard Model (MSSM)}}
\label{sec:MSSM}
\renewcommand{\theequation}{\arabic{section}.\arabic{equation}}
\setcounter{equation}{0}

With the necessary SUSY technology now in hand, we are ready to study its realization in extensions to the SM. 
In this section, we describe the minimal supersymmetric extension of
the Standard Model (MSSM).  Much of the presentation of this section
follows Ref.\cite{susy}, where many of the relevant references to the
original literature can be found.

In Section~\ref{sec:MSSMfields}, we begin by presenting the MSSM
field content. We then  specify the
SU(3)$\times$SU(2)$\times$U(1) gauge-invariant superpotential for the
chiral superfields in Section~\ref{sec:MSSMW}.  Given the superfield formalism developed in
Sections~\ref{sec:superspace} and \ref{sec:gaugetheories}, all the
supersymmetric interactions of the theory are now determined.
At this stage, the supersymmetry is still an exact symmetry.

We introduce SUSY breaking in the MSSM in Section~\ref{sec:MSSMSSB}.
Since the fundamental origin of SUSY-breaking is 
unknown, we parametrize the SUSY-breaking by adding all possible
soft-SUSY-breaking terms consistent with the SU(3)$\times$SU(2)$\times$U(1)
gauge symmetry and a discrete $B-L$ symmetry.   In
Section~\ref{sec:count}, we count the number of
parameters that govern the MSSM.
The resulting MSSM particle spectrum and Higgs boson spectrum are
exhibited in Sections~\ref{sec:MSSMspectrum} and \ref{higgssector}, respectively.
Finally, in Section~\ref{sec:MSSMGU}, we demonstrate the unification of
gauge couplings in the MSSM. 

As in the SM, the neutrinos of the MSSM are massless.
To incorporate massive neutrinos, one can introduce 
right-handed neutrinos and employ the seesaw mechanism. It is then
a simple matter to extend the MSSM by adding a SM singlet superfield
that contains a right-handed neutrino and the corresponding sneutrino
superpartner.  We shall not present this construction in these
lectures; for further details, see e.g.~Ref.\cite{Dedes:2007ef}.

\subsection{Field content of the MSSM}
\label{sec:MSSMfields}

\subsubsection{MSSM superfields and their component fields}
The minimal supersymmetric extension of the Standard Model (MSSM)
contains the fields of the
two-Higgs-doublet extension of the SM
and their corresponding superpartners.
The gauge fields and their superpartners are contained in real vector supermultiplets.
These gauge supermultiplets consist of the 
SU(3)$\times$SU(2)$\times$U(1) gauge bosons and their
gaugino fermionic superpartners.
The matter fields and their superpartners reside in chiral supermultiplets.
The three generations of quark and lepton supermultiplets
consist of left-handed
quarks and leptons and
their scalar superpartners (squarks and sleptons),
and the corresponding antiparticles.  
The Higgs supermultiplets
consist of two complex Higgs doublets, their
higgsino fermionic superpartners, and the
corresponding antiparticles.
The MSSM fields and their gauge quantum
numbers are shown in Table~\ref{tab:MSSMcontent}. 
\vskip -0.05in
\begin{table}[h!]
\caption{\small
The fields of the MSSM and their
SU(3)$\times$SU(2)$\times$U(1) quantum numbers are listed.
The electric charge is given in terms of the third component of
the weak isospin $T_3$ and U(1) hypercharge $Y$ by
$Q=T_3+\half Y$.
For simplicity, only one generation of quarks and leptons is exhibited.
The left-handed charge-conjugated quark and lepton fields are denoted
by a superscript $c$.  In particular, $f^c_L\equiv
P_Lf^c=P_LC\bar{f}\lsup{\,\T}=C\bar{f}_R\lsup{\,\T}$, following
the notation of Ref.\cite{Langacker:1980js}, where $f$ is a
four-component fermion field.
The $L$ and $R$ subscripts
of the squark and slepton fields indicate the chirality of the
corresponding fermionic superpartners.
 \label{tab:MSSMcontent} }
\vskip 0.1in
\begin{tabular}{|c|c|c|c|c|c|c|} \hline
\multicolumn{7}{|c|}{Field content of the MSSM} \\ \hline
Super- & Super- & Bosonic & Fermionic &  &  &  \\
multiplets & field & fields & partners &
SU(3) & SU(2) & U(1) \\ \hline
gluon/gluino & $\wh V_8$ &  $g$ &    $\widetilde g$ &       8 & 1&  $\phm 0$ \\
gauge boson/  & $\wh V$ & $W^\pm\,,\,W^0$ & $\widetilde W^\pm\,,\widetilde W^0$ & 1 & 3 & $\phm 0$ \\
gaugino & $\wh V^\prime$ & $B$ &      $\widetilde B$ &   1 &1 & $\phm 0$ \\ \hline
slepton/ & $\wh L$ &$(\widetilde\nu_L, \widetilde e^-_L)$  & $(\nu,e^-)_L$ & 1 & 2 & $-1$ \\
lepton   & $\wh E^c$ & $\tilde e^+_R$   & $e_L^c$      & 1 & 1 & $\phm
                                                                 2$ \\ \hline
squark/ & $\wh Q$ & $(\widetilde u_L,\widetilde d_L)$ & $(u,d)_L$ & 3 &2 & $\phm 1/3$ \\
quark   & $\wh U^c$ & $\widetilde u_R^*$  & $u_L^c$ & $\bar{3}$ &1 & $-4/3$ \\
        & $\wh D^c$ & $\widetilde d_R^*$ & $d_L^c$ & $\bar{3}$ & 1 &  $\phm 2/3$ \\ \hline
Higgs boson/  & $\wh H_d$ & $(H^0_d\,,\,H_d^-)$ & $(\widetilde H^0_d,\widetilde H^-_d)$ & 1 & 2 & $-1$ \\ 
higgsino & $\wh H_u$ & $(H^+_u\,,\,H^0_u)$ & $(\widetilde H^+_u,\widetilde H^0_u)$ & 1 & 2 & $\phm 1$ \\ \hline
 \end{tabular}
 \end{table}

Table~\ref{tab:MSSMcontent} shows that one Higgs doublet superfield has hypercharge $-1$, and the other has hypercharge $+1$.
The distinction between hypercharge $\pm 1$ is irrelevant in a
 non-supersymmetric quantum field theory, where complex scalar fields are
 always accompanied by their hermitian conjugates.  However, in
supersymmetric models the distinction is important, because 
the corresponding Higgs superfields are used to construct the
superpotential.  Since the superpotential 
must be holomorphic, \textit{i.e.}~depend only on chiral superfields and not
their hermitian conjugates, it is important to keep track of the
quantum numbers of the chiral superfields of the model.

\subsubsection{Anomaly cancellation and the second Higgs doublet}
\label{sec:ac}

The enlarged Higgs sector of the MSSM constitutes the minimal structure
needed to guarantee the cancellation of gauge
anomalies generated by the 
higgsino superpartners that can appear as internal lines in one-loop triangle diagrams with
three external electroweak gauge bosons.

Potentially problematic anomalies arise from 
one-loop $VVA$ and $AAA$ triangle diagrams with three external gauge bosons, and fermions running around the loop [where $V$ refers to a $\gamma_\mu$ (vector) vertex and $A$ refers to a $\gamma_\mu\gamma\ls{5}$ (axial vector) vertex].
An anomalous theory violates unitarity and fails as a consistent quantum field theory.
Thus, we need to make sure all gauge anomalies cancel when summed over
all triangle diagrams with fixed external gauge fields\cite{anomalies}. 

The anomalies will cancel if 
certain group theoretical constraints are satisfied.  
In particular, the trace of the product of the relevant generators appearing at
the external vertices must vanish,
\begin{align}
&
W^i W^j B~\text{triangle} \qquad\Longleftrightarrow \qquad \Tr(T_3^2 Y)=0\,,\nn \\
&
BBB~\text{triangle}  \,\,\quad\qquad\Longleftrightarrow \qquad\quad\!\! \Tr(Y^3)=0\,.\nn
\end{align}
In the Standard Model, the fermion contributions  to 
 $\Tr(Y^3)$ sum to zero:
\begin{align}
\Tr(Y^3)_{\rm SM}=3\left(\tfrac{1}{27}+\tfrac{1}{27}-\tfrac{64}{27}+\tfrac{8}{27}\right)-1-1+8=0\,.
\end{align}
In contrast, in the MSSM, 
if we only add the higgsinos $(\widetilde{H}_u^+ \,,\,\widetilde{H}_u^0)$,  the resulting
anomaly factor is
$
\Tr (Y^3)=\Tr(Y^3)_{\rm SM}+2,
$
leading to a gauge anomaly.  To cancel this, we must  add a second higgsino doublet with opposite hypercharge, $(\widetilde{H}_d^0 \,,\,\widetilde{H}_d^-)$.

There is an independent argument for requiring the second Higgs
doublet in the MSSM.
With only one Higgs doublet, one cannot
generate mass for both ``up''-type and ``down''-type
quarks (and charged leptons)
in a way that is consistent with a holomorphic superpotential.

\subsubsection{Suppressed baryon and lepton number violation}
\label{sec:bml}

It is an experimental fact that baryon number $B$ and lepton number
$L$ are, to a very good approximation, global symmetries of nature.
If neutrinos are Majorana fermions, then $L$-violation is present but strongly
suppressed, with neutrino masses of order $v^2/M$, where $v$ is the
scale of electroweak symmetry breaking and $M\gg v$.  No $B$-violation
has yet been experimentally observed.  Moreover, the
current bounds on the
proton lifetime suggest that the mass scale associated with baryon
number violation cannot be below about $10^{16}$~GeV, which is a
characteristic scale of grand unification.
 
One of the remarkable features of the SM is that the suppression of  $B$ and $L$-violating
processes is a natural feature of the model.
That is, the SM Lagrangian possesses an accidental
global \hbox{$B\!\!-\!\!L$} symmetry due to the fact that 
all renormalizable terms of the Lagrangian (with dimension four or less) 
that can be composed of SM fields preserve the $B$ and $L$
global symmetries.  Indeed, $B$ and
$L$-violating operators composed of SM fields must have
dimension $d=5$ or
larger\cite{Weinberg:1979sa,Wilczek:1979hc,Weldon:1980gi}.

For example, consider the dimension-five $L$-violating operator,
\beq \label{L5}
\mathscr{L}_5=-\frac{f_{mn}}{M}(\epsilon^{ij}L_i^mH_j)(\epsilon^{k\ell}L_k^n
H_\ell)+{\rm h.c.}\,,
\eeq
where $f$ is a coefficient that depends on the lepton generation (labeled by
$m$ and $n$), $H_j$ is the complex Higgs doublet field and $L_i^a\equiv
(\nu_L^a\,,\,\ell_L^a)$ is the doublet 
of two-component lepton fields. 
After electroweak symmetry breaking, the neutral component
of the doublet Higgs
field acquires a vacuum expectation value, and a Majorana mass
matrix for the neutrinos is generated. The dimension-five term given
by \eq{L5}
is generated by new physics beyond
the SM at the scale $M$.  Likewise, one can construct dimension-six 
$B$-violating operators composed of SM fields that allow, e.g.,
for proton decay, which is suppressed by  $v^2/M_{\rm G}^2$.  Such 
terms can be generated, e.g., in grand unified theories with a
characteristic mass scale $M_{\rm G}$.
In general, $B$ and
$L$-violating effects are suppressed by $(v/M)^{d-4}$, where
$M$ is the characteristic mass scale of the physics that generates the
corresponding higher dimensional operator (of dimension $d$).  

Unfortunately, the suppression of $B$ and $L$-violation is not guaranteed in a generic
supersymmetric extension of the Standard Model.  For example, it is
possible to construct gauge invariant supersymmetric dimension-four
$B$ and $L$-violating operators made up of fields of SM
particles and their superpartners.  Such operators, if present in the
theory, would yield a proton decay rate many orders of magnitude
larger than the current experimental bound.  
To avoid this catastrophic prediction, one can
introduce an additional symmetry
in the supersymmetric theory that will eliminate the $B$ and
$L$-violating operators of dimension
$d\leq 4$.  Further details are provided in the next subsection.
Nevertheless, one must admit that the SM provides a more satisfying
explanation for approximate $B$ and $L$ conservation than does its
supersymmetric extension.

\subsection{The superpotential of the MSSM}
\label{sec:MSSMW}
Given the chiral and gauge superfield content of the MSSM, we must now specify the superpotential. The most general SU(3)$\times$SU(2)$\times$U(1) gauge-invariant superpotential (omitting the right-handed neutrino superfield) is
\begin{align}
\begin{split}
W = &\ (h_u)_{mn} \widehat{Q}_m\newcdot \widehat{H}_u\, \widehat{U}_n^c + (h_d)_{mn} \widehat{H}_d\newcdot\widehat{Q}_m\, \widehat{D}_n^c \\
& + (h_e)_{mn} \widehat{H}_d \newcdot\widehat{L}_m\,\widehat{E}_n^c   +\mu \widehat{H}_u\newcdot \widehat{H}_d\,+\,W_{\rm RPV},\label{MSSMsuperpot}
\end{split}
\end{align}
where $m$ and $n$ label the generations.  That is, $h_u$, $h_d$ and $h_e$ are $3\times 3$ matrix Yukawa couplings.  Note that color indices have been suppressed, and we
 employ a dot product notation for the singlet combination of two SU(2) doublets.  For example,
\beq
 \widehat{H}_u\newcdot \widehat{H}_d \equiv \epsilon^{ij}\widehat{H}_{u\,\!i} \widehat{H}_{d\,\!j}
 =\widehat{H}_u^+ \widehat{H}_d^--\widehat{H}_u^0 \widehat{H}_d^0\,.
 \eeq
The so-called $\mu$-term above is the supersymmetric analog
of the Higgs boson squared-mass term of the SM. 

In addition to the supersymmetric generalization of the SM Yukawa
couplings and the $\mu$-term,  
the gauge symmetries of the superpotential also allow for a number of new terms that violate $B-L$ conservation.
As discussed in Section~\ref{sec:bml},
this is in contrast to the SM where there are no $B$ or
$L$-violating interactions at the renormalizable level.
The $B-L$ violating terms of the supersymmetric model arise due to the presence of $W_{\rm RPV}$ in \eq{MSSMsuperpot} and are
given by,
\begin{align}
\begin{split}
W_{\rm RPV}=&\ 
(\lambda_L)_{pmn} \widehat L_p \widehat L_m \widehat E^c_n
+ (\lambda_L^\prime)_{pmn}\widehat L_p \widehat Q_m\widehat D^c_n  \\
& +(\lambda_B)_{pmn}\widehat U^c_p \widehat D^c_m \widehat D^c_n
+(\mu_L)_p \widehat H_u\widehat L_p\,.
\end{split}
\end{align}
Note that the term 
proportional to $\lambda_B$ violates $B$, while the other three terms
violate $L$.  
The $L$-violating term proportional to $\mu_L$ is the generalization of the
$\mu \widehat H_u\widehat H_d$ term,
in which the $Y=-1$ Higgs supermultiplet $\widehat H_d$ is replaced
by the lepton supermultiplet $\widehat L_p$.  Indeed, if $L$ violation
is present, then there is no distinction between $\wh{L}$ and $\wh{H}_d$, since the gauge quantum numbers of these two superfields are identical.

If all terms in $W_{\rm RPV}$ were allowed, the resulting model would predict 
a proton decay rate many orders of magnitude larger than the current
experimental bound.
This can be avoided by imposing an appropriate discrete symmetry that
would eliminate the undesirable terms in $W$.
%

The standard choice in constructing the MSSM is to set $W_{RPV}=0$.
There are a number of ways to accomplish this.  First, one
one could directly impose a $B-L$ symmetry.
Alternatively, one can set $W_{RPV}=0$ by introducing a matter parity, under which $\wh Q$, $\wh U^c$, $\wh D^c$, $\wh L$ and $\wh E^c$ are odd, and $\wh H_u$ and $\wh H_d$ are even. 
Finally, a third option is to impose an $R$-invariant superpotential.  As discussed in Section~\ref{Rinvariance},
$W$ is $R$-invariant if the $R$ charges of the chiral superfields are
chosen such that $R(W)=2$. Thus, if we choose $R$ charges of $+\half$
for $\wh Q$, $\wh U^c$, $\wh D^c$, $\wh L$, $\wh E^c$ and $R$ charges
of $+1$ for $\wh H_u$, $\wh H_d$, then the condition of $R$-invariance
sets $W_{\rm RPV}=0$.

One has to make sure that whichever symmetry one chooses to set
$W_{\rm RPV}=0$ is also consistent with the soft-SUSY-breaking terms
that are subsequently added to the model.  In particular, in the case of the $R$-invariance, recall that $R(\lambda)=1$, which forbids the gaugino mass term,
\begin{align}
m_\lambda(\lambda\lambda+\lambda^\dagger\lambda^\dagger).\label{eq:gauginomass}
\end{align}
But phenomenology requires massive gauginos. This motivates the use of $R$-parity, described in the following subsection, rather than $R$-invariance.

\subsubsection{$R$-parity}
The  gaugino mass term in \eq{eq:gauginomass}
is an allowed soft-SUSY-breaking term.
If this term is added 
to a theory with an $R$-invariant superpotential, then
the continuous U(1)$_R$ symmetry is broken down to a discrete $\mathbb{Z}_2$ symmetry,
called {$R$-parity}\cite{Fayet:1976et,Farrar:1978xj}.  One can check that the $R$-parity of a particle with baryon number $B$, lepton number $L$ and spin $S$ is given by
\begin{align}
R=(-1)^{3(B-L)+2S}\,.\label{Rparity}
\end{align}
It is sufficient to impose $R$-parity invariance in order to set $W_{\rm
  RPV}=0$,\footnote{The effects of imposing matter parity and
  $R$-parity in the MSSM are identical for all
  renormalizable interactions.} 
which is equivalent to imposing the $B-L$ discrete symmetry.
For the remainder of these lectures, we shall assume that $R$-parity
is conserved. 

One can use \eq{Rparity} to deduce the $R$-parity quantum numbers of
all SM particles and their supersymmetric partners,
\begin{align}
R=\begin{cases} +1\,, & \quad \text{for all SM particle particles}\,,\\
-1\,,& \quad \text{for all superpartners}\,.\end{cases}
\end{align}
The conservation of $R$-parity in scattering
and decay processes has a critical impact on supersymmetric
phenomenology. 
 For example, any initial state in a scattering
experiment will involve ordinary ($R$-even) particles.
Consequently, it follows that supersymmetric particles must be
produced in pairs.  In general, these particles are highly unstable
and decay into lighter states.  Moreover, $R$-parity invariance
also implies that
the lightest supersymmetric particle (LSP) is absolutely
stable, and must eventually be produced
at the end of a decay chain initiated by the decay of a heavy unstable
supersymmetric particle.

In order to be consistent with cosmological constraints, a stable LSP
is almost certainly electrically and color neutral.
Consequently, the LSP in an $R$-parity-conserving theory is weakly
interacting with ordinary matter, \textit{i.e}\!., it behaves like a stable heavy
neutrino and will escape collider detectors without being directly
observed.  Thus, the canonical signature for conventional
$R$-parity-conserving supersymmetric theories is missing (transverse)
energy, due to the escape of the LSP.  Moreover,
the stability of the LSP in $R$-parity-conserving supersymmetry
makes it a promising candidate for dark matter.

\subsubsection{MSSM parameters of the SUSY-conserving sector}
The parameters of the SUSY-conserving
sector consist of: (i)~gauge couplings, $g_s$, $g$, and $g'$,
corresponding
to the Standard Model gauge group SU(3)$\times$SU(2)$\times$U(1)
respectively; (ii)~a
SUSY-conserving higgsino mass parameter
$\mu$; and (iii)~Higgs-fermion Yukawa coupling constants,
$\lambda_u$, $\lambda_d$, and $\lambda_e$, corresponding to
the couplings of one generation of left- and right-handed
quarks and leptons and their
superpartners to the Higgs bosons and higgsinos.  Because there is no
right-handed neutrino (or its superpartner) in the MSSM as defined
here, a Yukawa coupling $\lambda_\nu$ is not included.
The complex $\mu$ parameter and Yukawa couplings
enter via the most general renormalizable $R$-parity-conserving
superpotential given by \eq{MSSMsuperpot} with $W_{\rm RPV}=0$.

One can now obtain the scalar potential from \eq{vscalar4} as applied to
the MSSM,
\begin{align}
V_{\rm scalar}=\half\bigl[D^a D^a+(D')^2\bigr]+F_i^* F_i\,,
\end{align}
where the index $a$ runs over the SU(3) and SU(2) gauge indices and
$D'$ is the U(1)$_Y$ $D$-term.
Focusing on the terms that depend on the Higgs boson fields, one
obtains,
\clearpage

\begin{align}
V_{\rm Higgs}=|\mu|^2\bigl[|H_d|^2+|H_u|^2\bigr]+\tfrac18(g^2+g^{\prime\,2})\bigl[|H_d|^2-|H_u|^2\bigr]^2
+\half g^2|H_d^* H_u|^2\,.
\end{align}
Clearly $\vev{V_{\rm Higgs}}\equiv\vev{0|V_{\rm Higgs}|0}\geq 0$, as expected.  Moreover, $H_d=H_u=0$ minimizes the
Higgs scalar potential, which yields $\vev{V_{\rm Higgs}}=0$, corresponding to a supersymmetric vacuum.  Thus, there is no SU(2)$\times$U(1) breaking at this stage.
But after introducing soft SUSY-breaking terms, some of which involve
the Higgs fields, it will then be possible to spontaneously break the
SU(2)$\times$U(1) symmetry.  Consequently, SUSY breaking and electroweak symmetry breaking are intimately related in the MSSM.

\subsection{Supersymmetry breaking in the MSSM}
\label{sec:MSSMSSB}

Following the rules of Girardello and Grisaru\cite{Girardello:1981wz}
that were presented in Section~\ref{GGrules}, we add the
soft-SUSY-breaking terms, consistent with the
SU(3)$\times$SU(2)$\times$U(1) gauge symmetry and the assumed
$R$-parity invariance (for a review, see Ref.\cite{Chung:2003fi}).  For simplicity, we consider in this section the case of one generation of quarks,
leptons, and their scalar superpartners.

The supersymmetry-breaking
sector contains the following sets of parameters:
(i)~three complex
gaugino Majorana mass parameters, $M_3$, $M_2$, and $M_1$, associated with
the SU(3), SU(2), and U(1) subgroups of the Standard Model;
(ii)~five squark and slepton squared-mass parameters, $M^2_{\wt Q}$,
$M^2_{\wt U}$, $M^2_{\wt D}$, $M^2_{\wt L}$, and $M^2_{\wt E}$,
corresponding to the superpartners of the five electroweak multiplets of
left-handed fermion fields and their charge-conjugates, $(u, d)_L$, $u^c_L$,
$d^c_L$, $(\nu$, $e^-)_L$, and $e^c_L$
[cf.~Table~\ref{tab:MSSMcontent}]; and
(iii)~three Higgs-squark-squark and Higgs-slepton-slepton trilinear
interaction terms, with complex coefficients $T_U\equiv\lambda_u A_U$,
$T_D\equiv\lambda_d A_D$, and $T_E\equiv\lambda_e A_E$
(which define the $A$-parameters).  
Following Ref.\cite{Haber:1993wf}, it is conventional to separate out the
factors of the Yukawa couplings in defining the
$A$-parameters, originally motivated by a simple class of
gravity-mediated SUSY-breaking
models\cite{Hall:1983iz,Nilles:1983ge,Martin:1997ns}.
With this definition, if the $A$-parameters 
are parametrically of the same order (or smaller) relative
to other supersymmetry-breaking mass parameters, then
only the third generation $A$-parameters will be
phenomenologically relevant.  

Finally, we have
(iv)~two real squared-mass parameters ($m_1^2$ and~$m_2^2$) and one 
complex squared-mass parameter, $m_{12}^2\equiv \mu B$
(the latter defines the $B$-parameter), which appear in the 
tree-level scalar Higgs potential, 
\beqa
V&=&(m_1^2+|\mu|^2)H_d^\dagger H_d+(m_2^2+|\mu|^2)H_u^\dagger
H_u+(m_{12}^2H_u H_d+{\rm
  h.c.}) \nn \\
&&\qquad\quad +\eighth(g^2+g^{\prime\,2})(H_d^\dagger H_d-H_u^\dagger
H_u)^2+\half|H_d^\dagger H_u|^2\,.\label{Hpot}
\eeqa
Note that the quartic Higgs couplings in \eq{Hpot} are related to the gauge
couplings $g$ and $g'$ as a consequence of supersymmetry.
The breaking of the
electroweak symmetry SU(2)$\times$U(1) to U(1)$_{\rm EM}$ is
only possible after introducing the
supersymmetry-breaking Higgs squared-mass parameters $m_1^2$, $m_2^2$
(which can be negative) and $m_{12}^2$.
After minimizing the Higgs scalar potential,
these three squared-mass
parameters can be re-expressed in terms of the two
Higgs vacuum expectation values, $\langle H_d^0\rangle\equiv v_d/\sqrt{2}$ 
and $\langle H_u^0\rangle\equiv v_u/\sqrt{2}$,
and the CP-odd Higgs mass $m_A$ [cf.~\eqs{minbeta}{minconditions} below].  
One is always free to rephase the Higgs doublet fields such that $v_d$
and $v_u$ are both real and positive.

The quantity, $v_d^2+v_u^2=
4m_W^2/g^2=(2G_F^2)^{-1/2}\simeq (246~{\rm GeV})^2$, is fixed by the
Fermi constant, $G_F$, whereas the ratio
\beq \label{eqtanbeta}
\tan \beta = \frac{v_u}{v_d}
\eeq
is a free parameter such that $0\leq\beta\leq\pi/2$.
The tree-level conditions for the scalar potential minimum
relate the diagonal and off-diagonal Higgs squared-mass parameters in terms
of $m^2_Z=\quarter(g^2+ g^{\prime\,2})(v_d^2+v_u^2)$, the angle~$\beta$, and
the CP-odd Higgs mass $m_A$:
\beqa
\sin 2\beta &=& \frac{2m_{12}^2}{m_1^2+m_2^2+2|\mu|^2}=\frac{2m_{12}^2}{m_A^2}
\,, \label{minbeta} \\[6pt]
\half m_Z^2 &=& -|\mu|^2+\frac{m_1^2-m_2^2\tan^2\beta}{\tan^2\beta-1}\,.
\label{minconditions}
\eeqa

At this stage, one can already see the tension with naturalness, if
the SUSY parameters, $|m_1|$, $|m_2|$ and $|\mu|$, are significantly larger than the scale of
electroweak symmetry breaking.  In this case, $m_Z^2$ will be the
difference of two large numbers, requiring some fine-tuning of the
SUSY parameters in order to produce the correct $Z$ boson mass.  In
the literature, this tension is referred to as the little hierarchy
problem\cite{little,little2,little3}, previous noted in Section~\ref{quadratic}.
One must also guard against the existence of 
charge and/or color breaking global minima
due to non-zero vacuum expectation values for the squark and 
charged slepton fields.  This possibility can be avoided 
if the $A$-parameters are not unduly
large\cite{AlvarezGaume:1983gj,Frere:1983ag,Derendinger:1983bz,Gunion:1987qv,Chowdhury:2013dka,Hollik:2016dcm,Casas:1995pd}.
Additional constraints must also be respected to avoid directions in scalar field space in which
the full tree-level scalar potential can become unbounded from below\cite{Casas:1995pd}.

\subsection{The MSSM parameter count}
\label{sec:count}

The total number of independent physical parameters
that define the MSSM (in its most general form) is
quite large, primarily due to the
soft-supersymmetry-breaking sector.  In particular, in the case of
three generations of quarks, leptons, and their superpartners,
$M^2_{\wt Q}$,
$M^2_{\wt U}$, $M^2_{\wt D}$, $M^2_{\wt L}$, and $M^2_{\wt E}$
are hermitian $3\times 3$ matrices, and
$A_U$, $A_D$, and $A_E$ are complex $3\times 3$
matrices.  In addition, $M_1$, $M_2$, $M_3$, $B$, and $\mu$
are in general complex parameters.  Finally, as in the Standard Model, the
Higgs-fermion Yukawa couplings, $\lambda_f$ ($f\!=\!u$, $d$, and $e$),
are complex $3\times 3$ matrices that
are related to the quark and lepton mass matrices via: $M_f=\lambda_f
v_f/\sqrt{2}$, where $v_e\equiv v_d$ [with $v_u$ and $v_d$ as defined
above \eq{eqtanbeta}].

However, not all these parameters are physical.
Some of the MSSM parameters can be eliminated by
expressing interaction eigenstates in terms of the mass eigenstates,
with an appropriate redefinition of the MSSM fields to remove unphysical
degrees of freedom.  The analysis of Refs.\cite{Dimopoulos:1995ju,Haber:2000jh} shows that the MSSM
possesses 124 independent parameters.  Of these, 18
correspond to SM parameters
(including the QCD vacuum angle, $\theta_{\rm QCD}$), one corresponds to
a Higgs sector parameter (the analogue of the SM
Higgs mass), and 105 are genuinely new parameters of the model.
The latter include: five real parameters and three CP-violating phases in
the gaugino/higgsino sector, 21 squark and slepton masses,
36 real mixing angles to define the
squark and slepton mass eigenstates, and 40 CP-violating phases that
can appear in the squark and slepton interactions.

Unfortunately, without additional restrictions on the 124 parameters,
the MSSM is not a
phenomenologically viable theory.  In particular, a generic point of
the MSSM parameter space typically exhibits:
(i)~no conservation of the separate lepton numbers
$L_e$, $L_\mu$, and $L_\tau$; (ii)~unsuppressed
flavor-changing neutral currents (FCNCs)\cite{Georgi:1986ku,Hall:1985dx};
and (iii)~new sources of CP~violation\cite{Khalil:2002qp} that are
inconsistent with the experimental bounds.
For example, the strong suppression of FCNCs observed in nature implies
that the off-diagonal matrix elements of
the soft-SUSY-breaking squark and slepton squared-mass matrices
are highly constrained\cite{Chung:2003fi,RamseyMusolf:2006vr}.

In practice, various simplifying assumptions are imposed 
on the SUSY-breaking sector to reduce the
number of parameters to a more manageable form, such that
the constraints imposed by lepton and quark flavor changing and
CP-violating processes are satisfied.  For example,
specific models of gravity-mediated and gauge-mediated supersymmetry
breaking\footnote{One of the benefits of GMSB models
  is that the SUSY-breaking is transmitted to the MSSM sector via
  gauge boson exchange, which is automatically flavor-conserving.}   
introduce a small number of fundamental parameters that provide the
source for SUSY-breaking for the MSSM,
consistent with the constraints due to flavor and CP violation.
More details can be found in Ref.\cite{susy}.

An alternative approach, called the phenomenological MSSM (pMSSM) has
been introduced\cite{Djouadi:2002ze,Berger:2008cq}, which attempts to
identify the parameters most relevant for phenomenology, subject to
a number of simplifying assumptions.
The pMSSM is governed by 19 independent real supersymmetric
parameters: the three gaugino
mass parameters $M_1$, $M_2$ and $M_3$, the Higgs sector parameters $m_A$ and
$\tan\beta$, the Higgsino mass parameter $\mu$, five squark and slepton
squared-mass parameters for the degenerate first and second
generations ($M^2_{\widetilde Q}$, $M^2_{\widetilde U}$,  $M^2_{\widetilde D}$,
$M^2_{\widetilde L}$ and $M^2_{\widetilde E}$), the five
corresponding squark and slepton squared-mass parameters for
the third generation, and three third-generation $A$-parameters
($A_t$, $A_b$ and $A_\tau$).\footnote{In Ref.\cite{deVries:2015hva}, the number of pMSSM parameters
is reduced to ten by assuming one common squark mass parameter for the
first two generations, a second common squark mass parameter for the third
generation, a common slepton mass parameter, and a common third generation
$A$ parameter.}  
The first and second generation $A$-parameters can be neglected as their
phenomenological consequences are negligible.   Such an approach 
assumes that new sources of flavor violation and/or CP-violation
are either absent or negligible.\footnote{The pMSSM approach has been
  recently extended to include additional CP-violating
SUSY-breaking parameters in Ref.\cite{Berger:2015eba}.}

\subsection{The MSSM particle spectrum}
\label{sec:MSSMspectrum}
\subsubsection{ Spin-1/2 superpartners}

The superpartners of the gauge and Higgs bosons are fermions,
whose names are obtained by appending ``ino'' to the end of the
corresponding SM particle name.  The gluino is the
color-octet Majorana fermion partner of the gluon
with mass $M_{\widetilde g}=|M_3|$.
The superpartners of the electroweak gauge
and Higgs bosons (the gauginos and higgsinos)
can mix due to SU(2)$\times$U(1) breaking effects.  As a result,
the physical states of definite mass are model-dependent linear combinations
of the charged or neutral gauginos and higgsinos,
called charginos and neutralinos, respectively
(sometimes collectively called electroweakinos).
The charginos are Dirac fermions, and
the neutralinos are Majorana fermions.

The tree-level mixing of the charged gauginos ($\widetilde W^\pm$) and 
higgsinos ($\widetilde H_u^+$ and $\widetilde H_d^-$) is governed 
by a $2\times 2$ complex
mass matrix,
\begin{align}
M_C\equiv \begin{pmatrix}
    M_2\quad
      &  gv_u/\sqrt{2} \\
       gv_d/\sqrt{2}    \quad
      &\mu \end{pmatrix}\,.
\end{align}
The physical chargino states and their
masses are obtained by
performing a singular value decomposition
of the complex matrix $M_C$ [cf.~\eq{LTMR}]:
\begin{align}
U^* M_C V^{-1}={\rm diag}(\mchipa\,,\,\mchipb)\,,
\end{align}
where $U$ and $V$ are unitary matrices.
The physical chargino states are Dirac fermions and are denoted by
$\chipma$ and $\chipmb$.  These are linear combinations of the
charged gaugino and higgsino states determined
by the matrix elements of $U$ and $V$.
The chargino masses correspond to the singular values of
$M_C$, \textit{i.e.}, the positive square roots
of the eigenvalues of $M_C^\dagger M_C$,
\begin{align}
\begin{split}
\hspace{-0.1in}
M^2_{\chipa,\chipb}=&
\half \biggl\{ |\mu|^2+|M_2|^2+2m_W^2\\
&\quad\left.
\mp
\sqrt{\left(|\mu|^2+|M_2|^2+2m_W^2\right)^2 
-4 |\mu M_2 - m_W^2 \sin2\beta|^2}\,\,
\right\rbrace\,,
\end{split}
\end{align}
where the states are ordered such that $\mchipa \leq \mchipb$.
The relative phase of $\mu$ and $M_2$ is physical and potentially observable.

The tree-level mixing of the neutral gauginos ($\widetilde B$ and
$\widetilde W^0$) and 
higgsinos ($\widetilde H_d^0$ and $\widetilde H_u^0$) is
governed by a $4\times 4$ complex symmetric mass
matrix,
\begin{align}
M_N\equiv \begin{pmatrix}
    M_1\quad & 0 \quad & -\half g' v_d \quad & \phm\half g' v_u \\
 0 \quad & M_2 \quad & \phm\half g v_d \quad & -\half g v_u \\
-\half g' v_d \quad & \phm\half g v_d \quad & 0 \quad & -\mu \\
\phm\half g' v_u \quad & -\half g v_u \quad & -\mu \quad & 0 \end{pmatrix}\,.
\end{align}
To determine the physical neutralino states and their masses,
one must perform a
Takagi-diagonalization
of the complex symmetric matrix $M_N$ [cf.~\eq{takagidef}]:
\begin{align}
W^T M_N W={\rm diag}(\mchina\,,\,\mchinb\,,\,\mchinc\,,\,\mchind)\,,
\end{align}
where $W$ is a unitary matrix.
The physical neutralino states are Majorana fermions, and are denoted by
$\chini$ ($i=1,\ldots 4$), where the states are ordered such that
$\mchina\leq\mchinb\leq\mchinc\leq\mchind$.
The $\chini$ are the linear combinations of the
neutral gaugino and higgsino states determined
by the matrix elements of $W$.
The neutralino masses correspond to the singular values of
$M_N$, \textit{i.e.}, the positive square roots
of the eigenvalues of $M_N^\dagger M_N$.

\subsubsection{Spin-0 superpartners}

The superpartners of the quarks and leptons are spin-zero
bosons:  the squarks, charged sleptons,
and sneutrinos, respectively.
For a given Dirac fermion $f$, there are two superpartners, $\widetilde
f_L$ and $\widetilde f_R$, where the $L$ and $R$ subscripts simply identify
the scalar partners that are related by supersymmetry to the left-handed and
right-handed fermions, $f_{L,R}\equiv\half(1\mp\gamma_5)f$, respectively.
(There is no $\widetilde\nu_R$ in the MSSM.)
However, $\widetilde f_L$--$\widetilde f_R$ mixing is possible,
in which case $\widetilde f_L$ and $\widetilde f_R$ are not mass
eigenstates.  

We first  consider the squarks and the sleptons.
For three generations of squarks, one
must diagonalize $6\times 6$ matrices corresponding
to the basis $(\widetilde q_{iL}, \widetilde q_{iR})$,
where $i=1,2,3$ are the generation
labels.
For simplicity, only the one-generation case is illustrated
in detail below.

Using the notation of the third family, the one-generation
tree-level squark squared-mass matrix is given by
\begin{align}
\mathcal{M}^2 =& \begin{pmatrix}
    M^2_{\widetilde Q}+ m^2_q+ L_q\quad
      & m_q X_q^* \\
    m_q X_q\quad
      &M^2_{\widetilde R}+ m^2_q+ R_q  \end{pmatrix}\,,\label{sqmassmat}
      \end{align} 
where
\begin{align}
X_q\equiv A_q-\mu^* (\cot\beta)^{2T_{3q}}\,,
\end{align} \label{Xtdef}
and 
\begin{align}
T_{3q}=\begin{cases} \phm\half\,,\quad \text{for $q=t$}\,,\\ -\half\,,\quad \text{for $q=b$}.\end{cases}
\end{align}

The diagonal squared-masses are governed by soft-SUSY-breaking
squared-masses $M^2_{\widetilde Q}$ and $M^2_{\widetilde R}\equiv
M^2_{\widetilde U}$ [$M^2_{\widetilde D}$] for $q=t$~[$b$], the
corresponding quark masses $m_t$ [$m_b$], and electroweak correction terms:
\begin{align}
L_q& \equiv
(T_{3q}-e_q\sin^2\theta_W)m_Z^2\cos 2\beta\,,\\
R_q& \equiv
e_q\sin^2\theta_W \,m_Z^2\cos 2\beta\,,
\end{align}
where $e_q=\tfrac23$ [$-\tfrac13$] for $q=t$ [$b$].

The off-diagonal squark squared-masses are
proportional to the corresponding quark masses and depend on
$\tan\beta$, the
soft-SUSY-breaking $A$-parameters and the higgsino mass parameter
$\mu$.
Assuming that the $A$-parameters
are parametrically of the same order (or smaller) relative
to other SUSY-breaking mass parameters, it then follows that
$\widetilde q_L$--$\widetilde q_R$ mixing effects
are small, with the possible exception of the third generation,
where mixing can be enhanced by factors of $m_t$ and $m_b\tan\beta$.

In the case of third generation $\widetilde q_L$--$\widetilde q_R$
mixing, the mass eigenstates (denoted by $\widetilde q_1$ and
$\widetilde q_2$, with $m_{\tilde q_1}<m_{\tilde q_2}$) are determined
by diagonalizing the $2\times 2$ matrix ${\cal M}^2$.  
The corresponding squared-masses
and mixing angle are:
\begin{align}
  m^2_{\tilde q_{1,2}} =&\half\left[{\rm Tr}\,{\cal M}^2\mp
\sqrt{({\rm Tr}{\cal M}^2)^{2}
-4\,{\rm det}\,{\cal M}^2}\right]\,,  \\
\sin 2\theta_{\tilde q} =& \frac{2m_q |X_q|}{m^2_{\tilde
q_2}-m^2_{\tilde q_1}}\,.
\end{align}
The results above
also apply to the charged sleptons with the 
substitutions: $q\to \ell$ with
$T_{3\ell}=-\half$ and $e_\ell=-1$, and the
replacement of the SUSY-breaking parameters:
$M^2_{\widetilde Q}\to M^2_{\widetilde L}$,
$M^2_{\widetilde D}\to M^2_{\widetilde E}$, and $A_q\to A_\tau$.
For the neutral sleptons, $\widetilde\nu_R$ does not exist in the
MSSM, so $\widetilde\nu_L$ is a mass eigenstate.

In the case of three generations, the supersymmetry-breaking scalar-squared
masses [$M_{\wt Q}^2$, $M_{\wt U}^2$, $M_{\wt D}^2$,
$M_{\wt L}^2$, and $M_{\wt E}^2$] and
the $A$-parameters [$A_U$, $A_D$, and $A_E$]
are now $3\times 3$ matrices.
The diagonalization of the $6\times 6$ squark mass
matrices yields $\widetilde f_{iL}$--$\widetilde f_{jR}$
mixing (for $i\neq j$).
In practice, since the
$\widetilde f_L$--$\widetilde f_R$ mixing is appreciable only for the
third generation, this additional complication can often
be neglected.

\subsection{The Higgs sector of the MSSM}
\label{higgssector}
Having completed our tour of the superpartners of the SM particles, we
now focus of the Higgs sector of the MSSM\cite{Gunion:1984yn,hhg,Djouadi:2005gj}.  
We first provide details of the structure of the Higgs sector based on
a tree-level analysis.   We then discuss the importance of radiative
corrections, in light of the observed Higgs boson with a mass of 125 GeV. 

\subsubsection{The tree-level MSSM Higgs sector}

The tree-level scalar Higgs potential, previously
given in \eq{Hpot}, is
CP-conserving.  This follows from the fact that $m_{12}^2$,
the only potentially complex parameter that appears in \eq{Hpot}, 
can be chosen real and positive by an appropriate rephasing of the Higgs fields.

After minimizing the Higgs potential, as indicated above
\eq{eqtanbeta}, one can identify the physical Higgs states.
The five physical Higgs particles
consist of a charged Higgs pair
\begin{align}
H^\pm=H_d^\pm\sin\beta+ H_u^\pm\cos\beta\,,
\end{align}
one CP-odd neutral scalar
\begin{align}
\ha= \sqrt{2}\left({\rm Im\,}H_d^0\sinb+{\rm Im\,}H_u^0\cosb
\right)\,,
\end{align}
and two CP-even neutral scalar mass eigenstates that are determined by
diagonalizing the neutral CP-even Higgs scalar squared-mass matrix,
\begin{align}
\mathcal{M}_0^2 = &
\begin{pmatrix}
\mha^2 \sin^2\beta + m^2_Z \cos^2\beta \ \ & \quad
           -(\mha^2+m^2_Z)\sin\beta\cos\beta \\
  -(\mha^2+m^2_Z)\sin\beta\cos\beta \ \   & \quad
  \mha^2\cos^2\beta+ m^2_Z \sin^2\beta \end{pmatrix}\,.\label{mzero}
\end{align}
The eigenstates of $\mathcal{M}_0^2$ are identified as the neutral CP-even Higgs bosons,
\begin{align}
\hl &= -(\sqrt{2}\,{\rm Re\,}H_d^0-v_d)\sin\alpha+
(\sqrt{2}\,{\rm Re\,}H_u^0-v_u)\cos\alpha\,,\\
\hh &= (\sqrt{2}\,{\rm Re\,}H_d^0-v_d)\cos\alpha+
(\sqrt{2}\,{\rm Re\,}H_u^0-v_u)\sin\alpha\,,
\end{align}
which defines the CP-even Higgs mixing angle $\alpha$.

All Higgs masses and couplings can be expressed in terms of two
parameters, usually chosen to be $\mha$ and $\tan\beta$.
The charged Higgs mass is given by
\begin{align}
\mhpm^2 =\mha^2+\mw^2\,.
\end{align}
The squared-masses of the CP-even Higgs bosons $\hl$ and $\hh$ are eigenvalues
of $\mathcal{M}_0^2$.   The trace and determinant of $\mathcal{M}_0^2$
yield,
\beq
m_h^2+m_H^2=m_A^2+m_Z^2\,,\qquad\quad m_h^2 m_H^2=m_A^2 m_Z^2\cos^2
2\beta\,,\label{trdet}
\eeq
where the CP-even Higgs masses are given by
\begin{align}
  m^2_{H,h} = \half \left( \mha^2 + m^2_Z \pm
                  \sqrt{(\mha^2+m^2_Z)^2 - 4m^2_Z \mha^2 \cos^2 2\beta}
                  \; \right)\,.\label{hH}
\end{align}
In the convention where $0\leq\beta\leq\half\pi$, it is standard
practice to choose $\alpha$ to lie in the range
$|\alpha|\leq\half\pi$.  However, because the off-diagonal element of
$\mathcal{M}_0^2$ is negative semi-definite, one finds that
$-\half\pi\leq\alpha\leq 0$.   More explicitly, the mixing angle $\alpha$ can be determined 
as a function of $m_A$ and $\tan\beta$ 
from the following expression and from \eq{hH},\footnote{The corresponding expressions
  for a general CP-conserving two Higgs doublet model can be found in 
Ref.\cite{Bernon:2015qea} .} 
\beq
\cos\alpha=\sqrt{\frac{m_A^2\sin^2\beta+m_Z^2\cos^2\beta-m_h^2}{m_H^2-m_h^2}}\,,
\eeq
and $\sin\alpha=-(1-\cos^2\alpha)^{1/2}$.

In the expression for the couplings of the Higgs bosons with the gauge
bosons, only the combination $\beta-\alpha$ appears.   For example,
the coupling of $h$ to $VV$ (where $VV=W^+ W^-$ or $ZZ$) relative to
the corresponding coupling of the SM Higgs boson, $h_{\rm SM}$, is given by,
\beq
\frac{g_{hVV}}{g_{h_{\rm SM}VV}}=\sin(\beta-\alpha)\,.\label{SMh}
\eeq
Given the range
of the angles $\alpha$ and $\beta$, it follows that $0\leq\beta-\alpha\leq\pi$.
In particular, the following expressions can be obtained,
\beqa
\cos(\beta-\alpha)&=&\frac{m_Z^2\sin 2\beta\cos
  2\beta}{\sqrt{(m_H^2-m_h^2)(m_H^2-m_Z^2\cos^2
    2\beta)}}\,.\label{cbma} \\[6pt]
\sin(\beta-\alpha)&=&\sqrt{\frac{m_H^2-m_Z^2\cos^2 2\beta}{m_H^2-m_h^2}}\,.\label{sbma}
\eeqa
One can check that \eqs{cbma}{sbma} are consistent in light of \eq{trdet}.

The Higgs--fermion Yukawa couplings are obtained from the MSSM
superpotential [\eq{MSSMsuperpot} with $W_{\rm RPV}=0$] by employing the last two terms
of \eq{eq:LSUSYcomponents}.  Focusing on the Higgs
interactions with third generation quarks, one obtains the so-called
Type-II Higgs-quark interaction\cite{Hall:1981bc},
\beq \label{typetwo}
\mathscr{L}_{\rm Yuk}=-\epsilon^{ij}\bigl[h_b \overline{b}_R H_{d\,\!i} Q_{L\,\!j}+h_t\overline{t}_R Q_{L\,\!i} H_{u\,\!j}\bigr]+{\rm h.c.}\,,
\eeq
where $Q_L\equiv(t_L\,,\,b_L)$ is the quark doublet and $i$ and $j$
are SU(2) indices.  In \eq{typetwo}, we employ four-component quark
fields, where $q_{R,L}\equiv P_{R,L}q$ and
$P_{R,L}=\half(1\pm\gamma\ls{5})$.
The quark masses are identified by replacing the Higgs fields in
\eq{typetwo} with their corresponding vacuum expectation values,
\beq \label{tbmasses}
m_b= h_b v \cos\beta/\sqrt{2}\,,\qquad\quad  m_t=h_t v \sin\beta/\sqrt{2}\,.
\eeq
The tree-level Yukawa couplings of the lightest CP-even Higgs boson to
third generation quark pairs are given by
\beqa
g_{h b\bar b} &= &-\frac{m_b}{v}\,\frac{\sin\alpha}{\cos\beta}= \frac{m_b}{v}\,\bigl[\sinbma-\cosbma \tan\beta\bigr]\,,
\label{hlbbtree}  \\[5pt]
g_{h t\bar t} & = & \phm\frac{m_t}{v}\,\frac{\cos\alpha}{\sin\beta}=\frac{m_t}{v}\,\bigl[\sinbma+\cosbma\cot\beta\bigr]\,.
\label{hltttree}
\eeqa
It is straightforward to work out the couplings of the other Higgs
bosons of the model to the quarks (and leptons).  A comprehensive set
of Feynman rules for Higgs bosons in the MSSM can be found in Refs.\cite{Gunion:1984yn,hhg}.

In the limit of $\mha\gg\mz$, the expressions for the
Higgs masses and mixing angle are given by,
\begin{align}
\mhl^2 &\simeq  \ \mz^2\cos^2 2\beta-\frac{m_Z^4 \sin^2 {4\beta}}{4m_A^2}\,, \\
\mhh^2 &\simeq  \ \mha^2+\mz^2\sin^2 2\beta\,,\\
\mhpm^2& =  \ \mha^2+\mw^2\,,\\
\cos(\beta-\alpha)&\simeq\ \frac {\mz^2\sin 4\beta}{2\mha^2}\,.
\end{align}
Two consequences are immediately apparent.
First,
\begin{align}
\mha\simeq\mhh
\simeq\mhpm,
\end{align}
up to corrections of ${\cal O}(\mz^2/\mha)$.  Second,
$\cos(\beta-\alpha) \simeq 0$, up to corrections of ${\cal O}(\mz^2/\mha^2)$.
This is the decoupling limit of the MSSM Higgs sector, since at energy scales below
 the approximately common mass of the heavy
Higgs bosons $H^\pm$, $\hh$, and $A^0$, the effective Higgs theory is
equivalent to the one-doublet Higgs sector of the SM\cite{Haber:1989xc,Gunion:2002zf}.
Indeed, one can check that in the limit of $\cos(\beta-\alpha)\to 0$,
all the $\hl$ couplings to SM particles approach their SM limits, as
in the case of the $hVV$ coupling exhibited in \eq{SMh} and in the
case of the $hq\bar{q}$ couplings exhibited in \eqs{hlbbtree}{hltttree}.

\subsubsection{Impact of radiative corrections on the MSSM Higgs sector}

The tree-level result for $m_h$ given in \eq{hH} yields a startling
prediction,
\beq
\mhl\leq\mz |\cos 2\beta|\leq\mz\,.
\eeq
This is clearly in conflict with the observed Higgs mass of 125 GeV.
However, the above
inequality receives quantum corrections.  The Higgs mass can be shifted
due to loops of particles and their superpartners exhibited below (an incomplete
cancellation, which would have been exact if supersymmetry were
unbroken).

\begin{center}
\begin{picture}(200,75)(-50,-40)
\SetScale{0.85}
\thicklines
\DashLine(-100,0)(-70,0){3}
\ArrowArcn(-40,0)(30,180,0)
\ArrowArcn(-40,0)(30,0,180)
\DashLine(20,0)(-10,0){3}
\DashLine(100,0)(130,0){3}
\DashArrowArcn(160,0)(30,180,0){3}
\DashArrowArcn(160,0)(30,0,180){3}
\DashLine(190,0)(220,0){3}
\Text(-110,0)[]{$\hl$}
\Text(30,0)[]{$\hl$}
\Text(90,0)[]{$\hl$}
\Text(230,0)[]{$\hl$}
\Text(-40,15)[]{$t$}
\Text(160,15)[]{$\widetilde t_{1,2}$}

\end{picture}
\end{center}
\vskip -0.1in
The impact of these corrections
can be significant\cite{Haber:1990aw,Okada:1990vk,Ellis:1990nz}.
In particular, the qualitative behavior of the one-loop radiative corrections
can be most easily
seen 
in the limit of large top-squark masses.
In this limit,
both the off-diagonal entries and the splitting between the two diagonal entries
of the top-squark squared-mass matrix
[\eq{sqmassmat}]
are small in comparison to
the square of the geometric mean of the two top-squark 
masses,
$\msusyy\equiv\mstopa\mstopb$.  
In this case (assuming $\mha>\mz$), the predicted upper bound for $m_h$
is approximately given by\cite{Haber:1996fp}
\begin{align}
\mhl^2\lsim \mz^2+\frac{3g^2 m_t^4}{8\pi^2\mw^2}\left[\ln\left(\frac{M_S^2}{m_t^2}\right)+\frac{X_t^2}{M_S^2}
\left(1-\frac{X_t^2}{12M_S^2}\right)\right]\,, \label{hradcorr}
\end{align}
where $X_t\equiv A_t-\mu\cot\beta$ governs stop mixing (taking $A_t$
and $\mu$ real for simplicity).
The Higgs mass upper limit is saturated when
$\tan\beta$ is large [{\it i.e.}, $\cos^2 (2\beta) \sim 1$] and $X_t=\sqrt{6}\,
M_S$, which defines the so-called maximal mixing scenario.

A more complete treatment of the radiative corrections\cite{Draper:2016pys}
shows that
\eq{hradcorr} somewhat overestimates the true upper bound of $\mhl$.
These more refined computations, which incorporate
renormalization group improvement, and the two-loop and
leading three-loop contributions, yield an upper bound of $m_{h}\lsim 135$~GeV in the
region of
large $\tan\beta$ (with an accuracy of a few GeV)
for $m_t=175$~GeV and $M_S\lsim 2$~TeV\cite{Draper:2016pys},
which is quite close to the observed value of the Higgs mass!

In certain cases, radiative corrections also can significantly modify the tree-level
Yukawa couplings.  For a review of such effects, see e.g., Ref.\cite{Carena:2002es}.

\subsection{Unification of gauge couplings}
\label{sec:MSSMGU}

Grand unification theory (GUT) predicts the unification of gauge couplings at some very high energy scale\cite{Raby,guts,Langacker:1980js,Ross}.  
The running of the couplings is dictated by the particle content of the effective theory that resides below the GUT scale.  
However, attempts to embed the Standard Model in an SU(5) or SO(10)
unified theory do not quite succeed.
In particular, the three running gauge couplings (the strong QCD
coupling $g_s$ and the electroweak gauge couplings $g$ and $g'$) do not meet at a point, as shown by the dashed lines in Fig.~\ref{fig:GUT}.
In contrast, in the case of the MSSM with superpartner masses of order
1 TeV, the renormalization group evolution is modified above the
SUSY-breaking scale.   In this case, unification of gauge couplings
can be (approximately) achieved as illustrated  by the red and blue
lines in Fig.~\ref{fig:GUT}.

\begin{figure}[h!]
\centering
\includegraphics[width=0.7\linewidth]{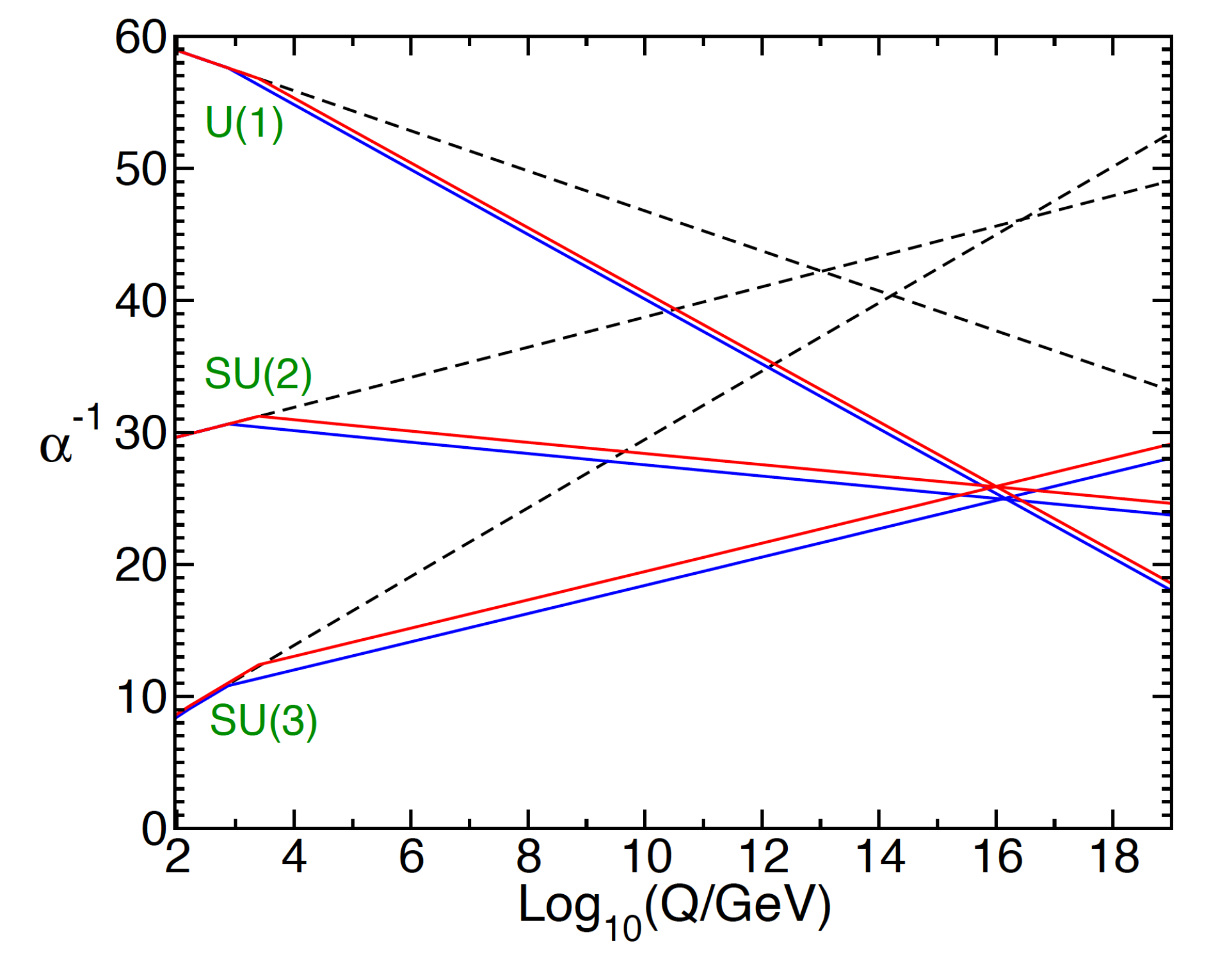}
\caption{\small
Renormalization group evolution of the inverse gauge couplings $\alpha_a^{-1}(Q)$ in the
Standard Model (dashed lines) and the MSSM (solid lines). In the MSSM
case, $\alpha_3(m_Z)$ is varied between 0.121 and 0.117, and the
supersymmetric particle mass thresholds are between 500 GeV and 1.5 TeV, for the
lower and upper solid lines, respectively. Two-loop effects are
included. Taken from Ref.\cite{Martin:1997ns}.
}
\label{fig:GUT}
\end{figure}

A quantitative assessment of the success of gauge coupling unification
can be performed as follows.  
Since the electroweak gauge couplings $g$ and $g'$ are very well
measured, first focus on these two couplings.  For a given low-energy
effective theory (below the GUT scale), we use the renormalization
group equations (RGEs) to determine the couplings $g$ and $g'$ as a
function of the energy scale.  We then define $M_{\rm GUT}$ to be the
scale at which these two couplings meet.  

 We now assume that the unification of the three
gauge couplings, $g_s$, $g$ and $g'$ occurs at $M_{\rm GUT}$.  Using
the RGEs for the gauge couplings, we can now run $g_s$ down to the
electroweak scale and compare with the experimentally measured value.

\subsubsection{Normalization of the U(1)$_{\rm Y}$ coupling}
In electroweak theory, the overall normalization of the U(1)$_{\rm Y}$ coupling is a matter of convention.  But, if the GUT group is simple and nonabelian, then the relative normalization of the U(1)$_{\rm Y}$ coupling to the SU(2) gauge coupling is fixed.   
We denote the SU(3)$\times$SU(2)$\times$U(1)$_{\rm Y}$ gauge couplings using the proper GUT normalization by $g_3$, $g_2$ and $g_1$ respectively.  Our task is to relate $g_1$ with $g'$.
To do so, let us begin by
considering the covariant derivative,
\begin{align}
D_\mu=\partial_\mu+i\sum_a g_a T^a A_\mu^a\,.
\end{align}
If the gauge group is a direct product group, then different sets of generators $T^a$ are associated with with the different group factors, and we must use the appropriate $g_a$ depending on which generator it multiplies.  
In particular, for SU(2)$\times$U(1)$_{\rm Y}$ (below the GUT scale), 
\begin{align}
g_a T^a A_\mu^q\ni gT^3 W_\mu^3+g'\frac{Y}{2}B_\mu\,.
\end{align}
Above the GUT scale, the corresponding terms of the covariant derivative are
\begin{align}
g_a T^a A_\mu^q\ni g_U( T^3 W_\mu^3+T^0B_\mu)\,,
\end{align}
where $g_U$ is the gauge coupling of the unifying GUT group and $T^0$ is the properly normalized hypercharge generator.  
In particular, the generators of the GUT group satisfy
\begin{align}
\Tr(T^a T^b) =T(R)  \delta^{ab}\,,\label{tab}
\end{align}
where $T(R)$ is a constant that depends on the representation
$R$.\footnote{Once $T(R)$ is fixed for one representation, it is then
  determined for all other representations.  It is standard practice
  to fix $T(R)=\half$ for the defining (fundamental) representation, although the
argument presented below is independent of this choice.}  
We now set the two covariant derivatives above equal at the GUT scale,
\begin{align}
 g_U( T^3 W_\mu^3+T^0B_\mu)=gT^3 W_\mu^3+g'\frac{Y}{2}B_\mu\,.
 \end{align}
Noting that $g_U=g_3=g_2=g_1$ at the GUT scale, it
follows that
 $g_2=g$ and  $g_1 T^0=g'(Y/2)$.  Since $T(R)$ only depends on the
 representation $R$, \eq{tab} yields $\Tr (T^3)^2=\Tr(T^0)^2$. Thus,
\begin{align}
 g_1^2=g^{\prime\,2}\,\frac{\Tr Y^2}{4\Tr(T^3)^2}\,.\label{gone}
\end{align}
The relevant quantum numbers are provided in
Table~\ref{tab:two-component_fields}.  
 
The traces in \eq{gone} are evaluated by summing over one generation of fermions, under the assumption that it is made up of
complete irreducible representations of the GUT group.\footnote{In an
  SU(5) GUT, one
  generation of fermions make up a 10-dimensional and the complex
  conjugate of a 5-dimensional 
  representation of SU(5).  In an SO(10) GUT, one generation of fermions (including
  the right-handed neutrino) comprise a 16 dimensional spinor
  representation of SO(10).}
Using the results of Table~\ref{tab:two-component_fields}, we simply
add up the last two columns.  Including the appropriate color factor
of 3 when tracing over the suppressed color index, 
we obtain
$\Tr (T^3)^2=2$ and $\Tr Y^2=\tfrac{40}{3}$.  Thus, 
\eq{gone} yields
 \begin{align}
g_1^2=\tfrac53 g^{\prime\,2}\,.
\end{align}

\begin{table}
\centering
\caption{\small The $T_3$ and $Y$ quantum numbers of the two-component
  fermion fields that make up one generation of SM fermions.  In
  computing the corresponding traces, one must not forget the color
  factor of 3 that arises when tracing over the (suppressed) color
  index.  \label{tab:two-component_fields}}
\vskip 0.06in
\begin{tabular}{|ccccc|} \hline
Two-component fields & $T_3$ & $Y$ &  $\Tr (T^3)^2$ & $\Tr Y^2$ \\ \hline
$\psi_{Q_1}$ & $\phm\half$ & $\phm\tfrac13$ & $3(\tfrac14)$ & $3(\tfrac19)$ \\[5pt]
$\psi_{Q_2}$ & $-\half$ & $\phm\tfrac13$ & $3(\tfrac14)$ & $3(\tfrac19)$ \\[5pt]
$\psi_{U}$ & $\phm 0$ & $-\tfrac{4}{3}$ & $3(0)$ & $3(\tfrac{16}{9})$ \\[5pt]
$\psi_{D}$ & $\phm 0$ & $\phm \tfrac23$ & $3(0)$ & $3(\tfrac{4}{9})$ \\[5pt]
$\psi_{L_1}$ & $\phm\half$ & $-1$ & $\tfrac14$ & $1$ \\[5pt]
$\psi_{L_2}$ & $-\half$ & $-1$ & $\tfrac14$ & $1$ \\[5pt]
$\psi_{E}$ & $\phm 0$ & $\phm 2$ & $0$ & $4$ \\ \hline
 \end{tabular}
 \end{table}

\subsubsection{Gauge coupling running}
We now examine the running of the gauge couplings
in the one-loop approximation, where the gauge couplings $g_i$ obey the differential equation,
 \begin{align}
 \frac{dg_i^2}{dt}=\frac{b_i g_i^4}{16\pi^2}\,,\qquad \text{for $i=1,2,3$},\label{gRGE}
\end{align}
 where $t=\ln Q^2$ ($Q$ is the energy scale) and the $b_i$ are given by 
 \begin{align}
b_i = \tfrac23 \sum_j T(R^{(i)}_j) m(R^{(i)}_j) + \tfrac16 \sum_J c_J T(R^{(i)}_J) m(R^{(i)}_J)-\tfrac{11}{3}C_A(G^{(i)})\,,\label{bi}
\end{align}
where the indices $j$ and $J$ are employed for two-component fermions and scalars, respectively.  We have assumed that the low-energy gauge group is a direct product group, $G\equiv \prod_i G^{(i)}$ [where $G^{(i)}$ is either a simple compact Lie group or U(1)],
and the $j$th two-component fermion multiplet and the $J$th scalar multiplet transform irreducibly under $G$ as $(R_j^{(1)},R_j^{(2)},\ldots)$
and $(R_J^{(1)},R_J^{(2)},\ldots)$, respectively. The multiplicity factors in \eq{bi} are given by $m(R^{(i)})=\prod_{k\neq i} d(R^{(k)})$, where $d(R^{(k)})$ is the dimension of the irreducible representation $R^{(k)}$, and $c_J=1$ [$c_J=2$] for real [complex] scalars.   Finally, $T(R^{(i)})$ is defined in \eq{tab} in a convention where $T(R)=\half$ for the defining representation of a simple compact Lie group, and
$C_A(G^{(i)})$ is the eigenvalue of the Casimir operator in the adjoint representation of $G^{(i)}$, which is defined in terms of the structure constants of the Lie group,
\beq
f_{abc}f_{abd}=C_A(G)\delta_{cd}\,.
\eeq
For example, $C_A({\rm G})=N$ for ${\rm G}={\rm SU}(N)$.  Note that for U(1)$_Y$, we have $C_A({\rm G})=0$ and
\begin{align}
T(R)=\left[\sqrt{\tfrac{3}{5}}\,\half Y\right]\lsup{2}=\tfrac{3}{20}Y^2\,,
\end{align}
where we have employed the properly normalized hypercharge generator, $\sqrt{3/5}\,(Y/2)$.

 The solution to
 \eq{gRGE} is
 \begin{align}
 \frac{1}{g_i^2(m_Z)}=\frac{1}{g_U^2}-\frac{b_i}{16\pi^2}\ln\left(\frac{m_Z^2}{M_{\rm GUT}^2}\right)\,,\label{RGEsol}
\end{align}
 where $M_{\rm GUT}$ is the GUT scale at which the three gauge
 couplings unify.  Using \eq{RGEsol}, the following two equations are obtained:
\begin{align}
 \sin^2\theta_W(m_Z)=&\frac{g^{\prime\,2}(m_Z)}{g^2(m_Z)+g^{\prime\,2}(m_Z)}=\frac{\tfrac35 g_1^2(m_Z)}{g^2(m_Z)+\frac35 g_1^2(m_Z)}  \nn \\
=&\frac{3}{8}-\frac{5}{32\pi}\,\alpha(m_Z)(b_1-b_2)\ln\left(\frac{M_{\rm GUT}^2}{m_Z^2}\right)\,,\label{sinw} \\[5pt]
 \ln\left(\frac{M_{\rm GUT}^2}{m_Z^2}\right)=&\frac{32\pi}{5b_1+3b_2-8b_3}\left(\frac{3}{8\alpha(m_Z)}-\frac{1}{\alpha_s(m_Z)}\right)\,,\label{log}
 \end{align}
 where $e=g\sin\theta_W$, $\alpha\equiv e^2/4\pi$ and $\alpha_s\equiv g_s^2/4\pi$.

It is convenient to introduce the parameter,
\begin{align} x\equiv \frac{1}{5}\left(\frac{b_2-b_3}{b_1-b_2}\right)\,.
 \end{align}
 Then, \eqs{sinw}{log} yield,
\begin{align}
 \sin^2\theta_W(m_Z)=\frac{1}{1+8x}\left[3x+\frac{\alpha(m_Z)}{\alpha_s(m_Z)}
\right]\,.
\end{align}
Once we know the value of $x$,
we can use the above equation to determine $\alpha_s(m_Z)$ given the
values of $\sin^2\theta_W$ and $\alpha$, evaluated at $m_Z$,
\beq
\alpha_s(m_Z)=\frac{\alpha(m_Z)}{(1+8x)\sin^2\theta_W(m_Z)-3x}\,.\label{alphastrong}
\eeq
The value of $x$ is determined from the values of the $b_i$, which are given by \eq{bi}.

One can now assess the success or failure of gauge coupling unification
in the SM and in the MSSM.  For details, see Problems~\ref{pr:GUT1}
and \ref{pr:GUT2}.  As advertised in Fig.~\ref{fig:GUT}, the gauge
couplings do not unify when the SM is extrapolated to the GUT scale.
In contrast, in the MSSM, the modified running of the gauge couplings
due to the supersymmetric partners of the SM particles results in
approximate unification.\footnote{For a more precise analysis, we
  should extend the calculations of this subsection to include
  two-loop running of the gauge couplings\cite{Castano:1993ri}.  One must also properly
  treat threshold corrections at the TeV scale\cite{Martens:2011uha,Allanach:2014nba} (due to mass splittings
  among superpartners) and at the GUT scale\cite{Lucas:1995ic}.  The latter are quite
  model-dependent and allows some wiggle room in achieving precise
  gauge coupling unification.}
This success has often been touted as one of
the motivations for TeV-scale supersymmetry.

\subsection{Problems}

\begin{problem}
\label{pr:spectra}
Starting with the SUSY Lagrangian for SUSY Yang Mills theory coupled
to matter given in \eq{eq:LSUSYcomponents}, 
eliminate the auxiliary fields and obtain the Lagrangian of the MSSM prior to
SUSY-breaking.  For simplicity, you may consider only one generation
of quarks and leptons and their superpartners.
Then add in the soft-SUSY-breaking terms to obtain the
complete MSSM Lagrangian.  Using this result, verify the mass spectrum
of the supersymmetric particles obtained in
Section~\ref{sec:MSSMspectrum}. 
\end{problem}

\begin{problem}
Using the results of Problem~\ref{pr:spectra}, verify the results
obtained in Section~\ref{higgssector} for the MSSM Higgs sector.
Write out the Feynman rules for the interaction of the Higgs bosons 
with the gauge bosons and with the quarks and leptons.
\end{problem} 
   
\begin{problem}
Using the results of Problem~\ref{pr:spectra}, one can obtain the complete set of Feynman rules for the MSSM with one
generation of quarks and leptons and their superpartners.
Work out as many of the rules as you can and check your results against
Ref.\cite{Rosiek:1989rs}. 
\end{problem}

\begin{problem}
\label{pr:GUT1}
Assuming $N_g$ generations of the quarks and leptons and $N_h$ copies
of the SM Higgs boson, use \eq{bi} to obtain
\begin{align}
b_3=&\tfrac{4}{3}N_g-11\,,\nn \\
b_2=&\tfrac{1}{6}N_h+\tfrac{4}{3}N_g-\tfrac{22}{3}\,,\nn \\
b_1=&\tfrac{1}{10}N_h+\tfrac{4}{3}N_g\,.\nn
\end{align}
For the SM, we have $N_g=3$ and $N_h=1$.  Check that $b_3=-7$, $b_2=-\tfrac{19}{6}$ and $b_1=\tfrac{41}{10}$. Consequently,
independently of the value of $N_g$,
\begin{align}
x=\frac{23}{218}=0.1055\,.
\end{align}
\vskip -0.05in
\label{pr:bs}
\end{problem}

\begin{problem}
\label{pr:GUT2}
 Show that the SM results of Problem~\ref{pr:bs} are modified in the
 MSSM as follows:
 \begin{align}
b_3=&2N_g-9\,,\nn \\
b_2=&\tfrac{1}{2}N_h+2N_g-6\,,\nn \\
b_1=&\tfrac{3}{10}N_h+2N_g\,.\nn
\end{align}
For the MSSM, we have $N_g=3$ and $N_h=2$.  Verify that $b_3=-3$, $b_2=1$ and $b_1=\tfrac{33}{5}$, and consequently,
$x=\tfrac17$.  Using the values for $\alpha(m_Z)$ and $\sin^2\theta_W(m_Z)$ given in Ref.\cite{pdg},
evaluate $\alpha_s$ using \eq{alphastrong}.
Show that for $x=\tfrac17$ (as predicted by the
MSSM), one obtains a value for $\alpha_s(m_Z)$ that is quite close to
the current world average\cite{pdg}.  Using  $x=0.1055$, check that the
corresponding SM prediction for $\alpha_s(m_Z)$ is significantly lower than the observed value.
\end{problem}

\section{Supersymmetry Quo Vadis?}
\label{sec:future}
\renewcommand{\theequation}{\arabic{section}.\arabic{equation}}
\setcounter{equation}{0}

In these lectures, time constraints have limited the number of topics
that we have been able to cover.   The reader can consult the many
fine
books\cite{WessBagger,Gates,Srivastava,Piguet,Freund,MullerKirsten,West,Lopuszanski,Bailin,Buchbinder,Soni,Galperin,Polonsky,Mohapatra,Drees,Baer,Aitchison,Binetruy,Terning,MullerKirsten2,Labelle,Shifman,sugra1,weinberg3,MDine,Manoukian,sugra2,Raby}
and the 
reviews and lecture notes\cite{Taylor:1983su,Nilles:1983ge,Haber:1984rc,Sohnius:1985qm,Lahanas:1986uc,Haber:1993wf,Derendinger,Lykken,Martin:1997ns,Giudice:1998bp,bilalsusy,Petrov:2001hz,FigueroaO'Farrill:2001tr,Chung:2003fi,Luty:2005sn,RamseyMusolf:2006vr,Shirman:2009mt,GKane,susy,Bertolini:2013via}
already cited in Section~\ref{Intro} to pursue various topics in
supersymmetry in greater depth.  

In Section~\ref{sec:MSSM}, we introduced the basics of the MSSM.  But
this is not the only possible supersymmetric extension of the SM.  For
example, in the MSSM as defined in Section~\ref{sec:MSSM}, the
neutrino is massless.   There are a number of ways to extend the MSSM
to allow for massive neutrinos.  For example, by relaxing the
assumption of $R$-parity conservation, one can introduce lepton number
violating terms in the MSSM Lagrangian that can be used to incorporate
massive neutrinos that are consistent with the neutrino oscillation
data.\footnote{There is a huge literature on this subject.  See, e.g.,
  Refs.\cite{Grossman:2003gq,Dedes:2006ni,Peinado:2012tp} and the
  references contained therein.}  
Alternatively, one can start with the seesaw-extended SM and
consider its supersymmetric extension\cite{Dedes:2007ef,Hisano:1995nq,Hisano:1995cp,Grossman:1997is,Casas:2001sr,Ellis:2002fe,Masiero:2004js,Arganda:2004bz,Joaquim:2006uz,Ellis:2007wz}.

Extensions of the MSSM have also been proposed to solve a variety of
theoretical problems.  One such problem involves the $\mu$ parameter of
the MSSM.  Although $\mu$ is a SUSY-{\it preserving} parameter,
it must be of order the effective SUSY-breaking scale
of the MSSM to yield a
consistent supersymmetric phenomenology\cite{Kim:1983dt}.
Any natural solution to the so-called $\mu$-problem must incorporate
a symmetry that enforces $\mu=0$ and a small symmetry-breaking
parameter that generates a value of $\mu$ that is not parametrically
larger than the effective SUSY-breaking
scale\cite{Kim:1994eu}.  

A number of proposed mechanisms in the
literature provide
concrete examples of a natural solution to the $\mu$-problem of the
MSSM  (see, {\it e.g.},
Refs.\cite{Kim:1983dt,Kim:1994eu,Giudice:1988yz,Casas:1992mk,Dvali:1996cu}).
For example, one can replace $\mu$ by the
vacuum expectation value of a new SU(3)$\times$SU(2)$\times$U(1)
singlet scalar field.  This can be achieved by adding a singlet
chiral superfield to the MSSM.  The end result is the next-to-minimal
supersymmetric extension of the~SM, otherwise known as the NMSSM, which is
reviewed in
Refs.\cite{Maniatis:2009re,Ellwanger:2009dp}.

Ultimately, in order to determine how nature chooses to incorporate
supersymmetry, one must discover evidence for supersymmetric particles
in experiments.   The phenomenology of the MSSM and its extensions is a
huge subject that requires a separate lecture course.  Since we have no
time to present a detailed treatment of supersymmetric phenomenology
here, we can only refer the reader to some of the excellent books and review
articles on this subject (see e.g.,
Refs.\cite{Haber:1984rc,Drees,Baer,GKane,nosusy}).

As discussed in Section~\ref{sec:motivation}, supersymmetry was
proposed to avoid quadratic UV-sensitivity in a theory with elementary scalars.
To avoid a significant fine-tuning of the fundamental parameters, which
is required to explain the observed Higgs and $Z$ boson masses, the SUSY-breaking
scale should be not much larger than 1 TeV.  Consequently, experiments
currently being carried out at the Large Hadron Collider (LHC) should
be on the verge of discovering supersymmetric particles.  However, so
far no evidence for SUSY has emerged from the LHC data.

\begin{figure}[t!]
\begin{center}
\vskip -0.2in
\includegraphics[width=0.74\linewidth,angle=-90]{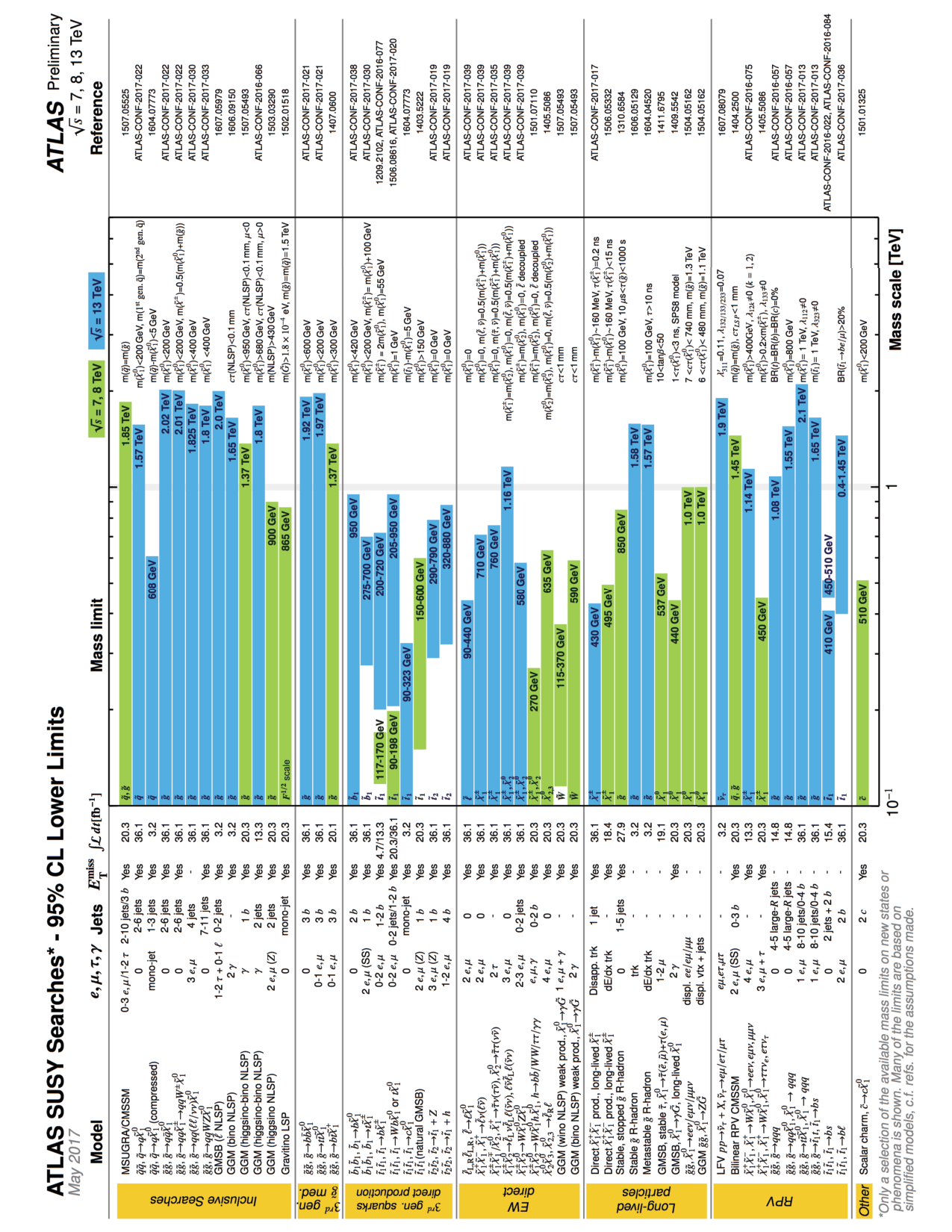}
\end{center}
\vskip -0.33in
\caption{\small Mass reach of a representative selection of ATLAS
  searches for SUSY 
as of May,
  2017.  Taken from Ref.\cite{ATLAS}.}
\label{fig:ATLAS_SUSY_Summary}
\end{figure}
\begin{figure}[h!]
\begin{center}
\vskip -0.2in
\includegraphics[width=1.02\linewidth]{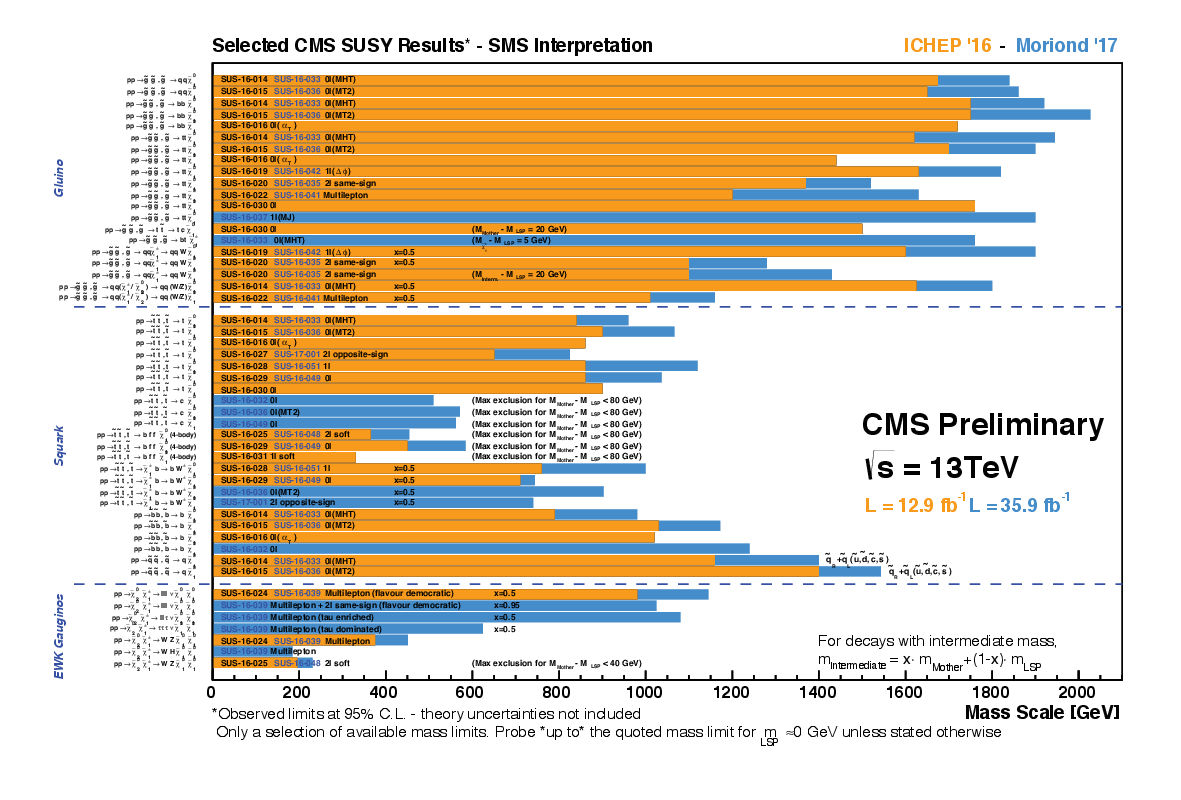}
\end{center}
\vskip -0.4in
\caption{\small Summary of exclusion limits in Simplified Model
  Spectra (SMS) from CMS searches for SUSY as of March, 2017.  Taken from Ref.\cite{CMS}.}
\label{fig:CMS_SUSY_Summary}
\end{figure}

Figs.~\ref{fig:ATLAS_SUSY_Summary}  and \ref{fig:CMS_SUSY_Summary}
summarize the limits on supersymmetric particle masses as of the
spring of 2017.  Because the LHC is a proton-proton collider, the
strongest SUSY mass bounds of about 2 TeV  are obtained for the colored superpartners (squarks
and gluinos).  
Bounds on
the top squark mass (which play an important role in assessing the degree
of fine-tuning required to accommodate the observed Higgs and $Z$
boson masses) are closer to 1 TeV.
Clearly some tension exists between the theoretical
expectations for the magnitude of the SUSY-breaking parameters
and the non-observation of supersymmetric phenomena.   
Hence, the title of this section, which is also the title of
Ref.\cite{Ross:2014mua}, where the theoretical implications of the
present LHC data for TeV-scale supersymmetry is reconsidered.

The absence of evidence for supersymmetry in the LHC data can also be
interpreted in the context of the pMSSM, which was briefly introduced
at the end of Section~\ref{sec:count}.  In a scan of the 19 parameter
pMSSM performed by the ATLAS Collaboration, the mass
of each supersymmetric particle was constrained with an upper limit
of 4 TeV, motivated to ensure a high density of models in reach of the
LHC.  Lower limits on the
supersymmetric particle masses were also applied to avoid constraints from the LEP experiments.
A summary of the sensitivity
of the ATLAS Collaboration experiment
to different types of supersymmetric particles in the 
pMSSM is shown in Fig.~\ref{fig:PMSSM}. 

\begin{figure}[t!]
\begin{center}
\includegraphics[width=\linewidth]{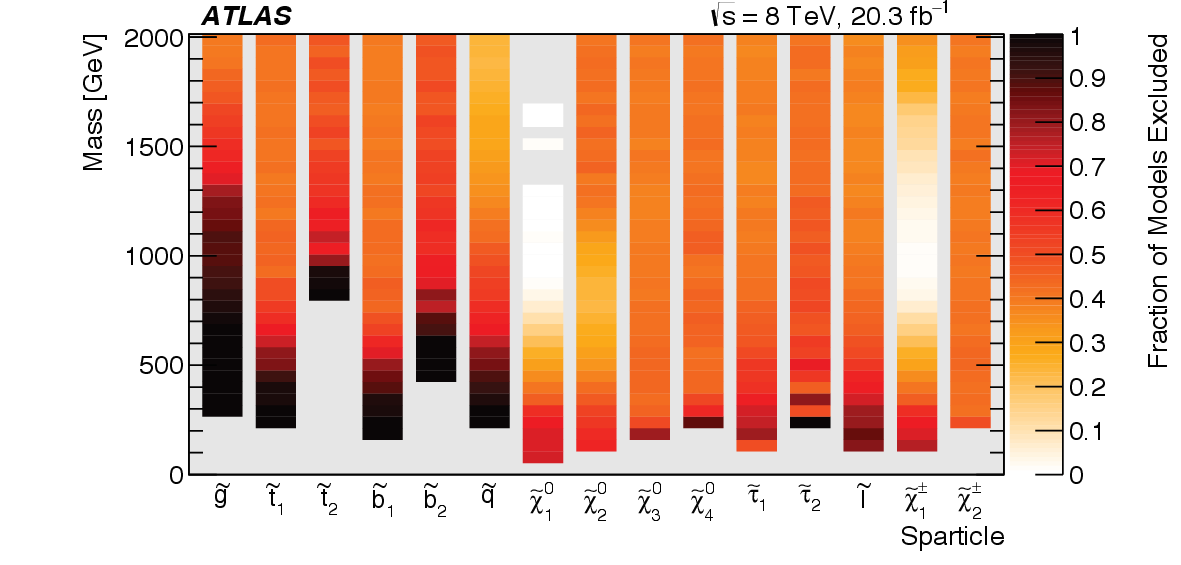}
\end{center}
\vskip -0.2in
\caption{\small A summary of the sensitivity of ATLAS to different
  types of supersymmetric particles in the 19 parameter
pMSSM. Each vertical bar is a 1D projection of the supersymmetric particle mass, with the color coding representing the
fraction of models excluded by the ATLAS searches in each bin.  This
figure taken from Ref.\cite{Fawcett:2016xoh}. 
\label{fig:PMSSM}}
\end{figure}

Of course, the LHC program is still in its infancy.  Two more years of
Run-2 data from 2017--2018 must be analyzed.  After a two year
shutdown, Run-3 follows from
2021--2023 according to the current planning schedule.   The high luminosity (HL) phase of the LHC\cite{Apollinari:2017cqg} will
commence in 2026, with an anticipation of reaching $3000~{\rm fb}^{-1}$ of
data by the year 2038.   This is a nearly 100-fold increase of the
present LHC data sample.   There is still ample room for the discovery
of SUSY at the LHC during its lifetime.

Thus, the experimental future of supersymmetry is still very much
alive.  Beyond the HL-LHC, there are possibilities of energy upgrades
at the LHC by roughly a factor of two, and considerations of the
next generation of hadron colliders with a center of mass energy of
100 TeV.   If the SUSY-breaking scale
is somewhat higher than 1~TeV (but less than say, 10~TeV), then
opportunities for discovery will be available at these future hadron
collider facilities\cite{vlhc}.

The theoretical future of supersymmetry is also quite bright.  Even if the
SUSY-breaking scale lies significantly above the TeV-scale, there are
still many opportunities for incorporating supersymmetry into the
fundamental theory of particle physics.  In this latter scenario, SUSY would
not be relevant for explaining the origin of the scale of electroweak
symmetry breaking.
Another explanation would be required, perhaps
one of the other suggested theoretical approaches mentioned in
Section~\ref{sec:motivation}.

For example, it may still be possible that some remnant
of the supersymmetric particle spectrum
survives down to the TeV-scale or below.
This is the idea of split-SUSY\cite{ArkaniHamed:2004fb,Giudice:2004tc,ArkaniHamed:2004yi,Wells:2004di,Arvanitaki:2012ps,ArkaniHamed:2012gw},
in which the squarks and sleptons
are significantly heavier
(perhaps by many orders of magnitude) than 1~TeV, whereas the fermionic 
superpartners of the gauge and Higgs bosons
may be kinematically accessible at the LHC.
Of course, the SUSY-breaking dynamics
responsible for such a split-SUSY spectrum would
not 
be related
to the origin of the scale of electroweak symmetry breaking.
Nevertheless, models of split-SUSY
can account for the dark matter (which is assumed to be the LSP
gaugino or higgsino) and gauge coupling unification, thereby
preserving two of the desirable features of TeV-scale supersymmetry.

There are many theoretical aspects of supersymmetry theory that lie beyond
the scope of these lectures but deserve further exploration.
Among these, non-perturbative approaches to supersymmetric theories,
such as holomorphy and Seiberg duality, 
have been particularly fruitful.
The power of holomorphy was briefly exhibited in
Section~\ref{sec:non-renorm}, when we reviewed Seiberg's proof of the
non-renormalization of the superpotential\cite{SeibergNR}.  There are
many other applications of holomorphy, such as  
the computation of exact $\beta$ functions in supersymmetric gauge theories\cite{Shifman:1986zi,Shifman:1991dz,ArkaniHamed:1997mj}. 
As an effective tool in non-perturbative regimes, Seiberg duality
elucidates strongly coupled gauge theories by relating them to dual
weakly coupled gauge theories.  
In Refs.\cite{Terning,Intriligator:2007cp,Terning:2003th,Shifman:1995ua,Shifman}
one can find numerous applications to the study of non-perturbative
dynamics in strongly-coupled 
supersymmetric theories and in fundamental models of SUSY-breaking. 

Another flourishing area of research 
is that of scattering amplitudes
 \cite{Henn,Elvang:2015rqa},
where novel methods are being developed to 
facilitate
 computations that were previously intractable
using the traditional Feynman-diagrammatic approach.
Here supersymmetric theories can serve as testing grounds for 
 techniques
that may eventually be extended to non-supersymmetric
quantum field theories.
For example, in $N=4$ supersymmetric Yang-Mills theory (one of the few
known examples of a finite quantum field theory in four spacetime dimensions), amplitudes are well understood, making it a relatively simple arena in which to study new computational methods\cite{Spradlin:2015kaz}.
Moreover, tree-level gluon scattering amplitudes in $N=4$ super Yang-Mills are identical to those in any other gauge theory,
so it is reasonable to expect that methods developed for SUSY gauge
theories could be adapted to the computation of QCD amplitudes.

Supersymmetry is also a powerful tool for analyzing a variety of
problems in mathematical physics,
and plays a critical role in the formulation of string theory
\cite{MDine,Green,Polchinski,becker,kiritis,barton,ibanez,BLT,schomerus}. 
Evidently, 
even in the absence of evidence for SUSY at the TeV
scale, it is very likely that supersymmetry will lead to important new insights,
both in experimental and theoretical directions.  With this in mind,
it is our hope that these lectures have provided a modest introduction
into the fascinating world of supersymmetry.

\section*{Acknowledgments}
We would like to thank Zackaria Chacko, Andrew Cohen, Michael Dine, Herbi Dreiner, Stephen Martin, Raman Sundrum, and John Terning for many enlightening discussions.
H.E.H. is grateful to Rouven Essig and Ian Low for their invitation to present these lectures at TASI 2016, and their patience in waiting for these lecture notes to be completed.   This work is supported in part by the U.S. Department of Energy grant number DE-SC0010107.
L.S.H. is also supported by the Israel
Science Foundation under grant no.~1112/17.



\begin{thebibliography}{9}

\bibitem{Dreiner:2008tw} 
  H.K.~Dreiner, H.E.~Haber and S.P.~Martin,
  Phys.\ Rept.\  {\bf 494}, 1 (2010)
  [arXiv:hep-ph/0812.1594].
  
  
\bibitem{WessBagger}
J.~Wess and J.~Bagger, \textit{Supersymmetry and Supergravity} (Princeton
University Press, Princeton, NJ, 1992).

\bibitem{Gates}
S.J.~Gates Jr., M.T.~Grisaru, M.~Ro\v{c}ek and W.~Siegel,
\textit{Superspace or One Thousand and One Lessons in Supersymmetry}
(Benjamin/Cummins Publishing Company, Reading, MA, 1983)
[also available at arXiv:hep-th/0108200].

\bibitem{Srivastava}
P.P.~Srivastava, \textit{Supersymmetry, Superfields and Supergravity: an
introduction} (Adam Hilger, Bristol, UK, 1986).

\bibitem{Piguet}
O.~Piguet and K.~Sibold, \textit{Renormalized Supersymmetry}
(Birkh\"auser, Boston, MA, 1986).

\bibitem{Freund}
P.G.O.~Freund, \textit{Introduction to Supersymmetry} (Cambridge University Press, Cambridge, UK, 1986).

\bibitem{MullerKirsten}
H.J.W.\!~M\"uller-Kirsten\!~and\!~A.\!~Wiedemann,\!~\textit{Supersymmetry:\!~An
Introduction with Conceptual and Calculational Details}\!~(World
Scientific,\!~Singapore,\!~1987). 

\bibitem{West}
P.~West, \textit{Introduction to Supersymmetry and Supergravity},
extended 2nd edition (World Scientific, Singapore, 1990).

\bibitem{Lopuszanski}
J.~{\L}opusz\'anski, \textit{An Introduction to Symmetry and Supersymmetry in Quantum Field Theory} (World Scientific, Singapore, 1991)

\bibitem{Bailin}
D.~Bailin and A.~Love, \textit{Supersymmetric Gauge Field Theory and String
Theory} (Institute of Physics Publishing, Bristol, UK, 1994).


\bibitem{Buchbinder}
I.L.~Buchbinder and S.M.~Kuzenko, \textit{Ideas and Methods of
Supersymmetry and Supergravity or a Walk through Superspace}, revised edition
(Institute of Physics Publishing, Bristol, UK, 1995).

\bibitem{Soni}
S.K.~Soni and S.~Singh, \textit{Supersymmetry: Basics and Concepts} (Narosa
Publishing House, New Delhi, 2000).

\bibitem{Galperin}
A.S.~Galperin, E.A.~Ivanov, V.I.~Ogievetsky and E.S.~Sokatchev,
\textit{Harmonic Superspace} (Cambridge University Press, Cambridge, UK, 2001).

\bibitem{Polonsky} 
  N.~Polonsky,
 \textit{Supersymmetry: Structure and phenomena} (Springer-Verlag, Berlin, Germany, 2001).


\bibitem{Mohapatra}
R.N.~Mohapatra, \textit{Unification and Supersymmetry:
The Frontiers of Quark-Lepton Physics}, 3rd edition
(Springer-Verlag, New York, 2003).

\bibitem{Drees}
M.~Drees, R.M.~Godbole and P.~Roy, \textit{Theory and
Phenomenology of Sparticles} (World Scientific, Singapore, 2004).

\bibitem{Baer}
H.~Baer and X.~Tata, \textit{Weak Scale Supersymmetry} (Cambridge University Press, Cambridge, UK, 2006).


\bibitem{Aitchison}
I.J.R.~Aitchison, \textit{Supersymmetry in Particle Physics: An Elementary
Introduction}, (Cambridge University Press, Cambridge, UK, 2007).

\bibitem{Binetruy}
P.~Bin\'etruy, \textit{Supersymmetry: Theory, Experiment and Cosmology}
(Oxford  University Press, Oxford, UK, 2006).

\bibitem{Terning}
J.~Terning, \textit{Modern Supersymmetry} (Oxford University Press, Oxford,
UK, 2006).

\bibitem{MullerKirsten2}
H.J.W.~M\"uller-Kirsten and A.~Wiedemann, \textit{Introduction to Supersymmetry}, 
2nd edition (World Scientific, Singapore, 2010).
  
\bibitem{Labelle}
P.~Labelle, \textit{Supersymmetry Demystified} (McGraw Hill,, New York, 2010).  

 \bibitem{Shifman}
  M.A.~Shifman, \textit{Advanced Topics in Quantum Field Theory} (Cambridge University Press, Cambridge, UK, 2012).
  
\bibitem{sugra1}
D.Z. Freedman and A. Van Proeyen, \textit{Supergravity} (Cambridge University Press,
Cambridge, UK, 2012).

 \bibitem{weinberg3}
  	S.~Weinberg,
	\textit{The quantum theory of fields: Volume 3: Supersymmetry}
	(Cambridge University Press, Cambridge, UK, 2013).

\bibitem{MDine}
M.~Dine, \textit{Supersymmetry and String Theory: Beyond the Standard Model}, 2nd edition (Cambridge University Press, Cambridge, UK, 2015) 

\bibitem{Manoukian}
E.B.~Manoukian, \textit{Quantum Field Theory II: Introductions to Quantum Gravity, Supersymmetry and String Theory}
(Springer International Publishing, Cham, Switzerland, 2016).

\bibitem{sugra2}
P. Nath, \textit{Supersymmetry, Supergravity, and Unification} (Cambridge University Press,
Cambridge, UK, 2017).


\bibitem{Raby}
S.~Raby, \textit{Supersymmetric Grand Unified Theories} (Springer International Publishing, Cham, Switzerland, 2017).

\bibitem{Taylor:1983su} 
  J.G.~Taylor,
  Prog.\ Part.\ Nucl.\ Phys.\  {\bf 12}, 1 (1984).

\bibitem{Nilles:1983ge} 
  H.P.~Nilles,
  Phys.\ Rept.\  {\bf 110}, 1 (1984).
  
\bibitem{Haber:1984rc} 
  H.E.~Haber and G.L.~Kane,
  Phys.\ Rept.\  {\bf 117}, 75 (1985).
 
\bibitem{Sohnius:1985qm}
M.F.~Sohnius,
Phys.\ Rept.\  {\bf 128}, 39 (1985).

\bibitem{Lahanas:1986uc} 
 A.B.~Lahanas and D.V.~Nanopoulos,
 Phys.\ Rept.\  {\bf 145}, 1 (1987).

\bibitem{Haber:1993wf} 
H.E.~Haber,
  ``Introductory low-energy supersymmetry,'' in \textit{Recent Directions in Particle Theory}, Proceedings of the
  Theoretical Advanced Study Institute (TASI-1992), Boulder, CO,
  edited by J.~Harvey and J.~Polchinski (World Scientific Publishing Co., Singapore, 1993) pp.~589--686
   [arXiv:hep-ph/9306207].

\bibitem{Derendinger} 
J.-P.~Derendinger, ``Lecture notes on globally
supersymmetric theories in four and two dimensions,'' in Proceedings
of the Hellenic School of Particle Physics, Corfu, Greece,
13--30 September 1989, edited by E.N.~Argyres,
N.~Tracas and G.~Zoupanos (World Scientific, Singapore, 1990)
pp.~111--243.  
A link to an electronic version of this manuscript can
be found at: \texttt{http://www.derendinger.itp.unibe.ch/}
\texttt{Documents,\underline{\hspace*{0.2cm}}\,publications,\underline{\hspace*{0.2cm}}lecture\underline{\hspace*{0.2cm}}notes\underline{\hspace*{0.2cm}}files/SUSY\underline{\hspace*{0.2cm}}nd.pdf}.

\bibitem{Lykken}
J.D.~Lykken, ``Introduction to Supersymmetry,'' in
\textit{Fields, Strings and Duality}, Proceedings of
the Theoretical Advanced Summer Institute (TASI-1996),
Boulder, CO, edited by C.~Efthimiou and B.~Greene
(World Scientific Publishing Co., Singapore, 1997) pp.~85--153.

\bibitem{Martin:1997ns} 
  S.P.~Martin,
  Adv.\ Ser.\ Direct.\ High Energy Phys.\  {\bf 21}, 1 (2010).
The most recent version of this review can be found in arXiv:hep-ph/9709356.

\bibitem{Giudice:1998bp} 
  G.F.~Giudice and R.~Rattazzi,
  Phys.\ Rept.\  {\bf 322}, 419 (1999)
  [arXiv:hep-ph/9801271].

\bibitem{bilalsusy}
A.~Bilal, ``Introduction to supersymmetry,''
expanded notes of lectures given at the Summer School GIF 2000,
hep-th/0101055.

\bibitem{Petrov:2001hz} 
  A.Yu.~Petrov,
  ``Quantum superfield supersymmetry,''
  hep-th/0106094.

\bibitem{FigueroaO'Farrill:2001tr}
  J.M.~Figueroa-O'Farrill,
  ``BUSSTEPP lectures on supersymmetry,''
  arXiv: hep-th/0109172.
  
 
\bibitem{Chung:2003fi} 
  D.J.H.~Chung, L.L.~Everett, G.L.~Kane, S.F.~King, J.D.~Lykken and L.T.~Wang,
  Phys.\ Rept.\  {\bf 407}, 1 (2005)
  [arXiv:hep-ph/0312378].

 \bibitem{Luty:2005sn} 
   M.A.~Luty,
  ``2004 TASI lectures on supersymmetry breaking,'' in \textit{Physics in $D\geq 4$}, Proceedings of the
  Theoretical Advanced Study Institute (TASI-2004), Boulder, CO,
  edited by J.~Terning, C.E.M.~Wagner, and D.~Zeppenfeld (World Scientific Publishing Co., Singapore, 2006) pp.~495--582 
  [arXiv: hep-th/0509029]. 

\bibitem{RamseyMusolf:2006vr} 
  M.J.~Ramsey-Musolf and S.~Su,
  Phys.\ Rept.\  {\bf 456}, 1 (2008)
  [arXiv:hep-ph/0612057].

\bibitem{Shirman:2009mt} 
  Y.~Shirman,``Introduction to Supersymmetry and Supersymmetry Breaking,'' in \textit{The Dawn of the LHC Era}, 
 Proceedings of the
  Theoretical Advanced Study Institute (TASI-2008), Boulder, CO,
  edited by T.~Han  (World Scientific Publishing Co., Singapore, 2010)  pp.~359--422
  [arXiv:hep-ph/0907.0039].
		
\bibitem{GKane}
G.L.~Kane, editor, \textit{Perspectives in Supersymmetry II} (World Scientific Publishing Co., Singapore, 2010).	
	
\bibitem{susy}
H.E.~Haber,  \textit{Supersymmetry, Part I (Theory)}, in the \textit{2016 Review of Particle
Physics},  in the \textit{2016 Review of Particle
Physics}, C.~Patrignani {\it et al.} [Particle Data Group],
   Chin.\ Phys.\ C {\bf 40}, 100001 (2016) and 2017 update.  The update can be found at \texttt{http://pdg.lbl.gov/}.

\bibitem{Bertolini:2013via} 
  D.~Bertolini, J.~Thaler and Z.~Thomas,
  ``Super-Tricks for Superspace,''  in \textit{Searching for New Physics at Small and Large Scales}, Proceedings of the
  Theoretical Advanced Study Institute (TASI-2012), Boulder, CO,
  edited by M.~Schmaltz and E.~Pierpaoli (World Scientific Publishing Co., Singapore, 2013)  pp.~421--496
  [arXiv:hep-ph/1302.6229].

\bibitem{sexl}
R.U.~Sexl and H.K.~Urbantke, \textit{Relativity, Groups, Particles: Special Relativity and Relativistic Symmetry in Field and Particle Physics}
(Springer-Verlag, Vienna, Austria, 2001).

\bibitem{takagi}
T.~Takagi, Japan J.\ Math. {\bf 1}, 83 (1925).

\bibitem{horn}
R.A.~Horn and C.R.~Johnson, {\it Matrix Analysis} (Cambridge
   University Press, Cambridge, UK, 1990).

\bibitem{horn2}
R.A.~Horn and C.R.~Johnson, \textit{Topics in Matrix Analysis}
(Cambridge University Press, Cambridge, UK, 1991).

\bibitem{horn3}
R.A.~Horn and C.R.~Johnson, {\it Matrix Analysis}, 2nd edition (Cambridge
   University Press, Cambridge, UK, 2013).
  
\bibitem{autonne}
L.~Autonne, 
Annales de l'Universit\'e de Lyon, Nouvelle
S\'erie I, Fasc. {\bf 38}, 1 (1915).   
  
\bibitem{seesaw1}
P.~Minkowski,
Phys.\ Lett.\  {\bf 67B}, 421 (1977).

\bibitem{seesaw2}
M.~Gell-Mann, P.~Ramond and R.~Slansky,
in {\it Supergravity\rm}, edited
by D.~Freedman and P.~van Nieuwenhuizen (North-Holland Publishing
Company, Amsterdam, 1979) p.~315.

\bibitem{seesaw3}
T.~Yanagida, in
\textit{Proceedings of the
Workshop on Unified Theory and Baryon Number in the Universe},
edited by O.~Sawada and A.~Sugamoto
(KEK, Tsukuba, Japan, 1979) p.~95.

\bibitem{seesaw4}
R.N.~Mohapatra and G.~Senjanovic,
Phys.\ Rev.\ Lett.\  {\bf 44}, 912 (1980);
Phys.\ Rev.\ {\bf D23}, 165 (1981).

\bibitem{seesaw5}
 J.~Schechter and J.W.F.~Valle,
  Phys.\ Rev.\ {\bf D22} , 2227 (1980);
 {\bf D25}, 774 (1982).
  
\bibitem{Gates:1987ay}
E.I.~Gates and K.L.~Kowalski,
Phys.\ Rev.\  {\bf D37}, 938 (1988).

 \bibitem{Denner:1992me}
A.~Denner, H.~Eck, O.~Hahn and J.~Kublbeck,
Phys.\ Lett.\  {\bf B291}, 278 (1992);
Nucl.\ Phys.\  {\bf B387}, 467 (1992).

\bibitem{Kleiss:2009hu}
R.~Kleiss, I.~Malamos and G.~van den Oord,
Eur.\ Phys.\ J.\ {\bf C64}, 387 (2009)
[arXiv:hep-ph/0906.3388].
 
\bibitem{Langacker}
 P.~Langacker, {\it The Standard Model and Beyond}, 2nd edition (CRC Press, Boca Raton, FL, 2017). 
 
\bibitem{numass}
See e.g., K.~Nakamura and S.T.~Petkov, \textit{Neutrino mass, mixing, and oscillations}, in the \textit{2016 Review of Particle
Physics}, C.~Patrignani {\it et al.} [Particle Data Group],
   Chin.\ Phys.\ C {\bf 40}, 100001 (2016)  and 2017 update.  The update can be found at \texttt{http://pdg.lbl.gov/}.

\bibitem{darkmatter}
See e.g., M.~Drees and G.~Gerbier, \textit{Dark Matter}, in the \textit{2016 Review of Particle
Physics}, op. cit.

\bibitem{Kim:2008hd}
  See e.g., J.E.~Kim and G.~Carosi,
 Rev.\ Mod.\ Phys.\  {\bf 82}, 557 (2010).

\bibitem{guts}
See e.g., A.~Hebecker and J.~Hisano, \textit{Grand Unified Theories}, in the \textit{2016 Review of Particle
Physics}, op. cit.

\bibitem{White:2016nbo}
  See e.g., G.A.~White,
  \textit{A Pedagogical Introduction to Electroweak Baryogenesis}
  (IOP Publishing, Bristol, UK, 2016).

\bibitem {inflation}
See e.g., J.~Ellis and D.~Wands, \textit{Inflation}, in the \textit{2016 Review of Particle
Physics}, op. cit.

\bibitem{darkenergy}
See e.g.,  M.J.~Mortonso, D.H.~Weinberg and M.~White, \textit{Dark
Energy},  in the \textit{2016 Review of Particle
Physics}, op. cit.  

\bibitem{eft}
See, e.g., A.A.~Petrov and A.E.~Blechman, \textit{Effective Field Theories} (World Scientific Publishing Co. Singapore, 2016).
  
\bibitem{Weisskopf:1939zz}
	V.F.~Weisskopf,
	Phys.\ Rev.\ {\bf 56}, 72 (1939).
	
  \bibitem{Weisskopf:1934}
  	V.F.~Weisskopf,
	Zeit.~f{\"u}r\ Physik\ {\bf 89}, 27 (1934) [Erratum: {\bf 90}, 817 (1934)].

\bibitem{Witten} 
  E.~Witten,
  Nucl.\ Phys.\ B {\bf 188}, 513 (1981).
 
 \bibitem{Susskind}
   L.~Susskind,
  Phys.\ Rept.\  {\bf 104}, 181 (1984).
	
\bibitem{dewsb}
 See e.g., C.~Csaki, C.~Grojean and J.~Terning,
  Rev.\ Mod.\ Phys.\  {\bf 88} (2016) 045001.

\bibitem{RS}
 L.~Randall and R.~Sundrum,
  Phys.\ Rev.\ Lett.\  {\bf 83}, 3370 (1999).

\bibitem{extradim}
See e.g., J.~Parsons and A.~Pomarol, \textit{Extra Dimensions}, in the \textit{2016 Review of Particle
Physics}, op. cit.

\bibitem{Bardeen:1995kv} 
  W.A.~Bardeen,
  ``On naturalness in the standard model,''
  FERMILAB-CONF-95-391-T.

\bibitem{Meissner:2006zh} 
  K.A.~Meissner and H.~Nicolai,
  Phys.\ Lett.\ B {\bf 648}, 312 (2007)
  [arXiv:hep-th/0612165].
    
  \bibitem{Iso:2009ss} 
  S.~Iso, N.~Okada and Y.~Orikasa,
  Phys.\ Lett.\ B {\bf 676}, 81 (2009)
   [arXiv:hep/ph/0902.4050].
  
  \bibitem{Tavares:2013dga} 
  G.~Marques Tavares, M.~Schmaltz and W.~Skiba,
  Phys.\ Rev.\ D {\bf 89}, 015009 (2014)
  [arXiv:hep-ph/1308.0025].
  
\bibitem{Gorsky:2014una} 
  A.~Gorsky, A.~Mironov, A.~Morozov and T.N.~Tomaras,
  J.\ Exp.\ Theor.\ Phys.\  {\bf 120}, 344 (2015)
  [Zh.\ Eksp.\ Teor.\ Fiz.\  {\bf 147}, 399 (2015)]
   [arXiv:hep-ph/1409.0492].

  
 \bibitem{Helmboldt:2016mpi} 
  A.J.~Helmboldt, P.~Humbert, M.~Lindner and J.~Smirnov,
  JHEP {\bf 1707}, 113 (2017)
  [arXiv:hep-ph/1603.03603].

\bibitem{Coleman:1973jx} 
  S.R.~Coleman and E.J.~Weinberg,
  Phys.\ Rev.\ D {\bf 7}, 1888 (1973).

\bibitem{Agrawal:1998xa} 
  V.~Agrawal, S.M.~Barr, J.F.~Donoghue and D.~Seckel,
  Phys.\ Rev.\ Lett.\  {\bf 80} (1998) 1822;
   Phys.\ Rev.\ D {\bf 57} (1998) 5480.
  
  \bibitem{Graham:2015cka} 
  P.W.~Graham, D.E.~Kaplan and S.~Rajendran,
  Phys.\ Rev.\ Lett.\  {\bf 115}, 221801 (2015)
  [arXiv:hep-ph/1504.07551].
  
  \bibitem{Arkani-Hamed:2016rle} 
  N.~Arkani-Hamed, T.~Cohen, R.T.~D'Agnolo, A.~Hook, H.D.~Kim and D.~Pinner,
  Phys.\ Rev.\ Lett.\  {\bf 117}, 251801 (2016)
   [arXiv:hep-ph/1607.06821].
  
 \bibitem{Hitoshi}
 H.~Murayama, \textit{Supersymmetry}, in the Proceedings of 22nd INS International Symposium on Physics with High
Energy Colliders, Tokyo, Japan, 8--10 Mar 1994, edited by S.~Yamada and T.~Ishii (World
Scientific Publishing Co., Singapore, 1995) pp.~357--379 [arXiv:hep-ph/9410285].

\bibitem{Hitoshi2}
H.~Murayama,
  Int.\ J.\ Mod.\ Phys.\ A {\bf 19}, 1265 (2004).


\bibitem{nosusy}
See e.g., O.~Buchm\"uller and P.~de Jong, \textit{Supersymmetry, Part II (Experiment)}, in the \textit{2016 Review of Particle
Physics}, op. cit.

\bibitem{little}
R. Barbieri and A. Strumia, arXiv:hep-ph/0007265.

\bibitem{little2}
  L.~Giusti, A.~Romanino and A.~Strumia,
  Nucl.\ Phys.\ B {\bf 550}, 3 (1999).
  
  \bibitem{little3}
   H.C.~Cheng and I.~Low,
  JHEP {\bf 0309}, 051 (2003); {\bf 0408}, 061 (2004).


\bibitem{vlhc}
M.L. Mangano, editor, Physics at the FCC-hh, a 100 TeV pp collider, CERN Yellow
Report, CERN-2017-003-M (2017).  
 

\bibitem{Wess:1974tw} 
  J.~Wess and B.~Zumino,
  Nucl.\ Phys.\ B {\bf 70}, 39 (1974).
 
 \bibitem{Coleman:1967ad} 
  S.R.~Coleman and J.~Mandula,
  Phys.\ Rev.\  {\bf 159}, 1251 (1967).
	
\bibitem{Haag:1974qh} 
  R.~Haag, J.T.~ {\L}opusza{\'{n}}ski and M.~Sohnius,
  Nucl.\ Phys.\ B {\bf 88}, 257 (1975).
  
 \bibitem{Salam:1974za} 
  A.~Salam and J.A.~Strathdee,
  Nucl.\ Phys.\ B {\bf 80}, 499 (1974).

\bibitem{Sokatchev:1975gg} 
  E.~Sokatchev,
  Nucl.\ Phys.\ B {\bf 99}, 96 (1975).
	
\bibitem{Roman}
	P.~Roman,
	\textit{Introduction to quantum field theory},
	(John Wiley \& Sons, Inc., New York, NY,  1969).
 

 
 \bibitem{deWit}
	  B.~de Wit,
	  Phys.\ Rev.\ D {\bf 11}, 898 (1975).
 
 	
 \bibitem{Witten2}
   	E.~Witten,
 	 Nucl.\ Phys.\ B {\bf 202}, 253 (1982).

      
\bibitem{Ferrara:1974ac} 
  S.~Ferrara, J.~Wess and B.~Zumino,
  Phys.\ Lett.\  {\bf 51B}, 239 (1974).

\bibitem{Salam:1974jj} 
  A.~Salam and J.A.~Strathdee,
  Phys.\ Rev.\ D {\bf 11}, 1521 (1975).

\bibitem{Salam:1976ib} 
  A.~Salam and J.A.~Strathdee,
  Fortsch.\ Phys.\  {\bf 26}, 57 (1978).
		
\bibitem{GRS} 
  M.T.~Grisaru, W.~Siegel and M.~Ro\u{c}ek,
  Nucl.\ Phys.\ B {\bf 159}, 429 (1979).
  
  \bibitem{SeibergNR} 
  N.~Seiberg,
  Phys.\ Lett.\ B {\bf 318}, 469 (1993)
 [arXiv:hep-ph/9309335].
  
  \bibitem{BrianHall}
See e.g., B.~Hall, \textit{Lie Groups, Lie Algebras, and Representations}, 2nd edition (Springer International Publishing, Cham, Switzerland, 2015).	

\bibitem{Berezin}
F.A.~Berezin, \textit{The Method of Second Quantization} (Academic Press, Inc., New York, NY, 1966).
  
 \bibitem{Pokorski}
  S.~Pokorski, \textit{Gauge Field Theories}, 2nd edition
(Cambridge University Press, Cambridge, UK, 2000).
  
\bibitem{West:1990rm} 
  P.C.~West,
  Phys.\ Lett.\ B {\bf 258}, 375 (1991).
  
  \bibitem{Jack:1990pd} 
  I.~Jack, D.R.T.~Jones and P.C.~West,
  Phys.\ Lett.\ B {\bf 258}, 382 (1991).
   
\bibitem{Dunbar:1991fc} 
  D.C.~Dunbar, I.~Jack and D.R.T.~Jones,
  Phys.\ Lett.\ B {\bf 261}, 62 (1991).
  
    
  \bibitem{Shifman:1986zi} 
  M.A.~Shifman and A.I.~Vainshtein,
  Nucl.\ Phys.\ B {\bf 277}, 456 (1986).
  
  \bibitem{Shifman:1991dz} 
  M.A.~Shifman and A.I.~Vainshtein,
  Nucl.\ Phys.\ B {\bf 359}, 571 (1991).
 
    
 \bibitem{Poppitz:1996na} 
  E.~Poppitz and L.~Randall,
  Phys.\ Lett.\ B {\bf 389}, 280 (1996)
    [arXiv:hep-th/9608157].

 \bibitem{Cui} 
  J.W.~Cui, Y.~Tang and Y.L.~Wu,
  Phys.\ Rev.\ D {\bf 79}, 125008 (2009)
 [arXiv:hep-ph/0812.0892].

\bibitem{Wess:1974jb} 
  J.~Wess and B.~Zumino,
  Nucl.\ Phys.\ B {\bf 78}, 1 (1974).

\bibitem{Fayet:1974jb} 
  P.~Fayet and J.~Iliopoulos,
  Phys.\ Lett.\  {\bf 51B}, 461 (1974).

 \bibitem{Ovrut:1981wa} 
  B.A.~Ovrut and J.~Wess,
  Phys.\ Rev.\ D {\bf 25}, 409 (1982).
  
  \bibitem{Miller:1983pg} 
  R.D.C.~Miller,
  Phys.\ Lett.\ B {\bf 129}, 72 (1983).

  \bibitem{Dine:2016rxc} 
  M.~Dine, P.~Draper, H.E.~Haber and L.~Stephenson Haskins,
  Phys.\ Rev.\ D {\bf 94}, 095003 (2016)
  [arXiv:hep-th/1607.06995].
  
 
  
 \bibitem{ORaifeartaigh:1975nky} 
  L.~O'Raifeartaigh,
  Nucl.\ Phys.\ B {\bf 96}, 331 (1975).
  
 \bibitem{Salam:1974zb} 
  A.~Salam and J.A.~Strathdee,
  Phys.\ Lett.\  {\bf 49B}, 465 (1974).
  
  
\bibitem{Ferrara:1979wa} 
  S.~Ferrara, L.~Girardello and F.~Palumbo,
  Phys.\ Rev.\ D {\bf 20}, 403 (1979).
  
 \bibitem{Deser:1977uq} 
  S.~Deser and B.~Zumino,
  Phys.\ Rev.\ Lett.\  {\bf 38}, 1433 (1977).
  
  
  
  \bibitem{Cremmer:1978iv} 
  E.~Cremmer, B.~Julia, J.~Scherk, P.~van Nieuwenhuizen, S.~Ferrara and L.~Girardello,
  Phys.\ Lett.\  {\bf 79B}, 231 (1978).

\bibitem{Cremmer:1982en} 
  E.~Cremmer, S.~Ferrara, L.~Girardello and A.~Van Proeyen,
  Nucl.\ Phys.\ B {\bf 212}, 413 (1983).

\bibitem{Cremmer:1982wb} 
  E.~Cremmer, S.~Ferrara, L.~Girardello and A.~Van Proeyen,
  Phys.\ Lett.\  {\bf 116B}, 231 (1982).

\bibitem{Hall:1983iz} 
  L.J.~Hall, J.D.~Lykken and S.~Weinberg,
  Phys.\ Rev.\ D {\bf 27}, 2359 (1983).
  
  \bibitem{Randall:1998uk} 
  L.~Randall and R.~Sundrum,
  Nucl.\ Phys.\ B {\bf 557}, 79 (1999)
  [arXiv:hep-th/9810155].

\bibitem{DEramo:2012vvz} 
  F.~D'Eramo, J.~Thaler and Z.~Thomas,
  JHEP {\bf 1206}, 151 (2012)
  [arXiv:hep-ph/1202.1280];
   {\bf 1309}, 125 (2013)
  [arXiv:hep-ph/1307.3251].
  
 \bibitem{Harigaya:2014sfa} 
  K.~Harigaya and M.~Ibe,
  Phys.\ Rev.\ D {\bf 90}, 085028 (2014)
  [arXiv:hep-th/1409.5029].
  
\bibitem{Girardello:1981wz}
	L.~Girardello and M.T.~Grisaru,
	Nucl.\ Phys.\ {\bf B194}, 65 (1982).

\bibitem{Martin:1999hc} 
  S.P.~Martin,
  Phys.\ Rev.\ D {\bf 61}, 035004 (2000)
  [arXiv:hep-ph/9907550].

\bibitem{Hall:1990ac} 
  L.J.~Hall and L.~Randall,
  Phys.\ Rev.\ Lett.\  {\bf 65}, 2939 (1990).

\bibitem{Jack:1999ud} 
  I.~Jack and D.R.T.~Jones,
  Phys.\ Lett.\ B {\bf 457}, 101 (1999)
  [arXiv:hep-ph/9903365].
	
\bibitem{Un:2014afa} 
  C.S.~\"Un, S.H.~Tany{\i}ld{\i}z{\i}, S.~Kerman and L.~Solmaz,
  Phys.\ Rev.\ D {\bf 91}, 105033 (2015)
   [arXiv:hep-ph/1412.1440].	
 
 \bibitem{Chattopadhyay:2016ivr} 
  U.~Chattopadhyay and A.~Dey,
  JHEP {\bf 1610}, 027 (2016)
   [arXiv:hep-ph/1604.06367].
 
\bibitem{Ross:2016pml} 
  G.G.~Ross, K.~Schmidt-Hoberg and F.~Staub,
  Phys.\ Lett.\ B {\bf 759}, 110 (2016)
  [arXiv:hep-ph/1603.09347];
  JHEP {\bf 1703}, 021 (2017)
   [arXiv:hep-ph/1701.03480].
 
 
 \bibitem{HaberTASI}
 See, e.g., H.E.~Haber, ``Lectures on Electroweak Symmetry Breaking,'' in \textit{Testing the Standard Model}, 
  Proceedings of the
  Theoretical Advanced Study Institute (TASI-1990), Boulder, CO,
  edited by M.~Cveti\u{c} and P.~Langacker (World Scientific Publishing Co., Singapore, 1991) pp.~340--475.

\bibitem{Pomarol:1995np} 
  A.~Pomarol and S.~Dimopoulos,
  Nucl.\ Phys.\ B {\bf 453}, 83 (1995)
 [arXiv:hep-ph/9505302].

\bibitem{Rattazzi:1995tc} 
  R.~Rattazzi,
  Phys.\ Lett.\ B {\bf 375}, 181 (1996)
 [arXiv:hep-ph/9507315].


\bibitem{Dedes:2007ef} 
  A.~Dedes, H.E.~Haber and J.~Rosiek,
  JHEP {\bf 0711}, 059 (2007)
[arXiv:hep-ph/0707.3718].
 
\bibitem{Langacker:1980js} 
  P.~Langacker,
  Phys.\ Rept.\  {\bf 72}, 185 (1981).
 
\bibitem{anomalies}
R.A.~Bertlmann, \textit{Anomalies in Quantum Field Theory} (Oxford University Press, Oxford, UK, 1996).
 
\bibitem{Weinberg:1979sa} 
  S.~Weinberg,
  Phys.\ Rev.\ Lett.\  {\bf 43}, 1566 (1979);
  Phys.\ Rev.\ D {\bf 22}, 1694 (1980).
  
\bibitem{Wilczek:1979hc} 
  F.~Wilczek and A.~Zee,
  Phys.\ Rev.\ Lett.\  {\bf 43}, 1571 (1979).
  
  \bibitem{Weldon:1980gi} 
  H.A.~Weldon and A.~Zee,
  Nucl.\ Phys.\ B {\bf 173}, 269 (1980).
 
\bibitem{Fayet:1976et} 
  P.~Fayet,
  Phys.\ Lett.\  {\bf 64B}, 159 (1976).
 
\bibitem{Farrar:1978xj} 
  G.R.~Farrar and P.~Fayet,
  Phys.\ Lett.\  {\bf 76B}, 575 (1978).
 
\bibitem{AlvarezGaume:1983gj} 
  L.~Alvarez-Gaume, J.~Polchinski and M.B.~Wise,
  Nucl.\ Phys.\ B {\bf 221}, 495 (1983).
  
\bibitem{Frere:1983ag} 
  J.M.~Frere, D.R.T.~Jones and S.~Raby,
  Nucl.\ Phys.\ B {\bf 222}, 11 (1983).
  
  \bibitem{Derendinger:1983bz} 
  J.P.~Derendinger and C.A.~Savoy,
  Nucl.\ Phys.\ B {\bf 237}, 307 (1984).
  
  \bibitem{Gunion:1987qv} 
  J.F.~Gunion, H.E.~Haber and M.~Sher,
  Nucl.\ Phys.\ B {\bf 306}, 1 (1988).
  
\bibitem{Chowdhury:2013dka} 
  D.~Chowdhury, R.M.~Godbole, K.A.~Mohan and S.K.~Vempati,
  JHEP {\bf 1402}, 110 (2014)
   [arXiv:hep-ph/1310.1932].
  
\bibitem{Hollik:2016dcm} 
  W.G.~Hollik,
  JHEP {\bf 1608}, 126 (2016)
  [arXiv:hep-ph/1606.08356].
  
 \bibitem{Casas:1995pd} 
  J.A.~Casas, A.~Lleyda and C.~Munoz,
  Nucl.\ Phys.\ B {\bf 471}, 3 (1996)
  [arXiv:hep-ph/9507294].
  
  \bibitem{Dimopoulos:1995ju} 
  S.~Dimopoulos and D.W.~Sutter,
  Nucl.\ Phys.\ B {\bf 452}, 496 (1995)
  [arXiv:hep-ph/9504415];
  D.W.~Sutter, Stanford Ph.~D. thesis [arXiv:hep-ph/9704390].
 
 \bibitem{Haber:2000jh} 
  H.E.~Haber,
  Nucl.\ Phys.\ Proc.\ Suppl.\  {\bf 101}, 217 (2001)
   [arXiv:hep-ph/0103095].
  
\bibitem{Georgi:1986ku} 
  H.~Georgi,
  Phys.\ Lett.\  {\bf 169B}, 231 (1986).
  
  \bibitem{Hall:1985dx} 
  L.J.~Hall, V.A.~Kostelecky and S.~Raby,
  Nucl.\ Phys.\ B {\bf 267}, 415 (1986).
  
  \bibitem{Khalil:2002qp} 
  For a review, see e.g., S.~Khalil,
  Int.\ J.\ Mod.\ Phys.\ A {\bf 18}, 1697 (2003)
  [arXiv:hep-ph/0212050].
  
  \bibitem{Djouadi:2002ze} 
  A.~Djouadi, J.L.~Kneur and G.~Moultaka,
  Comput.\ Phys.\ Commun.\  {\bf 176}, 426 (2007)
    [arXiv:hep-ph/0211331];
  
\bibitem{Berger:2008cq} 
  C.F.~Berger, J.S.~Gainer, J.L.~Hewett and T.G.~Rizzo,
  JHEP {\bf 0902}, 023 (2009)
  [arXiv:hep-ph/0812.0980].
  
  \bibitem{deVries:2015hva} 
  K.J.~de Vries {\it et al.},
  Eur.\ Phys.\ J.\ C {\bf 75}, 422 (2015)
   [arXiv:hep-ph/1504.03260].
  
\bibitem{Berger:2015eba} 
  J.~Berger, M.W.~Cahill-Rowley, D.~Ghosh, J.L.~Hewett, A.~Ismail and T.G.~Rizzo,
  Phys.\ Rev.\ D {\bf 93}, 035017 (2016)
   [arXiv:hep-ph/1510.08840].
  
  \bibitem{Gunion:1984yn} 
  J.F.~Gunion and H.E.~Haber,
  Nucl.\ Phys.\ B {\bf 272}, 1 (1986)
  Erratum: [Nucl.\ Phys.\ B {\bf 402}, 567 (1993)];
  Nucl.\ Phys.\ B {\bf 278}, 449 (1986); {\bf 307}, 445 (1988)
  Erratum: [Nucl.\ Phys.\ B {\bf 402}, 569 (1993)].
  
  \bibitem{hhg}
  J.F.~Gunion, H.E.~Haber, G.~Kane and S.~Dawson,
{\it The Higgs Hunter's Guide} (Westview Press, Boulder, CO, 2000).
  
  \bibitem{Djouadi:2005gj} 
  A.~Djouadi,
  Phys.\ Rept.\  {\bf 459}, 1 (2008)
  [arXiv:hep-ph/0503173].
  
  \bibitem{Bernon:2015qea} 
  J.~Bernon, J.F.~Gunion, H.E.~Haber, Y.~Jiang and S.~Kraml,
  Phys.\ Rev.\ D {\bf 92}, 075004 (2015)
  [arXiv:hep-ph/1507.00933].
  
\bibitem{Hall:1981bc} 
  L.J.~Hall and M.B.~Wise,
  Nucl.\ Phys.\ B {\bf 187}, 397 (1981).
  
 \bibitem{Haber:1989xc} 
  H.E.~Haber and Y.~Nir,
  Nucl.\ Phys.\ B {\bf 335}, 363 (1990).
  
  \bibitem{Gunion:2002zf} 
  J.F.~Gunion and H.E.~Haber,
  Phys.\ Rev.\ D {\bf 67}, 075019 (2003)
   [arXiv:hep-ph/0207010].
  
\bibitem{Haber:1990aw} 
  H.E.~Haber and R.~Hempfling,
  Phys.\ Rev.\ Lett.\  {\bf 66}, 1815 (1991);
  Phys.\ Rev.\ D {\bf 48}, 4280 (1993)
  [hep-ph/9307201].
  
 \bibitem{Okada:1990vk} 
  Y.~Okada, M.~Yamaguchi and T.~Yanagida,
  Prog.\ Theor.\ Phys.\  {\bf 85}, 1 (1991).
   
 \bibitem{Ellis:1990nz} 
  J.R.~Ellis, G.~Ridolfi and F.~Zwirner,
  Phys.\ Lett.\ B {\bf 257}, 83 (1991).
  
\bibitem{Haber:1996fp} 
  H.E.~Haber, R.~Hempfling and A.H.~Hoang,
Zeit.~f{\"u}r\ Physik C {\bf 75}, 539 (1997)
   [arXiv:hep-ph/9609331].
 

  \bibitem{Draper:2016pys} 
  For a review and references to the original literature, see 
  P.~Draper and H.~Rzehak,
  Phys.\ Rept.\  {\bf 619}, 1 (2016)
   [arXiv:hep-ph/1601.01890].
  
\bibitem{Carena:2002es} 
  M.~Carena and H.E.~Haber,
  Prog.\ Part.\ Nucl.\ Phys.\  {\bf 50}, 63 (2003)
   [arXiv:hep-ph/0208209].
  
\bibitem{Ross}
G.G.~Ross, Grand Unified Theories (Westview Press, Boulder, CO, 1985). 
  
\bibitem{Castano:1993ri} 
  D.J.~Castano, E.J.~Piard and P.~Ramond,
  Phys.\ Rev.\ D {\bf 49}, 4882 (1994)
  [arXiv:hep-ph/9308335].
  
 \bibitem{Martens:2011uha} 
  W.~Martens, ``Threshold Corrections in Grand Unified Theories,'' KIT Dissertation,
  \texttt{https://publikationen.bibliothek.kit.edu/1000023673}.

\bibitem{Allanach:2014nba} 
  B.C.~Allanach, A.~Bednyakov and R.~Ruiz de Austri,
  Comput.\ Phys.\ Commun.\  {\bf 189}, 192 (2015)
  [arXiv:hep-ph/1407.6130].

\bibitem{Lucas:1995ic} 
  V.~Lucas and S.~Raby,
  Phys.\ Rev.\ D {\bf 54}, 2261 (1996)
  [arXiv:hep-ph/9601303].

 
  
\bibitem{Rosiek:1989rs} 
  J.~Rosiek,
  Phys.\ Rev.\ D {\bf 41}, 3464 (1990) 
[Erratum:  hep-ph/9511250].
The most recent corrected version of this manuscript can be found
here:
{\tt http://www.fuw.edu.pl/\~{}rosiek/physics/prd41.html\rm}.

\bibitem{pdg}
C.~Patrignani {\it et al.} [Particle Data Group], \textit{2016 Review of Particle
Physics}, Chin.\ Phys.\ C {\bf 40}, 100001 (2016) and 2017 update.  The update can be found at \texttt{http://pdg.lbl.gov/}.

\bibitem{Grossman:2003gq} 
  Y.~Grossman and S.~Rakshit,
  Phys.\ Rev.\ D {\bf 69}, 093002 (2004)
  [arXiv:hep-ph/0311310].

\bibitem{Dedes:2006ni} 
  A.~Dedes, S.~Rimmer and J.~Rosiek,
  JHEP {\bf 0608}, 005 (2006)
  [arXiv:hep-ph/0603225].
  
\bibitem{Peinado:2012tp} 
  E.~Peinado and A.~Vicente,
  Phys.\ Rev.\ D {\bf 86}, 093024 (2012)
 [arXiv:hep-ph/1207.6641].


\bibitem{Hisano:1995nq} 
  J.~Hisano, T.~Moroi, K.~Tobe, M.~Yamaguchi and T.~Yanagida,
  Phys.\ Lett.\ B {\bf 357}, 579 (1995)
  [arXiv:hep-ph/9501407].

\bibitem{Hisano:1995cp} 
  J.~Hisano, T.~Moroi, K.~Tobe and M.~Yamaguchi,
  Phys.\ Rev.\ D {\bf 53}, 2442 (1996)
 [arXiv:hep-ph/9510309].

\bibitem{Grossman:1997is} 
  Y.~Grossman and H.E.~Haber,
  Phys.\ Rev.\ Lett.\  {\bf 78}, 3438 (1997)
  [arXiv:hep-ph/9702421].

\bibitem{Casas:2001sr} 
  J.A.~Casas and A.~Ibarra,
  Nucl.\ Phys.\ B {\bf 618}, 171 (2001)
 [arXiv:hep-ph/0103065].

\bibitem{Ellis:2002fe} 
  J.R.~Ellis, J.~Hisano, M.~Raidal and Y.~Shimizu,
  Phys.\ Rev.\ D {\bf 66}, 115013 (2002)
   [arXiv:hep-ph/0206110].

\bibitem{Masiero:2004js} 
  A.~Masiero, S.K.~Vempati and O.~Vives,
  New J.\ Phys.\  {\bf 6}, 202 (2004)
 [arXiv:hep-ph/0407325].

\bibitem{Arganda:2004bz} 
  E.~Arganda, A.M.~Curiel, M.J.~Herrero and D.~Temes,
  Phys.\ Rev.\ D {\bf 71}, 035011 (2005)
  [arXiv:hep-ph/0407302].

\bibitem{Joaquim:2006uz} 
  F.R.~Joaquim and A.~Rossi,
  Phys.\ Rev.\ Lett.\  {\bf 97}, 181801 (2006)
  [arXiv:hep-ph/0604083].

\bibitem{Ellis:2007wz} 
  J.R.~Ellis and O.~Lebedev,
  Phys.\ Lett.\ B {\bf 653}, 411 (2007)
 [arXiv:hep-ph/0707.3419].
    
  \bibitem{Kim:1983dt} 
  J.E.~Kim and H.P.~Nilles,
  Phys.\ Lett.\  {\bf 138B}, 150 (1984).
  
\bibitem{Kim:1994eu} 
  J.E.~Kim and H.P.~Nilles,
  Mod.\ Phys.\ Lett.\ A {\bf 9}, 3575 (1994)
    [arXiv:hep-ph/9406296].
  
  \bibitem{Giudice:1988yz} 
  G.F.~Giudice and A.~Masiero,
  Phys.\ Lett.\ B {\bf 206}, 480 (1988).
  
  \bibitem{Casas:1992mk} 
  J.A.~Casas and C.~Munoz,
  Phys.\ Lett.\ B {\bf 306}, 288 (1993)
  [arXiv:hep-ph/9302227].
  
  \bibitem{Dvali:1996cu} 
  G.R.~Dvali, G.F.~Giudice and A.~Pomarol,
  Nucl.\ Phys.\ B {\bf 478}, 31 (1996)
  [arXiv:hep-ph/9603238].
 
  
  \bibitem{Maniatis:2009re} 
  M.~Maniatis,
  Int.\ J.\ Mod.\ Phys.\ A {\bf 25}, 3505 (2010)
 [arXiv:hep-ph/0906.0777].

  \bibitem{Ellwanger:2009dp} 
  U.~Ellwanger, C.~Hugonie and A.M.~Teixeira,
  Phys.\ Rept.\  {\bf 496}, 1 (2010)
  [arXiv:hep-ph/0910.1785].
    
  \bibitem{ATLAS}
  For results of the ATLAS supersymmetry searches, see: \texttt{https://twiki.}
  \texttt{cern.ch/twiki/bin/view/AtlasPublic/SupersymmetryPublicResults}.
  
  \bibitem{CMS}
Results of the CMS supersymmetry searches can be found here: \texttt{https://twiki.cern.ch/twiki/bin/view/CMSPublic/PhysicsResultsSUS}.
  
\bibitem{Ross:2014mua} 
  G.G.~Ross,
  Eur.\ Phys.\ J.\ C {\bf 74}, 2699 (2014).
  
  \bibitem{Fawcett:2016xoh} 
  W.J.~Fawcett [for the ATLAS and CMS Collaborations],
  ``pMSSM studies with ATLAS and CMS,''
   in the Proceedings, 4th Large Hadron Collider Physics Conference (LHCP 2016), Lund, Sweden, June 13-18, 2016
edited by J.~Bijnens, A.~Hoecker and J.~Olsen, 
  PoS (LHCP2016) 146.

  
  \bibitem{Apollinari:2017cqg} 
  G.~Apollinari, O.~Br\"uning, T.~Nakamoto and L.~Rossi,
  CERN Yellow Report, Volume 5, CERN-2015-005 (CERN, Geneva, 2015)
  [arXiv:1705.08830 [physics.acc-ph]].

  \bibitem{ArkaniHamed:2004fb} 
  N.~Arkani-Hamed and S.~Dimopoulos,
  JHEP {\bf 0506}, 073 (2005)
  [arXiv:hep-th/0405159].

\bibitem{Giudice:2004tc} 
  G.F.~Giudice and A.~Romanino,
  Nucl.\ Phys.\ B {\bf 699}, 65 (2004)
  [Erratum: Nucl.\ Phys.\ B {\bf 706}, 487 (2005)]
   [arXiv:hep-ph/0406088].

\bibitem{ArkaniHamed:2004yi} 
  N.~Arkani-Hamed, S.~Dimopoulos, G.F.~Giudice and A.~Romanino,
  Nucl.\ Phys.\ B {\bf 709}, 3 (2005)
  [arXiv:hep-ph/0409232].
 
\bibitem{Wells:2004di} 
  J.D.~Wells,
  Phys.\ Rev.\ D {\bf 71}, 015013 (2005)
  [arXiv:hep-ph/0411041].

\bibitem{Arvanitaki:2012ps} 
  A.~Arvanitaki, N.~Craig, S.~Dimopoulos and G.~Villadoro,
  JHEP {\bf 1302}, 126 (2013)
  [arXiv:hep-ph/1210.0555].

 \bibitem{ArkaniHamed:2012gw} 
  N.~Arkani-Hamed, A.~Gupta, D.E.~Kaplan, N.~Weiner and T.~Zorawski
  [arXiv:hep-ph/1212.6971].
    
\bibitem{ArkaniHamed:1997mj} 
  N.~Arkani-Hamed and H.~Murayama,
  JHEP {\bf 0006}, 030 (2000)
  [arXiv:hep-th/9707133].
  
  \bibitem{Intriligator:2007cp} 
  K.A.~Intriligator and N.~Seiberg,
  ``Lectures on Supersymmetry Breaking,''
  Class.\ Quant.\ Grav.\  {\bf 24}, S741 (2007)
  [Les Houches {\bf 87}, 125 (2008)]
  [arXiv:hep-ph/0702069].
    
  \bibitem{Terning:2003th} 
  J.~Terning,
  ``Nonperturbative supersymmetry,''  in \textit{Particle Physics and Cosmology: The Quest for Physics Beyond the Standard Model(s)}, Proceedings of the
  Theoretical Advanced Study Institute 
  (TASI-2002), Boulder, CO, edited by H.E.~Haber and A.~Nelson  (World Scientific Publishing Co., Singapore, 2004)  pp.~343--443
  [arXiv:hep-th/0306119].
  
  \bibitem{Shifman:1995ua} 
  M.A.~Shifman,
  Prog.\ Part.\ Nucl.\ Phys.\  {\bf 39}, 1 (1997)
  [arXiv:hep-th/9704114].
  
 
  \bibitem{Henn}
  J.M.~Henn and J.C. Plefka, \textit{Scattering Amplitudes in Gauge Theories}
  (Springer-Verlag, Berlin, Germany, 2014).
 
\bibitem{Elvang:2015rqa} 
  H.~Elvang and Y.T.~Huang,
 \textit{Scattering Amplitudes in Gauge Theory and Gravity}
 (Cambridge University Press, Cambridge, UK, 2015).
  
\bibitem{Spradlin:2015kaz} 
  M.~Spradlin,
  ``Amplitudes in $N = 4$ Super-Yang-Mills Theory,'' in \textit{Journeys Through the Precision Frontier---Amplitudes for Colliders} Proceedings of the
  Theoretical Advanced Study Institute 
  (TASI-2014), Boulder, CO, edited by L.~Dixon and F.~Petriello  (World Scientific Publishing Co., Singapore, 2015)  pp.~341-361. 
  
 

  
 \bibitem{Green}
  M.B.~Green, J.H.~Schwarz and E.~Witten,
  \textit{Superstring Theory, Volume 1: Introduction, and Volume II: Loop Amplitudes, Anomalies and Phenomenology}
    (Cambridge University Press, Cambridge, UK, 1987).

\bibitem{Polchinski} 
  J.~Polchinski,
  \textit{String theory, Volume I: Introduction to the Bosonic String, and Volume II:
  Superstring theory and beyond}
  (Cambridge University Press, Cambridge, UK, 2001).
  
  \bibitem{becker}
  K.~Becker and M.~Becker and  J.H.~Schwarz \textit{String Theory and M-Theory: A Modern Introduction} (Cambridge University Press, Cambridge, UK, 2007).

\bibitem{kiritis}
E.~Kiritsis, \textit{String Theory in a Nutshell} (Princeton University Press, Princeton, NJ, 2007).

\bibitem{barton}
B.~Zwiebach, \textit{A First Course in String Theory}, 2nd Edition (Cambridge University Press, Cambridge, UK, 2009).

\bibitem{ibanez}
L.E.~Ib\'a{\~n}ez and A.M.~Uranga, \textit{String Theory and Particle Physics} (Cambridge University Press, Cambridge, UK, 2012).

\bibitem{BLT}
R.~Blumenhagen, D.~L\"ust and S.~Theisen, \textit{Basic Concepts of String Theory} (Springer-Verlag, Berlin, Germany, 2013). 

\bibitem{schomerus}
V.~Schomerus, \textit{A Primer on String Theory}  (Cambridge University Press, Cambridge, UK, 2017).


\end{thebibliography}
\end{document}